\documentclass[a4paper]{article}
\usepackage{amsfonts}
\usepackage{amssymb, latexsym}
\usepackage{amsmath}
\allowdisplaybreaks
\usepackage{amsthm}

\usepackage{empheq}
        

\usepackage[
colorlinks=true,       
linkcolor=blue,          
citecolor=blue,        
plainpages=false,      
pdfpagelabels,
]{hyperref}

\usepackage{cite}

\usepackage[shortlabels]{enumitem}
\setlist[enumerate]{leftmargin=*,align=left,labelindent=\parindent}

\usepackage{graphicx}

\makeatletter
\newcommand{\myitem}[1][]{%
\item[#1]\protected@edef\@currentlabel{#1}\ignorespaces%
}
\makeatother

\newcommand{\amp}{\:\&\:}
\newcommand{\imp}{\;\rightarrow\:}
\newcommand{\defeqiv}{\stackrel{\textup{def}}{\iff}}
\newcommand{\defeql}{\stackrel{\textup{def}}{\  =\  }}
\DeclareMathOperator*{\medwedge}{{\textstyle{\bigwedge}} }
\DeclareMathOperator*{\medvee}{{\textstyle{\bigvee}}}
\newcommand{\prox}{\prec}
\newcommand{\proxop}{\succ}

\newcommand{\entails}{\mathrel{\vdash}}
\newcommand{\meets}{\between}
\newcommand{\downset}{\mathord{\downarrow}}
\newcommand{\upset}{\mathord{\uparrow}}

\newcommand{\id}{\mathrm{id}}
\newcommand{\One}{\mathbf{1}}

\newcommand{\Hom}[3]{\mathsf{Hom}(#2,#3)_{#1}}

\newcommand{\Fin}[1]{\mathsf{Fin}(#1)}
\newcommand{\PFin}[1]{\mathsf{Fin}^{+}(#1)}
\newcommand{\FFin}[1]{\mathrm{DL}(#1)}
\newcommand{\Image}[1]{\mathrm{Im}\,#1 }
\newcommand{\Choice}[1]{\mathrm{Ch}(#1)}
\newcommand{\ScottFunc}{\Sigma}
\newcommand{\Scott}[1]{\ScottFunc(\Spec{#1})}
\newcommand{\Nat}{\mathbb{N}}
\newcommand{\Rat}{\mathbb{Q}}

\newcommand{\DeGroot}[1]{{{#1}^{\mathsf{d}}}}
\newcommand{\DualLat}[1]{{{#1}^{\mathsf{\circ}}}}
\newcommand{\DualVal}[1]{{{#1}^{\mathsf{\bullet}}}}
\newcommand{\Frame}[1]{{\mathrm{\Omega} (#1})}
\newcommand{\Opposite}[1]{{{#1}^{\mathsf{op}}}}
\newcommand{\Upper}[1]{\mathrm{P_U}(#1)}
\newcommand{\UpperFunc}{\mathrm{P_U}}
\newcommand{\Lower}[1]{\mathrm{P_L}(#1)}
\newcommand{\LowerFunc}{\mathrm{P_L}}
\newcommand{\Double}[1]{\mathrm{P_D}(#1)}
\newcommand{\DoubleFunc}{\mathrm{P_D}}
\newcommand{\Vietoris}[1]{\mathrm{P_V}(#1)}
\newcommand{\VietorisFunc}{\mathrm{P_V}}
\newcommand{\Valuation}[1]{\mathfrak{V}(#1)}
\newcommand{\PValuation}[1]{\mathfrak{V_{P}}(#1)}
\newcommand{\ValuationFunc}{\mathfrak{V}}
\newcommand{\PValuationFunc}{\mathfrak{V_{P}}}
\newcommand{\CoValuation}[1]{\mathfrak{C}(#1)}
\newcommand{\PCoValuation}[1]{\mathfrak{C_{P}}(#1)}
\newcommand{\CoValuationFunc}{\mathfrak{C}}
\newcommand{\PCoValuationFunc}{\mathfrak{C_{P}}}
\newcommand{\Patch}[1]{\mathrm{Patch}(#1)}
\newcommand{\PatchFunc}{\mathrm{Patch}}
\newcommand{\PatchP}[1]{\mathrm{Patch'}(#1)}

\newcommand{\dual}[1]{{#1}^{\circ}}
\newcommand{\Space}[1]{\mathrm{Sp}(#1)}
\newcommand{\Ideals}[1]{\mathrm{Idl}(#1)}
\newcommand{\Ups}[1]{\mathrm{Upper}(#1)}
\newcommand{\RIdeals}[1]{\mathrm{RIdl}(#1)}
\newcommand{\Karoubi}[1]{\mathsf{\mathbf{Split}(#1)}}
\newcommand{\Spectral}{\mathsf{Spectral}}
\newcommand{\SpectralP}{\mathsf{Spectral_{Pre}}}
\newcommand{\DLatProx}{\mathsf{DLat_{Prox}}}
\newcommand{\DLatjProx}{\mathsf{DLat_{JProx}}}

\newcommand{\Ent}{\mathsf{Entsys}}
\newcommand{\EntRel}{\mathsf{EntRel}}

\newcommand{\SProxLat}{\mathsf{SProxLat}}
\newcommand{\SProxLatP}{\mathsf{SProxLat_{Prox}}}
\newcommand{\SProxLatPerfect}{\mathsf{SProxLat_{Perf}}}
\newcommand{\ProxLat}{\mathsf{ProxLat}}

\newcommand{\ProxLatPerfect}{\mathsf{ProxLat_{Perf}}}
\newcommand{\Cat}[1]{\mathbb{#1}}

\newcommand{\jSPxLat}{\mathsf{JSProxLat}}
\newcommand{\jSPxLatP}{\mathsf{JSProxLat_{Prox}}}

\newcommand{\ContEnt}{\mathsf{ContEnt}}
\newcommand{\SContEnt}{\mathsf{SContEnt}}
\newcommand{\SContEntP}{\mathsf{SContEnt_{Prox}}}
\newcommand{\SContEntPerfect}{\mathsf{SContEnt_{Perf}}}

\newcommand{\ContEntPerfect}{\mathsf{ContEnt_{Perf}}}

\newcommand{\EnttoLat}[1]{\mathrm{DL}(#1)}
\newcommand{\ApproxExt}[1]{\mathrel{\widetilde{#1}}}
\newcommand{\Spec}[1]{\mathrm{Spec}(#1)}

\newtheorem{theorem}{Theorem}[section]
\newtheorem{proposition}[theorem]{Proposition}
\newtheorem{lemma}[theorem]{Lemma}
\newtheorem{corollary}[theorem]{Corollary}
\theoremstyle{definition}
\newtheorem{definition}[theorem]{Definition}
\theoremstyle{remark}
\newtheorem{remark}[theorem]{Remark}
\newtheorem{notation}[theorem]{Notation}

\numberwithin{equation}{section}

\title{Presenting de Groot duality of stably compact spaces} 

\author{Tatsuji Kawai\\[.5em]
\normalsize Japan Advanced Institute of Science and Technology\\
\normalsize \texttt{tatsuji.kawai@jaist.ac.jp}}
\date{}
\begin{document}
\maketitle
\begin{abstract}
  We give a constructive account of the de Groot duality of stably
  compact spaces in the setting of strong proximity lattice, a
  point-free representation of a stably compact space. To this end, we
  introduce a notion of strong continuous entailment relation, which can be
  thought of as a presentation of a strong proximity lattice by
  generators and relations.
  The new notion allows us to identify de Groot duals of stably
  compact spaces by analysing the duals of their presentations.
  We carry out a number of constructions on strong proximity lattices
  using strong continuous entailment relations and study their de Groot
  duals. The examples include various powerlocales, patch topology,
  and the space of valuations. These examples illustrate the
  simplicity of our approach by which we can reason about the de Groot
  duality of stably compact spaces.

  \medskip
  \noindent \textsl{Keywords:} stably compact space; de Groot duality;
   strong proximity lattice; entailment relation; locale

  \medskip
  \noindent \textsl{MSC2010:} 06B35; 06D22; 03B70; 03F60
\end{abstract}

\section{Introduction}\label{sec:Introduction}
De Groot duality of stably compact spaces induces a family of
dualities on various powerdomain constructions. In the point-free
setting, Vickers~\cite{VickersEntailmentSystem} showed that the de
Groot dual of the upper powerlocale of a stably compact
locale
is the
lower powerlocale of its dual.%
\footnote{Stably compact locales are also known as \emph{stably locally
compact locales} \cite[Chapter VII, Section 4.6]{johnstone-82}
or \emph{arithmetic lattices} \cite{JungSunderhaufDualtyCompactOpen}.
The upper and lower powerlocales 
of a locale correspond to the Smyth and Hoare powerdomains
of the corresponding space, respectively.}
In the point-set setting,
Goubault-Larrecq~\cite{GoubaultLarrecq-ModelofChoice} showed that 
the dual of the Plotkin powerdomain of a stably compact space is the
Plotkin powerdomain of its dual; the same holds for the
probabilistic powerdomain.

In this paper, we give an alternative account of these  results in the
setting of strong proximity lattice
\cite{JungSunderhaufDualtyCompactOpen}, the
Karoubi envelop
of the category of spectral locales and locale maps. Strong proximity
lattices have a structural duality which reflects the de Groot duality
of stably compact spaces in a simple way (see Section
\ref{sec:deGrootDuality}). Moreover, a strong proximity lattice is
just a distributive lattice with an extra structure, so it does not
require infinitary joins inherent in the usual point-free approach.
This provides us with a convenient setting to study the de Groot
duality of stably compact spaces constructively.

To deal with stably compact spaces presented by generators and
relations, we introduce a notion of \emph{strong continuous entailment
relation}, which can be thought of as a presentation of a strong
proximity lattice by generators and relations by Scott's
entailment relations \cite{ScottEngenderingIllusion}.
The notion is a variant
of that of an entailment relation with the interpolation property due to
Coquand and Zhang~\cite{CoquandZhangPrediativePatch}.  Here, the
structure due to Coquand and Zhang is strengthened so that it has an
intrinsic duality which reflects the de Groot duality of stably
compact spaces.  The resulting
structure, strong continuous entailment relation, allows us to identify
de Groot duals of stably compact locales presented by generators
and relations by analysing the duals of their presentations.
We illustrate the ease with which we can reason about de Groot
duality by carrying out a number of constructions on strong proximity
lattices using strong continuous entailment relations. The examples include
various powerlocales, patch topology, and the space
of valuations.

Throughout this paper, we work in the point-free setting,
identifying stably compact spaces with their point-free counterparts,
stably compact locales.
This allows us to work constructively in the predicative sense as
manifested in Aczel's constructive set theory
\cite{Aczel-Rathjen-Note}.
However, the point of this work is not the constructively but
the simplicity of our approach by which we can analyse de Groot duals of
various constructions on stably compact spaces.

\subsubsection*{Related works}
Besides the work of Coquand and Zhang~\cite{CoquandZhangPrediativePatch}
and that of Jung
and S\"underhauf \cite{JungSunderhaufDualtyCompactOpen} mentioned
above, many authors studied stably compact spaces from the point-free
perspective (see Escard\'o~\cite{escardo2001regular}; Jung, Kegelmann,
and Moshier~\cite{JungEtAlMultilingual}; Vickers
\cite{VickersEntailmentSystem}).
Among them, the notion of entailment system by Vickers
\cite{VickersEntailmentSystem}, which develops the idea of Jung et
al.~\cite{JungEtAlMultilingual}, is particularly related to the notion
of strong continuous entailment relation. These structures are
equipped with structural dualities which reflects the de Groot duality
of stably compact spaces.

The essential difference between our approach and that of Vickers is
the following: 
the theory of strong continuous entailment relation is built on the
fact that stably compact locales are the retracts of spectral locales
and locale maps. Hence, the theory of strong continuous entailment
relation essentially deals with the objects of the
Karoubi envelop of the latter category.
On the other hand, the theory of entailment system deals with the
objects of the Karoubi envelop of the category of spectral locales and
preframe homomorphisms (see Section \ref{sec:ContEnt}).%
\footnote{More specifically, it suffices to consider only \emph{free
frames} rather than spectral locales.\label{foot:FreeFrame}} 
In this view, the former theory treats stably compact locales as
locales while the latter theory treats them as preframes; this has to
do with the simplicity of the treatment of joins in the geometric presentation of a
locale represented by the former structure (see Section
\ref{sec:StrongProximityLattice}). Thus, if one is interested in the
localic structure of stably compact locales rather than that of
preframe, it would be more natural to work with strong continuous
entailment relations. In particular, this could potentially facilitate
some of the localic constructions on stably compact locales involving
finite joins, such as patch topology and the Vietoris powerlocale, in the
setting of strong continuous entailment relations, although proper
comparison is needed.

Apart from the point-free approaches mentioned above, we are motivated
by the corresponding results for stably compact spaces due to
Goubault-Larrecq~\cite{GoubaultLarrecq-ModelofChoice}. To derive these
results, he used the notion of $\mathbf{A}$-valuation due to Heckmann
\cite{HECKMANN1997160}. It would be interesting to know if there is
any connection between our approach and the $\mathbf{A}$-valuation
approach. However, since we prefer to work constructively in the
point-free setting, we do not compare the two approaches in this paper.

\subsubsection*{Organisation}
In Section \ref{sec:PreliminaryLocale}, we fix some basic notions on locales.
In Section \ref{subsec:ProximityLattice}, we introduce 
the notion of proximity lattice as the Karoubi envelop of the
category of spectral locales and preframe homomorphisms.
In Section \ref{sec:ContEnt}, we give an alternative representation
for proximity lattices, called continuous entailment relation,
based on the notion of entailment system.
In Section~\ref{sec:StrongProximityLattice}, we strengthen the notion of proximity
lattice to strong proximity lattice by looking into the Karoubi
envelop of the category of spectral locales and locale maps. We also
introduce the corresponding notion of strong continuous entailment
relation.
In Section~\ref{sec:deGrootDuality}, we formulate the duality of 
proximity lattices and continuous entailment relations,
and show that these dualities reflect the de Groot duality
of stably compact locales.
In Section \ref{sec:ExampledeGrootDuality}, 
we study the de Groot duals of various constructions on stably compact
locales by exploiting the correspondence between strong proximity lattices
and strong continuous entailment relations.

\section{Preliminary on locales}\label{sec:PreliminaryLocale}
A \emph{frame} is a poset $(X, \wedge, \bigvee)$ with 
finite meets $\wedge$ and joins $\bigvee$ for all subsets of
$X$ where finite meets distribute over all joins.
A homomorphism from a frame $X$ to a frame $Y$ is a function $f \colon
X \to Y$ which preserves finite meets and all joins.  The
\emph{category of locales} is the opposite of the category of frames and
frame homomorphisms. We write $\Frame{X}$ for the
frame corresponding to a locale $X$, but we often regard a frame as
a locale and vice versa without change of notation.

Given a set $S$ of \emph{generators}, a \emph{geometric theory} over $S$ is a set of
\emph{axioms} of the form
  $
  \medwedge A \entails \bigvee_{i \in I} \medwedge  B_i,
  $
where $A$ is a finite subset of 
$S$ and $(B_{i})_{i \in I}$ is a set-indexed family of finite subsets
of $S$.%
\footnote{Here, \emph{finite} means \emph{finitely enumerable}.
A set $A$ is finitely enumerable if there is  a surjective
function $f : \left\{ 0,\dots,n-1 \right\} \to A$ for some
natural number $n$.
Finitely enumerable sets are also known as Kuratowski finite sets; see
e.g., Johnstone~\cite[D5.4]{ElephantII}.}
Single conjunctions and single disjunctions are identified with elements
of $S$.  We use the following abbreviations:
\begin{align*}
  \top &\equiv \medwedge \emptyset, & \bot &\equiv \bigvee \emptyset, &
  \medvee B &\equiv \bigvee_{b \in B} \left\{ b \right\}, &
  \bigvee_{i \in I} b_{i} &\equiv \bigvee_{i \in I} \left\{ b_{i} \right\}.
\end{align*}
An \emph{interpretation}
of a geometric theory $T$ (over $S$) in a locale $X$ is a function $f \colon S \to \Frame{X}$ such that
  $
  \medwedge_{a \in A}f(a) \leq_{X} \bigvee_{i \in I}
  \medwedge_{b \in B_{i}} f(b)
  $
for each axiom $\medwedge A \entails \bigvee_{i \in I} \medwedge B_i$ of
$T$. There is a locale $\Space{T}$ with a universal
interpretation $i_{T} \colon S \to \Frame{\Space{T}}$: for any
interpretation $f \colon S \to \Frame{X}$ of $T$, there exists a unique frame
homomorphism $\overline{f} \colon \Frame{\Space{T}} \to \Frame{X}$
such that $\overline{f} \circ i_{T} = f$.
In this case, $\Space{T}$
is called the \emph{locale (or frame) presented by
$T$}. 
A \emph{model} of a geometric theory $T$ over $S$ is a subset $\alpha
\subseteq S$ such that
  $
  A \subseteq \alpha \imp  \exists i \in I
  \left( B_i \subseteq \alpha \right)
  $
for each axiom $ \medwedge A \entails \bigvee_{i \in I}  \medwedge  B_i$ of $T$.
If the models of $T$ form a distinguished class of
objects, we call $\Space{T}$  the \emph{locale
whose models are members of that class}.%
\footnote{
    We differ from the standard distinction between ``interpretation''
    and ``model'' wherein the former interprets the language (here
    generators) and the latter in addition satisfies the
    axioms of the theory over the language.
    In this paper, in contrast, both ``interpretation'' and ``model''
    mean an interpretation which satisfies the axioms: the latter
    keeps the usual meaning of a model in the lattice of truth values,
    the powerset of a singleton $\left\{ * \right\}$, whereas the
    former means an axiom preserving interpretation in a frame more
    general than that of the truth values. In locale theory, this can
    be expressed as the distinction between \emph{generalised points} and
    \emph{global points} (see Vickers~\cite{DoublePowLocExp}).
}

\section{Proximity lattices} \label{subsec:ProximityLattice}
We recall the construction of Karoubi envelop (cf.\ \cite[Chapter 2,
Exercise B]{FreydAbelCat}).
\begin{definition}
  An \emph{idempotent} in a category $\Cat{C}$ is a morphism $f \colon
  A \to A$  such that $f \circ f = f$.  The \emph{Karoubi envelop} (or
  \emph{splitting of idempotents}) of $\Cat{C}$ is a category
  $\Karoubi{\Cat{C}}$ where objects are idempotents in $\Cat{C}$ and
  morphisms $h \colon (f \colon A \to A) \to (g \colon B \to B)$
  are morphisms $h \colon A \to B$ in $\Cat{C}$ such that $g \circ h =
  h = h \circ f$. 
\end{definition}
One can show that if $\Cat{C}$ is a full subcategory of 
$\Cat{D}$ where every idempotent splits in $\Cat{D}$ and every
object in $\Cat{D}$ is a retract of an object of $\Cat{C}$, then
$\Cat{D}$ is equivalent to $\Karoubi{\Cat{C}}$.

It is well known that stably compact locales are exactly the
retracts of spectral locales, whose frames are the ideal
completions of distributive lattices~\cite[Chapter VII, Theorem
4.6]{johnstone-82}. 
Less well known is the fact that stably compact locales are
exactly the preframe retracts of spectral locales~\cite[Section
3]{VickersEntailmentSystem}%
\footnote{
  To be precise, the results of Vickers~\cite{{VickersEntailmentSystem}} are
  stronger; the stably compact locales are preframe retracts of free 
  frames (cf.\ footnote~\ref{foot:FreeFrame}). }
so that the category of stably compact
locales and preframe homomorphisms can be characterised as the 
Karoubi envelop of the category of spectral locales and preframe
homomorphisms.
Here, a \emph{preframe} is a poset with directed joins (joins of
directed subsets) and finite meets which distribute over directed joins.
A \emph{preframe homomorphism} between preframes is a function which
preserves finite meets and directed joins.
The latter fact leads to the notion of proximity
lattice~\cite{JungSunderhaufDualtyCompactOpen} by considering a
finitary description of the dual of the category of spectral locales
and preframe homomorphisms.

\begin{definition}
  Let $S$ and $S'$ be
  distributive lattices. A \emph{proximity relation} 
  from $S$ to $S'$ is a relation $r \subseteq S \times S'$ such that 
  \begin{enumerate}
      \myitem[(ProxI)]\label{def:ProximityRelationVee}
      $r^{-} b \defeql \left\{ a \in S \mid a \mathrel{r} b \right\}$
      is an ideal of $S$ for all $b \in S'$,
    \myitem[(ProxF)]\label{def:ProximityRelationWedge} 
      $r a \defeql \left\{ b \in S' \mid a \mathrel{r} b \right\}$
      is a filter of $S'$ for all $a \in S$.
  \end{enumerate}
Here, an \emph{ideal} is a downward closed subset of $S$ closed under finite
joints. A \emph{filter} is an upward closed subset of $S$ closed under finite meets.
\end{definition}
Let $\DLatProx$ be the category of distributive lattices and proximity
relations: the identity on a distributive lattice $S$ is the order
$\leq$ on $S$; the composition of proximity relations is the
relational composition.

The \emph{ideal completion} of a distributive lattice $S$,
denoted by $\Ideals{S}$, is the frame of ideals of $S$:
the directed join of ideals is their union; finite joins and
finite meets are defined by
\begin{align*}
  0 &\defeql \left\{ 0 \right\},
  &
  I \vee J &\defeql \bigcup_{a \in I, b \in J} \downset \left( a \vee
  b\right),\\
  1 &\defeql S, 
  &
  I \wedge J &\defeql \left\{ a \wedge b \mid a \in I, b \in J \right\},
\end{align*}
where
  $
  \downset a \defeql \left\{ b \in S \mid b \leq a \right\},
  $
the \emph{principal ideal}
generated by $a$.
Every ideal $I$ is a directed join of principal
ideals: 
\begin{equation}
  \label{eq:PrincipleGenerate}
  I = \bigvee_{a \in I} \downset a.
\end{equation}

\begin{proposition}
  \label{prop:ProximityRelPreframeHom}
  For any proximity relation $r \colon S \to S'$, there exists
  a unique preframe homomorphism $f \colon \Ideals{S'} \to \Ideals{S}$
  such that $f(\downset b) = r^{-}b$ for all $b \in S'$.
  Moreover, this bijection preserves identities and
  compositions of proximity relations.
\end{proposition}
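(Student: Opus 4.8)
The plan is to pin down $f$ by its interaction with directed joins and then verify directly that the forced formula is a preframe homomorphism. Since every ideal $J \in \Ideals{S'}$ is the directed join of the principal ideals it contains, $J = \bigvee_{b \in J} \downset b$ by \eqref{eq:PrincipleGenerate}, and since this family is directed (as $J$ is closed under binary joins), any preframe homomorphism $f$ with $f(\downset b) = r^{-}b$ must satisfy $f(J) = \bigvee_{b\in J} f(\downset b) = \bigvee_{b \in J} r^{-}b$. This forces the definition and yields uniqueness at once. Conversely I would \emph{define} $f(J) \defeql \bigcup_{b\in J} r^{-}b$. First I note that $r^{-}$ is monotone: if $b \leq b'$ then, since $r a$ is a filter hence upward closed (ProxF), $a \mathrel{r} b$ implies $a \mathrel{r} b'$, so $r^{-}b \subseteq r^{-}b'$. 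Hence for $b, c \in J$ the ideals $r^{-}b, r^{-}c$ are both contained in $r^{-}(b \vee c)$, so the family $(r^{-}b)_{b\in J}$ is directed and its join in $\Ideals{S}$ is its union; in particular $f(J)$ is a genuine ideal. Evaluating at a principal ideal gives $f(\downset b) = \bigcup_{b' \leq b} r^{-}b' = r^{-}b$, as required.

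Preservation of directed joins is then immediate, since directed joins in both frames are unions and $\bigcup_{b \in \bigcup_k J_k} r^{-}b = \bigcup_k \bigcup_{b\in J_k} r^{-}b$. For the nullary meet, $f(S') = \bigcup_{b\in S'} r^{-}b = S$ because every filter $r a$ contains the empty meet $1_{S'}$, so $a \mathrel{r} 1_{S'}$ for all $a \in S$.

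The crux of the argument, and the step I expect to be the main obstacle, is preservation of binary meets, $f(J \wedge K) = f(J) \wedge f(K)$, where both axioms of a proximity relation must be used in tandem. For the inclusion $\subseteq$, if $a \mathrel{r} (b \wedge c)$ with $b \in J$ and $c \in K$, monotonicity of $r^{-}$ places $a$ in both $r^{-}b \subseteq f(J)$ and $r^{-}c \subseteq f(K)$, whence $a = a \wedge a \in f(J)\wedge f(K)$. For $\supseteq$, take $x = a \wedge a'$ with $a \mathrel{r} b$, $b\in J$, and $a' \mathrel{r} c$, $c \in K$. Since $r^{-}b$ and $r^{-}c$ are downward closed (ProxI) and $x \leq a, a'$, we get $x \mathrel{r} b$ and $x \mathrel{r} c$, i.e.\ $b, c \in r x$; as $r x$ is a filter closed under binary meets (ProxF), $b \wedge c \in r x$, that is $x \mathrel{r} (b\wedge c)$ with $b \wedge c \in J \wedge K$, so $x \in f(J \wedge K)$. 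This interplay---lowering the left coordinate by (ProxI) and combining the right coordinates by (ProxF)---is exactly where the two conditions cooperate, and it is the only genuinely non-routine verification.

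For the functoriality and bijectivity claims I would argue by the uniqueness already established. The identity proximity relation $\leq$ on $S$ has $r^{-}b = \downset b$, so its induced map fixes every principal ideal and hence equals $\id$ by \eqref{eq:PrincipleGenerate}. For composition, given $r \colon S \to S'$ and $s \colon S' \to S''$, both $f_{s \circ r}$ and $f_r \circ f_s$ are preframe homomorphisms $\Ideals{S''} \to \Ideals{S}$, so it suffices to compare them on a principal ideal $\downset c$: unfolding the relational composite gives $(s\circ r)^{-}c = \left\{ a \mid \exists b\, (a\mathrel{r} b \amp b \mathrel{s} c) \right\} = \bigcup_{b \in s^{-}c} r^{-}b = f_r(s^{-}c) = (f_r \circ f_s)(\downset c)$. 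Finally, to see the correspondence is a bijection I would show every preframe homomorphism $g \colon \Ideals{S'} \to \Ideals{S}$ arises from the relation defined by $a \mathrel{r} b \defeqiv a \in g(\downset b)$; here (ProxI) is immediate, while (ProxF) follows from monotonicity of $g$ together with $g$ preserving $1$ and binary meets (using $\downset b \wedge \downset b' = \downset(b\wedge b')$), and $f_r = g$ once more by uniqueness.
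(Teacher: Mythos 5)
Your proposal is correct and follows essentially the same route as the paper: the forced formula $f(J) = \bigvee_{b \in J} r^{-}b$, uniqueness via \eqref{eq:PrincipleGenerate}, and functoriality by comparing on principal ideals. The only difference is that where the paper observes that $b \mapsto r^{-}b$ is a meet-semilattice homomorphism into $\Ideals{S}$ and then cites Vickers's extension theorem for the unique preframe extension, you carry out that verification directly (directedness of the family, preservation of finite meets via the interplay of \ref{def:ProximityRelationVee} and \ref{def:ProximityRelationWedge}), which is exactly the content being outsourced.
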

\begin{proof}
  It is easy to see that a proximity relation
  $r \colon S \to S'$ uniquely extends to a meet-semilattice
  homomorphism
  $f_{r} \colon S' \to \Ideals{S}$ defined by
  \begin{equation}
  \label{eq:ProximityRelMeetHom}
    f_{r}(b) \defeql r^{-}b.
  \end{equation}
  By Vickers \cite[Theorem 9.1.5 (i) (iv)]{vickers1989topology},
  the function $f_{r}$ extends uniquely  to a preframe homomorphism
  $f \colon \Ideals{S'} \to \Ideals{S}$ by 
  \begin{equation}
  \label{eq:ProximityRelPreframeHom}
  f(I) \defeql \bigvee_{b \in I} f_{r}(b) = r^{-} I.
  \end{equation}
  The second statement follows from the first 
  and \eqref{eq:PrincipleGenerate}.
\end{proof}

Let $\SpectralP$ be the category of spectral locales and preframe
homomorphisms: objects of $\SpectralP$ are distributive lattices
and morphisms are preframe homomorphisms between the ideal
completions.
\begin{theorem}
  \label{thm:DLatProxEquivSpectralP}
  The category $\DLatProx$ is dually equivalent to $\SpectralP$.
\end{theorem}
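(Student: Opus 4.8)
The plan is to read off the duality directly from Proposition~\ref{prop:ProximityRelPreframeHom}. I would define a contravariant functor $F \colon \DLatProx \to \SpectralP$ that is the identity on objects, sending each distributive lattice $S$ to itself (viewed as a spectral locale via its ideal completion $\Ideals{S}$). On morphisms, $F$ sends a proximity relation $r \colon S \to S'$ to the preframe homomorphism $f_{r} \colon \Ideals{S'} \to \Ideals{S}$ supplied by Proposition~\ref{prop:ProximityRelPreframeHom}, regarded as a morphism of $\SpectralP$ in the reverse direction. Since the morphisms of $\SpectralP$ are by definition exactly the preframe homomorphisms between ideal completions, $F(r)$ is a well-formed morphism of $\SpectralP$.

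First I would check that $F$ is a (contravariant) functor. The identity on $S$ in $\DLatProx$ is the order $\leq$, and ${f_{\leq}}(\downset b) = {\leq}^{-} b = \downset b$ for every $b$, so $f_{\leq}$ fixes all principal ideals and hence, by the uniqueness clause of Proposition~\ref{prop:ProximityRelPreframeHom} together with \eqref{eq:PrincipleGenerate}, equals $\id_{\Ideals{S}}$. Preservation of composition (with the order reversed, as befits a dual equivalence) and of identities is precisely the content of the ``moreover'' part of Proposition~\ref{prop:ProximityRelPreframeHom}, so no separate argument is needed here.

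It then remains to verify that $F$ is an equivalence. Essential surjectivity is immediate, since $F$ is a bijection on objects. For full faithfulness I would invoke the bijection asserted in Proposition~\ref{prop:ProximityRelPreframeHom} between proximity relations $S \to S'$ and preframe homomorphisms $\Ideals{S'} \to \Ideals{S}$; these homomorphisms are exactly the $\SpectralP$-morphisms $F(S') \to F(S)$, so $F$ restricts to a bijection on each hom-set. A fully faithful and (bijective-on-objects, hence) essentially surjective contravariant functor is a dual equivalence --- in fact an isomorphism of categories --- which is the desired conclusion.

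The one point that carries actual content, and which I expect to be the main obstacle, is the surjectivity half of the hom-set bijection: that every preframe homomorphism $g \colon \Ideals{S'} \to \Ideals{S}$ is of the form $f_{r}$. I would recover the candidate relation by $a \mathrel{r} b \defeqiv a \in g(\downset b)$ and then check the two proximity conditions: $r^{-} b = g(\downset b)$ is an ideal because $g$ lands in $\Ideals{S}$, giving \ref{def:ProximityRelationVee}; and $r a$ is a filter because $g$ preserves finite meets and the top element, using that finite meets of ideals coincide with intersections, giving \ref{def:ProximityRelationWedge}. Finally $f_{r}$ and $g$ agree on all principal ideals and are both preframe homomorphisms, so they coincide by \eqref{eq:PrincipleGenerate}. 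Although Proposition~\ref{prop:ProximityRelPreframeHom} already packages this as a bijection, it is this reconstruction that does the real work behind fullness.
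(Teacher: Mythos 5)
Your proposal is correct and follows the same route as the paper, which disposes of the theorem in one line as ``Immediate from Proposition~\ref{prop:ProximityRelPreframeHom}'': the identity-on-objects contravariant functor $r \mapsto f_{r}$, with functoriality and full faithfulness read off from the bijection and its compatibility with identities and compositions asserted there. The surjectivity argument you single out as the real content is exactly what the paper delegates to the cited result of Vickers (Theorem 9.1.5) inside the proof of Proposition~\ref{prop:ProximityRelPreframeHom}, so your write-up is simply a more self-contained unfolding of the same proof.
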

\begin{proof}
Immediate from Proposition \ref{prop:ProximityRelPreframeHom}.
\end{proof}
Since $\Karoubi{\SpectralP}$ is equivalent
to the category of stably compact locales and preframe homomorphisms,
$\Karoubi{\DLatProx}$ is dually equivalent to the latter category.
The objects and morphisms of $\Karoubi{\DLatProx}$ are called
\emph{proximity lattices} and \emph{proximity relations} respectively
(cf.\ Jung and S\"underhauf \cite{JungSunderhaufDualtyCompactOpen} and
de Gool \cite{vanGoolDualityCanExt}).  In what follows, we write $\ProxLat$ for
$\Karoubi{\DLatProx}$.

\begin{notation}
  \label{not:ProxLat}
  We write $(S, \prox)$ for a proximity lattice, where $S$ is a
  distributive lattice and $\prox$ is an idempotent proximity relation
  on $S$. We write $r \colon (S, \prox) \to (S', \prox')$ for
  a proximity relation from $(S, \prox)$ to $(S', \prox')$, i.e.,
  a proximity relation $r \colon S \to S'$ between the underlying
  distributive lattices such that ${\prox'} \circ r =
  r \circ {\prox} = r$.
\end{notation}

Each proximity lattice $(S, \prox)$ represents a stably compact locale
whose frame is the collection $\RIdeals{S}$ of \emph{rounded ideals} of $S$ ordered
by inclusion~\cite[Theorem~11]{JungSunderhaufDualtyCompactOpen}. Here,
an ideal $I \subseteq S$ is \emph{rounded} if 
$a \in I \leftrightarrow \exists b \succ a \left( b \in I \right)$.%
\footnote{The notion of rounded ideal makes sense 
in the more general setting of information systems~\cite{Infosys}.}
Directed joins and finite meets
in $\RIdeals{S}$ are
calculated as in $\Ideals{S}$; on the other
hand, finite joins in $\RIdeals{S}$ are defined by
\begin{align*}
  0 &\defeql  \downset_{\prox} 0,
  &
  I \vee J &\defeql \bigcup_{a \in I, b \in J} \downset_{\prox} \left(
  a \vee b\right),
\end{align*}
where $\downset_{\prox} a \defeql \left\{ b \in S \mid b \prox a \right\}$.
Every rounded ideal $I$ is a directed join of its members: $I = \bigvee_{a \in I} \downset_{\prox} a$.
Let $\Spec{S}$ denote the locale whose frame is $\RIdeals{S}$.

\section{Continuous entailment relations}\label{sec:ContEnt}
We give an alternative presentation of a proximity lattice in terms of
Vickers's entailment system~\cite{VickersEntailmentSystem}.
We need some constructions on finite subsets.
For any set $S$, let $\Fin{S}$ denote the set of finite subsets of
$S$.
For each $\mathcal{U} \in \Fin{\Fin{S}}$, define $\mathcal{U}^{*} \in
\Fin{\Fin{S}}$ inductively by
\begin{align*}
  \emptyset^{*}
  &\defeql \left\{ \emptyset \right\},&
  \left( \mathcal{U} \cup \left\{ A \right\} \right)^{*}
  &\defeql
  \left\{ B \cup C \mid B \in \mathcal{U}^{*} \amp C \in \PFin{A}\right\},
\end{align*}
where $\PFin{A}$ denotes the set of inhabited finite subsets of $A$.%
\footnote{
In the notation of Vickers~\cite[Section 4]{VickersEntailmentSystem},
the set $\mathcal{U}^{*}$ is equal to
  $
  \left\{ \Image{\gamma} \mid \gamma \in \Choice{\mathcal{U}}\right\},
  $
where $\Choice{\mathcal{U}}$ is the set of choices of $\mathcal{U}$
and $\Image{\gamma}$ is the image of a choice $\gamma$; see Definition
12 and Definition 13, and the proof of Proposition~14 in
\cite{VickersEntailmentSystem}.}
Writing $\FFin{S}$ for the free distributive lattice over $S$, the
mapping $\mathcal{U} \mapsto \mathcal{U}^{*}$ transforms a disjunction
of conjunctions of generators in $\FFin{S}$ to the
conjunction of disjunctions of generators, or the other way around
(cf.\ Vickers~\cite[Theorem 8.7]{DoublePowLocExp}).%

\begin{definition}
  Let $S,S',S''$ be sets and $r,s$ be relations 
  $r \subseteq \Fin{S} \times \Fin{S'}$ and $s \subseteq
  \Fin{S'} \times \Fin{S''}$. 
  \begin{enumerate}
    \item The relation $r$ is said to be \emph{upper} if 
    $
    A \mathrel{r} B \imp A \cup A' \mathrel{r} B \cup B'
    $
  for all $A,A' \in \Fin{S}$ and $B, B' 
  \in \Fin{S'}$.
    \item 
  The \emph{cut composition} $s \cdot r
  \subseteq \Fin{S} \times \Fin{S''}$ is defined by
  \[
    A \mathrel{s \cdot r} C \defeqiv \exists \mathcal{V} \in
    \Fin{\Fin{S'}} \left[ \forall B' \in \mathcal{V}^{*}
      \left( A \mathrel{r} B' \right)  \amp \forall B \in \mathcal{V} \left( 
    B \mathrel{s} C\right) \right].%
  \footnote{In \cite{VickersEntailmentSystem}, the cut
  composition of $r$ and $s$ is denoted by $r \dagger s$ using
  the forward notation for the relational composition. See in particular
  \cite[Lemma 30]{VickersEntailmentSystem}.}
  \]
  \end{enumerate}
\end{definition}

\begin{definition}[{Vickers~\cite[Section 6]{VickersEntailmentSystem}}]
  An \emph{entailment system} is a pair $(S, \ll)$ where $S$ is a set
  and $\ll$ is an upper relation on $\Fin{S}$ such that $\ll \cdot \ll
  {=} \ll$.
  A \emph{Karoubi morphism} $r \colon (S, \ll) \to (S', \ll')$ of
  entailment systems is an upper relation $r \subseteq \Fin{S} \times
  \Fin{S'}$ such that
    $
    \ll' \mathop{\cdot} r = r = r \mathop{\cdot} \ll\!.
    $
\end{definition}
Let $\Ent$ be the category of entailment systems and Karoubi  morphisms between
them: the identity on $(S,\ll)$ is $\ll$  and the composition of morphisms
is the cut composition; see Vickers~\cite[Section 5 and Section 6]{VickersEntailmentSystem} for the details.
As shown in \cite{VickersEntailmentSystem}, $\Ent$ is
dually
equivalent to the category of stably compact locales and preframe
homomorphisms. An entailment system $(S, \ll)$ represents a stably
compact locale (or frame) which is a preframe retract of the free frame over $S$, or
equivalently, the spectral locale determined by the free
distributive lattice over $S$.  The relation $\ll$ then represents the
retraction.

We now focus on the full subcategory of $\Ent$
consisting of \emph{reflexive} entailment systems~\cite[Section
6.1]{VickersEntailmentSystem}, also known as \emph{entailment
relations}~\cite{ScottEngenderingIllusion, cederquist2000entailment}.

\begin{definition}
  \label{def:EntRel}
An \emph{entailment relation} on a set $S$ is a relation
$\entails$ on $\Fin{S}$ such that
\[
  a \entails a \; (\mathrm{R})
  \qquad
  \qquad
  \frac{A \entails B}{A',A \entails B,B'} \; (\mathrm{M})
  \qquad
  \qquad
  \frac{A \entails B,a \quad a,A \entails B}{A \entails B} \; (\mathrm{T})
\]
for all $a \in S$ and $A,A',B,B' \in \Fin{S}$, where ``$a$'' denotes
$\left\{ a \right\}$ and ``$,$'' denotes the union.
\end{definition}
Each entailment relation $(S, \entails)$ is an entailment system; we
write $\EntRel$ for the full subcategory of $\Ent$ consisting of
entailment relations. As noted by Vickers \cite[Section
6.1]{VickersEntailmentSystem},
the category $\EntRel$ is dually equivalent to $\SpectralP$. Hence
$\EntRel$ is equivalent to $\DLatProx$, which we now elaborate.

\begin{definition}
  Let $S,S'$ be sets and $r$ be a relation $r \subseteq \Fin{S}
  \times \Fin{S'}$.  Define a relation ${\ApproxExt{r}} \subseteq
  \Fin{\Fin{S}} \times \Fin{\Fin{S'}}$ by 
  \[
    \mathcal{U} \mathrel{\ApproxExt{r}} \mathcal{V}
    \defeqiv
    \forall A \in \mathcal{U}
    \forall B \in \mathcal{V}^{*}
    \left( A \mathrel{r} B \right).%
    \footnote{In Vickers's notation \cite[Proposition
    25]{VickersEntailmentSystem}, we have $\mathcal{U}
    \mathrel{\ApproxExt{r}} \mathcal{V} \iff \mathcal{U}
    \mathrel{\overline{r}} \mathcal{V}^{*}$.}
  \]
\end{definition}
\begin{lemma}
  \label{lem:ApproxExtComposition}
  Let $S,S',S''$ be sets and $r,s$ be upper relations 
  $r \subseteq \Fin{S} \times \Fin{S'}$ and $s \subseteq
  \Fin{S'} \times \Fin{S''}$. Then
    $
      \ApproxExt{s \cdot r} {=} \ApproxExt{s} \circ \ApproxExt{r}.
    $
\end{lemma}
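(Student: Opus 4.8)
The statement asserts that the assignment $r \mapsto \ApproxExt{r}$ is functorial for composition, so I would prove the two inclusions $\ApproxExt{s} \circ \ApproxExt{r} \subseteq \ApproxExt{s \cdot r}$ and $\ApproxExt{s \cdot r} \subseteq \ApproxExt{s} \circ \ApproxExt{r}$ separately. For the first inclusion, suppose $\mathcal{U} \mathrel{\ApproxExt{r}} \mathcal{V}$ and $\mathcal{V} \mathrel{\ApproxExt{s}} \mathcal{W}$. To see $\mathcal{U} \mathrel{\ApproxExt{s \cdot r}} \mathcal{W}$, fix $A \in \mathcal{U}$ and $C \in \mathcal{W}^{*}$; I would simply take $\mathcal{V}$ itself as the witness in the definition of $A \mathrel{s \cdot r} C$, since $\forall B' \in \mathcal{V}^{*}\,(A \mathrel{r} B')$ follows from $\mathcal{U} \mathrel{\ApproxExt{r}} \mathcal{V}$ and $\forall B \in \mathcal{V}\,(B \mathrel{s} C)$ from $\mathcal{V} \mathrel{\ApproxExt{s}} \mathcal{W}$. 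This direction needs nothing beyond unwinding the definitions.

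The content is in the reverse inclusion. Given $\mathcal{U} \mathrel{\ApproxExt{s \cdot r}} \mathcal{W}$, for each pair $A \in \mathcal{U}$ and $C \in \mathcal{W}^{*}$ I extract a witness $\mathcal{V}_{A,C} \in \Fin{\Fin{S'}}$ with $\forall B' \in \mathcal{V}_{A,C}^{*}\,(A \mathrel{r} B')$ and $\forall B \in \mathcal{V}_{A,C}\,(B \mathrel{s} C)$, and I must amalgamate this doubly-indexed family into a single $\mathcal{V}$ with $\mathcal{U} \mathrel{\ApproxExt{r}} \mathcal{V}$ and $\mathcal{V} \mathrel{\ApproxExt{s}} \mathcal{W}$. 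The plan is to combine in two stages. First, for fixed $A$, I set $\mathcal{V}_{A}$ to be the family of all unions $\bigcup_{C \in \mathcal{W}^{*}} B_{C}$ obtained by choosing one $B_{C} \in \mathcal{V}_{A,C}$ for each $C$ (the pointwise-union, or \emph{meet}, of the families $\mathcal{V}_{A,C}$); then I put $\mathcal{V} \defeql \bigcup_{A \in \mathcal{U}} \mathcal{V}_{A}$.

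The verification rests on two combinatorial observations about $(-)^{*}$, both using the description of an element of $\mathcal{U}^{*}$ as a union $\bigcup_{X \in \mathcal{U}} E_{X}$ of nonempty subsets $E_{X} \in \PFin{X}$, one chosen per member $X$. First, every element of $(\mathcal{F} \cup \mathcal{G})^{*}$ contains an element of $\mathcal{F}^{*}$: restrict the choice to the members coming from $\mathcal{F}$. Second, every element of the star of a pointwise-union family contains an element of $\bigcup_{C} \mathcal{V}_{A,C}^{*}$: the base case is a family $\mathcal{F} \wedge \mathcal{G}$ of unions $F \cup G$, where for $B' \in (\mathcal{F} \wedge \mathcal{G})^{*}$ one argues that either $B'$ meets every $G \in \mathcal{G}$ (yielding an element of $\mathcal{G}^{*}$ inside $B'$), or some $G_{0}$ is disjoint from $B'$, forcing each chosen nonempty subset of $F \cup G_{0}$ into $F$ and hence producing an element of $\mathcal{F}^{*}$ inside $B'$; the general case follows by induction. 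With these in hand, $\mathcal{V} \mathrel{\ApproxExt{s}} \mathcal{W}$ holds because every $B \in \mathcal{V}_{A}$ contains some $B_{C} \in \mathcal{V}_{A,C}$ with $B_{C} \mathrel{s} C$, so upperness of $s$ gives $B \mathrel{s} C$ for every $C \in \mathcal{W}^{*}$; and $\mathcal{U} \mathrel{\ApproxExt{r}} \mathcal{V}$ holds because, for fixed $A$, any $B' \in \mathcal{V}^{*}$ contains an element of $\mathcal{V}_{A}^{*}$ by the first observation and hence an element $D$ of some $\mathcal{V}_{A,C}^{*}$ by the second, whence $A \mathrel{r} D$ and upperness of $r$ yields $A \mathrel{r} B'$.

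The main obstacle is exactly this amalgamation: the single family $\mathcal{V}$ must serve simultaneously as a succedent for $r$ (universally quantified over $A \in \mathcal{U}$ and over $\mathcal{V}^{*}$) and as an antecedent for $s$ (universally quantified over $\mathcal{V}$ and over $\mathcal{W}^{*}$), and these two roles pull in opposite directions. The two-stage construction together with the essential use of upperness of both $r$ and $s$ is what reconciles them. I would finally check the degenerate cases --- when $\mathcal{U}$ is empty, or when $\emptyset \in \mathcal{W}$ so that $\mathcal{W}^{*} = \emptyset$ and the empty pointwise-union is read as $\left\{ \emptyset \right\}$ --- where all the relevant conditions hold vacuously.
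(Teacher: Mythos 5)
Your proposal is correct in substance and, unlike the paper --- whose ``proof'' of this lemma is simply a citation of Vickers's Propositions 22 and 31 --- it gives a self-contained argument. The easy inclusion is handled exactly as it should be, and for the hard one your two-stage amalgamation (pointwise unions of the witnesses $\mathcal{V}_{A,C}$ over $C \in \mathcal{W}^{*}$, then a plain union over $A \in \mathcal{U}$), combined with the two observations about $(-)^{*}$ and the upperness of $r$ and $s$, is the right machinery; the degenerate cases are also treated correctly. This is useful precisely because the paper outsources these combinatorics.

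The one step that does not survive in the paper's setting is the dichotomy ``either $B'$ meets every $G \in \mathcal{G}$, or some $G_{0} \in \mathcal{G}$ is disjoint from $B'$''. The paper works in \CZF{}, where ``finite'' means finitely enumerable over a set with no decidable equality, so inhabitedness of $B' \cap G$ is not decidable and this instance of excluded middle is not available. The repair is to argue on the choice data rather than on intersections with $B'$: an element of $(\mathcal{F} \wedge \mathcal{G})^{*}$ comes with an inhabited $E_{F,G} \in \PFin{F \cup G}$ for each pair $(F,G)$, so pick $x_{F,G} \in E_{F,G}$ and decide, by the very definition of membership in a union, whether $x_{F,G} \in F$ or $x_{F,G} \in G$. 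The resulting finite Boolean matrix satisfies the \emph{decidable} alternative ``for every $F$ there is some $G$ with $x_{F,G} \in F$'' or ``for some $F_{0}$ every $x_{F_{0},G}$ lies in $G$'', and the two branches yield an element of $\mathcal{F}^{*}$, respectively of $\mathcal{G}^{*}$, contained in $B'$. With this substitution (and the same induction for more than two families) your proof goes through constructively.
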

\begin{proof}
  See Vickers \cite[Proposition 22 and Proposition 31]{VickersEntailmentSystem}.
\end{proof}

An entailment relation $(S, \entails)$ determines a distributive
lattice $\EnttoLat{S,\entails}$ whose underlying set is
$\Fin{\Fin{S}}$ equipped with an equality defined by
\begin{equation*}
  \mathcal{U}  =_{\entails} \mathcal{V} 
  \defeqiv
  \mathcal{U} \mathrel{\ApproxExt{\entails}} \mathcal{V} \amp
  \mathcal{V} \mathrel{\ApproxExt{\entails}} \mathcal{U}.
\end{equation*}
The lattice structure is defined as follows:
\begin{align*}
  0 &\defeql \emptyset, &
  \mathcal{U} \vee \mathcal{V} &\defeql \mathcal{U} \cup
  \mathcal{V},\\
  %
  1 &\defeql \left\{ \emptyset \right\}, &
  \mathcal{U} \wedge \mathcal{V} &\defeql \left\{ A \cup B \mid A \in \mathcal{U} \amp B
  \in \mathcal{V} \right\}.
\end{align*}
It is easy to check that joins and meets are well-defined with
respect to $=_{\entails}$ and that the order
determined by $\EnttoLat{S,\entails}$ is $\ApproxExt{\entails}$.

If $r \colon (S, \entails) \to (S', \entails')$
is a Karoubi morphism, then
\[
  \ApproxExt{\entails'} \circ
  \ApproxExt{r}
  {=} \ApproxExt{{\entails'} \cdot r}
  {=} \ApproxExt{r}
  {=} \ApproxExt{r \cdot {\entails}}
  {=} \ApproxExt{r} \circ \ApproxExt{\entails}
\]
by Lemma \ref{lem:ApproxExtComposition}.
Then, it easy to see that $\ApproxExt{r}$  is a proximity relation 
from $\EnttoLat{S, \entails}$ to $\EnttoLat{S', \entails'}$.

\begin{proposition}
  \label{prop:EntRelEquivDLatProx}
  The assignment $r \mapsto {\ApproxExt{r}}$ determines a functor $E
  \colon \EntRel \to \DLatProx$, which establishes equivalence of
  $\EntRel$ and $\DLatProx$.
\end{proposition}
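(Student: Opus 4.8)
The plan is to verify functoriality directly and then to show that $E$ is full, faithful, and essentially surjective, which gives the equivalence. \emph{Functoriality.} On objects set $E(S,\entails)\defeql\EnttoLat{S,\entails}$ and on morphisms $E(r)\defeql\ApproxExt{r}$; the discussion preceding the statement already shows $\ApproxExt{r}$ is a proximity relation, so $E$ is well defined. Preservation of composition is exactly Lemma~\ref{lem:ApproxExtComposition}, since cut composition is the composition in $\EntRel$ and relational composition is the composition in $\DLatProx$. Preservation of identities uses the already-noted fact that the order of the lattice $\EnttoLat{S,\entails}$---which is the identity of $E(S,\entails)$ in $\DLatProx$---equals $\ApproxExt{\entails}$.

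\emph{Faithfulness.} The computational core is the identity $(\{\{b\}\mid b\in B\})^{*}=\{B\}$ for every $B\in\Fin{S'}$, proved by a short induction along the definition of $(-)^{*}$. From it one reads off, for any relation $t\subseteq\Fin{S}\times\Fin{S'}$, the recovery formula $A\mathrel{t}B\iff\{A\}\ApproxExt{t}\{\{b\}\mid b\in B\}$. Hence $t$ is determined by $\ApproxExt{t}$, so $t\mapsto\ApproxExt{t}$ is injective and, in particular, $E$ is faithful.

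\emph{Fullness.} Given a proximity relation $q\colon\EnttoLat{S,\entails}\to\EnttoLat{S',\entails'}$, I define $r$ by the recovery formula $A\mathrel{r}B\defeqiv\{A\}\mathrel{q}\{\{b\}\mid b\in B\}$ and aim to prove $\ApproxExt{r}=q$. The idea is to decompose elements into the generators $\{\{b\}\}$: every $\mathcal U=\bigvee_{A\in\mathcal U}\{A\}$ is a join, and every $\mathcal V=\bigwedge_{B'\in\mathcal V^{*}}\bigvee_{b\in B'}\{\{b\}\}$ is a meet of joins of generators, the latter equality being precisely the distributive law encoded by $(-)^{*}$. Property~\ref{def:ProximityRelationVee}, that $q^{-}\mathcal V$ is an ideal, reduces $\mathcal U\mathrel{q}\mathcal V$ to the family of conditions $\{A\}\mathrel{q}\mathcal V$ for $A\in\mathcal U$; property~\ref{def:ProximityRelationWedge}, that $q$ sends each element to a filter, reduces each of these to $\{A\}\mathrel{q}\{\{b\}\mid b\in B'\}$ for $B'\in\mathcal V^{*}$, i.e.\ to $A\mathrel{r}B'$. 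Together these give $\mathcal U\mathrel{q}\mathcal V\iff\mathcal U\ApproxExt{r}\mathcal V$, that is $\ApproxExt{r}=q$. The same ideal/filter reasoning shows $r$ is upper; then Lemma~\ref{lem:ApproxExtComposition} yields $\ApproxExt{\entails'\cdot r}=\ApproxExt{\entails'}\circ\ApproxExt{r}=q=\ApproxExt{r}$ and, symmetrically, $\ApproxExt{r\cdot\entails}=\ApproxExt{r}$, whence the injectivity of $\ApproxExt{(-)}$ established above upgrades these to $\entails'\cdot r=r=r\cdot\entails$. Thus $r$ is a Karoubi morphism with $E(r)=q$.

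\emph{Essential surjectivity, and the main obstacle.} For a distributive lattice $D$ I would take $S\defeql D$ with the canonical entailment relation $A\entails B\iff\medwedge_{a\in A}a\leq_{D}\medvee_{b\in B}b$, for which $(\mathrm R),(\mathrm M),(\mathrm T)$ are routine consequences of the lattice axioms (distributivity for $(\mathrm T)$). The homomorphism $\mathcal U\mapsto\bigvee_{A\in\mathcal U}\bigwedge_{a\in A}a$ onto $D$ is surjective and order-reflecting, so its kernel is exactly $=_{\entails}$, giving a lattice isomorphism $\EnttoLat{D,\entails}\cong D$; this lifts to an isomorphism in $\DLatProx$ because any lattice isomorphism $\phi$ induces the invertible proximity relation $\{(x,y)\mid\phi(x)\leq y\}$. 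I expect the fullness step to be the genuine obstacle: matching the inductive operation $(-)^{*}$ with the conjunctive-normal-form decomposition of $\mathcal V$, and applying the ideal and filter closure conditions in the correct order, is the delicate point, while everything else is bookkeeping.
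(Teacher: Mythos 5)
Your proof is correct and follows essentially the same route as the paper: functoriality via Lemma~\ref{lem:ApproxExtComposition}, faithfulness and fullness via the recovery formula at generators (your $\{\{b\}\mid b\in B\}$ and the paper's $\{B\}^{*}$ play the same role), and essential surjectivity via the canonical entailment relation $\medwedge A\leq\medvee B$ on a distributive lattice. The only differences are that you spell out the join/meet decomposition behind the paper's ``clearly, $\mathcal U\mathrel{\ApproxExt{\hat r}}\mathcal V\leftrightarrow\mathcal U\mathrel{r}\mathcal V$'' and package the final step as a lattice isomorphism rather than writing the two inverse proximity relations explicitly, which amounts to the same thing.
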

\begin{proof}
  Functoriality of $E$ follows from the construction of
  $\EnttoLat{S, \entails}$ and Lemma~\ref{lem:ApproxExtComposition}.
  Moreover, $E$ is faithful because $A \mathrel{r} B 
  \leftrightarrow \left\{ A \right\} \mathrel{\ApproxExt{r}} \left\{ B
  \right\}^{*}$.
  To see that $E$ is full, 
  for any proximity relation
  $r \colon \EnttoLat{S, \entails} \to
  \EnttoLat{S', \entails'}$, define
  $\hat{r} \subseteq \Fin{S} \times \Fin{S'}$ by
    \begin{align*}
    A \mathrel{\hat{r}} B 
    &\defeqiv
    \left\{ A \right\} \mathrel{r} \left\{ B \right\}^{*}.\\
  \intertext{
  Clearly, we have $\mathcal{U} \mathrel{\ApproxExt{\hat{r}}} \mathcal{V}
  \leftrightarrow
    \mathcal{U} \mathrel{r} \mathcal{V}$. Moreover}
      A \mathrel{(\entails' \cdot \hat{r})} B
      &\iff
      \left\{ A \right\} \mathrel{(\ApproxExt{{\entails'} \cdot \hat{r}})} \left\{
      B \right\}^{*} \\
      &\iff
      \left\{ A \right\} \mathrel{({\ApproxExt{\entails'}} \circ
      {\ApproxExt{\hat{r}}})} \left\{
      B \right\}^{*} \\
      &\iff
      \left\{ A \right\} \mathrel{({\ApproxExt{\entails'}} \circ r)}
      \left\{ B \right\}^{*} \\
      &\iff
      \left\{ A \right\} \mathrel{r} \left\{ B \right\}^{*} \\
      &\iff
      A \mathrel{\hat{r}} B.
  \end{align*}
  Similarly $\hat{r} \cdot {\entails} = \hat{r}$,  so $\hat{r}$ is a Karoubi morphism.
    Thus $E$ is full.
    To see that $E$ is essentially surjective, 
    for any distributive lattice $(S, 0, \vee,
  1, \wedge)$, define an
  entailment relation $(S, \entails)$ by
  \begin{equation}
    \label{eq:EntofDLat}
    A \entails B \defeqiv \medwedge A \leq \medvee B.
  \end{equation}
  Then, define relations
  $r \subseteq S \times \Fin{\Fin{S}}$ 
  and 
  $s \subseteq  \Fin{\Fin{S}} \times S$ by
  \begin{align*}
    a \mathrel{r} \mathcal{U} 
    &\defeqiv
    a \leq \bigvee_{A \in \mathcal{U}} \medwedge A,  &
   \mathcal{U} \mathrel{s} a
    &\defeqiv
    \bigvee_{A \in \mathcal{U}} \medwedge A \leq a.
  \end{align*}
  It is straightforward to show that $r$ and $s$ 
  are proximity relations between $S$ and $\EnttoLat{S, \entails}$ and
  inverse to each other.
\end{proof}
\begin{remark}
  \label{rem:QuasiInverseofE}
  By the standard construction (Mac Lane
  \cite[Chapter~IV, Section~4, Theorem~1]{Category_at_work}),
  we can define a quasi-inverse $D \colon \DLatProx \to \EntRel$
  of $E \colon \EntRel \to \DLatProx$ by
  \[
    D(S,0, \vee, 1, \wedge) \defeql (S, \entails),
  \]
  where $\entails$ is defined by \eqref{eq:EntofDLat}. 
  The functor $D$ sends a proximity
  relation $r \colon S \to S'$ to a Karoubi morphism
  $D(r) \colon D(S) \to D(S')$ defined by
  \[
    A \mathrel{D(r)} B
    \defeqiv \medwedge A \mathrel{r} \medvee
    B.
  \]
\end{remark}

By Proposition \ref{prop:EntRelEquivDLatProx},
the category $\Karoubi{EntRel}$ is equivalent to $\ProxLat$, and hence
 dually equivalent to the category of stably compact locales and
preframe homomorphisms. 
Unfolding the definition  of $\Karoubi{\EntRel}$, we have the
following characterisations of its objects and morphisms (Definition
\ref{def:ContEntRel} and \ref{def:ProxMap}).

\begin{definition}
  \label{def:ContEntRel}
  A \emph{continuous entailment relation} is an entailment relation
  $(S, \entails)$ equipped with an idempotent Karoubi endomorphism 
  $\ll \mathop{\subseteq} \Fin{S} \times \Fin{S}$ on $(S, \entails)$.
  We write $(S, \entails, \ll)$ for a continuous entailment relation.
\end{definition}
Note that each continuous entailment relation $(S, \entails, \ll)$ 
has an associated entailment system $(S, \ll)$.
\begin{definition}
  \label{def:ProxMap}
  Let $(S, \entails, \ll)$ and $(S', \entails', \ll')$
  be continuous entailment relations.
  A \emph{proximity map} from 
  $(S, \entails, \ll)$ to 
  $(S', \entails', \ll')$ is a Karoubi morphism between entailment systems
  $(S, \ll)$ and $(S', \ll')$.
\end{definition}
In the following, we write $\ContEnt$ for $\Karoubi{\EntRel}$.
Since an entailment system $(S, \ll)$ can be identified with a continuous
entailment relation $(S, \meets, \ll)$ where
\[
  A \meets B \defeqiv \left( \exists a \in S \right) a \in A \cap B,
\]
the category $\Ent$ can be regarded as a full
subcategory of $\ContEnt$.  On the other hand, each continuous
entailment relation $(S, \entails, \ll)$ is isomorphic to $(S, \meets,
\ll)$ with $\ll$ being the isomorphism.
Thus, we have the following.
\begin{proposition}
  \label{prop:ContEntEquivEnt}
  The categories $\ContEnt$ and $\Ent$ are equivalent.
  \qed
\end{proposition}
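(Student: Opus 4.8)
The plan is to realise $\Ent$ as a full subcategory of $\ContEnt$ via the assignment $G \colon (S, \ll) \mapsto (S, \meets, \ll)$ and then to show this embedding is essentially surjective; since a full, faithful, and essentially surjective functor is an equivalence, this suffices. The two substantive tasks are checking that $G$ is well defined and fully faithful, and exhibiting, for an arbitrary continuous entailment relation $(S, \entails, \ll)$, an isomorphism onto an object in the image of $G$.

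First I would confirm that $(S, \meets, \ll)$ is a genuine continuous entailment relation. That $\meets$ is an entailment relation is routine: (R) and (M) are immediate, and for (T) one derives $A \meets B$ from $A \meets B \cup \{a\}$ and $A \cup \{a\} \meets B$ by noting that if neither witness already lies in $A \cap B$, then the two witnesses force $a \in A$ and $a \in B$, whence $a \in A \cap B$. It then remains to see that $\ll$ is an idempotent Karoubi endomorphism of $(S, \meets)$; upperness and idempotency are inherited from the entailment system $(S, \ll)$, so the only real point is the pair of identities $\meets \cdot \ll = \ll = \ll \cdot \meets$. These express exactly that the reflexive relation $\meets$ is a two-sided unit for cut composition on upper relations---the same fact underlying its role as the identity of $(S, \meets)$ in $\Ent$---and I would verify them directly: for $\meets \cdot \ll = \ll$ take $\mathcal{V} = \{\{c\} \mid c \in C\}$ as a witness in one direction, and in the other select a transversal $B' \in \mathcal{V}^{*}$ contained in $C$ and apply upperness of $\ll$; the argument for $\ll \cdot \meets = \ll$ is dual.

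For fullness and faithfulness I would reduce everything to the observation that, by Definition \ref{def:ProxMap}, a morphism of $\ContEnt$ from $(S, \entails, \ll)$ to $(S', \entails', \ll')$ is precisely an upper relation $r$ with $\ll' \cdot r = r = r \cdot \ll$---a condition referring only to the idempotent parts $\ll, \ll'$ and not to $\entails, \entails'$. (Consistency with the Karoubi-envelope definition uses $\ll \cdot \entails = \ll$ together with associativity of cut composition to recover $r \cdot \entails = r$, and dually $\entails' \cdot \ll' = \ll'$ gives $\entails' \cdot r = r$.) Consequently $\Hom{\ContEnt}{(S, \meets, \ll)}{(S', \meets, \ll')}$ coincides on the nose with $\Hom{\Ent}{(S, \ll)}{(S', \ll')}$, so $G$ is fully faithful.

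Finally, for essential surjectivity I would show that the relation $\ll$ itself is an isomorphism $(S, \entails, \ll) \to (S, \meets, \ll) = G(S, \ll)$ in $\ContEnt$. By the identities of the second paragraph, $\ll$ is a legitimate morphism in either direction, and since the identity on a Karoubi object is the idempotent itself, both composites equal $\ll \cdot \ll = \ll$, the respective identities. Hence every object of $\ContEnt$ is isomorphic to one in the image of $G$, and the equivalence $\ContEnt \simeq \Ent$ follows. The step I expect to be the main obstacle is precisely pinning down that morphisms in $\ContEnt$ are insensitive to the reflexive part $\entails$; once this is secured, both full faithfulness and the isomorphism $\ll$ are immediate.
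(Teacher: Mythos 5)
Your proposal is correct and follows exactly the route the paper takes in the paragraph preceding the proposition: embed $\Ent$ into $\ContEnt$ as the full subcategory of objects of the form $(S,\meets,\ll)$, and observe that every $(S,\entails,\ll)$ is isomorphic to $(S,\meets,\ll)$ with $\ll$ itself as the isomorphism. You merely supply the routine verifications (that $\meets \cdot \ll = \ll = \ll \cdot \meets$ and that hom-sets depend only on the idempotent parts) that the paper leaves implicit under its \qed.
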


\section{Strong proximity lattices} \label{sec:StrongProximityLattice}
Proximity lattices and continuous entailment relations have nice
structural duality, which will be elaborated in Section
\ref{sec:deGrootDuality}. However, stably compact locales
represented by these structures do not seem to admit 
simple geometric presentations. In the case of a proximity lattice $(S,
\prox)$, for example, $\Spec{S}$ can be presented by
a geometric theory over $S$ with the following axioms:
\begin{equation}
  \label{eq:GeoTheoryProxLat}
  \begin{gathered}
    a \entails \bot
    \quad (\text{if $a \prox 0$}),
   \quad \qquad
    c \entails a \vee b
    \qquad (\text{if $c \prox a' \vee b', a' \prox a, b' \prox b$}), \\
    \top  \entails 1,
    \quad\qquad
    a \wedge b \entails (a \wedge b),
    \quad\qquad
    a  \entails b \qquad (\text{if $a \leq b$}),\\
    a  \entails b \qquad (\text{if $a \prox b$}),
    \quad\qquad
    a  \entails \bigvee_{b \prox a}b.
  \end{gathered}
\end{equation}
In the case of continuous entailment relations (or entailment
systems), the presentations of the locales represented by these
structures are more elaborate; see Vickers
\cite[Corollary~44]{VickersEntailmentSystem}.
To obtain a simpler presentation of $\Spec{S}$, we strengthen the
notion of proximity lattice to strong proximity
lattice~\cite{JungSunderhaufDualtyCompactOpen}, which can be obtained
from the well-known fact that stably compact locales are the retracts
of the spectral locales. 

\subsection{Strong proximity lattices} \label{subsec:StrongProximityLattice}
We start from a finitary description of locale maps between  spectral
locales.
\begin{definition}
  Let $S$ and $S'$ be distributive lattices. 
  A proximity relation $r \colon S \to S'$ is said to be
  \emph{join-preserving} if
  \begin{enumerate}
    \myitem[(Prox$0$)]\label{def:proximity0}
    $a \mathrel{r} 0' \imp a = 0$,
    \myitem[(Prox$\vee$)]\label{def:proximityJ}
    $a \mathrel{r} (b \vee' c) \imp \exists b', c' \in S  \left( a \leq b' \vee c' \amp b'
    \mathrel{r} b \amp c' \mathrel{r} c \right)$.
  \end{enumerate}
\end{definition}
Proposition \ref{prop:ProximityRelPreframeHom} restricts 
to join-preserving proximity relations and frame homomorphisms.
\begin{proposition}
  \label{prop:jProximityRelFrameHom}
  For any join-preserving proximity relation $r \colon S \to S'$
  between distributive lattices,
  there exists a unique frame homomorphism $f \colon \Ideals{S'} \to
  \Ideals{S}$ such that $f(\downset b) = r^{-}b$ for all $b \in S'$.
\end{proposition}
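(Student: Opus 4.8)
The plan is to bootstrap from Proposition~\ref{prop:ProximityRelPreframeHom}, which already supplies a unique preframe homomorphism $f \colon \Ideals{S'} \to \Ideals{S}$ with $f(\downset b) = r^{-}b$ for all $b \in S'$. Uniqueness is therefore inherited immediately: any frame homomorphism with that property is in particular a preframe homomorphism with that property, and there is only one such. The entire new content is that this $f$ is actually a \emph{frame} homomorphism. Since $f$ already preserves finite meets and directed joins, and the join of an arbitrary subset is the directed join, over its finite subsets, of the corresponding finite joins, it suffices to show that $f$ preserves the empty join $0$ and binary joins.

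For the bottom element, recall that the least ideal of $S$ is $\downset 0 = \left\{ 0 \right\}$ and that every ideal, being closed under the empty join, contains $0$. Hence $r^{-}0'$ is an ideal containing $0$, while \ref{def:proximity0} forces $r^{-}0' \subseteq \left\{ 0 \right\}$; thus $f(0) = f(\downset 0') = r^{-}0' = \downset 0 = 0$. For binary joins I would first reduce to principal ideals: writing $I = \bigvee_{a \in I}\downset a$ and $J = \bigvee_{b \in J}\downset b$ as in \eqref{eq:PrincipleGenerate}, the family $\left\{ \downset a \vee \downset b \mid a \in I, b \in J \right\}$ is directed with join $I \vee J$, so using that $f$ preserves directed joins, the identity $f(I \vee J) = f(I) \vee f(J)$ follows once $f(\downset a \vee \downset b) = f(\downset a) \vee f(\downset b)$ is known for principal ideals. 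Since $\downset a \vee \downset b = \downset(a \vee b)$ in $\Ideals{S'}$, this in turn reduces to the single equation $r^{-}(a \vee b) = r^{-}a \vee r^{-}b$ in $\Ideals{S}$.

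The core of the argument is this last equation. The inclusion $r^{-}a \vee r^{-}b \subseteq r^{-}(a \vee b)$ is the easy direction: by \ref{def:ProximityRelationWedge} each set $r c$ is a filter, hence upward closed, so from $a, b \leq a \vee b$ we get $r^{-}a, r^{-}b \subseteq r^{-}(a \vee b)$, and since the right-hand side is an ideal it contains their ideal join. The reverse inclusion is where join-preservation does the real work, and is the step I expect to be the main obstacle to get stated cleanly: given $c \in r^{-}(a \vee b)$, that is $c \mathrel{r}(a \vee b)$, condition \ref{def:proximityJ} yields $a', b' \in S$ with $c \leq a' \vee b'$, $a' \mathrel{r} a$, and $b' \mathrel{r} b$; then $a' \in r^{-}a$ and $b' \in r^{-}b$, so $a' \vee b'$ lies in the ideal $r^{-}a \vee r^{-}b$, and downward closure gives $c \in r^{-}a \vee r^{-}b$. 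Combining the bottom and the binary cases promotes $f$ to a frame homomorphism, which together with the uniqueness inherited from Proposition~\ref{prop:ProximityRelPreframeHom} completes the proof.
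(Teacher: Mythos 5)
Your proof is correct and follows essentially the same route as the paper: both arguments come down to checking that $b \mapsto r^{-}b$ is a lattice homomorphism from $S'$ to $\Ideals{S}$, i.e., that \ref{def:proximity0} gives $r^{-}0' = \downset 0$ and \ref{def:proximityJ} together with \ref{def:ProximityRelationWedge} gives $r^{-}(a \vee b) = r^{-}a \vee r^{-}b$. The only difference is that the paper delegates the subsequent extension to a frame homomorphism to Vickers's Theorem 9.1.5 (iii)(iv), whereas you carry out that step explicitly by reducing arbitrary finite joins of ideals to the principal case via directed joins; both are fine.
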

\begin{proof}
  The proof is similar to Proposition
  \ref{prop:ProximityRelPreframeHom}.
  One only has to note that a join-preserving proximity relation
  $r \colon S \to S'$ uniquely extends to a lattice homomorphism 
  $f_{r} \colon S' \to \Ideals{S}$ defined as \eqref{eq:ProximityRelMeetHom}.
  The desired conclusion then follows from  Vickers \cite[Theorem
  9.1.5 (iii) (iv)]{vickers1989topology}.
\end{proof}
Let $\DLatjProx$ be the subcategory of $\DLatProx$ where
morphisms are join-preserving proximity relations.
Let $\Spectral$ be the category of spectral locales and locale maps.
The following is immediate from Proposition \ref{prop:jProximityRelFrameHom}.
\begin{theorem}
  \label{thm:DLatjProxEquivSpectral}
  The category $\DLatjProx$ is equivalent to $\Spectral$.
  \qed
\end{theorem}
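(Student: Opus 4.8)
The plan is to construct a covariant functor $F \colon \DLatjProx \to \Spectral$ and to show it is full, faithful, and essentially surjective, reusing the correspondence already set up for the preframe case. On objects I set $F(S) \defeql \Spec{S}$, the spectral locale whose frame is $\Ideals{S}$. Given a morphism of $\DLatjProx$, that is, a join-preserving proximity relation $r \colon S \to S'$, Proposition~\ref{prop:jProximityRelFrameHom} produces a frame homomorphism $f_{r} \colon \Ideals{S'} \to \Ideals{S}$; since the category of locales is the opposite of that of frames, $f_{r}$ is precisely a locale map $\Spec{S} \to \Spec{S'}$, which I take to be $F(r)$.

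Functoriality I would inherit from Proposition~\ref{prop:ProximityRelPreframeHom}, which already records that $r \mapsto f_{r}$ preserves identities and compositions. The one point worth flagging is variance: passing from a proximity relation to a locale map involves two order reversals, the first in $r \mapsto f_{r}$ (a relation $S \to S'$ becomes a homomorphism $\Ideals{S'} \to \Ideals{S}$) and the second in the frame--locale duality (that homomorphism becomes a locale map $\Spec{S} \to \Spec{S'}$); these cancel, so $F$ is covariant, in contrast with the dual equivalence of Theorem~\ref{thm:DLatProxEquivSpectralP}. As a by-product, closure of join-preserving proximity relations under composition --- needed for $\DLatjProx$ to be a genuine subcategory --- drops out of the key lemma below together with the fact that frame homomorphisms compose to frame homomorphisms.

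Fullness and faithfulness both reduce to a single claim: the bijection $r \mapsto f_{r}$ of Proposition~\ref{prop:ProximityRelPreframeHom}, between proximity relations and preframe homomorphisms, restricts to a bijection between join-preserving proximity relations and frame homomorphisms. Since a preframe homomorphism is a frame homomorphism exactly when it additionally preserves the bottom and binary joins, the crux --- and the step I expect to be the main, if routine, obstacle --- is the lemma that, under $r \mapsto f_{r}$, condition \ref{def:proximity0} matches preservation of $0$ and condition \ref{def:proximityJ} matches preservation of binary joins. Concretely, $f_{r}$ preserves $0$ iff $r^{-}0' = \downset 0$, which unwinds (using that every ideal contains $0$) to $a \mathrel{r} 0' \imp a = 0$; and comparing $f_{r}(\downset(b \vee' c))$ with $f_{r}(\downset b) \vee f_{r}(\downset c)$ in $\Ideals{S}$, the inclusion $\supseteq$ holds for any proximity relation (because $r a$ is a filter and $r^{-}(b \vee' c)$ an ideal) while the reverse inclusion is exactly \ref{def:proximityJ}. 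For fullness, given a frame homomorphism $f \colon \Ideals{S'} \to \Ideals{S}$ I would exhibit the witness $a \mathrel{r} b \defeqiv a \in f(\downset b)$ and check, via the same lemma, that it is join-preserving with $f_{r} = f$.

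Finally, essential surjectivity is immediate from the definition of a spectral locale: its frame is the ideal completion of some distributive lattice $S$, so it is isomorphic to $\Spec{S} = F(S)$. A full, faithful, essentially surjective functor is an equivalence --- with an explicit quasi-inverse obtained as in Remark~\ref{rem:QuasiInverseofE} from the witness above --- giving $\DLatjProx \simeq \Spectral$, as claimed.
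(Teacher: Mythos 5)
Your proposal is correct and follows exactly the route the paper intends: the paper's entire proof is ``immediate from Proposition~\ref{prop:jProximityRelFrameHom}'', and what you have written is simply that argument with the details (the matching of \ref{def:proximity0} and \ref{def:proximityJ} with preservation of $0$ and of binary joins, the variance bookkeeping, and the object-level triviality of essential surjectivity given how $\Spectral$ is set up) made explicit. No gaps.
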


Since $\Karoubi{\Spectral}$ is equivalent to the category of stably
compact locales and locale maps, $\Karoubi{\DLatjProx}$ is equivalent
to the latter category.  The objects and morphisms of
$\Karoubi{\DLatjProx}$ are called \emph{$\vee$-strong proximity
lattices} and \emph{joint-preserving proximity relations}
respectively (cf.\ van Gool \cite[Definition 1.2 and Definition
1.9]{vanGoolDualityCanExt}).
We write $\jSPxLat$ for $\Karoubi{\DLatjProx}$ and adopt the similar
notation as in Notation \ref{not:ProxLat} for $\vee$-strong proximity
lattices and join-preserving proximity relations between them.

As in the case of proximity lattices, each $\vee$-strong proximity
lattice $(S, \prox)$ represents a stably compact locale 
by the collection $\RIdeals{S}$ of rounded ideals. 
In this case, the locale $\Spec{S}$ can be presented by a
simpler geometric theory than \eqref{eq:GeoTheoryProxLat},
as we now show.
 
Let $X$ and $Y$ be locales. A function $f \colon \Frame{X} \to
\Frame{Y}$ is \emph{Scott continuous}  if it preserves directed joins.
A Scott continuous function is a \emph{suplattice 
homomorphism} if it preserves finite joins (and hence all joins).
\begin{definition}
  Let $(S,\prox)$ be a $\vee$-strong proximity lattice and $X$ be a
  locale.  A \emph{dcpo interpretation} of $(S,\prox)$
  in $X$ is an order preserving function $f \colon S \to
  \Frame{X}$ such that
  $
  f(a) = \bigvee_{b \prox a} f(b).
  $
  A dcpo interpretation $f \colon S \to \Frame{X}$ is called a
  \emph{suplattice (preframe) interpretation} if it preserves finite
  joins (resp.\ finite meets); $f$ is called an \emph{interpretation} if it
  preserves both finite joins and finite meets.
\end{definition}

Any $\vee$-strong proximity lattice $(S,\prox)$ admits
an interpretation $i_{S} \colon S \to \RIdeals{S}$ defined by
\begin{equation}\label{eq:InjectionGenerator}
  i_{S}(a) \defeql \downset_{\prox} a.
\end{equation}

\begin{proposition}
  \label{prop:BijInterpretationScottCont}
  Let $(S,\prox)$ be a $\vee$-strong proximity lattice and $X$ be a locale.
  For any (dcpo, suplattice, preframe) interpretation $f \colon S \to
  \Frame{X}$, there exists a unique frame homomorphism (resp.\ Scott
  continuous function, suplattice homomorphism, preframe homomorphism)
  $\overline{f} \colon \RIdeals{S} \to \Frame{X}$ such that
  $\overline{f} \circ i_{S} = f$.
\end{proposition}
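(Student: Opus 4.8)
The plan is to show that the only candidate permitted by Scott continuity actually works. By the displayed identity $I = \bigvee_{a \in I} \downset_{\prox} a = \bigvee_{a \in I} i_{S}(a)$ recorded earlier, together with the fact that $(i_{S}(a))_{a \in I}$ is directed (since $I$, being an ideal, is directed and $i_{S}$ is order preserving, so $i_{S}(a_{1}), i_{S}(a_{2}) \leq i_{S}(a_{1} \vee a_{2})$), any Scott continuous $\overline{f}$ with $\overline{f} \circ i_{S} = f$ must satisfy $\overline{f}(I) = \bigvee_{a \in I} \overline{f}(i_{S}(a)) = \bigvee_{a \in I} f(a)$. This forces the map and hence gives uniqueness in all four cases at once, since frame, suplattice, and preframe homomorphisms are all Scott continuous. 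I would therefore \emph{define} $\overline{f}(I) \defeql \bigvee_{a \in I} f(a)$, which exists because $\Frame{X}$ is a frame, and then verify the required properties.

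Two properties are common to all four cases: the factorisation $\overline{f} \circ i_{S} = f$ and Scott continuity. For the factorisation the key computation is $\overline{f}(i_{S}(a)) = \overline{f}(\downset_{\prox} a) = \bigvee_{c \prox a} f(c) = f(a)$, where the final equality is exactly the defining condition of a dcpo interpretation (and $\downset_{\prox} a$ is a rounded ideal precisely because $\prox$ is idempotent, i.e.\ interpolative). For Scott continuity I would use that directed joins in $\RIdeals{S}$ are computed as unions, as in $\Ideals{S}$: for a directed family $(I_{j})_{j}$ one has $\overline{f}(\bigcup_{j} I_{j}) = \bigvee_{a \in \bigcup_{j} I_{j}} f(a) = \bigvee_{j} \bigvee_{a \in I_{j}} f(a) = \bigvee_{j} \overline{f}(I_{j})$, using only associativity of joins in $\Frame{X}$.

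It then remains to treat the finitary structure. When $f$ preserves finite meets, $\overline{f}$ preserves the top, since $1 \in S$ gives $\overline{f}(S) \geq f(1) = \top$, and it preserves binary meets by frame distributivity: using $I \wedge J = \left\{ a \wedge b \mid a \in I, b \in J \right\}$ and $f(a \wedge b) = f(a) \wedge f(b)$, both $\overline{f}(I \wedge J)$ and $\overline{f}(I) \wedge \overline{f}(J)$ reduce to $\bigvee_{a \in I, b \in J} \left( f(a) \wedge f(b) \right)$, so $\overline{f}$ is a preframe homomorphism. When $f$ preserves finite joins, $\overline{f}$ preserves the bottom via $\overline{f}(\downset_{\prox} 0) = f(0) = \bot$, and it preserves binary joins: expanding $I \vee J = \bigcup_{a \in I, b \in J} \downset_{\prox}(a \vee b)$ gives $\overline{f}(I \vee J) = \bigvee_{a \in I, b \in J} \bigvee_{c \prox a \vee b} f(c)$, each inner join collapsing to $f(a \vee b)$ by the factorisation identity, which matches $\overline{f}(I) \vee \overline{f}(J) = \bigvee_{a \in I, b \in J} f(a \vee b)$ computed by distributivity; so $\overline{f}$ is a suplattice homomorphism. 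Finally, if $f$ is a full interpretation, then $\overline{f}$ preserves finite meets, finite joins, and directed joins, hence all joins, and is therefore a frame homomorphism.

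I expect the preservation of binary joins to be the main obstacle, because joins in $\RIdeals{S}$ are not taken naively but through the rounded-below operation $\downset_{\prox}(a \vee b)$; the crux is to see that the extra layer $\bigvee_{c \prox a \vee b} f(c)$ collapses to $f(a \vee b)$ via the dcpo interpretation condition, after which the identity follows from frame distributivity. The finite-meet and directed-join cases are comparatively routine.
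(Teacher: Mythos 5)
Your proposal is correct and follows essentially the same route as the paper: the paper likewise defines $\overline{f}(I) \defeql \bigvee_{a \in I} f(a)$, obtains $\overline{f} \circ i_{S} = f$ from the dcpo interpretation condition, and defers the remaining verifications to the analogous result for spectral locales in Vickers. You simply carry out those routine verifications (uniqueness via Scott continuity and $I = \bigvee_{a \in I} \downset_{\prox} a$, and the preservation of finite meets and of finite joins, the latter using the collapse of $\bigvee_{c \prox a \vee b} f(c)$ to $f(a \vee b)$) explicitly, and they all check out.
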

\begin{proof}
  See Vickers \cite[Theorem 9.1.5]{vickers1989topology} where
  an analogous fact for spectral locales is presented.
  The unique extension of $f$ is defined by
  $
  \overline{f}(I) \defeql \bigvee_{a \in I} f(a)
  $. Since $f$ is a dcpo interpretation,
  we have $\overline{f} \circ i_{S} = f$.
\end{proof}

\begin{remark}
  \label{rem:DcpoPreframeInterpretation}
  For dcpo and preframe interpretations, Proposition
  \ref{prop:BijInterpretationScottCont} holds for proximity lattices
  as well. In this case, however, the function 
  $i_{S} \colon S \to \RIdeals{S}$ does not necessarily preserve
  finite joins, so it is only a preframe interpretation.
\end{remark}

\begin{corollary}
  \label{cor:SProxLatPresentation}
  For any $\vee$-strong proximity lattice $(S, \prox)$, the locale
  $\Spec{S}$ is presented by a geometric theory over $S$ with the
  following axioms:\footnote{The left part of axiom $(a \vee b) \entails
  a \vee b$ denotes the generator $a \vee b$, while the right part is
the disjunction of two generators $a$ and $b$. The similar remark
applies to $a \wedge b \entails (a \wedge b)$.}
  \begin{gather*}
    0 \entails \bot,
    \qquad
    (a \vee b) \entails a \vee b,
    \qquad
    \top  \entails 1,
    \qquad
    a \wedge b \entails (a \wedge b),
    \qquad
    a \entails b \;\; (\text{if $a \leq b$}),\\
    \notag
    a  \entails b \;\; (\text{if $a \prox b$}),
    \qquad \quad
    a  \entails \bigvee_{b \prox a}b.
  \end{gather*}
\end{corollary}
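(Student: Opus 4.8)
The plan is to verify that the pair $(\RIdeals{S}, i_{S})$, where $i_{S} \colon S \to \RIdeals{S}$ is the interpretation defined in \eqref{eq:InjectionGenerator}, is the universal interpretation of the geometric theory $T$ displayed in the statement; this is exactly what it means for $\Spec{S}$ to be presented by $T$. Since Proposition~\ref{prop:BijInterpretationScottCont} already establishes the required universal mapping property of $(\RIdeals{S}, i_{S})$ against every \emph{interpretation} of $(S,\prox)$ in the sense of the preceding definition (an order preserving dcpo interpretation preserving finite joins and finite meets), the entire statement reduces to a single identification: for a function $f \colon S \to \Frame{X}$, being an interpretation of $T$ in the sense of Section~\ref{sec:PreliminaryLocale} is the same as being an interpretation of $(S,\prox)$.

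First I would translate each axiom into an (in)equality in $\Frame{X}$ using the definition of interpretation of a geometric theory, reading off the footnote conventions for the two axioms $(a \vee b) \entails a \vee b$ and $a \wedge b \entails (a \wedge b)$. Concretely, $a \entails b$ for $a \leq b$ says that $f$ is order preserving; $0 \entails \bot$ and $\top \entails 1$ force $f(0) = 0$ and $f(1) = 1$; $(a \vee b) \entails a \vee b$ gives $f(a \vee b) \leq f(a) \vee f(b)$, which with order preservation yields $f(a \vee b) = f(a) \vee f(b)$; dually $a \wedge b \entails (a \wedge b)$ gives $f(a) \wedge f(b) \leq f(a \wedge b)$, hence $f(a \wedge b) = f(a) \wedge f(b)$; and $a \entails \bigvee_{b \prox a} b$ gives $f(a) \leq \bigvee_{b \prox a} f(b)$. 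For the reverse inequality of the last one, the axiom $a \entails b$ for $a \prox b$ supplies $f(b) \leq f(a)$ whenever $b \prox a$, so that $\bigvee_{b \prox a} f(b) \leq f(a)$ and thus $f(a) = \bigvee_{b \prox a} f(b)$, the defining equation of a dcpo interpretation. I would also note that this last join is directed, and hence genuinely a dcpo join: by \ref{def:ProximityRelationVee} the set $\downset_{\prox} a = \{\, b \mid b \prox a \,\}$ is an ideal, so it is directed, and order preservation of $f$ transports this directedness to $\{\, f(b) \mid b \prox a \,\}$.

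For the converse I would check that an interpretation of $(S,\prox)$ validates every axiom of $T$; all of them are immediate from order preservation and the two preservation equalities, except $a \entails b$ for $a \prox b$, which is now recovered from the dcpo equation $f(b) = \bigvee_{c \prox b} f(c)$, since $a \prox b$ makes $f(a)$ one of the joinands. With the two notions of interpretation identified, the corollary follows by transporting the universal property of Proposition~\ref{prop:BijInterpretationScottCont} along the identification, noting that $i_{S}$ is itself an interpretation of $(S,\prox)$ and hence of $T$. The only genuinely delicate point is the bookkeeping of the previous paragraph, in particular keeping straight the two directions of the $\vee$- and $\wedge$-axioms fixed by the footnote, and observing that the apparently redundant axiom $a \entails b$ (for $a \prox b$) is exactly what converts the one-sided inequality $f(a) \le \bigvee_{b \prox a} f(b)$ into the dcpo equality; the heavy lifting of extending $f$ to a frame homomorphism on $\RIdeals{S}$ is already carried out by Proposition~\ref{prop:BijInterpretationScottCont}.
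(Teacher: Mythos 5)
Your proof is correct and takes essentially the same route as the paper, whose entire proof is the citation ``Immediate from the frame version of Proposition~\ref{prop:BijInterpretationScottCont}''. Your axiom-by-axiom identification of interpretations of the displayed theory with interpretations of $(S,\prox)$ is exactly the bookkeeping the paper leaves implicit, and it is carried out correctly.
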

\begin{proof}
  Immediate from the frame version of Proposition
\ref{prop:BijInterpretationScottCont}.
\end{proof}
Note that models of the geometric theory in Corollary
\ref{cor:SProxLatPresentation}
are rounded prime filters of $(S,\prox)$, i.e., those prime filters
$F$ on $S$ such that $a \in F \leftrightarrow \exists b \prox a \left( b \in F
\right)$. 

Let $\jSPxLatP$ be the full subcategory of $\ProxLat$ consisting of
$\vee$-strong proximity lattices. 
\begin{theorem}
  \label{prop:EquivPxLatandjSPxLat}
  The category $\jSPxLatP$ is equivalent to $\ProxLat$.
\end{theorem}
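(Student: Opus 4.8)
The plan is to show that the inclusion $\jSPxLatP \hookrightarrow \ProxLat$ is an equivalence. Since $\jSPxLatP$ is by definition a \emph{full} subcategory of $\ProxLat$, this inclusion is automatically full and faithful, so it remains only to prove that it is essentially surjective: that every proximity lattice $(S,\prox)$ is isomorphic in $\ProxLat$ to some $\vee$-strong proximity lattice.

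First I would record a convenient reformulation of isomorphism in $\ProxLat$. By Theorem \ref{thm:DLatProxEquivSpectral} and the discussion following it, $\ProxLat$ is dually equivalent to the category of stably compact locales and preframe homomorphisms, the equivalence sending $(S,\prox)$ to $\Spec{S}$, whose frame is $\RIdeals{S}$. An isomorphism in that category is given by mutually inverse preframe homomorphisms; being mutually inverse and monotone, these are order isomorphisms and hence frame isomorphisms. Thus two proximity lattices are isomorphic in $\ProxLat$ precisely when their represented locales are isomorphic as locales, and since an equivalence reflects isomorphisms it suffices to show that for every proximity lattice $(S,\prox)$ there is a $\vee$-strong proximity lattice $(D,\prox')$ with $\Spec{S} \cong \Spec{D}$.

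For this I would appeal to the fact that every stably compact locale is a retract, \emph{via locale maps}, of a spectral locale \cite[Chapter VII, Theorem 4.6]{johnstone-82}. Writing $L \defeql \Spec{S}$ and fixing such a retraction of $L$ by $\Ideals{D}$ for a distributive lattice $D$, the retraction yields an idempotent frame endomorphism $e \colon \Ideals{D} \to \Ideals{D}$ whose image is isomorphic to $\Frame{L}$. By Theorem \ref{thm:DLatjProxEquivSpectral}, $e$ corresponds to an idempotent join-preserving proximity relation $\prox'$ on $D$, so that $(D,\prox')$ is a $\vee$-strong proximity lattice. It then remains to identify $\Image{e}$ with $\RIdeals{D}$: unfolding Proposition \ref{prop:jProximityRelFrameHom}, $e(I) = \left\{ a \in D \mid \exists b \in I\left( a \prox' b \right) \right\}$, whose fixed points are exactly the ideals $I$ with $a \in I \leftrightarrow \exists b \prox' a\,(b \in I)$, i.e.\ the rounded ideals. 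Since the image of an idempotent coincides with its set of fixed points, $\Frame{L} \cong \Image{e} = \RIdeals{D} = \Frame{\Spec{D}}$, so $L=\Spec{S} \cong \Spec{D}$, as required, and $(D,\prox')$ is an object of $\jSPxLatP$.

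The formal skeleton---full faithfulness from fullness of the subcategory, and the reduction of $\ProxLat$-isomorphism to isomorphism of represented locales---is routine. The main point is the third paragraph: producing a \emph{locale-map} (rather than merely preframe) retraction of $L$ by a spectral locale, and checking that, when transported along Theorem \ref{thm:DLatjProxEquivSpectral}, its splitting is exactly the rounded-ideal frame $\RIdeals{D}$. Constructively the only delicate step is the invocation of the retract theorem, but this is essentially the defining property of a stably compact locale in the present setting, so no choice principle is needed; the remaining work is precisely the rounded-ideal computation just indicated, which links the preframe splitting $\RIdeals{S}$ of $(S,\prox)$ to the locale-map splitting $\RIdeals{D}$ of $(D,\prox')$.
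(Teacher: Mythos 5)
Your reduction is sound as far as it goes: since $\jSPxLatP$ is by definition a \emph{full} subcategory of $\ProxLat$, only essential surjectivity is at issue; isomorphism in $\ProxLat$ does come down to isomorphism of the represented locales; and your computation that the splitting of the idempotent frame endomorphism attached to $\prox'$ by Proposition \ref{prop:jProximityRelFrameHom} is exactly $\RIdeals{D}$ is correct. The gap is in the step you yourself identify as the main point. In this paper stably compact locales enter only as the \emph{preframe} retracts of spectral locales --- that is how $\ProxLat$ is set up --- and the assertion that every such locale is also a retract of a spectral locale \emph{via locale maps} is precisely the content of Theorem \ref{prop:EquivPxLatandjSPxLat}, restated in localic language. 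Discharging it by citing \cite[Chapter VII, Theorem 4.6]{johnstone-82} therefore outsources the entire substance of the theorem, and your remark that the retract theorem ``is essentially the defining property of a stably compact locale in the present setting'' is exactly where the argument goes astray: the defining property available here is the preframe one, and upgrading it to a locale-map retraction is what has to be proved. To invoke Johnstone's theorem you would in addition need to know that $\Spec{S}$ is stably compact in the intrinsic sense that theorem requires (the paper only cites Jung--S\"underhauf for this), and you would need the whole chain to survive in the predicative setting the paper works in, where the completed frame $\RIdeals{S}$ is not freely available as a set-sized object; neither point is addressed.

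What the paper does instead is supply the missing construction explicitly and finitarily: from $(S,\prox)$ it builds $S^{\vee}$ on the set $\Fin{S}$ --- in effect freely adjoining finite joins to $S$ --- with the preorder $A \leq^{\vee} B$ given by $\forall C \prox_{L} A\, \exists D \prox_{L} B\, \left( \medvee C \prox \medvee D \right)$, an idempotent relation $\prox^{\vee}$ for which \ref{def:proximity0} and \ref{def:proximityJ} are verified directly, and explicit mutually inverse proximity relations between $(S,\prox)$ and $(S^{\vee},\prox^{\vee})$. That construction is exactly the ``produce a locale-map retraction'' step that your proof leaves to the citation; to complete your argument you would either have to carry out something equivalent to it, or give an independent, predicatively valid proof of the retract theorem for locales of the form $\Spec{S}$.
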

\begin{proof}
  Given a  proximity lattice $(S, \prox)$, 
  define a preorder $\leq^{\vee}$ on $\Fin{S}$ by
  \begin{align}
    \notag
    A \leq^{\vee} B
    &\defeqiv
    \forall C \prox_{L} A \exists D \prox_{L} B
    \left( \medvee C \prox \medvee D \right),\\
    \shortintertext{where}
    \label{def:LowerExtension}
    A \prox_{L} B 
    &\defeqiv
    \forall a \in A \exists b \in B \left( a \prox b\right).
  \end{align}
  Let $S^{\vee}$ be the set $\Fin{S}$ equipped with the equality
  $=^{\vee}$ determined by $\leq^{\vee}$, i.e., ${=^{\vee}} \defeql
  {\leq^{\vee} \cap \geq^{\vee}}$, 
  and define a lattice structure
  $(S^{\vee}, 0^{\vee},\vee^{\vee}, 1^{\vee}, \wedge^{\vee})$ by 
  \begin{equation}
    \label{eq:LatticeStructVee}
    \begin{aligned}
      0^{\vee} &\defeql \emptyset,
      \qquad&
      A \vee^{\vee} B &\defeql A \cup B,\\
      1^{\vee} &\defeql \left\{ 1 \right\}, 
      \qquad&
      A \wedge^{\vee} B &\defeql \left\{ a \wedge b \mid a \in A, b \in B \right\}.
    \end{aligned}
  \end{equation}
  It is straightforward to check that the above operations respect
  $=^{\vee}$ and that the lattice is distributive.
  Next, define a relation $\prox^{\vee}$ on $\Fin{S}$ by
  \[
    A \prox^{\vee} B 
    \defeqiv
    \exists C \prox_{L} B 
    \left( A \leq^{\vee} C \right).
  \]
  Again, it is straightforward to check that $\prox^{\vee}$ respects $=^{\vee}$.
  Since $\prox_{L}$ is idempotent and $\prox$ is a proximity relation,
  $\prox^{\vee}$ is an idempotent relation.
  We claim that 
  $(S^{\vee}, \prox^{\vee}) 
  \defeql
  (S^{\vee}, 0^{\vee},\vee^{\vee}, 1^{\vee}, \wedge^{\vee}, \prox^{\vee})$ 
  is a $\vee$-strong proximity lattice. For example, to see that
  $(S^{\vee}, \prox^{\vee})$ is $\vee$-strong, suppose that
  $A \prox^{\vee} B \cup C$. Then, there exist $B' \prox_{L} B$ and
  $C' \prox_{L} C$ such that
  $A \leq^{\vee} B' \vee^{\vee} C'$. Since ${\prox_{L}} \subseteq
  {\prox^{\vee}}$, we have $B' \prox^{\vee} B$ and $C \prox^{\vee} C$.
  Moreover, $A \prox^{\vee} 0^{\vee}$ clearly implies
  $A \leq^{\vee} 0^{\vee}$.

  Define  relations $r \subseteq S \times \Fin{S}$ and 
  $s \subseteq \Fin{S} \times S$ by
  \begin{align*}
    a \mathrel{r} A 
    &\defeqiv \exists B \prox_{L} A 
    \left( a \prox \medvee B\right), &
    A \mathrel{s} a \
    &\defeqiv A \prox^{\vee} \left\{ a \right\}.
  \end{align*}
  Then, $r$ and $s$ clearly respect $=^{\vee}$,
  so they are relations between $S$ and $S^{\vee}$.
  It is straightforward to show that $r$ and $s$ are proximity
  relations between $(S, \prox)$ and $(S^{\vee}, \prox^{\vee})$ and
  are inverse to each other.
\end{proof}

By introducing $\vee$-strong proximity lattices, we have obtained a
simpler geometric theory for $\Spec{S}$ (cf.\
Corollary~\ref{cor:SProxLatPresentation}). This, however, comes at the
cost of the structural duality of proximity lattices. Nevertheless, the
notion of $\vee$-strong proximity lattice is categorically equivalent
to the one with a stronger self-dual structure than that of a proximity
lattice.

\begin{definition}[{Jung and S\"underhauf~\cite[Definition 18]{JungSunderhaufDualtyCompactOpen}}]
A \emph{strong proximity lattice} is 
a $\vee$-strong proximity lattice $(S, \prox)$ satisfying
\begin{enumerate}
  \myitem[(Prox1)]\label{def:proximity1}  $1 \prox a \imp a = 1$,
  \myitem[(Prox$\wedge$)]\label{def:proximityMeet}  
     $b \wedge c \prox a \imp
    \exists b' \proxop b \exists c'  \proxop c 
    \left(b' \wedge c' \leq a\right)$.
\end{enumerate}
\end{definition}

Let $\SProxLat$ and $\SProxLatP$ be the full subcategories of $\jSPxLat$
and $\jSPxLatP$, respectively, consisting of strong proximity lattices.
In Section~\ref{subsec:GenetateContEnt}, we show that $\SProxLat$
and $\SProxLatP$ are equivalent to the larger categories.
\subsection{Strong continuous entailment relations}\label{sec:StrongContEnt}
We characterise a full subcategory of $\ContEnt$, whose objects
correspond to strong proximity lattices.
The notion introduced below is a 
modification of that of \emph{entailment relation with the interpolation
property} by Coquand and Zhang \cite{CoquandZhangPrediativePatch},
which satisfies only one direction of \eqref{eq:ContEnt}.
\begin{definition}\label{def:ContEnt}
  A \emph{strong continuous entailment relation}
  is an entailment relation $(S, \entails)$ equipped
  with an idempotent relation ${\prox} 
  \subseteq S \times S$ satisfying
  \begin{equation}\label{eq:ContEnt}
     \exists A' \in \Fin{S} \left( A \prox_{U} A' \entails B\right)
    \iff
     \exists B' \in \Fin{S} \left(  A \entails B' \prox_{L}  B\right)
  \end{equation}
  for all $A,B \in \Fin{S}$ where $\prox_{L}$ is defined as
  \eqref{def:LowerExtension} and $\prox_{U}$ is defined by
  \[
    A \prox_{U} B \defeqiv  \forall b \in B \exists a \in A \left( a \prox b\right).
  \]
  We write $(S, \entails, \prox)$ for a strong continuous entailment relation.
\end{definition}

Each strong continuous entailment relation $(S,\entails,\prox)$ represents a
stably compact locale by a geometric theory
$T(S,\entails,\prox)$ over $S$ with the following axioms:%
\begin{equation}
  \label{eq:SContEntGeo}
  \medwedge A \entails \medvee B \quad (\text{if $A \entails B$}),
  \qquad
  a \entails b \quad (\text{if $a \prec b$}),
  \qquad
  a \entails \bigvee_{b \prox a}  b.
\end{equation}
\begin{theorem}[Coquand and Zhang \cite{CoquandZhangPrediativePatch}]\label{thm:ContEntPresentsSKL}
  For any strong continuous entailment relation $(S,\entails, \prox)$,
  the locale presented by $T(S,\entails,\prox)$ is stably compact.
  Moreover, any stably compact locale can be presented in this way.
\end{theorem}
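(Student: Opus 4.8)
The plan is to reduce the statement to the representation of stably compact locales by $\vee$-strong proximity lattices, using the equivalence $\ContEnt \simeq \ProxLat$ from Section~\ref{sec:ContEnt} and the geometric presentation of Corollary~\ref{cor:SProxLatPresentation}. For the first assertion I would associate to a strong continuous entailment relation $(S, \entails, \prox)$ a proximity lattice whose underlying distributive lattice is $\EnttoLat{S, \entails}$, carried by $\Fin{\Fin{S}}$ and ordered by $\ApproxExt{\entails}$. From $\prox$ I would define an idempotent relation $\ll$ on $\Fin{S}$ by $A \ll B \defeqiv \exists A'\,(A \prox_{U} A' \entails B)$, which by \eqref{eq:ContEnt} agrees with $\exists B'\,(A \entails B' \prox_{L} B)$; the biconditional is what makes $\ll$ well behaved. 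Checking that $\ll$ is an idempotent Karoubi endomorphism of $(S, \entails)$ --- using idempotency of $\prox$ and the rules (R), (M), (T) --- shows that $(S, \entails, \prox)$ is a continuous entailment relation, so that $L \defeql (\EnttoLat{S, \entails}, \ApproxExt{\ll})$ is a proximity lattice by Proposition~\ref{prop:EntRelEquivDLatProx}.

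Next I would verify that $L$ is in fact a $\vee$-strong proximity lattice, satisfying \ref{def:proximity0} and \ref{def:proximityJ} (indeed a strong one, also satisfying \ref{def:proximity1} and \ref{def:proximityMeet}). This is the heart of the argument and is exactly where the self-dual strengthening of the Coquand--Zhang interpolation condition is spent: reading \eqref{eq:ContEnt} with suitable singleton and two-element sets substituted for $A$ and $B$, one direction supplies the join-splitting data (the conditions phrased through $\prox_{L}$) while the other supplies the meet-splitting data (phrased through $\prox_{U}$), and idempotency of $\prox$ lets one close up the required interpolants. Mere stable compactness already follows once $L$ is a proximity lattice, but casting $L$ as $\vee$-strong is what lets me invoke the clean frame universal property below.

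With $L$ identified as a $\vee$-strong proximity lattice, I would show $\Space{T(S, \entails, \prox)} \cong \Spec{L}$ by a change-of-generators argument. Since every generator $\mathcal{U} \in \Fin{\Fin{S}}$ satisfies $\mathcal{U} = \bigvee_{A \in \mathcal{U}} \medwedge A$ in $L$, an interpretation $S \to \Frame{X}$ extends uniquely to an interpretation of $L$, and under this correspondence the three families of axioms \eqref{eq:SContEntGeo} over $S$ match precisely the axioms of Corollary~\ref{cor:SProxLatPresentation} for $L$; both presentations therefore have the same universal interpretation, whence $\Space{T(S,\entails,\prox)} \cong \Spec{L}$ is stably compact. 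For the converse, any stably compact locale is $\Spec{S}$ for a strong proximity lattice $(S, \prox)$; I would equip $S$ with the canonical entailment relation $A \entails B \defeqiv \medwedge A \leq \medvee B$ of \eqref{eq:EntofDLat} and retain $\prox$. The axioms of Corollary~\ref{cor:SProxLatPresentation} then collapse to \eqref{eq:SContEntGeo}, so $T(S, \entails, \prox)$ again presents $\Spec{S}$, and it remains only to confirm that $(S, \entails, \prox)$ really is a strong continuous entailment relation.

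I expect the main obstacle to be the two-way verification of \eqref{eq:ContEnt}. In the converse direction one must derive the full biconditional from the strong proximity lattice axioms: starting from $\medwedge A \leq \medvee B'$ with $B' \prox_{L} B$, repeated use of the approximation $a \entails \bigvee_{b \prox a} b$ together with \ref{def:proximityJ}, \ref{def:proximityMeet}, and interpolation of $\prox$ must manufacture an $A'$ with $A \prox_{U} A'$ and $\medwedge A' \leq \medvee B$, and symmetrically. The delicate point is that a single relation on $\Fin{S} \times \Fin{S}$ has to encode both join- and meet-approximation simultaneously, so the bookkeeping between $\prox_{L}$, $\prox_{U}$, and the interleaving of $\entails$ with $\prox$ is where the real work lies.
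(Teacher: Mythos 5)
Your proposal is correct, but it is not the route the paper takes: the paper's ``proof'' of Theorem~\ref{thm:ContEntPresentsSKL} is a bare citation of Coquand and Zhang's Theorem~1, whereas you give a self-contained argument by transporting the statement across the equivalence with strong proximity lattices. What you describe is, almost step for step, the content that the paper only establishes \emph{afterwards} in Section~\ref{sec:StrongContEnt}: your construction of $\ll$ from $\prox$ and the check that it is an idempotent Karoubi endomorphism is Proposition~\ref{prop:SContEntIsContEnt}; the verification that $(\EnttoLat{S,\entails},\ApproxExt{\ll_{\entails}})$ is a strong proximity lattice (spending the two directions of \eqref{eq:ContEnt} on \ref{def:proximityJ} and \ref{def:proximityMeet}, with the bookkeeping routed through $\mathcal{V}^{*}$ rather than literal two-element sets) is Lemma~\ref{lem:SContEntToSProxLat}; the change-of-generators identification of $\Space{T(S,\entails,\prox)}$ with $\Spec{F(S,\entails,\prox)}$ is Proposition~\ref{prop:AgreeContEntProxLat}; and the converse via \eqref{eq:EntofDLat} is Lemma~\ref{lem:SProxLatSContEnt} together with the Jung--S\"underhauf representation of stably compact locales by strong proximity lattices. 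There is no circularity in doing it this way, since none of those later results invoke Theorem~\ref{thm:ContEntPresentsSKL}. What the citation buys is brevity and a direct argument on the presented frame (Coquand--Zhang work with the weaker one-directional interpolation property, of which the present notion is a special case); what your route buys is that the theorem becomes a corollary of machinery the paper needs anyway, and it makes explicit \emph{which} strong proximity lattice the presented locale is the spectrum of, which is exactly what Section~\ref{sec:ExampledeGrootDuality} exploits. One small caveat: for the ``moreover'' direction you should make sure the fact that every stably compact locale is $\Spec{S}$ for a \emph{strong} (not merely $\vee$-strong) proximity lattice is available at that point; the paper gets this from \cite{JungSunderhaufDualtyCompactOpen} and, independently, from Theorem~\ref{thm:EquivAllProxLat}.
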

\begin{proof}
  See Coquand and Zhang \cite[Theorem 1]{CoquandZhangPrediativePatch}.
\end{proof}
In the following, we often identify a strong continuous entailment relation
${(S,\entails,\prox)}$ with the theory $T(S,\entails,\prox)$.

We relate strong continuous entailment relations to continuous
entailment relations.
\begin{lemma}
  \label{lem:AssocCutComposition}
  Let $S,S',S'', S'''$ be sets and $r,s, t$ be relations 
  $r \subseteq \Fin{S} \times \Fin{S'}$,
  $s \subseteq \Fin{S'} \times \Fin{S''}$, and 
  $t \subseteq \Fin{S''} \times \Fin{S'''}$.
  \begin{enumerate}
    \item \label{lem:AssocCutComposition1}
      If $s$ is upper and $rA = \left\{ B \in \Fin{S'} \mid A
      \mathrel{r} B \right\}$ is closed under finite joins for each
      $A \in \Fin{S}$, then
    $
    \left( t \cdot s  \right) \circ r 
    =  t \cdot \left( s \circ r \right).
    $

    \item \label{lem:AssocCutComposition2}
      If $s$ is upper and $t^{-}D = \left\{ C \in \Fin{S''} \mid C
      \mathrel{t} D \right\}$ is closed under finite joins for each
      $D \in \Fin{S'''}$, then
    $
    \left( t \circ s  \right) \cdot r
    =  t \circ \left( s \cdot r \right).
    $
  \end{enumerate}
\end{lemma}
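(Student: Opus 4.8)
The plan is to prove the two identities by unwinding the definition of cut composition and using the hypotheses (upperness of $s$ and join-closure) to reorganise the existential witnesses. Both parts are about associativity of mixing cut composition $\cdot$ with relational composition $\circ$, so I expect the arguments to be mirror images of each other; I would prove part~\ref{lem:AssocCutComposition1} in detail and obtain part~\ref{lem:AssocCutComposition2} by a symmetric argument (or by dualising via opposite relations).

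For part~\ref{lem:AssocCutComposition1}, I would start by expanding both sides of $\left( t \cdot s \right) \circ r = t \cdot \left( s \circ r \right)$ as relations in $\Fin{S} \times \Fin{S'''}$. Unfolding the left-hand side, $A \mathrel{(t \cdot s) \circ r} D$ means there is some intermediate $B \in \Fin{S'}$ with $A \mathrel{r} B$ and $B \mathrel{t \cdot s} D$; the latter, by definition of cut composition, gives a family $\mathcal{W} \in \Fin{\Fin{S''}}$ with $\forall C' \in \mathcal{W}^{*}\, (B \mathrel{s} C')$ and $\forall C \in \mathcal{W}\, (C \mathrel{t} D)$. Unfolding the right-hand side, $A \mathrel{t \cdot (s \circ r)} D$ gives a family $\mathcal{W} \in \Fin{\Fin{S''}}$ with $\forall C' \in \mathcal{W}^{*}\, (A \mathrel{s \circ r} C')$ and $\forall C \in \mathcal{W}\, (C \mathrel{t} D)$, where $A \mathrel{s \circ r} C'$ unpacks to some $B_{C'}$ with $A \mathrel{r} B_{C'}$ and $B_{C'} \mathrel{s} C'$. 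The shapes already match on the $t$-side, so the work is entirely on reconciling the $s$-side and the single $B$ versus the family $(B_{C'})_{C' \in \mathcal{W}^{*}}$.

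The key manoeuvre, and the step I expect to be the main obstacle, is passing between a single witness $B$ feeding all of $\mathcal{W}^{*}$ and a separate witness $B_{C'}$ for each $C' \in \mathcal{W}^{*}$. For the direction from left to right, the single $B$ works uniformly for every $C'$, so one simply takes $B_{C'} \defeql B$ for all $C'$. For the harder reverse direction, I would set $B \defeql \bigcup_{C' \in \mathcal{W}^{*}} B_{C'}$, the finite union over the finitely many elements of $\mathcal{W}^{*}$; then $A \mathrel{r} B$ follows because $rA$ is closed under finite joins (this is exactly where the hypothesis on $r$ is used, applied to the finitely many witnesses $B_{C'} \in rA$), and $B \mathrel{s} C'$ for each $C' \in \mathcal{W}^{*}$ follows from $B_{C'} \mathrel{s} C'$ together with upperness of $s$, since $B \supseteq B_{C'}$. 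The same family $\mathcal{W}$ is reused verbatim, so the $t$-condition transfers with no change, completing the equivalence.

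For part~\ref{lem:AssocCutComposition2}, I would run the symmetric argument on the identity $\left( t \circ s \right) \cdot r = t \circ \left( s \cdot r \right)$, now expanding the outer cut composition against $r$ on the left and against $s \cdot r$ on the right. Here the roles are reversed: the critical step merges several $t$-preimage witnesses into one using that $t^{-}D$ is closed under finite joins, and again exploits upperness of $s$ to enlarge a witness on the $s$-side; the intermediate family supplied by the cut composition is what gets reused. Since the structure is a mirror of part~\ref{lem:AssocCutComposition1}, I would state it by symmetry rather than repeat the bookkeeping, noting explicitly that the join-closure hypothesis has moved from $r$ to $t^{-}$ precisely because the cut composition has moved to the other factor.
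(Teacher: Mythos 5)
Your proposal is correct and follows essentially the same route as the paper: the nontrivial inclusion is handled by taking the finite union of the individual witnesses, invoking join-closure of $rA$ (resp.\ $t^{-}D$) to keep the union related, and upperness of $s$ to enlarge its argument, while the second item is obtained as the dual of the first. The remaining direction is the easy reuse of a single witness, exactly as in the paper.
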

\begin{proof}
  \noindent\ref{lem:AssocCutComposition1}.
  Suppose that
  $A \mathrel{t \cdot \left( s \circ r \right)} D$.
  Then there exists $\mathcal{V}
  \in \Fin{\Fin{S''}}$ such that $\forall C' \in \mathcal{V}^{*}
  \left( A \mathrel{s \circ r} C' \right)$ and $\forall C \in
  \mathcal{V} \left( C \mathrel{t} D\right)$,
  so for each $C' \in \mathcal{V}^{*}$ there exists $B_{C'} \in
  \Fin{S'}$ such that $A \mathrel{r} B_{C'}$ and $B_{C'} \mathrel{s} C'$.
  Put $B = \bigcup_{C' \in \mathcal{V}^{*}}B_{C'}$. Then
  $A \mathrel{r} B$ and $\forall C' \in
  \mathcal{V}^{*} \left( B \mathrel{s} C' \right) $, and hence $A
  \mathrel{\left( t \cdot s  \right) \circ r} D$. The converse is easy. 

  \noindent\ref{lem:AssocCutComposition2}. The proof is dual of \ref{lem:AssocCutComposition1}.
\end{proof}
For each strong continuous entailment relation $(S, \entails, \prox)$,
define a relation $\ll_{\entails}$ on $\Fin{S}$ by
\[
  \ll_{\entails}  {\defeql} \entails \circ \prox_{U} {=} \prox_{L} \circ
  \entails.
\]
\begin{proposition}
  \label{prop:SContEntIsContEnt}
  The structure $(S, \entails, \ll_{\entails})$ is a continuous entailment relation.
\end{proposition}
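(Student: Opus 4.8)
The plan is to verify directly that $\ll_{\entails}$ is an idempotent Karoubi endomorphism on the entailment system $(S,\entails)$; since $(S,\entails)$ is assumed to be an entailment relation, this gives the claim by Definition~\ref{def:ContEntRel}. First I would note that the defining biconditional \eqref{eq:ContEnt} is precisely the equality of the two relational composites, ${\entails}\circ{\prox_{U}} = {\prox_{L}}\circ{\entails}$, so that $\ll_{\entails}$ is well defined and may be written in either form at will; the whole argument then consists of choosing the convenient presentation in each place and pushing $\entails$ through the cut composition by Lemma~\ref{lem:AssocCutComposition}. I would record at the outset the two elementary facts that drive everything. First, the lower extension inherits idempotency: ${\prox_{L}}\circ{\prox_{L}} = {\prox_{L}}$, where the forward inclusion uses $\prox\circ\prox\subseteq\prox$ and the backward inclusion uses $\prox\subseteq\prox\circ\prox$, applied elementwise. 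Second, the sets $({\prox_{U}})A = \{A' \mid A \prox_{U} A'\}$ and $({\prox_{L}})^{-}D = \{C \mid C \prox_{L} D\}$ are closed under finite joins in $\Fin{S}$, since a union of right-hand (resp.\ left-hand) witnesses is again a witness and $\emptyset$ qualifies vacuously; these are exactly the side conditions required by Lemma~\ref{lem:AssocCutComposition}.

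For the Karoubi conditions I would argue as follows. That $\ll_{\entails}$ is upper is immediate from the form ${\prox_{L}}\circ{\entails}$: from a witness $B$ with $A\entails B\prox_{L} C$, weakening (M) yields $A\cup A'\entails B$ and monotonicity of $\prox_{L}$ in its right argument yields $B\prox_{L} C\cup C'$. For ${\entails}\cdot\ll_{\entails} = \ll_{\entails}$ I would use $\ll_{\entails} = {\entails}\circ{\prox_{U}}$ and apply Lemma~\ref{lem:AssocCutComposition}(\ref{lem:AssocCutComposition1}) with $t=s={\entails}$ and $r={\prox_{U}}$, whose side condition is the finite-join closure of $({\prox_{U}})A$; this gives ${\entails}\cdot({\entails}\circ{\prox_{U}}) = ({\entails}\cdot{\entails})\circ{\prox_{U}} = {\entails}\circ{\prox_{U}} = \ll_{\entails}$, using $\entails\cdot\entails = \entails$ (every entailment relation is an entailment system). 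Dually, for $\ll_{\entails}\cdot{\entails} = \ll_{\entails}$ I would use $\ll_{\entails} = {\prox_{L}}\circ{\entails}$ and apply Lemma~\ref{lem:AssocCutComposition}(\ref{lem:AssocCutComposition2}) with $t={\prox_{L}}$ and $s=r={\entails}$, whose side condition is the finite-join closure of $({\prox_{L}})^{-}D$, obtaining $({\prox_{L}}\circ{\entails})\cdot{\entails} = {\prox_{L}}\circ({\entails}\cdot{\entails}) = {\prox_{L}}\circ{\entails} = \ll_{\entails}$.

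Finally, for idempotency $\ll_{\entails}\cdot\ll_{\entails} = \ll_{\entails}$ I would again write the left factor as ${\prox_{L}}\circ{\entails}$ and apply Lemma~\ref{lem:AssocCutComposition}(\ref{lem:AssocCutComposition2}) with $t={\prox_{L}}$, $s={\entails}$, $r=\ll_{\entails}$ to get $\ll_{\entails}\cdot\ll_{\entails} = {\prox_{L}}\circ({\entails}\cdot\ll_{\entails})$; the identity ${\entails}\cdot\ll_{\entails} = \ll_{\entails}$ just established collapses this to ${\prox_{L}}\circ\ll_{\entails}$, and expanding $\ll_{\entails} = {\prox_{L}}\circ{\entails}$ together with ${\prox_{L}}\circ{\prox_{L}} = {\prox_{L}}$ finishes the computation. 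I expect the main obstacle to be bookkeeping rather than ideas: matching each cut/relational composite to the correct clause of Lemma~\ref{lem:AssocCutComposition} and checking its finite-join side condition for the right argument, since picking the wrong presentation of $\ll_{\entails}$ makes the lemma inapplicable. The only genuinely non-formal ingredient is the transfer of idempotency from $\prox$ to $\prox_{L}$, which relies on \emph{both} inclusions of $\prox\circ\prox = \prox$.
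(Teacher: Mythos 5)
Your proof is correct and follows essentially the same route as the paper: both establish ${\entails}\cdot{\ll_{\entails}}={\ll_{\entails}}$ and ${\ll_{\entails}}\cdot{\entails}={\ll_{\entails}}$ by applying the two clauses of Lemma~\ref{lem:AssocCutComposition} to the two presentations ${\entails}\circ{\prox_{U}}={\prox_{L}}\circ{\entails}$ of $\ll_{\entails}$, and then obtain idempotency from the same associativity step together with ${\prox_{L}}\circ{\prox_{L}}={\prox_{L}}$. Your write-up is if anything slightly more complete, since you explicitly check the finite-join side conditions of the lemma, the upper-ness of $\ll_{\entails}$, and the transfer of idempotency from $\prox$ to $\prox_{L}$, all of which the paper leaves implicit.
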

\begin{proof}
   By item \ref{lem:AssocCutComposition1} of Lemma~\ref{lem:AssocCutComposition},
   we have
   \[
     {\entails} \cdot {\ll_{\entails}}
     = {\entails} \cdot \left( \entails \circ \prox_{U} \right) 
     = \left( {\entails}  \cdot {\entails} \right) \circ {\prox_{U}}
     = {\entails} \circ {\prox_{U}}
     = {\ll_{\entails}}.
   \]
    Similarly ${\ll_{\entails}} \cdot {\entails} = {\ll_{\entails}}$
    by item \ref{lem:AssocCutComposition2} of
    Lemma~\ref{lem:AssocCutComposition}. Then
    \[
      {\ll_{\entails}} \cdot {\ll_{\entails}}
      =  \left( {\prox_{L}} \circ {\entails} \right) \cdot {\ll_{\entails}}
      =  {\prox_{L}} \circ  \left( {\entails} \cdot {\ll_{\entails}} \right)
      =  {\prox_{L}} \circ   {\ll_{\entails}} 
      = {\ll_{\entails}}.
    \]
   Hence ${\ll_{\entails}}$ is an idempotent Karoubi endomorphism on
   $(S, \entails)$.
\end{proof}
Let $\SContEntP$ be a category where objects are strong continuous
entailment relations and morphisms are proximity maps between
the underlying continuous entailment relations. By the assignment
$(S,\entails, \prox) \mapsto (S, \entails, \ll_{\entails})$, we can
identify $\SContEntP$ with a full subcategory of $\ContEnt$.

In what follows, we show that $\SContEntP$ 
and $\SProxLatP$ are equivalent.
First, note that 
the functor $E
\colon \EntRel \to \DLatProx$ (cf.\ Proposition
\ref{prop:EntRelEquivDLatProx}) induces a functor
\[
  F \colon \ContEnt \to \ProxLat,
\]
which establishes equivalence of $\ContEnt$ and $\ProxLat$.
The functor $F$ sends each continuous entailment relation $(S,
\entails, \ll)$ to a proximity lattice $(\EnttoLat{S, \entails},
\ApproxExt{\ll})$ and each proximity map $r$ to
a proximity relation $\ApproxExt{r}$. By Remark \ref{rem:QuasiInverseofE},
 $F$ has a quasi-inverse 
$G \colon \ProxLat \to \ContEnt$ which sends each proximity
lattice $(S, \prox)$ to a continuous entailment relation
$(S, \entails, \ll)$, where $\entails$ is given by \eqref{eq:EntofDLat} 
and $\ll$ is defined by
\begin{equation}
  \label{eq:ProxLatToContEntll}
  A \ll B \defeqiv \medwedge A \prox \medvee B.
\end{equation}
\begin{lemma}
  \label{lem:SContEntToSProxLat}
    For each strong continuous entailment relation $(S, \entails ,
    \prox)$, the structure
    $F(S, \entails, \ll_{\entails}) = (\EnttoLat{S,\entails},
    \ApproxExt{\ll_{\entails}})$ is a strong proximity lattice.
\end{lemma}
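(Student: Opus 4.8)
The plan is to exploit that $F$ already does most of the work: by Proposition~\ref{prop:SContEntIsContEnt} the structure $(S,\entails,\ll_{\entails})$ is a continuous entailment relation, so $F(S,\entails,\ll_{\entails}) = (\EnttoLat{S,\entails},\ApproxExt{\ll_{\entails}})$ is a proximity lattice. It therefore remains only to verify the four extra axioms that upgrade a proximity lattice to a strong one, namely (Prox$0$), (Prox$\vee$), (Prox$1$) and (Prox$\wedge$). Throughout I write $L = \EnttoLat{S,\entails}$, whose order is $\ApproxExt{\entails}$ with $0=\emptyset$, $1=\{\emptyset\}$ and $\mathcal{U}\vee\mathcal{V}=\mathcal{U}\cup\mathcal{V}$, and I use the two presentations $\ll_{\entails} = {\entails}\circ{\prox_U} = {\prox_L}\circ{\entails}$ (whose equality is exactly \eqref{eq:ContEnt}) freely.

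I would dispatch (Prox$0$) and (Prox$1$) directly. Since $\emptyset^{*}=\{\emptyset\}$, unfolding $\mathcal{U}\mathrel{\ApproxExt{\ll_{\entails}}}0$ yields $A\ll_{\entails}\emptyset$ for every $A\in\mathcal{U}$; writing this as $A\entails B'$ with $B'\prox_{L}\emptyset$ forces $B'=\emptyset$ (nothing lies $\prox$-below a member of $\emptyset$), so $A\entails\emptyset$, giving $\mathcal{U}\mathrel{\ApproxExt{\entails}}0$ and hence $\mathcal{U}=_{\entails}0$. Dually, $\{\emptyset\}^{*}=\emptyset$ makes $1$ the top element, and $1\mathrel{\ApproxExt{\ll_{\entails}}}\mathcal{V}$ reduces to $\emptyset\ll_{\entails}B$ for all $B\in\mathcal{V}^{*}$; writing this as $\emptyset\prox_{U}A'$ with $A'\entails B$ forces $A'=\emptyset$, so $\emptyset\entails B$, whence $1\le\mathcal{V}$ and $\mathcal{V}=_{\entails}1$. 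Both use only the degeneracy of $\prox_{L}$ (resp.\ $\prox_{U}$) against $\emptyset$, not the full strength of \eqref{eq:ContEnt}.

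The substance of the lemma is (Prox$\vee$), which I would prove as follows. First unfold $\mathcal{U}\mathrel{\ApproxExt{\ll_{\entails}}}(\mathcal{V}\cup\mathcal{W})$ using the identity $(\mathcal{V}\cup\mathcal{W})^{*}=\{B\cup C \mid B\in\mathcal{V}^{*},\ C\in\mathcal{W}^{*}\}$, which follows from the inductive definition of $(\cdot)^{*}$; the hypothesis becomes $A\ll_{\entails}B\cup C$ for all $A\in\mathcal{U}$, $B\in\mathcal{V}^{*}$, $C\in\mathcal{W}^{*}$. Representing each instance via the $\prox_{L}\circ\entails$ form as $A\entails E$ with $E\prox_{L}(B\cup C)$, and partitioning $E=E_{B}\cup E_{C}$ according to which side of $B\cup C$ a given element lies $\prox$-below, I obtain $A\entails E_{B}\cup E_{C}$ with $E_{B}\prox_{L}B$ and $E_{C}\prox_{L}C$. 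The remaining task is to assemble these locally chosen splittings into two genuine lattice elements $\mathcal{V}',\mathcal{W}'\in\Fin{\Fin S}$ with $\mathcal{V}'\mathrel{\ApproxExt{\ll_{\entails}}}\mathcal{V}$, $\mathcal{W}'\mathrel{\ApproxExt{\ll_{\entails}}}\mathcal{W}$, and $\mathcal{U}\mathrel{\ApproxExt{\entails}}\mathcal{V}'\cup\mathcal{W}'$.

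This assembly is the main obstacle: the witnesses $E_{B},E_{C}$ depend on the transversals $B,C$, whereas $\mathcal{V}',\mathcal{W}'$ must be chosen once and for all and must themselves sit $\ll_{\entails}$-below $\mathcal{V},\mathcal{W}$. I expect to resolve it using finiteness of $\mathcal{U},\mathcal{V},\mathcal{W}$ (hence of $\mathcal{V}^{*},\mathcal{W}^{*}$) together with interpolation of $\ll_{\entails}$ (guaranteed by idempotency of $\prox$ and Lemma~\ref{lem:ApproxExtComposition}): pre-interpolating $\mathcal{U}\mathrel{\ApproxExt{\ll_{\entails}}}\mathcal{X}\mathrel{\ApproxExt{\ll_{\entails}}}(\mathcal{V}\cup\mathcal{W})$ and converting the split $\prox_{L}$-witnesses back into $\prox_{U}$-form through \eqref{eq:ContEnt} should let the finitely many transversal-indexed pieces collapse into single members of $L$. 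Finally, (Prox$\wedge$) follows by applying the (Prox$\vee$) argument to the dual structure $(S,\entails^{\partial},\prox^{\partial})$ obtained by reversing $\entails$ and $\prox$: this is again a strong continuous entailment relation since (R), (M), (T) and \eqref{eq:ContEnt} are invariant under the swap, and its associated lattice is $L^{\mathrm{op}}$ via the $(\cdot)^{*}$ correspondence between disjunctive and conjunctive normal forms. This completes the verification that $(\EnttoLat{S,\entails},\ApproxExt{\ll_{\entails}})$ is a strong proximity lattice.
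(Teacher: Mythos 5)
Your reduction to the four axioms \ref{def:proximity0}, \ref{def:proximityJ}, \ref{def:proximity1}, \ref{def:proximityMeet} and your direct verifications of \ref{def:proximity0} and \ref{def:proximity1} are correct and match the paper's intent (the paper only demonstrates \ref{def:proximityJ} and declares the rest similar). The problem is that \ref{def:proximityJ} is precisely the case whose proof you do not complete: you correctly unfold the hypothesis to ``$A \ll_{\entails} B \cup C$ for all $A \in \mathcal{U}$, $B \in \mathcal{V}^{*}$, $C \in \mathcal{W}^{*}$'' and split each witness as $E_{B} \prox_{L} B$, $E_{C} \prox_{L} C$, but then you explicitly defer the assembly of these transversal-indexed pieces into two fixed elements $\mathcal{V}', \mathcal{W}'$, saying you ``expect to resolve it'' by pre-interpolating $\ll_{\entails}$ and converting back to $\prox_{U}$-form. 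That is the crux of the lemma, and the route you sketch does not obviously work: interpolating $\mathcal{U} \mathrel{\ApproxExt{\ll_{\entails}}} \mathcal{X} \mathrel{\ApproxExt{\ll_{\entails}}} \mathcal{V} \vee \mathcal{W}$ leaves you with exactly the same dependence of the split witnesses on the transversals $B$ and $C$, so nothing has collapsed.

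The missing idea is not interpolation but distributivity, packaged in the star operation. Having chosen, for each $B \in \mathcal{V}^{*}$ and $C \in \mathcal{W}^{*}$, sets $B' \prox_{L} B$ and $C' \prox_{L} C$ with $A \entails B' \cup C'$ for all $A \in \mathcal{U}$ (uniformity in $A$ is harmless: take unions over the finitely many $A$ and use (M)), the paper puts $\mathcal{V}' = \left\{ B' \mid B \in \mathcal{V}^{*} \right\}^{*}$ and $\mathcal{W}' = \left\{ C' \mid C \in \mathcal{W}^{*} \right\}^{*}$. The family $\left\{ B' \mid B \in \mathcal{V}^{*} \right\}$ is a conjunction of disjunctions, and applying $(\cdot)^{*}$ rewrites it as a disjunction of conjunctions, i.e.\ as an honest element of $\EnttoLat{S,\entails}$; the inequalities $A \entails B' \cup C'$ for all $B, C$ then reassemble, by distributivity, into $\mathcal{U} \mathrel{\ApproxExt{\entails}} \mathcal{V}' \vee \mathcal{W}'$, while $\mathcal{V}' \mathrel{\ApproxExt{\ll_{\entails}}} \mathcal{V}$ and $\mathcal{W}' \mathrel{\ApproxExt{\ll_{\entails}}} \mathcal{W}$ follow because each member of $\mathcal{V}'$ contains an element of the relevant $B'$, which lies $\prox$-below some element of $B$. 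Without this (or an equivalent) construction your proof of \ref{def:proximityJ} is incomplete, and your derivation of \ref{def:proximityMeet} by dualising \ref{def:proximityJ} --- which is otherwise a reasonable alternative to the paper's ``similar'', granted the lattice anti-isomorphism $\mathcal{U} \mapsto \mathcal{U}^{*}$ between $\EnttoLat{S,\entails}$ and $\EnttoLat{S,\DualLat{\entails}}$ --- inherits the same gap.
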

\begin{proof}
   We must show that 
   $(\EnttoLat{S,\entails}, \ApproxExt{\ll_{\entails}})$
   satisfies \ref{def:proximity0}, \ref{def:proximityJ},
   \ref{def:proximity1}, and \ref{def:proximityMeet}.
   As a demonstration, we show \ref{def:proximityJ}.
   Suppose that $\mathcal{U} \ApproxExt{\ll_{\entails}}
    \mathcal{V} \vee \mathcal{W}$.
    For each $B \in \mathcal{V}^{*}$ and $C \in \mathcal{W}^{*}$, there
    exist $B' \prox_{L} B$ and $C' \prox_{L} C$ such that
      $A \entails B' \cup C'$ for all $A \in \mathcal{U}$.
    Put
    $\mathcal{V}' = \left\{ B' \mid B \in \mathcal{V}^{*} \right\}^{*}$ and 
    $\mathcal{W}' = \left\{ C' \mid C \in \mathcal{W}^{*} \right\}^{*}$.
    Then, we have $\mathcal{U} \ApproxExt{\entails} \mathcal{V}' \vee
    \mathcal{W}'$,  $\mathcal{V}' \ApproxExt{\ll_{\entails}} \mathcal{V}$ 
    and $\mathcal{W}' \ApproxExt{\ll_{\entails}} \mathcal{W}$.
\end{proof}
\begin{lemma}
  \label{lem:SProxLatSContEnt}
  For each strong proximity lattice $(S, \prox)$, the structure
  $(S, \entails, \prox)$, where $\entails$ is defined by
  \eqref{eq:EntofDLat}, is a strong continuous entailment
  relation. Moreover the relation $\ll_{\entails}$ determined by
  $(S, \entails, \prox)$ is characterised by \eqref{eq:ProxLatToContEntll}.
\end{lemma}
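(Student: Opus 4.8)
The plan is to verify that the structure $(S, \entails, \prox)$ defined in the statement is a strong continuous entailment relation and then to identify the induced relation $\ll_{\entails}$.

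First I would confirm that $(S, \entails)$, with $\entails$ given by \eqref{eq:EntofDLat}, is an entailment relation: reflexivity, monotonicity, and transitivity all follow routinely from the lattice-order characterisation $A \entails B \iff \medwedge A \leq \medvee B$, since $\medwedge$ and $\medvee$ are monotone and the cut rule corresponds to the obvious order manipulation in a distributive lattice. Next, since $(S, \prox)$ is a strong proximity lattice, $\prox$ is already idempotent as a proximity relation on $S$; the only substantive task is the biconditional \eqref{eq:ContEnt}. The key calculation is to unfold both existential statements into lattice inequalities. I expect that $\exists A'\, (A \prox_U A' \entails B)$ unfolds, using the definition of $\prox_U$ and \eqref{eq:EntofDLat}, to the condition that there is a finite $A'$ with each $a' \in A'$ dominated by some $a \in A$ under $\prox$ and with $\medwedge A' \leq \medvee B$; dually $\exists B'\,(A \entails B' \prox_L B)$ unfolds to the existence of a finite $B'$ with $\medwedge A \leq \medvee B'$ and each $b' \in B'$ lying $\prox$-below some $b \in B$.

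The heart of the argument, and the step I expect to be the main obstacle, is establishing the equivalence of these two unfolded conditions. This is precisely where the two strengthening axioms \ref{def:proximityMeet} (Prox$\wedge$) and \ref{def:proximityJ} (Prox$\vee$), together with \ref{def:proximity0} and \ref{def:proximity1}, must be invoked to convert a $\prox$ applied to a meet into a meet of $\prox$-approximants and a $\prox$ applied to a join into a join of $\prox$-approximants. Concretely, to pass from the ``$A'$ on the left'' form to the ``$B'$ on the right'' form one interpolates using roundedness and the interpolation built into idempotence of $\prox$, then redistributes the approximation across the finite meet $\medwedge A$ via (Prox$\wedge$) and across the finite join $\medvee B$ via (Prox$\vee$); the reverse direction is dual. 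I would treat the two directions symmetrically, noting that the self-dual formulation of a strong proximity lattice is exactly what makes the two sides of \eqref{eq:ContEnt} interderivable.

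Finally, for the ``moreover'' clause I would compute $\ll_{\entails} = {\entails} \circ {\prox_U}$ directly. Spelling out $A \mathrel{(\entails \circ \prox_U)} B$ gives a finite $A'$ with $A \prox_U A'$ and $A' \entails B$, i.e. $\medwedge A' \leq \medvee B$; using idempotence of $\prox$ and (Prox$\wedge$) to absorb the $\prox_U$ step into a single approximation of the meet, this collapses to $\medwedge A \prox \medvee B$, which is exactly \eqref{eq:ProxLatToContEntll}. The equality with $\prox_L \circ {\entails}$ is then automatic from Proposition~\ref{prop:SContEntIsContEnt}, or can be checked dually using (Prox$\vee$).
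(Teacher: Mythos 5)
Your proposal is correct and fills in exactly the argument the paper leaves implicit (its proof is literally ``Straightforward''): both sides of \eqref{eq:ContEnt} unfold to conditions equivalent to $\medwedge A \prox \medvee B$, which simultaneously gives the characterisation of $\ll_{\entails}$ as \eqref{eq:ProxLatToContEntll}. The only imprecision is in the attribution of axioms: collapsing the $A'$-form (resp.\ the $B'$-form) to $\medwedge A \prox \medvee B$ uses only the ideal/filter conditions \ref{def:ProximityRelationVee} and \ref{def:ProximityRelationWedge}, whereas \ref{def:proximityMeet} together with \ref{def:proximity1} is what recovers the $A'$-form from $\medwedge A \prox \medvee B$ and, dually, \ref{def:proximityJ} together with \ref{def:proximity0} recovers the $B'$-form, so no interpolation via idempotence of $\prox$ is actually needed for \eqref{eq:ContEnt} itself.
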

\begin{proof}
  Straightforward.
\end{proof}
  By Lemma \ref{lem:SContEntToSProxLat} and Lemma
  \ref{lem:SProxLatSContEnt},
  we have the following.
\begin{theorem}
  \label{thm:EuqivSContEntPAndSProxLatP}
  The functor  $F \colon \ContEnt \to \ProxLat$ restricts to 
  $\SContEntP$ and $\SProxLatP$, which establishes
  equivalence of the latter two categories.
  \qed
\end{theorem}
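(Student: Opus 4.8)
The plan is to exploit the fact, just established, that $F \colon \ContEnt \to \ProxLat$ is an equivalence with quasi-inverse $G \colon \ProxLat \to \ContEnt$, and to argue that this equivalence cuts down to the full subcategories $\SContEntP$ and $\SProxLatP$. The guiding principle is the standard one: an equivalence whose functor and whose chosen quasi-inverse both carry the objects of a full subcategory on one side into a full subcategory on the other restricts to an equivalence between those subcategories. So the entire argument reduces to two object-level checks, since fullness of the subcategories means that all morphism-level data is inherited from the ambient equivalence.

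First I would record that $F$ sends objects of $\SContEntP$ into $\SProxLatP$; this is exactly Lemma~\ref{lem:SContEntToSProxLat}, which says that for a strong continuous entailment relation $(S, \entails, \prox)$ the image $F(S, \entails, \ll_{\entails}) = (\EnttoLat{S,\entails}, \ApproxExt{\ll_{\entails}})$ is a strong proximity lattice. Consequently $F$ restricts to a functor $F' \colon \SContEntP \to \SProxLatP$, and since both are full subcategories and $F$ is full and faithful, $F'$ is at once full and faithful.

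The remaining step, and the one requiring genuine input, is essential surjectivity, which amounts to checking that the quasi-inverse $G$ respects the subcategories. Given $(S, \prox) \in \SProxLatP$, Remark~\ref{rem:QuasiInverseofE} computes $G(S, \prox) = (S, \entails, \ll)$ with $\entails$ as in \eqref{eq:EntofDLat} and $\ll$ as in \eqref{eq:ProxLatToContEntll}. By Lemma~\ref{lem:SProxLatSContEnt}, $(S, \entails, \prox)$ is a strong continuous entailment relation, and its induced relation $\ll_{\entails}$ is characterised by precisely \eqref{eq:ProxLatToContEntll}; hence $\ll_{\entails} = \ll$ and $G(S, \prox)$ is the continuous entailment relation associated with an object of $\SContEntP$. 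As $F \circ G \cong \mathrm{Id}$, we then obtain $F'(G(S, \prox)) \cong (S, \prox)$ in $\ProxLat$, and since $\SProxLatP$ is full this isomorphism lives in $\SProxLatP$; thus $F'$ is essentially surjective.

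Putting these together, $F'$ is full, faithful, and essentially surjective, hence the desired equivalence $\SContEntP \simeq \SProxLatP$. The two preceding lemmas carry all the real content, so the theorem itself is essentially formal; within the present argument the one delicate point is that $G$ need not, a priori, send a strong proximity lattice to (the continuous entailment relation of) a strong continuous entailment relation, and this is rescued entirely by the ``Moreover'' clause of Lemma~\ref{lem:SProxLatSContEnt}, which matches the induced $\ll_{\entails}$ with the relation \eqref{eq:ProxLatToContEntll} defining $G$.
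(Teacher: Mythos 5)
Your proof is correct and follows essentially the same route as the paper: the paper's proof simply cites Lemma~\ref{lem:SContEntToSProxLat} and Lemma~\ref{lem:SProxLatSContEnt}, and your write-up supplies exactly the standard bookkeeping those citations leave implicit (fullness and faithfulness inherited by full subcategories, essential surjectivity via $G$ landing in $\SContEntP$ thanks to the ``Moreover'' clause identifying $\ll_{\entails}$ with \eqref{eq:ProxLatToContEntll}). No discrepancy to report.
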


The following corresponds to the notion of join-preserving proximity
relation.
\begin{definition}
  \label{def:JPProxMap}
  Let $(S, \entails, \prox)$ and $(S', \entails', \prox')$ be strong
  continuous entailment relations. A proximity map $r \colon (S,
  \entails, \prox) \to (S', \entails', \prox')$ is 
  \emph{join-preserving} if
  \begin{enumerate}
      \myitem[(JP)]\label{def:proximityStrongJoin} 
      $A \mathrel{r} B \imp
      \exists \, \mathcal{U} \in \Fin{\Fin{S}} 
      \left( 
      \left\{ A \right\}\ApproxExt{\entails} \mathcal{U}\amp
      \forall A' \in \mathcal{U}  \exists b \in
      B  \left( A' \mathrel{r}  \left\{ b \right\} \right) \right)$.
  \end{enumerate}
\end{definition}
Since join-preserving proximity maps are closed under composition
(see the remark below Proposition \ref{prop:PxJFullAndFaithful}),
strong continuous entailment relations and join-preserving proximity
maps form a subcategory $\SContEnt$ of $\SContEntP$.

Let $\One = (\emptyset, \meets, =)$ be a terminal object in $\SContEnt$.
The notion of join-preserving proximity map is consistent with the theory
$T(S, \entails, \prox)$ in~\eqref{eq:SContEntGeo}. 
\begin{proposition}
  \label{prop:SContEntPoint}
  For any strong continuous entailment relation $(S, \entails,
  \prox)$, there exists a bijective correspondence between 
  the models of $T(S, \entails, \prox)$ and the join-preserving 
  proximity maps from $\One$ to $S$.
\end{proposition}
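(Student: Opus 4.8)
The plan is to write down the bijection by hand and then read the three axiom families of $T(S,\entails,\prox)$ off the defining conditions of a join-preserving proximity map out of $\One$. Recall that a model of $T(S,\entails,\prox)$ is a subset $\alpha\subseteq S$ which is (M1) an $\entails$-prime filter, i.e.\ $A\entails B$ and $A\subseteq\alpha$ imply $B\meets\alpha$; (M2) upward closed under $\prox$; and (M3) $\prox$-rounded, i.e.\ $a\in\alpha$ implies $b\in\alpha$ for some $b\prox a$. To a model $\alpha$ I assign the relation $r_{\alpha}\subseteq\Fin{\emptyset}\times\Fin{S}$ with $\emptyset\mathrel{r_{\alpha}}B\defeqiv B\meets\alpha$, and to a join-preserving proximity map $r\colon\One\to S$ the set $\alpha_{r}\defeql\left\{a\in S\mid\emptyset\mathrel{r}\{a\}\right\}$.

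First I would specialise the conditions to maps out of $\One=(\emptyset,\meets,=)$. Since $\Fin{\emptyset}=\{\emptyset\}$ and $\mathcal{U}^{*}=\emptyset$ holds only for $\mathcal{U}=\{\emptyset\}$, the premise $\{\emptyset\}\ApproxExt{\meets}\mathcal{U}$ in \ref{def:proximityStrongJoin} forces $\mathcal{U}=\{\emptyset\}$, so (JP) collapses to the \emph{join-primeness} law $\emptyset\mathrel{r}B\imp\exists b\in B\,(\emptyset\mathrel{r}\{b\})$. Together with upperness this shows that such an $r$ is recovered from its restriction to singletons, $\emptyset\mathrel{r}B\iff B\meets\alpha_{r}$ (the direction $\Leftarrow$ being upperness and $\Rightarrow$ join-primeness); hence $r\mapsto\alpha_{r}$ and $\alpha\mapsto r_{\alpha}$ are mutually inverse once both are shown to land in the intended classes. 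For $r_{\alpha}$ the conditions of upperness, join-primeness, and $r_{\alpha}\cdot{\meets}=r_{\alpha}$ (the latter because $\meets$ is the identity of $\One$) are immediate from the definition, so the whole content is concentrated in the single equivalence ${\ll_{\entails}}\cdot r_{\alpha}=r_{\alpha}\iff\alpha$ is a model.

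To prove this equivalence I would unfold the cut composition using ${\ll_{\entails}}={\entails}\circ{\prox_{U}}={\prox_{L}}\circ{\entails}$: here $\emptyset\mathrel{({\ll_{\entails}}\cdot r_{\alpha})}C$ says that there is $\mathcal{V}\in\Fin{\Fin{S}}$ with $B'\meets\alpha$ for every $B'\in\mathcal{V}^{*}$ and $B\ll_{\entails}C$ for every $B\in\mathcal{V}$. The inclusion $r_{\alpha}\subseteq{\ll_{\entails}}\cdot r_{\alpha}$ is exactly roundedness (M3): given $c\in C\cap\alpha$, pick $c'\prox c$ with $c'\in\alpha$ and take $\mathcal{V}=\{\{c'\}\}$, noting $\{c'\}\prox_{U}\{c\}\entails C$. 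For the reverse inclusion I would prove the finitary distributivity lemma
\[
  \forall B'\in\mathcal{V}^{*}\,(B'\meets\alpha)\ \imp\ \exists B\in\mathcal{V}\,(B\subseteq\alpha),
\]
then factor the chosen $B\ll_{\entails}C$ as $B\prox_{U}B''\entails C$, push $B\subseteq\alpha$ forward through $\prox$ by (M2) to get $B''\subseteq\alpha$, and close with (M1) on $B''\entails C$ to obtain $C\meets\alpha$. Running the same unfolding backwards recovers the axioms from the Karoubi identity so that $\alpha_{r}$ is a model: (M2) falls out directly, while (M1) and (M3) follow by the same unfolding together with the distributivity lemma (equivalently, by reading $\alpha_{r}$ as the prime filter corresponding to the proximity relation $\ApproxExt{r}$ into $\EnttoLat{S,\entails}$).

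I expect the main obstacle to be that distributivity lemma, i.e.\ the passage from ``every $B'\in\mathcal{V}^{*}$ meets $\alpha$'' to ``some $B\in\mathcal{V}$ lies in $\alpha$''. This is precisely the primeness of $\alpha$ as a point of the distributive lattice $\EnttoLat{S,\entails}$, and it is the one step where (M1) must be used for compound entailments rather than at the generator level. Constructively it is nonetheless harmless: since each hypothesis $B'\meets\alpha$ is a finite disjunction $\exists b\in B'\,(b\in\alpha)$, the lemma is proved by induction on the finitely enumerable $\mathcal{V}$ using iterated disjunction elimination, with (M1) closing each branch, so that no decidability of membership in $\alpha$ is required. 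As a conceptual check, the whole statement is the instance ``global points are the models'' for the terminal object $\One$: under the equivalence of Theorem~\ref{thm:EuqivSContEntPAndSProxLatP} a join-preserving proximity map $\One\to S$ is a locale map from the one-point locale into $\Spec{S}$, and these are exactly the rounded prime filters carved out by (M1)--(M3).
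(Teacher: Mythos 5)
Your bijection is exactly the paper's: $r_{\alpha}$ and $\alpha_{r}$ are the same two formulas, and the paper dismisses the rest as ``straightforward to check'', so the substance of your proposal is precisely that missing verification. The specialisation of \ref{def:proximityStrongJoin} at $\One$ to the join-primeness law, the reduction of everything to the single Karoubi identity ${\ll_{\entails}}\cdot r_{\alpha}=r_{\alpha}$, the finitary distributivity lemma (which is just the distributive law in the frame of truth values and is constructively harmless, as you say), and the whole direction ``model $\Rightarrow$ join-preserving proximity map'' are all correct and carefully done.

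The one soft spot is the converse direction, where you assert that (M1) and (M3) for $\alpha_{r}$ ``follow by the same unfolding together with the distributivity lemma''. Unfolding ${\ll_{\entails}}\cdot r=r$ with the distributivity lemma yields only the $\ll_{\entails}$-form of primeness: if $B\subseteq\alpha_{r}$ and $B\ll_{\entails}C$ then $C\meets\alpha_{r}$. To obtain (M1) for a plain entailment $A\entails C$ you must first replace $A$ by something that $\ll_{\entails}$-entails $C$, and the obvious move --- rounding each $a\in A$ down --- is (M3); but your only access to (M3) is to factor $B\ll_{\entails}\{a\}$ as $B\entails B''\prox_{L}\{a\}$ and apply primeness to $B\entails B''$, which is (M1). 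As written, the two verifications chase each other. The circle is broken by proving (M1) first, without (M3): for each $a\in A$, the inclusion $r\subseteq{\ll_{\entails}}\cdot r$ together with distributivity gives $B_{a}\subseteq\alpha_{r}$ and $B_{a}'$ with $B_{a}\prox_{U}B_{a}'\entails a$; splicing the entailments $B_{a}'\entails a$ into $A\entails C$ by repeated applications of $(\mathrm{T})$ yields $\bigcup_{a}B_{a}'\entails C$, hence $\bigcup_{a}B_{a}\ll_{\entails}C$ with $\bigcup_{a}B_{a}\subseteq\alpha_{r}$, and the $\ll_{\entails}$-primeness closes (M1); (M3) then follows exactly as you indicate. (Your parenthetical alternative --- reading $\alpha_{r}$ off the proximity relation $\ApproxExt{r}$ into $\EnttoLat{S,\entails}$ --- also works, but it silently invokes Proposition~\ref{prop:AgreeContEntProxLat} to transfer models along $a\mapsto\{\{a\}\}$, so it is not free either.)
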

\begin{proof}
  A model $\alpha$ of $T(S, \entails, \prox)$ corresponds
  to a join-preserving proximity map $r_{\alpha} \colon \One \to S$
  defined by
  \[
    \emptyset \mathrel{r_{\alpha}} A \defeqiv \alpha \meets A.
  \]
  Conversely, a join-preserving proximity map $r \colon \One \to S$ 
  corresponds to a model $\alpha_{r}$ of ${T(S, \entails, \prox)}$
  defined by
  \[
    \alpha_{r} \defeql \left\{ a \in S \mid \emptyset
      \mathrel{r} \left\{ a \right\} \right\}.
  \]
  It is straightforward to check that the above correspondence is bijective.
\end{proof}

We now restrict Theorem \ref{thm:EuqivSContEntPAndSProxLatP} to
$\SContEnt$ and $\SProxLat$. The following should be compared with
Vickers~\cite[Theorem 42]{VickersEntailmentSystem}.
\begin{lemma}
  \label{lem:CharPxJ}
  The condition \textup{\ref{def:proximityStrongJoin}} is equivalent to 
  the following:
  \begin{enumerate}
      \myitem[{\textup{(JP$0$)}}]\label{def:proximityStrongJoin0} 
      $A \mathrel{r} \emptyset \imp A \entails \emptyset$, 

      \myitem[\textup{(JP$\vee$)}]\label{def:proximityStrongJoin1} 
      $A \mathrel{r} B \cup C \imp 
      \exists \, \mathcal{U}, \mathcal{V} \in \Fin{\Fin{S}} 
      \left( 
      \left\{ A \right\} \ApproxExt{\entails} \mathcal{U} \cup
      \mathcal{V} 
      \amp
      \mathcal{U} \ApproxExt{r} \left\{ B \right\}^{*}
      \amp
      \mathcal{V} \ApproxExt{r} \left\{ C \right\}^{*} \right)$.
  \end{enumerate}
\end{lemma}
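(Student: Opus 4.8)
The plan is to prove the two implications \ref{def:proximityStrongJoin}~$\Rightarrow$~(\ref{def:proximityStrongJoin0}~$\amp$~\ref{def:proximityStrongJoin1}) and (\ref{def:proximityStrongJoin0}~$\amp$~\ref{def:proximityStrongJoin1})~$\Rightarrow$~\ref{def:proximityStrongJoin} separately. Before either, I would record one combinatorial reduction: unfolding the definition of $\ApproxExt{r}$ and using that $(\{B\}^{*})^{*} = \{B\}$ (with the degenerate value $\{\emptyset\}$ when $B = \emptyset$), the condition $\mathcal{U} \ApproxExt{r} \{B\}^{*}$ is equivalent to $\forall A' \in \mathcal{U}\,(A' \mathrel{r} B)$. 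This is what aligns the whole-set condition appearing in \ref{def:proximityStrongJoin1} with the single-generator existential appearing in \ref{def:proximityStrongJoin}. I would also note that, since $r$ is upper, $A' \mathrel{r} \{b\}$ with $b \in B$ already yields $A' \mathrel{r} B$.

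For \ref{def:proximityStrongJoin}~$\Rightarrow$~\ref{def:proximityStrongJoin0}, I would instantiate \ref{def:proximityStrongJoin} at $B = \emptyset$: the inner clause $\exists b \in \emptyset\,(A' \mathrel{r} \{b\})$ is unsatisfiable, so the witnessing family must be empty, and $\{A\} \ApproxExt{\entails} \emptyset$ is by definition exactly $A \entails \emptyset$. For \ref{def:proximityStrongJoin}~$\Rightarrow$~\ref{def:proximityStrongJoin1}, I would apply \ref{def:proximityStrongJoin} to $A \mathrel{r} B \cup C$ to obtain a family $\mathcal{W}$ with $\{A\} \ApproxExt{\entails} \mathcal{W}$ such that every $A' \in \mathcal{W}$ has a witness $d \in B \cup C$ with $A' \mathrel{r} \{d\}$. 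Carrying these witnesses, I would split $\mathcal{W}$ along the disjunction ``$d \in B$ or $d \in C$'' into $\mathcal{U}$ and $\mathcal{V}$ (legitimate since $B \cup C$ is finitely enumerable). Upper-ness upgrades each witness to $A' \mathrel{r} B$ resp.\ $A' \mathrel{r} C$, giving $\mathcal{U} \ApproxExt{r} \{B\}^{*}$ and $\mathcal{V} \ApproxExt{r} \{C\}^{*}$ by the reduction above, while $\{A\} \ApproxExt{\entails} \mathcal{U} \cup \mathcal{V}$ holds because $\mathcal{U} \cup \mathcal{V} = \mathcal{W}$.

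The converse I would prove by induction on the cardinality of an enumeration of $B$. In the base case $B = \emptyset$, the hypothesis $A \mathrel{r} \emptyset$ gives $A \entails \emptyset$ by \ref{def:proximityStrongJoin0}, so the empty family witnesses \ref{def:proximityStrongJoin}. In the inductive step I would write $B = B_{0} \cup \{b\}$ with $B_{0}$ the image of the first $n$ indices of the enumeration, and apply \ref{def:proximityStrongJoin1} with $C = \{b\}$ to obtain families $\mathcal{U}_{0}, \mathcal{V}_{0}$ with $\{A\} \ApproxExt{\entails} \mathcal{U}_{0} \cup \mathcal{V}_{0}$ such that, by the reduction, $A' \mathrel{r} B_{0}$ for all $A' \in \mathcal{U}_{0}$ and $A' \mathrel{r} \{b\}$ for all $A' \in \mathcal{V}_{0}$. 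Members of $\mathcal{V}_{0}$ already relate to the single generator $b \in B$; for each $A' \in \mathcal{U}_{0}$ the relation $A' \mathrel{r} B_{0}$ has $B_{0}$ of smaller cardinality, so the induction hypothesis supplies a family $\mathcal{U}_{A'}$ with $\{A'\} \ApproxExt{\entails} \mathcal{U}_{A'}$ and single-generator witnesses in $B_{0} \subseteq B$. I would then set $\mathcal{U} = \mathcal{V}_{0} \cup \bigcup_{A' \in \mathcal{U}_{0}} \mathcal{U}_{A'}$, whose members all meet the single-generator requirement of \ref{def:proximityStrongJoin}.

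The hard part is the remaining verification that $\{A\} \ApproxExt{\entails} \mathcal{U}$, i.e.\ that the substitution of the $\mathcal{U}_{A'}$ into $\mathcal{U}_{0}$ is sound. Here I would stop unfolding $\ApproxExt{\entails}$ through $(-)^{*}$ and instead use that $\ApproxExt{\entails}$ is exactly the order of the distributive lattice $\EnttoLat{S,\entails}$: writing $[\mathcal{U}]$ for the element it represents, we have $[\mathcal{U}_{0}] = \bigvee_{A' \in \mathcal{U}_{0}} [\{A'\}]$, so monotonicity of $\vee$ together with $\{A'\} \ApproxExt{\entails} \mathcal{U}_{A'}$ gives $[\mathcal{U}_{0}] \leq \bigl[\bigcup_{A' \in \mathcal{U}_{0}} \mathcal{U}_{A'}\bigr]$, and hence $[\{A\}] \leq [\mathcal{U}_{0}] \vee [\mathcal{V}_{0}] \leq [\mathcal{U}]$. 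The only points requiring care beyond this are the usual constructive bookkeeping — that the splittings and unions stay finitely enumerable, and that all the displayed conditions are well defined with respect to $=_{\entails}$ — which I would treat as routine.
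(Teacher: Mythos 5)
Your proposal is correct and follows essentially the same route as the paper's proof: the same two implications, the same splitting of the witnessing family along $B \cup C$ using upperness in the forward direction, and the same induction on the size of $B$ with the family $\mathcal{V}_{0} \cup \bigcup_{A' \in \mathcal{U}_{0}} \mathcal{U}_{A'}$ in the converse. The only differences are expository: you make explicit the reduction $\mathcal{U} \mathrel{\ApproxExt{r}} \left\{ B \right\}^{*} \leftrightarrow \forall A' \in \mathcal{U}\left( A' \mathrel{r} B \right)$ and the final transitivity check via the order on $\EnttoLat{S,\entails}$, both of which the paper leaves implicit.
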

\begin{proof}
  Assume \ref{def:proximityStrongJoin}.  For
  \ref{def:proximityStrongJoin0}, if $A \mathrel{r} \emptyset$, then
  we must have $\left\{ A \right\} \ApproxExt{\entails} \emptyset$ 
  so $A \entails
  \emptyset$. For \ref{def:proximityStrongJoin1}, suppose that 
  $A \mathrel{r} B \cup C$. By 
  \ref{def:proximityStrongJoin}, there exist $\mathcal{U},
  \mathcal{V} \in
  \Fin{\Fin{S}}$ such that $\left\{ A \right\} \ApproxExt{\entails}
  \mathcal{U} \cup \mathcal{V}$, and 
  $\forall B' \in \mathcal{U}
  \exists b \in B  \left( B' \mathrel{r} \left\{ b \right\}  \right)$ 
  and 
  $\forall C' \in \mathcal{V}
  \exists c \in C  \left( C' \mathrel{r}  \left\{ c \right\} \right)$.
  Then,
  $\forall B' \in \mathcal{U} \left( B' \mathrel{r} B\right)$ 
  and
  $\forall C' \in \mathcal{V} \left( C' \mathrel{r} C\right)$.

  Conversely, assume
  \ref{def:proximityStrongJoin0} and \ref{def:proximityStrongJoin1}.
  We show \ref{def:proximityStrongJoin} by induction on the size of
  $B$. The base case $B = \emptyset$ follows from
  \ref{def:proximityStrongJoin0}. For the inductive case, suppose that 
  $A \mathrel{r} B \cup \left\{ b \right\}$.
  By \ref{def:proximityStrongJoin1}, there exist $\mathcal{U},
  \mathcal{V} \in \Fin{\Fin{S}}$ such that $A \ApproxExt{\entails}
  \mathcal{U} \cup \mathcal{V}$, 
  $\mathcal{U} \ApproxExt{r} \left\{B \right\}^{*}$,
  and $\mathcal{V} \ApproxExt{r} \left\{ \left\{ b \right\}
\right\}^{*}$.
  By induction hypothesis, for each $C \in \mathcal{U}$ there exists 
  $\mathcal{U}_{C} \in \Fin{\Fin{S}}$ such that 
  $C \ApproxExt{\entails} \mathcal{U}_{C}$
  and  $\forall B' \in \mathcal{U}_{C} \exists b' \in B \left( B'
  \mathrel{r} \left\{ b' \right\} \right)$. Then $\bigcup_{C \in
    \mathcal{U}}\mathcal{U}_{C} \cup \mathcal{V}$ witnesses
    \ref{def:proximityStrongJoin} for $B \cup \left\{ b
    \right\}$.
\end{proof}

\begin{proposition}
  \label{prop:PxJFullAndFaithful}
  A proximity map $r \colon (S, \entails, \prox) \to  (S', \entails', \prox') $
  is join-preserving if and only if 
  $\ApproxExt{r} \colon (\EnttoLat{S,\entails}, \ApproxExt{\ll_{\entails}})
  \to (\EnttoLat{S',\entails'}, \ApproxExt{\ll_{\entails'}})$ is
  join-preserving.
\end{proposition}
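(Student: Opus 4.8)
The plan is to transport the two defining conditions of a join-preserving proximity relation, \ref{def:proximity0} and \ref{def:proximityJ}, across the equivalence $F$ and to match them against the two conditions \ref{def:proximityStrongJoin0} and \ref{def:proximityStrongJoin1} supplied by Lemma~\ref{lem:CharPxJ}. So I would first use Lemma~\ref{lem:CharPxJ} to replace ``$r$ is join-preserving'' by the conjunction of \ref{def:proximityStrongJoin0} and \ref{def:proximityStrongJoin1}, and then prove the biconditionals ``\ref{def:proximityStrongJoin0} for $r$ $\iff$ \ref{def:proximity0} for $\ApproxExt{r}$'' and ``\ref{def:proximityStrongJoin1} for $r$ $\iff$ \ref{def:proximityJ} for $\ApproxExt{r}$'' separately. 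Throughout I may use that $r$, being a Karoubi morphism, is upper, and that $\ApproxExt{r}$ is already known to be a proximity relation from $\EnttoLat{S,\entails}$ to $\EnttoLat{S',\entails'}$.

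Everything reduces to three computations with $(-)^{*}$, which I would record first. (a) Since $0=\emptyset$ in each of $\EnttoLat{S,\entails}$, $\EnttoLat{S',\entails'}$ and $\emptyset^{*}=\left\{\emptyset\right\}$, one has $\mathcal{U} \ApproxExt{r} \emptyset \iff \forall A\in\mathcal{U}\,(A \mathrel{r} \emptyset)$, and $\mathcal{U}=_{\entails}\emptyset \iff \forall A\in\mathcal{U}\,(A \entails \emptyset)$. (b) A short calculation gives $\left(\left\{B\right\}^{*}\right)^{*}=\left\{B\right\}$; since $\ApproxExt{r}$ is a universal quantification over the disjuncts of its left argument, this upgrades the identity $\left\{A\right\} \ApproxExt{r} \left\{B\right\}^{*}\iff A \mathrel{r} B$ from the proof of Proposition~\ref{prop:EntRelEquivDLatProx} to $\mathcal{U} \ApproxExt{r} \left\{B\right\}^{*}\iff\forall A\in\mathcal{U}\,(A \mathrel{r} B)$. (c) From the inductive definition of $(-)^{*}$ (equivalently, its description by choices) one gets $\left(\mathcal{V}\cup\mathcal{W}\right)^{*}=\left\{P\cup Q\mid P\in\mathcal{V}^{*},\,Q\in\mathcal{W}^{*}\right\}$, hence $\mathcal{U} \ApproxExt{r} (\mathcal{V}\cup\mathcal{W})\iff\forall A\in\mathcal{U}\,\forall P\in\mathcal{V}^{*}\,\forall Q\in\mathcal{W}^{*}\,(A \mathrel{r} P\cup Q)$.

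The biconditional \ref{def:proximityStrongJoin0}$\iff$\ref{def:proximity0} is then immediate from (a), specialising $\mathcal{U}$ to $\left\{A\right\}$ for the direction \ref{def:proximity0}$\Rightarrow$\ref{def:proximityStrongJoin0}. For \ref{def:proximityJ}$\Rightarrow$\ref{def:proximityStrongJoin1} I would specialise \ref{def:proximityJ} to $\mathcal{U}=\left\{A\right\}$, $\mathcal{V}=\left\{B\right\}^{*}$, $\mathcal{W}=\left\{C\right\}^{*}$: combining (b) and (c) gives $\left(\left\{B\right\}^{*}\cup\left\{C\right\}^{*}\right)^{*}=\left\{B\cup C\right\}$, so the hypothesis $A \mathrel{r} B\cup C$ of \ref{def:proximityStrongJoin1} is exactly $\left\{A\right\} \ApproxExt{r} (\left\{B\right\}^{*}\cup\left\{C\right\}^{*})$; applying \ref{def:proximityJ} yields $\mathcal{V}',\mathcal{W}'$ whose $\ApproxExt{r}$-relations to $\left\{B\right\}^{*}$ and $\left\{C\right\}^{*}$ unfold, by (b), to exactly the data required by \ref{def:proximityStrongJoin1}.

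The substantive direction is \ref{def:proximityStrongJoin1}$\Rightarrow$\ref{def:proximityJ}, and this is where I expect the real work. Given $\mathcal{U} \ApproxExt{r} (\mathcal{V}\cup\mathcal{W})$, I would first reduce to a single disjunct $A\in\mathcal{U}$: solving the problem for each $\left\{A\right\}$ and taking unions of the witnesses works, since $\ApproxExt{r}$ quantifies over source disjuncts and $\mathcal{V}''_{A}\cup\mathcal{W}''_{A}\subseteq\mathcal{V}''\cup\mathcal{W}''$ gives $\left\{A\right\} \ApproxExt{\entails} \mathcal{V}''\cup\mathcal{W}''$ by transitivity of the order. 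For fixed $A$, (c) turns the hypothesis into $A \mathrel{r} P\cup Q$ for every $P\in\mathcal{V}^{*}$, $Q\in\mathcal{W}^{*}$; applying \ref{def:proximityStrongJoin1} to each pair and unfolding via (b) produces a splitting $\left\{A\right\} \ApproxExt{\entails} \mathcal{M}_{P,Q}\cup\mathcal{N}_{P,Q}$ with every member of $\mathcal{M}_{P,Q}$ related by $r$ to $P$ and every member of $\mathcal{N}_{P,Q}$ to $Q$. The crux is to amalgamate this doubly-indexed family into a single splitting. I would form the meet $\bigwedge_{P,Q}(\mathcal{M}_{P,Q}\vee\mathcal{N}_{P,Q})$ in the distributive lattice $\EnttoLat{S,\entails}$, which still lies above $\left\{A\right\}$, and distribute it into a join of terms $\bigwedge_{P,Q}Z_{P,Q}$ with each factor $Z_{P,Q}$ equal to $\mathcal{M}_{P,Q}$ or $\mathcal{N}_{P,Q}$. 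Because $r$ is upper, such a term is $\ApproxExt{r}$-related to the whole of $\mathcal{V}$ as soon as for each $P$ some factor is an $\mathcal{M}_{P,Q}$, and $\ApproxExt{r}$-related to the whole of $\mathcal{W}$ as soon as for each $Q$ some factor is an $\mathcal{N}_{P,Q}$. A small matrix argument shows at least one alternative holds for every term (otherwise some row would be all of $\mathcal{N}$-type and some column all of $\mathcal{M}$-type, contradicting each other at their intersection), so each term sorts into $\mathcal{V}''$ or $\mathcal{W}''$; the join of all terms being the original meet then delivers $\left\{A\right\} \ApproxExt{\entails} \mathcal{V}''\cup\mathcal{W}''$ with $\mathcal{V}'' \ApproxExt{r} \mathcal{V}$ and $\mathcal{W}'' \ApproxExt{r} \mathcal{W}$. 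The main obstacle is precisely this amalgamation: organising the clausewise splittings so that distributivity and the upper property together yield relatedness to the \emph{entire} elements $\mathcal{V}$ and $\mathcal{W}$ rather than to individual clauses.
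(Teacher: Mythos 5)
Your proposal is correct and takes essentially the same route as the paper: both directions are reduced via Lemma~\ref{lem:CharPxJ} to matching \ref{def:proximityStrongJoin0} and \ref{def:proximityStrongJoin1} against \ref{def:proximity0} and \ref{def:proximityJ}, the easy direction is obtained by specialising to $\mathcal{U}=\left\{A\right\}$, $\mathcal{V}=\left\{B\right\}^{*}$, $\mathcal{W}=\left\{C\right\}^{*}$ exactly as in the paper, and your choice-function ``matrix'' amalgamation is precisely the proof of the distributive-lattice inequality the paper invokes when it passes from $\bigwedge_{B}\bigwedge_{C}\left(\mathcal{V}_{A,B,C}\cup\mathcal{W}_{A,B,C}\right)$ to $\bigwedge_{B}\bigvee_{C}\mathcal{V}_{A,B,C}\vee\bigwedge_{C}\bigvee_{B}\mathcal{W}_{A,B,C}$. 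The only difference is presentational: the paper writes the witnesses in closed form, whereas you expand the distribution into terms and sort them.
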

\begin{proof}
  Suppose that $r$ is join-preserving.

  \noindent \ref{def:proximity0}
  Suppose $\mathcal{U} \mathrel{\ApproxExt{r}} \emptyset$.
  By \ref{def:proximityStrongJoin0}, we have $A \entails \emptyset$ for all
  $A \in \mathcal{U}$. Thus  $\mathcal{U} \ApproxExt{\entails} \emptyset$.

  \noindent \ref{def:proximityJ}
  Suppose $\mathcal{U} \ApproxExt{r} \mathcal{V} \vee
  \mathcal{W}$. Since
  $(\mathcal{V} \vee \mathcal{W})^{*} =_{\entails'} \mathcal{V}^{*} \wedge
  \mathcal{W}^{*}$ in $\EnttoLat{S',\entails'}$,
  for each  $A \in \mathcal{U}$,
  $B \in \mathcal{V}^{*}$, and $C \in \mathcal{W}^{*}$,
  we have $A \mathrel{r} B \cup C$. By
  \ref{def:proximityStrongJoin1}, there exist 
  $\mathcal{V}_{A,B,C}, \mathcal{W}_{A,B,C}  \in \Fin{\Fin{S}}$
  such that $A \ApproxExt{\entails} \mathcal{V}_{A,B,C} \cup
  \mathcal{W}_{A,B,C}$,
  $\mathcal{V}_{A,B,C} \ApproxExt{r} \left\{ B \right\}^{*}$,
  and 
  $\mathcal{W}_{A,B,C} \ApproxExt{r} \left\{ C \right\}^{*}$.
  Put
  \begin{align*}
    \mathcal{V}' &=
    \bigvee_{A \in \mathcal{U}}
    \bigwedge_{B \in \mathcal{V}^{*}}
    \bigvee_{C \in \mathcal{W}^{*}}
    \mathcal{V}_{A,B,C}, &
    \mathcal{W}' &=
    \bigvee_{A \in \mathcal{U}}
    \bigwedge_{C \in \mathcal{W}^{*}}
    \bigvee_{B \in \mathcal{V}^{*}}
    \mathcal{W}_{A,B,C}.
  \end{align*}
  Then,
  $\mathcal{V}' \mathrel{\ApproxExt{r}} \mathcal{V}$
  and 
  $\mathcal{W}' \mathrel{\ApproxExt{r}} \mathcal{W}$.
  Since %
  \[
    \left\{ A \right\} 
    \ApproxExt{\entails}
    \bigwedge_{B \in \mathcal{V}^{*}}
    \bigwedge_{C \in \mathcal{W}^{*}}\!\!
    \mathcal{V}_{A,B,C} \cup \mathcal{W}_{A,B,C} %
    \ApproxExt{\entails}
    \bigwedge_{B \in \mathcal{V}^{*}}
    \bigvee_{C \in \mathcal{W}^{*}}
    \mathcal{V}_{A,B,C}
    \vee
    \bigwedge_{C \in \mathcal{W}^{*}}
    \bigvee_{B \in \mathcal{V}^{*}}
    \mathcal{W}_{A,B,C}
  \]
  for each $A \in \mathcal{U}$, we have $\mathcal{U} \ApproxExt{\entails}
  \mathcal{V}' \cup \mathcal{W}'$.

  Conversely, suppose that $\ApproxExt{r}$ is join-preserving.
  We show
  \ref{def:proximityStrongJoin0}
  and \ref{def:proximityStrongJoin1}.

  \noindent \ref{def:proximityStrongJoin0}
  Suppose $A \mathrel{r} \emptyset$. Then $\left\{ A
  \right\} \mathrel{\ApproxExt{r}} \emptyset$. Thus $\left\{ A
  \right\} \ApproxExt{\entails} \emptyset$ by \ref{def:proximity0},
  and so $A \entails \emptyset$.

  \noindent \ref{def:proximityStrongJoin1}
  Suppose $A \mathrel{r} B \cup C$.
  Then $\left\{ A \right\} \mathrel{\ApproxExt{r}} \left\{ B
  \right\}^{*} \vee \left\{ C \right\}^{*}$.
  By \ref{def:proximityJ}, there exist
  $\mathcal{U}, \mathcal{V} \in \Fin{\Fin{S}}$
  such that $\left\{ A \right\} \ApproxExt{\entails} \mathcal{U} \cup
  \mathcal{V}$, 
  $\mathcal{U} \mathrel{\ApproxExt{r}} \left\{ B \right\}^{*}$, and 
  $\mathcal{V} \mathrel{\ApproxExt{r}} \left\{ C \right\}^{*}$.
\end{proof}

In particular, since join-preserving proximity relations are closed
under composition, so do join-preserving proximity maps.
\begin{theorem}
  \label{thm:EquiSContEntSProxLat}
  The categories $\SContEnt$ and $\SProxLat$ are equivalent.
\end{theorem}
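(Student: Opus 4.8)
The plan is to obtain the equivalence by restricting the equivalence $F \colon \SContEntP \to \SProxLatP$ of Theorem~\ref{thm:EuqivSContEntPAndSProxLatP} to the wide subcategories $\SContEnt$ and $\SProxLat$, whose morphisms are the join-preserving proximity maps and join-preserving proximity relations respectively. The whole argument rests on Proposition~\ref{prop:PxJFullAndFaithful}, which says that a proximity map $r$ is join-preserving if and only if $\ApproxExt{r} = F(r)$ is join-preserving; thus the hom-set bijection effected by $F$ carries join-preserving maps exactly onto join-preserving relations.

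First I would check that $F$ restricts to a functor $\SContEnt \to \SProxLat$. On objects this is Lemma~\ref{lem:SContEntToSProxLat}, and on morphisms it is the forward direction of Proposition~\ref{prop:PxJFullAndFaithful}; functoriality is inherited from $F$ once we know, as already noted, that join-preserving proximity maps are closed under composition and that the identity $\prox$ of a strong proximity lattice is join-preserving (such a lattice is in particular $\vee$-strong). This restricted functor is faithful, being a restriction of the faithful $F$. For fullness, given a join-preserving proximity relation $g \colon FX \to FX'$ in $\SProxLat$, fullness of $F$ on $\SContEntP$ yields a proximity map $r$ with $\ApproxExt{r} = g$; since $g$ is join-preserving, the converse direction of Proposition~\ref{prop:PxJFullAndFaithful} forces $r$ to be join-preserving, hence a morphism of $\SContEnt$ with $F(r) = g$.

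The only remaining point is essential surjectivity, and this is the step that needs a small extra argument. The objects of $\SContEnt$ and $\SProxLat$ are literally those of $\SContEntP$ and $\SProxLatP$, so for any strong proximity lattice $Y$ the equivalence of Theorem~\ref{thm:EuqivSContEntPAndSProxLatP} already supplies an object $X$ and an isomorphism $\theta \colon FX \to Y$ in $\SProxLatP$. What must be verified is that $\theta$ may be taken in $\SProxLat$, i.e.\ that it is join-preserving. The main obstacle is precisely this: join-preservation is not part of the definition of an isomorphism in $\ProxLat$, so it cannot simply be quoted.

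I would clear this obstacle by passing to the represented frames. Under the duality of Proposition~\ref{prop:ProximityRelPreframeHom} and its rounded-ideal counterpart, the isomorphism $\theta$ in $\ProxLat$ corresponds to an invertible preframe homomorphism between the frames of rounded ideals $\RIdeals{-}$ whose inverse is again a preframe homomorphism. Such a map is a bijection that preserves and reflects the order (preframe homomorphisms preserve finite meets, hence are monotone), so it is an order-isomorphism and therefore preserves all joins and meets; in particular it preserves finite joins and is a frame isomorphism. By the join-preserving version of the duality in Proposition~\ref{prop:jProximityRelFrameHom}, a proximity relation is join-preserving precisely when its associated map on rounded ideals is a frame homomorphism, so $\theta$ is join-preserving and hence an isomorphism in $\SProxLat$. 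This yields essential surjectivity, and together with fullness and faithfulness it shows that the restricted $F \colon \SContEnt \to \SProxLat$ is an equivalence.
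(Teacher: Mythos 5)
Your proposal is correct and follows essentially the same route as the paper: restrict $F$ using Theorem~\ref{thm:EuqivSContEntPAndSProxLatP} and Proposition~\ref{prop:PxJFullAndFaithful} to get a full and faithful functor, then obtain essential surjectivity from the fact that every isomorphic proximity relation between $\vee$-strong proximity lattices is automatically join-preserving. The paper merely asserts that last fact, whereas you supply a justification via the represented frames; that argument is sound, though note that Proposition~\ref{prop:jProximityRelFrameHom} as stated concerns $\Ideals{-}$ of distributive lattices and gives only one direction, so strictly you need its rounded-ideal analogue together with the $\vee$-strongness of both source and target to recover \ref{def:proximity0} and \ref{def:proximityJ} from the frame isomorphism.
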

\begin{proof}
  By Theorem \ref{thm:EuqivSContEntPAndSProxLatP}
  and Proposition
  \ref{prop:PxJFullAndFaithful}, 
  the functor $F \colon \ContEnt \to \ProxLat$ restricts to 
  a full and faithful functor from $\SContEnt$ to $\SProxLat$.
  Since every isomorphic
  proximity relation between $\vee$-strong proximity lattices 
  is join-preserving, $F$ establishes an equivalence of 
  $\SContEnt$ and $\SProxLat$.
\end{proof}

By an abuse of notation, we write $F \colon \SContEnt \to \SProxLat$
and $G \colon \SProxLat \to \SContEnt$ for the restrictions of the
functor  $F \colon \ContEnt \to \ProxLat$  and its quasi-inverse $G \colon
\ProxLat \to \ContEnt$.
\begin{remark}
  \label{rem:EquivSContEntSProxLat}
  Many of the examples in Section~\ref{sec:ExampledeGrootDuality}
  start from a strong proximity lattice $(S, \prox)$ and specify a
  strong continuous entailment relation which represents the desired
  construction on $\Spec{S}$. The functor $F \colon \SContEnt \to
  \SProxLat$ then allows us to calculate the corresponding construction
  on $(S,\prox)$.
\end{remark}

The presentations of stably compact locales are invariant
under the equivalence of $\SContEnt$ and $\SProxLat$ in the following sense.
\begin{proposition}\label{prop:AgreeContEntProxLat}
  \leavevmode
  \begin{enumerate}
    \item\label{prop:AgreeContEntProxLat1} For any strong proximity
      lattice $(S,\prox)$, the locale
      $\Spec{S}$ is presented by $G(S,\prox)$.
  \item\label{prop:AgreeContEntProxLat2} For any continuous entailment
    relation $(S,\entails,
    \prox)$, the locale ${\Spec{F(S,\entails,\prox)}}$ is presented by
  $(S,\entails, \prox)$.
  \end{enumerate} 
\end{proposition}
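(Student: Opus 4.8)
The plan is to argue at the level of \emph{interpretations}: a locale presented by a geometric theory is determined, up to isomorphism, by the functor sending each locale $X$ to the collection of interpretations of the theory in $X$. Hence for each part it suffices to exhibit, naturally in $X$, a bijection between the interpretations of the two theories involved, or to observe outright that they have the same interpretations.

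For part \ref{prop:AgreeContEntProxLat1}, recall from Lemma \ref{lem:SProxLatSContEnt} that $G(S,\prox) = (S, \entails, \prox)$ with $\entails$ given by \eqref{eq:EntofDLat}, so its theory $T(G(S,\prox))$ consists of the axioms \eqref{eq:SContEntGeo}: the family $\medwedge A \entails \medvee B$ ranging over all finite $A,B$ with $\medwedge A \leq \medvee B$ in $S$, together with $a \entails b$ for $a \prox b$ and $a \entails \bigvee_{b \prox a} b$. First I would show that a function $f \colon S \to \Frame{X}$ validates the first family precisely when it is a bounded distributive-lattice homomorphism: monotonicity and preservation of $0,1,\wedge,\vee$ are each recovered by instantiating $\medwedge A \leq \medvee B$ at suitable small $A,B$ (for instance $A = \left\{ a \wedge b \right\}$, $B = \left\{ a \right\}$ for one half of meet preservation), and conversely any lattice homomorphism validates every instance by distributivity. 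This is exactly the content of the first five axioms of Corollary \ref{cor:SProxLatPresentation}, whereas the remaining two axioms of \eqref{eq:SContEntGeo} coincide verbatim with the last two axioms there. Thus $T(G(S,\prox))$ and the theory of Corollary \ref{cor:SProxLatPresentation} have identical interpretations in every $X$, and since the latter presents $\Spec{S}$, so does the former.

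For part \ref{prop:AgreeContEntProxLat2}, write $L = \EnttoLat{S,\entails}$, so that $F(S,\entails,\prox) = (L, \ApproxExt{\ll_{\entails}})$ and $\Frame{\Spec{F(S,\entails,\prox)}} = \RIdeals{L}$. Applying part \ref{prop:AgreeContEntProxLat1} to the strong proximity lattice $F(S,\entails,\prox)$ already presents this locale by $G(F(S,\entails,\prox))$, a theory over the generators $L = \Fin{\Fin{S}}$; it remains to transfer the presentation back to the generators $S$ by matching interpretations of $T(S,\entails,\prox)$ with frame homomorphisms out of $\RIdeals{L}$. Given an interpretation $f \colon S \to \Frame{X}$ of $T(S,\entails,\prox)$, the first axiom forces $f$ to respect $\entails$, so $\hat{f}(\mathcal{U}) \defeql \bigvee_{A \in \mathcal{U}} \medwedge_{a \in A} f(a)$ is well defined with respect to $=_{\entails}$ and is a bounded lattice homomorphism $\hat{f} \colon L \to \Frame{X}$ (monotonicity for $\ApproxExt{\entails}$ being the distributivity step that turns $A \entails B$ for all $B \in \mathcal{V}^{*}$ into $\hat{f}(\mathcal{U}) \le \hat{f}(\mathcal{V})$). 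I would then check that $\hat{f}$ is moreover an interpretation of $(L, \ApproxExt{\ll_{\entails}})$, the only new clause being the rounding identity $\hat{f}(\mathcal{U}) = \bigvee_{\mathcal{V} \ApproxExt{\ll_{\entails}} \mathcal{U}} \hat{f}(\mathcal{V})$; by Proposition \ref{prop:BijInterpretationScottCont} it then extends uniquely to a frame homomorphism $\overline{f} \colon \RIdeals{L} \to \Frame{X}$. Conversely a frame homomorphism $g \colon \RIdeals{L} \to \Frame{X}$ restricts to the function $a \mapsto g(i_{L}(\left\{ \left\{ a \right\} \right\}))$, with $i_{L}$ as in \eqref{eq:InjectionGenerator}, which one verifies is an interpretation of $T(S,\entails,\prox)$. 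Showing these two assignments are mutually inverse and natural in $X$ yields the claim.

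The main obstacle is the rounding identity of part \ref{prop:AgreeContEntProxLat2}. Unwinding $\ApproxExt{\ll_{\entails}}$ through ${\ll_{\entails}} = {\entails} \circ {\prox_{U}} = {\prox_{L}} \circ {\entails}$, the inequality $\bigvee_{\mathcal{V} \ApproxExt{\ll_{\entails}} \mathcal{U}} \hat{f}(\mathcal{V}) \leq \hat{f}(\mathcal{U})$ follows from $a \prox b \imp f(a) \leq f(b)$ together with the first axiom, while the reverse inequality is where the third axiom $a \entails \bigvee_{b \prox a} b$ does the real work: for $\mathcal{U} = \left\{ \left\{ a \right\} \right\}$ it is immediate, and the general case reduces to this by distributing $\hat{f}$ over the finite meets and joins of $L$ and using the choice-set description of $\mathcal{U}^{*}$ to connect $\prox_{U}$-refinements $A' \prox_{U} A$ of members $A \in \mathcal{U}$ with the relation $\ApproxExt{\ll_{\entails}}$. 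This distributivity bookkeeping, and the verification that the two assignments above are inverse, are the only genuinely laborious parts; everything else is a direct comparison of axioms.
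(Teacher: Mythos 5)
Your proposal is correct and follows essentially the same route as the paper: part (1) is exactly the paper's direct comparison of axioms with Corollary~\ref{cor:SProxLatPresentation}, and part (2) uses the same bijection of interpretations induced by $a \mapsto \left\{ \left\{ a \right\} \right\}$ and $\overline{f}(\mathcal{U}) = \bigvee_{A \in \mathcal{U}} \bigwedge_{a \in A} f(a)$, with the rounding identity established by the same distributivity computation $\medwedge_{a \in A} f(a) = \bigvee_{B \prox_{U} A} \medwedge_{b \in B} f(b)$. No gaps to report.
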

\begin{proof}
 \ref{prop:AgreeContEntProxLat1}. This is clear from the definition of
 $G(S,\prox)$ and Corollary \ref{cor:SProxLatPresentation}.

  \noindent \ref{prop:AgreeContEntProxLat2}.  First, we define a bijection 
  between interpretations of $(S,\entails, \prox)$ in a locale $X$
  and interpretations of $GF(S,\entails,\prox)$ in $X$
  via a mapping $a \mapsto \left\{ \left\{ a \right\}
\right\} \colon S \to \Fin{\Fin{S}}$.
  Let $f \colon S \to \Frame{X}$ be an interpretation of $(S,\entails,
  \prox)$ in $X$. Define $\overline{f} \colon \Fin{\Fin{S}}
  \to \Frame{X}$ by
  \[
    \overline{f}(\mathcal{U})
    \defeql
    \medvee_{A \in \mathcal{U}} \medwedge_{a \in A} f(a),
  \]
  which clearly satisfies
  $\overline{f}(\left\{ \left\{ a \right\} \right\}) = f(a)$ for all
  $a \in S$.
  We show that $\overline{f}$ preserves the order on
  $\EnttoLat{S,\entails}$,
  which implies that $\overline{f}$ respects the equality on
  $\EnttoLat{S,\entails}$.
  Suppose $\mathcal{U} \ApproxExt{\entails} \mathcal{V}$. Since $f$ 
  is an interpretation of $(S,\entails, \prox)$, we have
  \[
    \overline{f}(\mathcal{U}) 
    =
    \medvee_{A \in \mathcal{U}} \medwedge_{a \in A} f(a)
    \leq_{X}
    \medwedge_{B' \in \mathcal{V}^{*}} \medvee_{b' \in B'} f(b')
    =
    \medvee_{B \in \mathcal{V}} \medwedge_{b \in B} f(b)
    = \overline{f}(\mathcal{V}),
  \]
  where $\leq_{X}$ is the order on $X$. Thus, $\overline{f}$ is a
  function on $\EnttoLat{S, \entails}$.
  Similarly, we have
  $
  \mathcal{U} \ApproxExt{\ll_{\entails}} \mathcal{V} \imp \overline{f}(\mathcal{U}) \leq_{X} \overline{f}(\mathcal{V})
  $.
  It is also easy to check
  that $\overline{f}$ preserves finite meets and finite joins. 
  Furthermore, for any $A \in \Fin{S}$,  we have
  \[
    \medwedge_{a \in A} f(a) 
    = \medwedge_{a \in A} \bigvee_{b \prox a} f(b)
    = \bigvee_{B \prox_{U} A} \medwedge_{b \in B} f(b),
  \]
  which implies 
  $\overline{f}(\mathcal{U}) \leq_{X}\! \bigvee_{\mathcal{V}
  \ApproxExt{\ll_{\entails}}
  \mathcal{U}} \overline{f}(\mathcal{V})$.
  Thus, $\overline{f}$ is an interpretation of
  $GF(S,\entails,\prox)$ in $X$.
  Since $\mathcal{U} =_{\entails} \medvee_{A \in \mathcal{U}} \medwedge_{a \in
  A}\left\{ \left\{ a \right\} \right\}$ for each $\mathcal{U} \in
  \Fin{\Fin{S}}$, $\overline{f}$ is a unique
  interpretation of  $GF(S,\entails,\prox)$ in
  $X$ such that $\overline{f}(\left\{ \left\{ a \right\} \right\}) =
  f(a)$ for all $a \in S$.

  Define $j_{S} \colon S \to \RIdeals{F(S,\entails,\prox)}$ by
  $j_{S}(a) = \downset_{\ApproxExt{\ll_{\entails}}}\left\{ \left\{ a \right\} \right\}$.
  Then, it is straightforward to show that $j_S$
   is a universal interpretation of $(S,\entails,\prox)$.
\end{proof}

\subsection{Generated strong continuous entailment
relations}\label{subsec:GenetateContEnt}
To construct a new entailment relation, one often specifies
a set of initial entailments from which the entire relation
is generated.
\begin{definition}
  An \emph{axiom} on a set $S$ is a pair $(A,B) \in \Fin{S} \times \Fin{S}$.
  Given a set $\entails_{0}$ of axioms on $S$,  an entailment relation $(S,\entails)$ 
  is said to be \emph{generated} by $\entails_0$ if $\entails$ is the
  smallest entailment relation on $S$ that contains $\entails_0$.

  We usually write $A \entails_0 B$ for $(A,B) \mathop{\in} \entails_0$.
\end{definition}

\begin{lemma} \label{lem:IndGenEnt}
  If $\entails_0$ is a set of axioms on a set $S$, then the entailment relation
  $\entails$ generated by $\entails_0$ is inductively defined by the following
  rules:
\[
  \frac{A \meets B}{A \entails B} \;(\mathrm{R'})
  \qquad
  \qquad
  \frac{A \entails_0 C \quad   \forall c \in C \left(  A',c \entails B\right) }
  {A,A' \entails B} \;(\mathrm{AxL})
\]
\end{lemma}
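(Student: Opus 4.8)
The plan is to prove the two inclusions between the relation inductively generated by $(\mathrm{R}')$ and $(\mathrm{AxL})$, which I denote $\entails_{\mathrm{I}}$, and the smallest entailment relation $\entails$ containing $\entails_0$. For the inclusion $\entails_{\mathrm{I}} \subseteq \entails$ I would argue by induction on the generation of $\entails_{\mathrm{I}}$, i.e.\ check that $\entails$ is closed under $(\mathrm{R}')$ and $(\mathrm{AxL})$. Closure under $(\mathrm{R}')$ is immediate: if $A \meets B$ with witness $a \in A \cap B$, then $a \entails a$ by $(\mathrm{R})$ and $A \entails B$ by $(\mathrm{M})$. For $(\mathrm{AxL})$, given $A \entails_0 C$ (hence $A \entails C$, as $\entails_0 \subseteq \entails$) and $A', c \entails B$ for every $c \in C$, I would induct on the finite set $C$: when $C = \emptyset$ the axiom gives $A \entails \emptyset$, so $A, A' \entails B$ by $(\mathrm{M})$; in the step I weaken by $(\mathrm{M})$ to $A, A' \entails C, B$ and $A, A', c \entails B$ and remove the elements of $C$ one at a time by $(\mathrm{T})$. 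Finiteness of $C$ is exactly what makes the iterated cut terminate.

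For the converse inclusion $\entails \subseteq \entails_{\mathrm{I}}$ it suffices to verify that $\entails_{\mathrm{I}}$ is itself an entailment relation containing $\entails_0$, since $\entails$ is the least such. Containment holds because, for $A \entails_0 C$, rule $(\mathrm{R}')$ gives $c \entails_{\mathrm{I}} C$ for each $c \in C$ (as $\{c\} \meets C$), whence one application of $(\mathrm{AxL})$ with $A' = \emptyset$ and $B = C$ yields $A \entails_{\mathrm{I}} C$. Reflexivity $(\mathrm{R})$ is a special case of $(\mathrm{R}')$, and monotonicity $(\mathrm{M})$ follows by a routine induction on derivations: the $(\mathrm{R}')$ case is clear since an overlap survives enlargement, and in the $(\mathrm{AxL})$ case I apply the inductive hypothesis to the premises and reapply $(\mathrm{AxL})$ with the same axiom.

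The main obstacle is verifying the cut rule $(\mathrm{T})$ for $\entails_{\mathrm{I}}$. Rather than cutting with fixed contexts, I would prove a context-flexible (multiplicative) form: if $A \entails_{\mathrm{I}} B, a$ and $A', a \entails_{\mathrm{I}} B'$, then $A \cup A' \entails_{\mathrm{I}} B \cup B'$; the rule $(\mathrm{T})$ of Definition~\ref{def:EntRel} is the instance $A = A'$ and $B = B'$. I would prove this by induction on the derivation of the first premise $A \entails_{\mathrm{I}} B, a$, the one carrying the cut formula $a$ on the right. The point that makes the induction go through without a genuine principal case is the asymmetry of $(\mathrm{AxL})$: it acts only on the left-hand side via an axiom and leaves the right context untouched, so the cut formula $a$ remains a passive side-formula throughout the derivation and is discharged only at an $(\mathrm{R}')$ leaf.

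Concretely, if the first premise ends in $(\mathrm{R}')$, the conclusion follows from $(\mathrm{R}')$ when the overlap witness lies in $A \cap B$, and from the second premise together with $(\mathrm{M})$ when the witness is $a$ (so that $a \in A$). If the first premise ends in $(\mathrm{AxL})$ applied to an axiom $A_0 \entails_0 C$ with $A_0 \cup A_1 = A$ and premises $A_1, c \entails_{\mathrm{I}} B, a$, then for each $c \in C$ I apply the induction hypothesis to cut $a$ between the (smaller) premise $A_1, c \entails_{\mathrm{I}} B, a$ and the second premise $A', a \entails_{\mathrm{I}} B'$, obtaining $(A_1 \cup A'), c \entails_{\mathrm{I}} B \cup B'$; reapplying $(\mathrm{AxL})$ with the same axiom $A_0 \entails_0 C$ on these merged contexts yields $A_0 \cup A_1 \cup A' \entails_{\mathrm{I}} B \cup B'$, i.e.\ $A \cup A' \entails_{\mathrm{I}} B \cup B'$. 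Thus every case is commutative and is discharged by the inductive hypothesis plus one reapplication of the rule just analysed, and specialising to $A = A'$, $B = B'$ gives $(\mathrm{T})$. This completes the verification that $\entails_{\mathrm{I}}$ is an entailment relation, hence that $\entails_{\mathrm{I}} = \entails$.
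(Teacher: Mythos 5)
Your proof is correct and takes essentially the same approach as the paper: both establish the result by showing the inductively generated relation is an entailment relation, with the crux being the strengthened cut rule $(\mathrm{T'})$ (your ``context-flexible'' form) proved by induction on the derivation of the premise carrying the cut formula on the right, the $(\mathrm{AxL})$ case being discharged by the inductive hypothesis followed by one reapplication of $(\mathrm{AxL})$. Your write-up merely fills in details the paper leaves implicit, namely closure under $(\mathrm{M})$, the containment of $\entails_0$ in the generated relation, and the minimality direction via iterated cuts over the finite set $C$.
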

\begin{proof}
  First, we show that the relation $\entails$ generated by 
  $(\mathrm{R'})$ and $(\mathrm{AxL})$ is an entailment relation.
  The proof is by induction on the height of
  derivations of the premises of each condition in Definition
  \ref{def:EntRel}. For example, to see
  that $\entails$ satisfies ($\mathrm{T}$), we show that
  $\entails$ satisfies more general condition:
\[
  \frac{A \entails B,a \quad a,A' \entails B'}{A,A' \entails B,B'} \;
  (\mathrm{T'})
\]
  Suppose $A \entails
  B, a$ and $a, A' \entails B'$.  Then $A \entails B,a$ is derived by
  either ($\mathrm{R'}$) or ($\mathrm{AxL}$).
  The former case is easy. In the
  latter case, $A \entails B, a$ is of the form $C',C \entails B, a$
  for some $C \entails_{0} D$ such that $\forall d \in D \left(  C',
  d \entails B, a \right)$.
  By induction hypothesis, we have $A', C',d \entails B,B'$ for
  all $d \in D$. Hence $A', C, C' \entails B,B'$ by ($\mathrm{AxL}$).
  Next, if $\entails'$ is another entailment relation on $S$ containing
  $\entails_{0}$, then $\entails'$ satisfies
  ($\mathrm{R'}$) and ($\mathrm{AxL}$), so $\entails'$ must contain $\entails$.
\end{proof}
Dually, we have the following.
\begin{lemma}\label{lem:IndGenEntDual}
  If $\entails_0$ is a set of axioms on a set $S$,  the entailment relation
  $\entails$ generated by $\entails_0$ is inductively defined by the following
  rules:
\[
  \frac{A \meets B}{A \entails B} \;(\mathrm{R'})
  \qquad
  \qquad
  \frac{ \forall c \in C \left(  A \entails B',c\right) \quad C \entails_0 B }
  {A \entails B', B} \;(\mathrm{AxR})
\]
\par \vspace{-1.6\baselineskip}
\qed
\end{lemma}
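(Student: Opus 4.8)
The plan is to run the argument of Lemma~\ref{lem:IndGenEnt} with the two sides of the turnstile interchanged. As there, two inclusions are needed: that the relation $\entails$ inductively generated by $(\mathrm{R'})$ and $(\mathrm{AxR})$ is itself an entailment relation containing $\entails_0$, and that it is contained in any entailment relation containing $\entails_0$. The inclusion $\entails_0 \subseteq {\entails}$ is immediate: from $A \entails_0 B$, taking the $C$ of $(\mathrm{AxR})$ to be $A$ and discharging each premise $A \entails \left\{ c \right\}$ by $(\mathrm{R'})$ (since $A \meets \left\{ c \right\}$ for $c \in A$), a single application of $(\mathrm{AxR})$ gives $A \entails B$.

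First I would check reflexivity, monotonicity, and transitivity in turn, since transitivity relies on the first two. Reflexivity $a \entails a$ is immediate from $(\mathrm{R'})$, as $a \meets a$. Monotonicity $(\mathrm{M})$ follows by induction on the derivation of $A \entails B$: the $(\mathrm{R'})$ base case holds because $A \meets B$ implies $A, A' \meets B, B'$, and the $(\mathrm{AxR})$ step is handled by weakening each premise through the induction hypothesis and reapplying $(\mathrm{AxR})$. For transitivity I would, as in the original proof, establish the general cut rule $(\mathrm{T'})$, but now by induction on the derivation of the \emph{right} premise.

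Concretely, suppose $A \entails B, a$ and $a, A' \entails B'$, and inspect how the second is derived. If by $(\mathrm{R'})$, then $a, A' \meets B'$; when the witness lies in $A' \cap B'$ the conclusion $A, A' \entails B, B'$ follows directly from $(\mathrm{R'})$, and when the witness is $a \in B'$ it follows by weakening $A \entails B, a$ through $(\mathrm{M})$. If instead the second premise is derived by $(\mathrm{AxR})$, then $B' = B'' \cup B_0$ with $C \entails_0 B_0$ and $a, A' \entails B'', c$ for every $c \in C$; applying the induction hypothesis to each such premise together with $A \entails B, a$ yields $A, A' \entails B, B'', c$ for all $c \in C$, and a final $(\mathrm{AxR})$ step gives $A, A' \entails B, B'', B_0 = B, B'$. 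This $(\mathrm{AxR})$ case is the step I expect to be the most delicate: one must confirm that each premise has a strictly shorter derivation so that the induction is well-founded, and that the contexts recombine correctly after the cut.

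For the reverse inclusion I would show that any entailment relation $\entails'$ with $\entails_0 \subseteq {\entails'}$ is closed under $(\mathrm{R'})$ and $(\mathrm{AxR})$, and hence contains $\entails$. Closure under $(\mathrm{R'})$ is immediate from reflexivity and monotonicity of $\entails'$. Closure under $(\mathrm{AxR})$ is the remaining obstacle: given $A \entails' B', c$ for all $c \in C$ together with $C \entails_0 B$ (so $C \entails' B$), one derives $A \entails' B', B$ by a multi-cut that removes the elements of $C$ one at a time via $(\mathrm{T})$, formalised as an induction on the cardinality of $C$ with left and right contexts carried along. With both inclusions in hand the lemma follows.
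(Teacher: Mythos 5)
Your proof is correct and takes essentially the same route as the paper: the paper simply observes that Lemma~\ref{lem:IndGenEntDual} follows ``dually'' from the proof of Lemma~\ref{lem:IndGenEnt} via the left--right symmetry of entailment relations, and what you have written out is precisely that dualised argument (induction on the right premise for $(\mathrm{T'})$, and a multi-cut to verify closure under $(\mathrm{AxR})$ in any entailment relation containing $\entails_0$).
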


The following is useful when defining a new strong continuous entailment
relation using axioms.
\begin{lemma}\label{prop:IndGenContEnt}
  Let $\entails$ be an entailment relation on a set $S$ generated by a set
  $\entails_0$ of axioms.
    If $\prox$ is an idempotent relation on $S$ such that
  \begin{enumerate}
    \item\label{prop:IndGenContEnt1} $C \prox_{U} A \entails_0 B \imp  \exists B' \in
      \Fin{S} \left(C \entails B' \prox_{L} B \right)$,
    \item\label{prop:IndGenContEnt2} $A \entails_0 B \prox_{L} C \imp \exists A' \in
      \Fin{S} \left(  A \prox_{U} A' \entails C\right)$,
  \end{enumerate}
  then $(S, \entails, \prox)$ is a strong continuous entailment relation.
\end{lemma}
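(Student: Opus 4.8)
The structure $(S,\entails,\prox)$ already has $\entails$ an entailment relation and $\prox$ idempotent by hypothesis, so by Definition~\ref{def:ContEnt} it suffices to verify the biconditional~\eqref{eq:ContEnt}. The plan is to prove its two implications separately, each by induction on the generation of $\entails$ from $\entails_{0}$, exploiting the fact that conditions~\ref{prop:IndGenContEnt1} and~\ref{prop:IndGenContEnt2} are precisely the restrictions of these two implications to the generating axioms. Concretely, the left-to-right implication uses the inductive presentation of $\entails$ by $(\mathrm{R'})$ and $(\mathrm{AxL})$ from Lemma~\ref{lem:IndGenEnt} together with~\ref{prop:IndGenContEnt1}, while the right-to-left implication uses the dual presentation by $(\mathrm{R'})$ and $(\mathrm{AxR})$ from Lemma~\ref{lem:IndGenEntDual} together with~\ref{prop:IndGenContEnt2}. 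Throughout I would freely use that $A \prox_{U} (-)$ and $(-) \prox_{L} B$ turn unions in the varying argument into conjunctions, so that these relations are preserved under unions and under passing to subsets on the appropriate side.

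For the left-to-right implication I would prove the strengthened statement: whenever $E \entails B$ is derivable, then for \emph{every} $A$ with $A \prox_{U} E$ there is some $B'$ with $A \entails B' \prox_{L} B$; the universal quantifier over $A$ is what makes the induction go through. In the base case $(\mathrm{R'})$, some $e \in E \cap B$ admits $a \in A$ with $a \prox e$, and $B' = \{a\}$ works. In the step $(\mathrm{AxL})$, the conclusion is $C_{0} \cup C'' \entails B$ with $C_{0} \entails_{0} D$ and $C'' \cup \{d\} \entails B$ for each $d \in D$; from $A \prox_{U} C_{0}$ and~\ref{prop:IndGenContEnt1} I get $D'$ with $A \entails D' \prox_{L} D$, and for each $d' \in D'$ a witness $d \in D$ with $d' \prox d$. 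Applying the induction hypothesis to $C'' \cup \{d\} \entails B$ with the augmented left side $A \cup \{d'\}$ (which still satisfies $A \cup \{d'\} \prox_{U} C'' \cup \{d\}$) yields $B'_{d'}$ with $A, d' \entails B'_{d'} \prox_{L} B$. Cutting the finitely many $d' \in D'$ out of $A \entails D'$ against these entailments, via iterated use of $(\mathrm{T})$ and $(\mathrm{M})$, produces $A \entails B'$ with $B' = \bigcup_{d'} B'_{d'} \prox_{L} B$, as required.

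The right-to-left implication is entirely dual: I would show that whenever $A \entails E'$ is derivable, then for every $B$ with $E' \prox_{L} B$ there is $A'$ with $A \prox_{U} A' \entails B$, using the $(\mathrm{AxR})$ presentation of Lemma~\ref{lem:IndGenEntDual} and condition~\ref{prop:IndGenContEnt2}. Now the axiom $C \entails_{0} F$ sits on the right of the conclusion; from $F \prox_{L} B$ and~\ref{prop:IndGenContEnt2} I obtain an interpolant $C'$ with $C \prox_{U} C' \entails B$, apply the induction hypothesis to each premise with its target augmented by a single $c' \in C'$, and combine the resulting witnesses by cutting the elements of $C'$ out of $C' \entails B$ on the left.

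The routine parts are the base cases and the union/subset bookkeeping for $\prox_{U}$ and $\prox_{L}$. The step needing care — and the main obstacle — is the combination in each inductive step: the interpolants produced by~\ref{prop:IndGenContEnt1} (resp.~\ref{prop:IndGenContEnt2}) arrive as a finite family indexed by the elements of $D'$ (resp.\ $C'$), and one must both augment the correct side of each premise so as to preserve the $\prox_{U}$ (resp.\ $\prox_{L}$) relation before invoking the induction hypothesis, and then eliminate exactly these interpolants by a generalised (iterated single) cut. Getting this bookkeeping right, together with formulating the induction hypothesis with the extra universal quantifier over the non-inductive side, is the crux; once set up, each cut is a direct application of $(\mathrm{T})$ and $(\mathrm{M})$, and idempotence of $\prox$ is only needed to qualify $(S,\entails,\prox)$ as a candidate rather than in the verification of~\eqref{eq:ContEnt} itself.
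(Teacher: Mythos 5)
Your proposal is correct and follows essentially the same route as the paper: the same strengthened induction hypothesis (universally quantified over the refined left-hand side), induction on the $(\mathrm{R'})/(\mathrm{AxL})$ presentation from Lemma~\ref{lem:IndGenEnt}, condition~\ref{prop:IndGenContEnt1} applied to the generating axiom to produce the interpolating set, the induction hypothesis applied to each premise with the left side augmented by a single interpolant, and a union of the resulting witnesses eliminated by iterated cuts; the other direction is handled dually via Lemma~\ref{lem:IndGenEntDual} and condition~\ref{prop:IndGenContEnt2}, exactly as in the paper.
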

\begin{proof}
  Let $\prox$ be an idempotent relation on $S$ satisfying
  \ref{prop:IndGenContEnt1} and \ref{prop:IndGenContEnt2}.
  We show only one direction of \eqref{eq:ContEnt},
  \[
     A \entails B
    \implies
     \forall C \prox_{U} A \exists B' \in \Fin{S}
     \left( C \entails B' \prox_{L}  B\right),
  \]
  by induction on the derivation of
  $A \entails B$.
  If $A \entails B$ is derived by ($\mathrm{R'}$), then the conclusion is trivial.
  Suppose that $A,A' \entails B$ is derived by 
  ($\mathrm{AxL}$). Then, there exists $C \in \Fin{S}$ such that $A
  \entails_{0} C$ and $\forall c \in C \left( 
  A',c \entails B \right)$. Let $D \prox_{U} A \cup A'$. Since $D \prox_U A$,
  there exists $C'\prox_L C$ such that $D \entails C'$ by
  \ref{prop:IndGenContEnt1}. By induction
  hypothesis, for each $c' \in C'$, there exists $B_{c'} \prox_L B$
  such that $D, c' \entails B_{c'}$. Put $B' = \bigcup_{c' \in C'} B_{c'}$.
  Then, by successive applications of ($\mathrm{T}$), we obtain $D \entails
  B'$.

  The other direction of \eqref{eq:ContEnt} follows from
  \ref{prop:IndGenContEnt2} and Lemma \ref{lem:IndGenEntDual}.
\end{proof}

As an application of generated strong continuous entailment
relations, we show that $\SProxLatP$ and $\jSPxLatP$ are equivalent.
Recall that the functor $F \colon \ContEnt \to \ProxLat$ restricts to
an equivalence of
$\SContEntP$ and $\SProxLatP$ (Theorem
\ref{thm:EuqivSContEntPAndSProxLatP}).  Composing $F$ with the
inclusion $\SProxLatP \hookrightarrow \jSPxLatP$, we get a full and
faithful functor $F' \colon \SContEntP \to \jSPxLatP$.
\begin{lemma}
  \label{lem:EquivAllProxLat}
  The functor  $F'$ is essentially surjective.
\end{lemma}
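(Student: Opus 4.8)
The plan is to argue categorically, reading off the stably compact locale presented by a $\vee$-strong proximity lattice and then re-presenting it by a strong continuous entailment relation. Since $\jSPxLatP$ is a full subcategory of $\ProxLat$, and $\ProxLat$ is dually equivalent to the category of stably compact locales and preframe homomorphisms via $(S,\prox) \mapsto \Spec{S}$, two objects of $\jSPxLatP$ are isomorphic in $\jSPxLatP$ precisely when the locales they present are isomorphic. Hence it suffices, for a given $\vee$-strong proximity lattice $(S,\prox)$, to exhibit a strong continuous entailment relation $(S',\entails',\prox')$ with $\Spec{F(S',\entails',\prox')} \cong \Spec{S}$; then $F(S',\entails',\prox') = F'(S',\entails',\prox')$ witnesses $(S,\prox)$ in the essential image of $F'$.

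To produce such a presentation I would proceed as follows. As a $\vee$-strong proximity lattice, $(S,\prox)$ represents the stably compact locale $\Spec{S}$. By the second assertion of Theorem~\ref{thm:ContEntPresentsSKL}, every stably compact locale is presented by some strong continuous entailment relation, so I choose $(S',\entails',\prox')$ presenting $\Spec{S}$. By item~\ref{prop:AgreeContEntProxLat2} of Proposition~\ref{prop:AgreeContEntProxLat}, the locale presented by $(S',\entails',\prox')$ is exactly $\Spec{F(S',\entails',\prox')}$, whence $\Spec{F(S',\entails',\prox')} \cong \Spec{S}$. As $F(S',\entails',\prox')$ is a strong proximity lattice, it is an object of $\SProxLatP \subseteq \jSPxLatP$, and the first paragraph converts this locale isomorphism into an isomorphism $(S,\prox) \cong F'(S',\entails',\prox')$ in $\jSPxLatP$.

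The step I expect to be delicate is the bridge between isomorphism of represented locales and isomorphism in $\jSPxLatP$: it depends on identifying the dual equivalence $\ProxLat \simeq (\text{stably compact locales, preframe homomorphisms})$ with the assignment $(S,\prox)\mapsto\Spec{S}$ and on $\jSPxLatP$ being full in $\ProxLat$, so that isomorphisms are neither created nor destroyed by restricting the class of morphisms. It is worth flagging why one cannot simply decorate $(S,\prox)$ in place: keeping the same underlying lattice with $\prox' = \prox$ and $\entails'$ generated by the distributive-lattice axioms fails, since checking the meet half of the generation conditions in Lemma~\ref{prop:IndGenContEnt} forces \ref{def:proximityMeet}, which a merely $\vee$-strong lattice need not satisfy. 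If an explicit witness is preferred over the appeal to Theorem~\ref{thm:ContEntPresentsSKL}, I would instead build $(S',\entails',\prox')$ by a meet-completion of $(S,\prox)$ dual to the join-completion $S^{\vee}$ of Theorem~\ref{prop:EquivPxLatandjSPxLat}, and verify that it is a strong continuous entailment relation via Lemma~\ref{prop:IndGenContEnt}.
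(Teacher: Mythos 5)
Your proof is correct, but it takes a genuinely different route from the paper's. You reduce essential surjectivity to an isomorphism of the represented locales and then invoke the second assertion of Theorem~\ref{thm:ContEntPresentsSKL} to obtain a presenting strong continuous entailment relation, closing the loop with item~\ref{prop:AgreeContEntProxLat2} of Proposition~\ref{prop:AgreeContEntProxLat}. The paper instead constructs an explicit witness on the \emph{same} carrier: for a $\vee$-strong $(S,\prox)$ it generates $\entails^{\wedge}$ by the axioms
\[
A \entails^{\wedge} B \defeqiv \exists C \in \Fin{S} \left( A \prox_{U} C \amp \medwedge C \leq \medvee B\right),
\]
thereby folding into the axioms the meet-interpolation that \ref{def:proximityMeet} would otherwise supply---precisely the failure point you correctly diagnose for the naive in-place decoration---then verifies via Lemma~\ref{prop:IndGenContEnt} that $(S,\entails^{\wedge},\prox)$ is a strong continuous entailment relation, computes $\ll_{\entails^{\wedge}}$, and exhibits $\ll$ and $\ll_{\entails^{\wedge}}$ as mutually inverse proximity maps between $(S,\entails^{\wedge},\ll_{\entails^{\wedge}})$ and $G(S,\prox)$, whence $F'(S,\entails^{\wedge},\prox) \cong FG(S,\prox) \cong (S,\prox)$. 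Your route is shorter but outsources the substantive content to Coquand--Zhang's representation theorem (whose proof contains essentially the same kind of construction, so the work is relocated rather than avoided) and to the bridge you yourself flag: that the dual equivalence of $\ProxLat$ with stably compact locales and preframe homomorphisms is realised by $\Spec{(\cdot)}$ and hence reflects isomorphisms into the full subcategory $\jSPxLatP$. That bridge is sound, but it is never packaged as a single citable statement in the paper, and it passes through frame isomorphisms of large structures, which sits less comfortably with the paper's finitary, predicative style; the paper's construction also retains explicit control of the presenting object, which is what makes the comparison with $G(S,\prox)$ a finite syntactic computation. Your closing suggestion of a completion dual to $S^{\vee}$ from Theorem~\ref{prop:EquivPxLatandjSPxLat} is close in spirit to what the paper does, except that the paper does not enlarge the carrier to $\Fin{S}$: it keeps $S$ and builds the required interpolation into the generating axioms.
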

\begin{proof}
   Given a $\vee$-strong proximity lattice $(S,\prox)$,
  define an entailment relation $\entails^{\wedge}$ on
  $S$ by specifying its axioms as follows:
  \[
    A \entails^{\wedge} B \defeqiv 
    \exists C \in \Fin{S}
    \left( A \prox_{U} C \amp \medwedge C \leq \medvee B\right).
  \]
  Using Lemma \ref{prop:IndGenContEnt}, one can  show that 
  $(S, \entails^{\wedge}, \prox)$ is a strong continuous
  entailment relation.
  On the other hand, let $G(S, \prox) = (S, \entails, \ll)$ be the continuous
  entailment relation determined by the quasi-inverse $G$ of
  $F$ (see \eqref{eq:EntofDLat} and
  \eqref{eq:ProxLatToContEntll}). It suffices to show that $(S,
  \entails^{\wedge}, \prox)$ and  $(S, \entails, \ll)$ 
  are isomorphic as
  continuous entailment relations.
  By induction on $\entails^{\wedge}$, we see that
  \[
    A \ll_{\entails^{\wedge}}  B
    \iff
     \exists C \in \Fin{S} \left( A \prox_{U} C \amp \medwedge
    C \prox \medvee B\right).
  \]
  Then, it is straightforward to show that 
  $\ll \cdot \ll_{\entails^{\wedge}} 
  {=} \ll_{\entails^{\wedge}} 
  {=} \ll_{\entails^{\wedge}} \cdot \ll_{\entails^{\wedge}}
  $
  and 
  $ \ll_{\entails^{\wedge}}\cdot \ll 
  {=} \ll 
  {=} \ll \cdot \ll
  $.
  Thus $\ll$ and $\ll_{\entails^{\wedge}}$ are proximity maps
  $\ll \colon {(S, \entails, \ll)} \to (S, \entails^{\wedge},
  \ll_{\entails^{\wedge}})$ and $\ll_{\entails^{\wedge}} \colon (S,\entails^{\wedge}, \ll_{\entails^{\wedge}})
  \to (S, \entails, \ll)$
  and inverse to each other.
\end{proof}

\begin{theorem}
  \label{thm:EquivAllProxLat}
  The categories $\SContEntP$, $\ContEnt$, $\ProxLat$, $\jSPxLatP$,
  and $\SProxLatP$ are equivalent.
\end{theorem}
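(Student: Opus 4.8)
The plan is to assemble the equivalences already established and close the single remaining gap with one application of the standard criterion that a full, faithful, essentially surjective functor is an equivalence of categories.

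First I would collect the three equivalences already at hand. Immediately after Proposition~\ref{prop:EntRelEquivDLatProx} it was observed that the induced functor $F \colon \ContEnt \to \ProxLat$ establishes an equivalence $\ContEnt \simeq \ProxLat$. Theorem~\ref{prop:EquivPxLatandjSPxLat} gives $\jSPxLatP \simeq \ProxLat$. And Theorem~\ref{thm:EuqivSContEntPAndSProxLatP} gives $\SContEntP \simeq \SProxLatP$. So four of the five categories are already tied together, and what remains is to connect the ``strong'' world $\SContEntP$ (equivalently $\SProxLatP$) to the ``$\vee$-strong'' world $\jSPxLatP$.

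Next I would treat the functor $F' \colon \SContEntP \to \jSPxLatP$, which is the composite of the full-and-faithful functor $F \colon \SContEntP \to \SProxLatP$ (from Theorem~\ref{thm:EuqivSContEntPAndSProxLatP}) with the full subcategory inclusion $\SProxLatP \hookrightarrow \jSPxLatP$. Since both factors are full and faithful, so is $F'$. Lemma~\ref{lem:EquivAllProxLat} supplies the missing ingredient, namely that $F'$ is essentially surjective. Invoking the standard fact that a full, faithful, and essentially surjective functor is an equivalence, I conclude $\SContEntP \simeq \jSPxLatP$.

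Finally I would chain everything: $\SContEntP \simeq \jSPxLatP \simeq \ProxLat \simeq \ContEnt$, together with $\SContEntP \simeq \SProxLatP$, exhibits all five categories as mutually equivalent. The genuinely substantive step---essential surjectivity of $F'$, i.e.\ that every $\vee$-strong proximity lattice is isomorphic in $\ProxLat$ to one arising from a strong continuous entailment relation via a generated entailment relation $\entails^{\wedge}$---is precisely the content of Lemma~\ref{lem:EquivAllProxLat}, which has already been dispatched. Consequently the proof of this theorem is purely a matter of bookkeeping, assembling the previously proved equivalences, and no further obstacle remains.
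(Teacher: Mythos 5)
Your proposal is correct and follows essentially the same route as the paper: the paper's proof simply cites Lemma~\ref{lem:EquivAllProxLat}, Theorem~\ref{thm:EuqivSContEntPAndSProxLatP}, and Theorem~\ref{prop:EquivPxLatandjSPxLat}, and the functor $F'$ you invoke is exactly the full and faithful composite the paper constructs just before Lemma~\ref{lem:EquivAllProxLat}. You have merely spelled out the bookkeeping that the paper leaves implicit.
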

\begin{proof}
  By Lemma \ref{lem:EquivAllProxLat}, Theorem \ref{thm:EuqivSContEntPAndSProxLatP}, 
  and Theorem \ref{prop:EquivPxLatandjSPxLat}.
\end{proof}
Since every isomorphic proximity relation between $\vee$-strong
  proximity lattices are join-preserving, we also have the following
  by Theorem \ref{thm:EquiSContEntSProxLat}.
\begin{theorem}
  \label{thm:EquivAllSProxLat}
  The categories $\SContEnt$, $\jSPxLat$, and $\SProxLat$
  are equivalent.
  \qed
\end{theorem}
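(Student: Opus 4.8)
The plan is to prove the two nontrivial equivalences by reducing to results already in hand. Theorem~\ref{thm:EquiSContEntSProxLat} already gives $\SContEnt \simeq \SProxLat$, so it suffices to establish $\jSPxLat \simeq \SProxLat$. Since $\SProxLat$ is by definition a full subcategory of $\jSPxLat$, the inclusion $\iota \colon \SProxLat \hookrightarrow \jSPxLat$ is full and faithful for free, and the whole problem reduces to showing that $\iota$ is essentially surjective: every $\vee$-strong proximity lattice is isomorphic \emph{in $\jSPxLat$} to a strong proximity lattice.

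First I would extract the corresponding statement one level up, in the category whose morphisms are general proximity relations. Recall from Lemma~\ref{lem:EquivAllProxLat} that $F' \colon \SContEntP \to \jSPxLatP$ is essentially surjective, and that $F'$ factors as the equivalence $F \colon \SContEntP \to \SProxLatP$ of Theorem~\ref{thm:EuqivSContEntPAndSProxLatP} followed by the inclusion $\iota' \colon \SProxLatP \hookrightarrow \jSPxLatP$. Since $F'$ factors through $\iota'$ and is essentially surjective, $\iota'$ is essentially surjective as well. Concretely, for a given $\vee$-strong proximity lattice $(S,\prox)$ this produces a strong proximity lattice $(S',\prox')$ together with a proximity relation $r \colon (S,\prox) \to (S',\prox')$ that is an isomorphism in $\jSPxLatP$.

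The decisive step is then to promote this isomorphism from $\jSPxLatP$ (general proximity relations) to $\jSPxLat$ (join-preserving proximity relations), whose objects coincide. Here I would invoke the fact that every isomorphic proximity relation between $\vee$-strong proximity lattices is join-preserving: both $r$ and its inverse are isomorphisms with $\vee$-strong domain and codomain, hence each satisfies \ref{def:proximity0} and \ref{def:proximityJ}, so $r$ is already an isomorphism in $\jSPxLat$. This witnesses $(S,\prox) \cong (S',\prox')$ in $\jSPxLat$ with $(S',\prox')$ strong, giving essential surjectivity of $\iota$. Combining ``full and faithful'' with ``essentially surjective'' makes $\iota$ an equivalence, so $\jSPxLat \simeq \SProxLat$, and chaining with Theorem~\ref{thm:EquiSContEntSProxLat} yields $\SContEnt \simeq \jSPxLat \simeq \SProxLat$.

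I expect the only real content to lie in the join-preserving claim used in the third paragraph, everything else being formal transport of essential surjectivity along a factorisation plus the ``full subcategory'' bookkeeping. To discharge that claim I would unfold the two defining conditions of a join-preserving proximity relation for an isomorphism $r$ with inverse $s$: condition \ref{def:proximity0} should follow from $s \circ r = {\prox}$ together with the fact that the identity $\prox$ itself satisfies \ref{def:proximity0}, while \ref{def:proximityJ} should follow by transporting a decomposition of a join along $s$ and back along $r$, using that both endpoints are $\vee$-strong. The subtlety to watch is that this argument genuinely needs \emph{both} $r$ and $s$ to be isomorphisms, which is precisely why it fails for arbitrary proximity relations and hence why $\jSPxLat$ and $\jSPxLatP$ differ at the level of morphisms.
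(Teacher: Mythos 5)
Your proposal is correct and follows essentially the same route as the paper: the paper's argument is exactly the combination of Theorem~\ref{thm:EquiSContEntSProxLat}, the essential surjectivity established in Lemma~\ref{lem:EquivAllProxLat} (via Theorem~\ref{thm:EquivAllProxLat}), and the observation that every isomorphic proximity relation between $\vee$-strong proximity lattices is join-preserving, which is precisely your promotion step from $\jSPxLatP$ to $\jSPxLat$. Your additional sketch of why isomorphisms are automatically join-preserving fills in a claim the paper asserts without proof, and it goes through as you describe.
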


\section{De Groot duality}\label{sec:deGrootDuality}
In point-set topology, the de Groot dual of a stably compact space has the same
set of points equipped with the cocompact topology:
the topology generated by the complements of
compact saturated subsets of the original space.
By Hofmann--Mislove theorem, compact saturated subsets correspond
to Scott open filters, which are amenable to point-free treatment.
Thus, the \emph{de Groot dual} of a stably compact locale $X$ is
defined to be the locale whose frame is the Scott open filters on
$\Frame{X}$; see Escard\'o~\cite{escardoLawsonDual}.
We relate the de Groot duality to the
structural dualities of proximity lattices and continuous entailment relations.

\subsection{Duality of proximity lattices }\label{sec:DualitySProxLat}
\begin{definition}
  \label{def:DualProxLat}
  The \emph{dual} $\DualLat{S}$ of a distributive lattice $S =
  (S,0,\vee, 1, \wedge)$ is the distributive lattice $(S,1, \wedge,
  0,\vee)$ with the opposite order. 
  The \emph{dual} $\DeGroot{S}$ of a proximity lattice $S = (S,\prox)$ is
  the proximity lattice $(\DualLat{S}, \proxop)$.%
\end{definition}
Our aim is to give a localic account of \cite[Section
4]{JungSunderhaufDualtyCompactOpen}, which shows that
$\RIdeals{\DeGroot{S}}$ is isomorphic to the frame of Scott open
filters on $\RIdeals{S}$. 

\begin{definition}
  \label{def:ScottTop}
  Let $(S,\prox)$ be a proximity lattice.  Write $\Scott{S}$ for
  the locale whose models are rounded ideals of $S$, i.e., $\Scott{S}$
  is presented by a geometric theory $T_{\ScottFunc}$ over $S$ with the
  following axioms:
  \[
    \begin{gathered}
      \top \entails 0, \qquad\qquad \qquad
      a \wedge b \entails \left( a \vee b \right),
      \qquad\qquad\qquad
      a  \entails b \qquad (\text{if $b \leq a$}),\\
      a  \entails b \qquad (\text{if $b \prox a$}), \qquad\qquad
      a  \entails \bigvee_{b \succ a} b.
    \end{gathered}
  \]
\end{definition}
Let $\Ups{S}$ be the collection of rounded upper subsets of
$(S,\prox)$, 
i.e., those subset $U \subseteq S$ such that $a \in U \leftrightarrow
\exists b \prox a \left( b \in U \right)$.
Clearly, $\Ups{S}$ is closed under all joins, which are just unions.
Moreover, \ref{def:ProximityRelationVee} ensures that $\Ups{S}$ has
finite meets defined by
\begin{align*}
  1 &\defeql S = \upset_{\prox} 0, &
  U \wedge V &\defeql \bigcup_{a \in U, b \in V}
  \upset_{\prox} \left( a \vee b \right),
\end{align*}
where $\upset_{\prox} a \defeql \left\{ b \in S \mid b \succ a \right\}$.
These finite meets clearly distribute over all joins. Hence $\Ups{S}$ is a frame.

\begin{lemma}\label{lem:Scott}
The frames $\Frame{\Scott{S}}$ and $\Ups{S}$ are isomorphic.
\end{lemma}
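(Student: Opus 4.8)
The plan is to show that $\Ups{S}$, equipped with the interpretation $j \colon S \to \Ups{S}$ defined by $j(a) \defeql \upset_{\prox} a$, satisfies the universal property of the frame presented by the theory $T_{\ScottFunc}$; since the frame presented by a geometric theory is unique up to isomorphism, this yields $\Frame{\Scott{S}} \cong \Ups{S}$. First I would check that $j$ is well defined, i.e.\ that each $\upset_{\prox} a$ is a rounded upper subset. Roundedness of $\upset_{\prox} a$ is exactly the idempotency (interpolation) of $\prox$, and in general any rounded subset $U$ is automatically upper: if $a \in U$ and $a \leq a'$, then choosing $b \prox a$ with $b \in U$ and using that $\upset_{\prox} b$ is a filter (\ref{def:ProximityRelationWedge}) gives $b \prox a'$, whence $a' \in U$ by roundedness.

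Next I would verify that $j$ is an interpretation of $T_{\ScottFunc}$, checking the five axioms one by one in $\Ups{S}$ (joins are unions, meets as defined). The axiom $\top \entails 0$ reduces to $\upset_{\prox} 0 = S$, which holds because $0$ lies in every ideal $\downset_{\prox} b$ (\ref{def:ProximityRelationVee}). The order axioms $a \entails b$ (for $b \leq a$) and $a \entails b$ (for $b \prox a$) follow from downward-closure of $\downset_{\prox} z$ and from idempotency of $\prox$ respectively, and the covering axiom $a \entails \bigvee_{b \proxop a} b$ is again interpolation. The crux is the meet axiom $a \wedge b \entails (a \vee b)$: unfolding $\wedge$ in $\Ups{S}$, I must show that $a \prox x$, $b \prox y$, and $x \vee y \prox z$ force $a \vee b \prox z$. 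Here $a \prox x \leq x \vee y$ and $b \prox y \leq x \vee y$ give $a, b \prox x \vee y$ by \ref{def:ProximityRelationWedge}, then closure of the ideal $\downset_{\prox}(x \vee y)$ under finite joins (\ref{def:ProximityRelationVee}) gives $a \vee b \prox x \vee y$, and idempotency finishes the argument.

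For universality I would put $\overline{f}(U) \defeql \bigvee_{a \in U} f(a)$ for an arbitrary interpretation $f \colon S \to \Frame{X}$ and show it is the unique frame homomorphism with $\overline{f} \circ j = f$. The identity $\overline{f}(\upset_{\prox} a) = f(a)$ follows from the two generator axioms: $f(a) \leq \bigvee_{a \prox z} f(z)$ is the covering axiom, while $f(z) \leq f(a)$ for $a \prox z$ is the axiom $a \entails b$ (for $b \prox a$). Preservation of joins is immediate; preservation of the top uses $f(0) = 1$, which comes from $\top \entails 0$; and preservation of finite meets reduces, via the computation $f(x \vee y) = f(x) \wedge f(y)$ (from the meet and order axioms) together with frame distributivity, to $\overline{f}(U \wedge V) = \overline{f}(U) \wedge \overline{f}(V)$. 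Uniqueness follows from the representation $U = \bigcup_{a \in U} \upset_{\prox} a$, valid for every $U \in \Ups{S}$: the inclusion $\upset_{\prox} a \subseteq U$ for $a \in U$ is roundedness applied at each point, and the reverse inclusion is roundedness of $U$.

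I expect the meet axiom, and correspondingly the verification that $\overline{f}$ preserves finite meets, to be the main obstacle, since these are the only steps that genuinely combine the lattice structure with the join-compatibility and interpolation of $\prox$; the remaining axioms are direct consequences of $\prox$ being an idempotent proximity relation.
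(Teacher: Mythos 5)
Your proof is correct and follows exactly the route the paper takes (which merely asserts that $i_{\ScottFunc}(a) = \upset_{\prox} a$ is a universal interpretation of $T_{\ScottFunc}$ and leaves the verification as straightforward); you have filled in precisely the details left implicit, including the only genuinely nontrivial steps, namely the meet axiom $a \wedge b \entails (a \vee b)$ via \ref{def:ProximityRelationWedge}, closure of the ideal $\downset_{\prox}(x \vee y)$ under finite joins, and idempotency, and the identity $f(x \vee y) = f(x) \wedge f(y)$ needed for meet preservation of $\overline{f}$.
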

\begin{proof}
  It is straightforward to show that a function $i_{\ScottFunc} \colon S \to
  \Ups{S}$ defined by
    $
    i_{\ScottFunc}(a) \defeql \upset_{\prox} a
    $
  is a universal interpretation of $T_{\ScottFunc}$.
\end{proof}
\begin{proposition}
  \label{prop:SigmaIsScott}
  The frame 
  $\Frame{\Scott{S}}$ is the Scott topology on $\RIdeals{S}$.
\end{proposition}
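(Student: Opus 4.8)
The plan is to combine Lemma~\ref{lem:Scott} with the continuity of $\RIdeals{S}$. By Lemma~\ref{lem:Scott} we have $\Frame{\Scott{S}} \cong \Ups{S}$, so it suffices to exhibit a frame isomorphism between $\Ups{S}$ and the frame $\sigma(\RIdeals{S})$ of Scott-open subsets of the dcpo $\RIdeals{S}$ (whose directed joins are unions). I would introduce the pair of maps
\[
  \phi(U) \defeql \left\{ I \in \RIdeals{S} \mid I \meets U \right\},
  \qquad
  \psi(\mathcal{O}) \defeql \left\{ a \in S \mid \downset_{\prox} a \in \mathcal{O} \right\},
\]
and show that $\phi \colon \Ups{S} \to \sigma(\RIdeals{S})$ and $\psi \colon \sigma(\RIdeals{S}) \to \Ups{S}$ are mutually inverse monotone maps; since an order isomorphism between frames automatically preserves all joins and finite meets, this yields the desired frame isomorphism.

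For well-definedness, $\phi(U)$ is upward closed and inaccessible by directed joins (unions) directly from the definition of $\meets$, so it is Scott-open for every $U$. That $\psi(\mathcal{O}) \in \Ups{S}$ uses idempotency of $\prox$: one checks $\downset_{\prox} a = \bigcup_{b \prox a} \downset_{\prox} b$ is a directed union, with directedness coming from \ref{def:ProximityRelationVee} (which makes $\downset_{\prox} a = {\prox}^{-}a$ an ideal, hence $\downset_{\prox} a \in \RIdeals{S}$) and the equality from ${\prox} \circ {\prox} = {\prox}$; Scott-openness of $\mathcal{O}$ then gives the roundedness condition $a \in \psi(\mathcal{O}) \iff \exists b \prox a\,(b \in \psi(\mathcal{O}))$.

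For the inverse property, $\psi(\phi(U)) = \left\{ a \mid \downset_{\prox} a \meets U \right\} = \left\{ a \mid \exists b \prox a\,(b \in U) \right\} = U$, the last step being exactly roundedness of $U$. For $\phi(\psi(\mathcal{O})) = \mathcal{O}$, the inclusion $\supseteq$ follows because each rounded ideal is the directed join $I = \bigvee_{a \in I} \downset_{\prox} a$ of its members, so Scott-openness forces some $\downset_{\prox} a \in \mathcal{O}$ with $a \in I$. The inclusion $\subseteq$ rests on the lemma $a \in I \imp \downset_{\prox} a \subseteq I$ for rounded $I$: if $a \in I$ then by roundedness there is $b \in I$ with $a \prox b$, and for any $c \prox a$ idempotency gives $c \prox b$, whence $c \in I$ again by roundedness; thus $\downset_{\prox} a \subseteq I$ and upward closure of $\mathcal{O}$ yields $I \in \mathcal{O}$.

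Both $\phi$ and $\psi$ are evidently monotone, so they form an order isomorphism $\Ups{S} \cong \sigma(\RIdeals{S})$, hence a frame isomorphism. The main obstacle is the identity $\phi \circ \psi = \id$: it is precisely here that the continuity of $\RIdeals{S}$ enters, in the guise of the directed-join representation $I = \bigvee_{a \in I} \downset_{\prox} a$, together with the lemma $a \in I \imp \downset_{\prox} a \subseteq I$, which is the only point where the interplay of roundedness and idempotency of $\prox$ is essential.
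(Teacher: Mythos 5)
Your proposal is correct and follows the same reduction as the paper: both first invoke Lemma~\ref{lem:Scott} to replace $\Frame{\Scott{S}}$ by $\Ups{S}$, and then identify $\Ups{S}$ with the Scott topology on $\RIdeals{S}$. The only difference is that the paper cites this last fact from Vickers and from Jung--S\"underhauf, whereas you prove it directly via the mutually inverse maps $\phi$ and $\psi$; your verification (Scott-openness of $\phi(U)$, roundedness of $\psi(\mathcal{O})$ via the directed decomposition $\downset_{\prox} a = \bigcup_{b \prox a} \downset_{\prox} b$, and the two inverse identities using $I = \bigvee_{a \in I}\downset_{\prox} a$ and $\downset_{\prox} a \subseteq I$ for $a \in I$) is sound and self-contained.
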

\begin{proof}
It is known that $\Ups{S}$ is the Scott topology on
$\RIdeals{S}$; see Vickers \cite[Lemma 2.11]{Infosys} or
Jung and S\"underhauf \cite[Lemma 14]{JungSunderhaufDualtyCompactOpen}.
Then, the claim follows from 
Lemma \ref{lem:Scott}.
\end{proof}

Scott open filters on a locale $X$ are models of 
the upper powerlocale of $X$, which is characterised by the following
universal property; see Vickers \cite{Vickers95constructivepoints}.
\begin{definition}\label{def:UpperPower}
  The \emph{upper powerlocale} of a locale $X$ is a locale
  $\Upper{X}$ together with a preframe homomorphism 
  $i_{U} \colon \Frame{X} \to \Frame{\Upper{X}}$ such that
  for any preframe homomorphism $f \colon \Frame{X} \to \Frame{Y}$
  to a locale $Y$, there exists a unique frame homomorphism
  $\overline{f} \colon \Frame{\Upper{X}} \to \Frame{Y}$ such that
  $\overline{f} \circ i_U = f$.
\end{definition}

\begin{proposition}
  \label{thm:ScottOfDualIsUpper}
  For any proximity lattice $S$,
  $\Scott{\DeGroot{S}}$ is the upper powerlocale of
  $\Spec{S}$.
\end{proposition}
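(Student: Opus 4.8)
The plan is to show that both $\Scott{\DeGroot{S}}$ and $\Upper{\Spec{S}}$ corepresent the same functor on locales, namely $Y \mapsto \{\text{preframe homomorphisms } \Frame{\Spec{S}} \to \Frame{Y}\}$, and then to conclude by uniqueness of the representing object together with a matching of the structure maps.

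First I would unfold the presentation of $\Scott{\DeGroot{S}}$. By Definition~\ref{def:ScottTop} applied to the proximity lattice $\DeGroot{S} = (\DualLat{S}, \proxop)$ of Definition~\ref{def:DualProxLat}, and translating the operations and proximity of $\DeGroot{S}$ back into those of $S$ (so that $0_{\DeGroot{S}} = 1$, joins in $\DeGroot{S}$ are meets in $S$, the proximity of $\DeGroot{S}$ is $\proxop$, and its converse is $\prox$), the defining axioms become $\top \entails 1$, $\;a \wedge b \entails (a \wedge b)$, $\;a \entails b$ for $a \le b$, $\;a \entails b$ for $a \prox b$, and $a \entails \bigvee_{b \prox a} b$. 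The key observation of this step is that an interpretation of this theory in $Y$, i.e.\ a function $g \colon S \to \Frame{Y}$ validating these axioms, is \emph{exactly} a preframe interpretation of $(S,\prox)$ in the sense of the definition preceding Proposition~\ref{prop:BijInterpretationScottCont}: axiom $a \wedge b \entails (a \wedge b)$ with monotonicity gives $g(a) \wedge g(b) = g(a \wedge b)$, axiom $\top \entails 1$ gives $g(1) = \top$, and the pair $a \entails b$ (for $a \prox b$) and $a \entails \bigvee_{b \prox a} b$ gives the roundedness equation $g(a) = \bigvee_{b \prox a} g(b)$; so $g$ is order preserving, finite-meet preserving, and rounded, and conversely.

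Next I would invoke Proposition~\ref{prop:BijInterpretationScottCont} together with Remark~\ref{rem:DcpoPreframeInterpretation}, which extends the preframe case to an arbitrary proximity lattice $S$ and states that $i_{S}$ is then a preframe interpretation. This gives a bijection, natural in $Y$, between preframe interpretations $g \colon S \to \Frame{Y}$ of $(S,\prox)$ and preframe homomorphisms $\Frame{\Spec{S}} = \RIdeals{S} \to \Frame{Y}$, realised by the unique extension $g \mapsto \overline{g}$ along $i_{S}$. Chaining this with the previous step, frame homomorphisms $\Frame{\Scott{\DeGroot{S}}} \to \Frame{Y}$ are in natural bijection with preframe homomorphisms $\Frame{\Spec{S}} \to \Frame{Y}$. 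By the universal property of the upper powerlocale (Definition~\ref{def:UpperPower}), the latter are in natural bijection with frame homomorphisms $\Frame{\Upper{\Spec{S}}} \to \Frame{Y}$. Hence $\Frame{\Scott{\DeGroot{S}}}$ and $\Frame{\Upper{\Spec{S}}}$ corepresent the same functor and are isomorphic as frames.

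To upgrade this to the statement that $\Scott{\DeGroot{S}}$ \emph{is} the upper powerlocale, I would exhibit the required structure map and check it has the universal property. The candidate preframe homomorphism $i_{U} \colon \Frame{\Spec{S}} \to \Frame{\Scott{\DeGroot{S}}}$ is the one corresponding to the identity under the bijection above, namely the extension $\overline{i_{\ScottFunc}}$ of the universal interpretation $i_{\ScottFunc} \colon S \to \Frame{\Scott{\DeGroot{S}}}$ furnished by Lemma~\ref{lem:Scott}; one verifies that under the identifications it sends $\downset_{\prox} a$ to $i_{\ScottFunc}(a)$. It then remains to check that the composite chain of bijections carries a preframe map $f$ precisely to the unique frame map $\overline{f}$ satisfying $\overline{f} \circ i_{U} = f$, which is exactly the universal property of Definition~\ref{def:UpperPower}. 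I expect the main obstacle to be the first step: dualising the axioms of Definition~\ref{def:ScottTop} for $\DeGroot{S}$ correctly, keeping the parenthesisation convention (single generators versus formal finite meets and joins) straight, and confirming that interpretations of the dualised theory coincide on the nose with preframe interpretations of $(S,\prox)$, with $\top$ and $1$ tracked carefully so that $i_{U}$ preserves the top. Once this bookkeeping is settled, the remainder is a routine assembly of the universal properties already available (Definition~\ref{def:UpperPower}, Proposition~\ref{prop:BijInterpretationScottCont}, and Remark~\ref{rem:DcpoPreframeInterpretation}).
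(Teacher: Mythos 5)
Your proposal is correct and follows essentially the same route as the paper: dualise the axioms of Definition~\ref{def:ScottTop} for $\DeGroot{S}$, observe that interpretations of the resulting theory are exactly preframe interpretations of $(S,\prox)$, extend along $i_{S}$ via the preframe version of Proposition~\ref{prop:BijInterpretationScottCont} and Remark~\ref{rem:DcpoPreframeInterpretation}, and verify the universal property of Definition~\ref{def:UpperPower}. The only difference is that you spell out, via corepresentability, the final verification that the paper dismisses as straightforward.
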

\begin{proof}
  By Definition \ref{def:ScottTop},
  the locale $\Scott{\DeGroot{S}}$ is
  presented by a geometric theory $T$ over $S$ with 
  the following axioms:
  \begin{gather*}
    \top \entails 1,
   \qquad \qquad\qquad
    a \wedge b \entails (a \wedge b),
   \qquad \qquad\qquad
    a  \entails b \qquad (\text{if $a \leq b$}),\\
    a  \entails b \qquad (\text{if $a \prox b$}), \qquad\qquad
    a  \entails \bigvee_{b \prox a} b.
  \end{gather*}
  By the preframe version of Proposition
  \ref{prop:BijInterpretationScottCont} (see also Remark
  \ref{rem:DcpoPreframeInterpretation}), the universal interpretation
  $i_{T} \colon S \to \Frame{\Scott{\DeGroot{S}}}$ of $T$ in
  $\Scott{\DeGroot{S}}$ uniquely extends to a preframe homomorphism
  $i_{U} \colon \RIdeals{S} \to \Frame{\Scott{\DeGroot{S}}}$ via the
  function $i_{S} \colon S \to \RIdeals{S}$ defined by
  \eqref{eq:InjectionGenerator}.
  Then, it is straightforward to show that $i_{U}$ satisfies the universal
  property of the upper powerlocale of $\Spec{S}$.
\end{proof}

\begin{theorem}\label{thm:DeGrootDual}
  For any proximity lattice $S$, 
  the frame $\RIdeals{\DeGroot{S}}$ is isomorphic to the frame of
  Scott open filters on $\RIdeals{S}$.
  Thus, $\Spec{\DeGroot{S}}$ is the de Groot dual of $\Spec{S}$.
\end{theorem}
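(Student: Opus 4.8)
The plan is to obtain the isomorphism by unwinding the identification $\Scott{\DeGroot{S}} = \Upper{\Spec{S}}$ of Proposition~\ref{thm:ScottOfDualIsUpper} and then upgrading the resulting bijection of points to a frame isomorphism. By definition, the frame of the de Groot dual of $\Spec{S}$ is the collection of Scott open filters on $\Frame{\Spec{S}} = \RIdeals{S}$, ordered by inclusion. Scott open filters on $\RIdeals{S}$ are precisely the models of $\Upper{\Spec{S}}$; by Proposition~\ref{thm:ScottOfDualIsUpper} this locale equals $\Scott{\DeGroot{S}}$, whose models are, by Definition~\ref{def:ScottTop}, the rounded ideals of $\DeGroot{S}$, i.e.\ the elements of the underlying set of $\RIdeals{\DeGroot{S}}$. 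Composing these correspondences yields a bijection between $\RIdeals{\DeGroot{S}}$ and the Scott open filters on $\RIdeals{S}$; the task is to show it is inclusion-preserving in both directions.

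To keep the argument inside the point-free, predicative setting I would make this bijection explicit. A rounded ideal $J$ of $\DeGroot{S}$ is exactly a filter of $S$ that is rounded for $\prox$, i.e.\ $J$ is upward closed, closed under finite meets, and satisfies $a \in J \iff \exists b \prox a\,(b \in J)$. I would send such a $J$ to
\[
  \Phi(J) \defeql \left\{ I \in \RIdeals{S} \mid I \between J \right\},
\]
and a Scott open filter $\mathcal{U}$ on $\RIdeals{S}$ to
\[
  \Psi(\mathcal{U}) \defeql \left\{ a \in S \mid \downset_{\prox} a \in \mathcal{U} \right\}.
\]
The verifications are then routine: $\Phi(J)$ is a Scott open filter (using that $J$ is meet-closed and rounded, and that joins in $\RIdeals{S}$ are directed unions); $\Psi(\mathcal{U})$ is a rounded filter of $S$ (using that $\downset_{\prox}a = \bigvee_{b \prox a} \downset_{\prox} b$ is a directed join and that $\mathcal{U}$ is Scott open and a filter); and $\Phi,\Psi$ are mutually inverse, with $\Psi\Phi = \id$ following from the roundedness of $J$ and $\Phi\Psi = \id$ following from $I = \bigvee_{a \in I} \downset_{\prox} a$ together with the upward closure of $\mathcal{U}$.

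Both $\Phi$ and $\Psi$ are manifestly monotone for inclusion, so $\Phi$ is an order isomorphism between $\RIdeals{\DeGroot{S}}$ and the frame of Scott open filters on $\RIdeals{S}$. Since an order isomorphism between frames automatically preserves all meets and joins, it is a frame isomorphism, establishing the first claim; the second claim, that $\Spec{\DeGroot{S}}$ is the de Groot dual of $\Spec{S}$, is then immediate from the definition of the de Groot dual.

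I expect the main obstacle to be matching the two inclusion orders rather than the lattice-theoretic bookkeeping. The conceptual chain through $\Upper{\Spec{S}} = \Scott{\DeGroot{S}}$ guarantees that \emph{some} bijection exists, but because the de Groot dual is defined as a frame of filters and we wish to avoid invoking spatiality, the real content lies in exhibiting the explicit maps $\Phi$ and $\Psi$, checking they land in the right classes, and confirming they are mutually inverse and inclusion-preserving, so that a bijection of underlying sets is genuinely promoted to an isomorphism of frames.
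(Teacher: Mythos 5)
Your proposal is correct, and its opening paragraph is in fact the entirety of the paper's own proof: the paper merely notes that $\RIdeals{\DeGroot{S}}$ is the collection of models of $\Scott{\DeGroot{S}}$, which by Proposition~\ref{thm:ScottOfDualIsUpper} is the upper powerlocale of $\Spec{S}$, whose models are the Scott open filters on $\RIdeals{S}$ --- and stops there. Everything you add beyond that --- the explicit maps $\Phi(J) = \left\{ I \in \RIdeals{S} \mid I \between J \right\}$ and $\Psi(\mathcal{U}) = \left\{ a \in S \mid \downset_{\prox} a \in \mathcal{U} \right\}$, the identification of rounded ideals of $\DeGroot{S}$ with rounded filters of $S$, and the verification that $\Phi$ and $\Psi$ are mutually inverse and inclusion-preserving --- is absent from the paper's proof; it is essentially the direct Jung--S\"underhauf argument that the paper announces it is giving ``a localic account of'' at the start of Section~\ref{sec:DualitySProxLat}. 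Your instinct that a bijection of model classes does not automatically yield a frame isomorphism is a fair criticism of the paper's terseness: the paper leaves implicit that the identification of models of the upper powerlocale with Scott open filters, and of models of $\Scott{\DeGroot{S}}$ with rounded ideals, are both order isomorphisms for the respective inclusion orders. Your explicit computation closes that gap, at the cost of partially bypassing the point-free machinery (the universal property of Definition~\ref{def:UpperPower}) that the paper's one-line proof relies on; both routes are sound, and your checks of $\Phi$, $\Psi$ (Scott openness of $\Phi(J)$, roundedness of $\Psi(\mathcal{U})$ via $\downset_{\prox} a = \bigvee_{b \prox a} \downset_{\prox} b$, and the two composite identities) all go through as you describe.
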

\begin{proof}
  Since $\RIdeals{\DeGroot{S}}$ is the
  collection of models of $\Scott{\DeGroot{S}}$, it is isomorphic to the
  frame of Scott open filters on $\RIdeals{S}$  by Proposition~\ref{thm:ScottOfDualIsUpper}
\end{proof}

We extend the duality to morphisms.
The following are obvious.
\begin{lemma} \label{lem:DualClosedRel}
  If $r \colon S \to S'$ is a proximity relation between
  proximity lattices,
  then the relational opposite $r^{-}\!$ is a proximity relation
  $r^{-} \colon \DeGroot{S'} \to \DeGroot{S}$.
  \qed
\end{lemma}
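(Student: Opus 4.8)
The plan is to check directly that $r^{-}$ satisfies the two defining conditions of a proximity relation together with the compatibility required by Notation~\ref{not:ProxLat}, exploiting the fact that passing to the dual lattice interchanges the roles of ideals and filters.

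First I would record the elementary observation underlying the whole argument: a subset $U \subseteq S$ is a filter of the distributive lattice $S$ precisely when it is an ideal of $\DualLat{S}$, and dually $U$ is an ideal of $S$ precisely when it is a filter of $\DualLat{S}$. This is immediate from Definition~\ref{def:DualProxLat}, since $\DualLat{S}$ reverses the order and interchanges $\wedge$ and $\vee$, so that ``upward closed and meet-closed'' becomes ``downward closed and join-closed'' and vice versa.

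Next I would verify \ref{def:ProximityRelationVee} and \ref{def:ProximityRelationWedge} for $r^{-} \colon \DeGroot{S'} \to \DeGroot{S}$. Unwinding the definition of the relational opposite, the backward image of $a \in \DeGroot{S}$ under $r^{-}$ is exactly $ra = \left\{ b \in S' \mid a \mathrel{r} b \right\}$, which is a filter of $S'$ by \ref{def:ProximityRelationWedge} for $r$, hence an ideal of $\DualLat{S'}$; this is \ref{def:ProximityRelationVee} for $r^{-}$. Symmetrically, the forward image of $b \in \DeGroot{S'}$ under $r^{-}$ is $r^{-}b$, which is an ideal of $S$ by \ref{def:ProximityRelationVee} for $r$, hence a filter of $\DualLat{S}$; this is \ref{def:ProximityRelationWedge} for $r^{-}$.

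Finally I would check the compatibility $\proxop \circ r^{-} = r^{-} \circ \proxop' = r^{-}$. Taking relational opposites reverses the order of composition and turns $\prox$ and $\prox'$ into $\proxop$ and $\proxop'$, so these two equations are precisely the opposites of $r \circ \prox = r$ and $\prox' \circ r = r$, which hold because $r$ is a proximity relation between the proximity lattices $(S,\prox)$ and $(S',\prox')$ (Notation~\ref{not:ProxLat}). There is no real obstacle here: every step is a direct transposition, which is exactly why the statement is flagged as obvious; the only point requiring care is keeping track of which side of the composite each proximity relation acts on once the opposite has been taken.
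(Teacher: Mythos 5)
Your proof is correct and is exactly the direct verification the paper has in mind: the paper omits the argument entirely (the lemma is flagged as ``obvious'' with a \verb|\qed|), and your observation that filters of $S$ are ideals of $\DualLat{S}$ and vice versa, plus taking relational opposites of the Karoubi equations ${\prox'} \circ r = r = r \circ {\prox}$, is the intended justification. Nothing is missing.
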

\begin{proposition}\label{prop:FirstDualtyThm}
  The assignment $r \mapsto r^{-}$ determines a dual isomorphism
  ${(\cdot)^{-} \colon \ProxLat \xrightarrow{\cong}
  \Opposite{\ProxLat}}$.
  \qed
\end{proposition}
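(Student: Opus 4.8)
The plan is to recognise the statement as the assertion that the pair consisting of the object assignment $S \mapsto \DeGroot{S}$ (Definition \ref{def:DualProxLat}) and the morphism assignment $r \mapsto r^{-}$ forms a contravariant endofunctor of $\ProxLat$ which is its own two-sided inverse; an involution that reverses arrows is exactly a dual isomorphism $(\cdot)^{-} \colon \ProxLat \xrightarrow{\cong} \Opposite{\ProxLat}$. Essentially all the substance has already been extracted into Lemma \ref{lem:DualClosedRel}, so what remains is bookkeeping with relational opposites, governed by the single identity $(s \circ r)^{-} = r^{-} \circ s^{-}$ for composable relations.

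First I would confirm the object part: $\DeGroot{S}$ is again a proximity lattice whenever $S = (S,\prox)$ is. The lattice $\DualLat{S}$ is distributive by De Morgan duality, and $\proxop = \prox^{-}$ is a proximity relation on $\DualLat{S}$ because conditions \ref{def:ProximityRelationVee} and \ref{def:ProximityRelationWedge} interchange under passage to the opposite order; this is just Lemma \ref{lem:DualClosedRel} read for the identity morphism $\prox \colon S \to S$. Idempotency transfers as well, since $\proxop \circ \proxop = (\prox \circ \prox)^{-} = \prox^{-} = \proxop$. Thus $\DeGroot{S} = (\DualLat{S}, \proxop)$ is an object of $\ProxLat$.

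Next I would verify that $r \mapsto r^{-}$ is well-defined on morphisms and contravariant. Lemma \ref{lem:DualClosedRel} supplies that $r^{-}$ is a proximity relation $\DeGroot{S'} \to \DeGroot{S}$, and the Karoubi conditions of Notation \ref{not:ProxLat} for $r^{-}$ are obtained by taking opposites of $\prox' \circ r = r = r \circ \prox$, namely $r^{-} \circ \proxop' = r^{-} = \proxop \circ r^{-}$. The one point demanding care, and the closest thing to an obstacle, is that the identity morphism on $(S,\prox)$ in the Karoubi envelope $\Karoubi{\DLatProx}$ is the idempotent $\prox$ itself rather than the order $\leq$; since $(\prox)^{-} = \proxop$ is precisely the identity on $\DeGroot{S}$, the assignment preserves identities. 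Composition is reversed by $(s \circ r)^{-} = r^{-} \circ s^{-}$, so $(\cdot)^{-}$ is a contravariant functor, equivalently a functor into $\Opposite{\ProxLat}$.

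Finally I would observe that $(\cdot)^{-}$ is an involution. On objects $\DualLat{\DualLat{S}} = S$ (the opposite order taken twice is the original) and $(\proxop)^{-} = \prox$, so $\DeGroot{\DeGroot{S}} = S$; on morphisms $(r^{-})^{-} = r$. Hence $(\cdot)^{-}$ composed with itself is the identity functor on $\ProxLat$, which exhibits it as the desired dual isomorphism onto $\Opposite{\ProxLat}$. No individual step is genuinely hard, the real content having been front-loaded into Lemma \ref{lem:DualClosedRel}; the only subtlety is the Karoubi-envelope convention for identities noted above, which is exactly what upgrades the bare map of underlying distributive lattices to an honest functor.
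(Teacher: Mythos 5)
Your proposal is correct and is exactly the routine verification the paper has in mind: the paper declares Lemma \ref{lem:DualClosedRel} and Proposition \ref{prop:FirstDualtyThm} ``obvious'' and gives no proof, and your write-up supplies precisely the intended bookkeeping with relational opposites via $(s \circ r)^{-} = r^{-} \circ s^{-}$. Your observation that the identity on $(S,\prox)$ in the Karoubi envelope is the idempotent $\prox$ itself (so that $(\prox)^{-} = \proxop$ is the identity on $\DeGroot{S}$) is the one genuinely attention-worthy point, and you handle it correctly.
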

All of the categories we have introduced so far ($\ProxLat$,
$\ContEnt$ etc.) are order-enriched categories,
where homsets are ordered by the set-theoretic inclusion.
Thus, the following notion applies.
\begin{definition}
  Let $\mathbb{C}$ be an order-enriched category.
  For morphisms $f \colon A \to B$ and $g \colon B
  \to A$, we say that $f$ is a \emph{left adjoint} to $g$ and $g$ is a
  \emph{right adjoint} to $f$ if $f \circ g \leq_{B} \id_{B}$ and
  $\id_{A} \leq_{A} g \circ f$, where $\leq_{A}$ and $\leq_{B}$ are
  the orders on $\Hom{\mathbb{C}}{A}{A}$ and $\Hom{\mathbb{C}}{B}{B}$
  respectively. In this case,  $(f,g)$ is called an adjoint
  pair of morphisms from $A$ to $B$.
\end{definition}

Let $\ProxLatPerfect$ be the subcategory of $\ProxLat$
where morphisms from $S$ to $S'$
are adjoint pairs of proximity relations from $S'$ to $S$.
The identity on $(S,\prox)$ is $(\prox,\prox)$, and the composition of
adjoint pairs $(s,r)$ and $(s',r')$ is $(s \circ s', r'
\circ r)$.

\begin{theorem}\label{thm:SecondDualtyThm}
  The assignment $(s,r) \colon S \to S' \mapsto (r^{-},s^{-}) \colon
  \DeGroot{S} \to
  \DeGroot{S'}$ determines an isomorphism
  $\DeGroot{(\cdot)} \colon \ProxLatPerfect \xrightarrow{\cong} \ProxLatPerfect$.
\end{theorem}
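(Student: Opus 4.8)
The plan is to route everything through the dual isomorphism $(\cdot)^{-}\colon\ProxLat\to\Opposite{\ProxLat}$ of Proposition~\ref{prop:FirstDualtyThm}, whose relevant features are: it sends each object $S$ to $\DeGroot{S}$; it reverses relational composition, $(\beta\circ\alpha)^{-}=\alpha^{-}\circ\beta^{-}$; it carries each identity $\prox_{S}$ to the dual identity $\proxop_{\DeGroot{S}}=\prox_{S}^{-}$; and, since the homsets of $\ProxLat$ are ordered by inclusion, it preserves that order, $\alpha\subseteq\beta\iff\alpha^{-}\subseteq\beta^{-}$. The one conceptual point is that a morphism of $\ProxLatPerfect$ is an adjoint pair $(s,r)$, and $\DeGroot{(\cdot)}$ applies $(\cdot)^{-}$ componentwise \emph{and} transposes the two components; the contravariance of $(\cdot)^{-}$ and the contravariance introduced by the transposition cancel, yielding a \emph{covariant} functor from $\ProxLatPerfect$ to itself.

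First I would verify that $\DeGroot{(s,r)}=(r^{-},s^{-})$ is again an adjoint pair, hence a genuine morphism $\DeGroot{S}\to\DeGroot{S'}$. For $(s,r)\colon S\to S'$ (with $s\colon S\to S'$ and $r\colon S'\to S$) the defining inequalities are $s\circ r\subseteq\prox_{S'}$ and $\prox_{S}\subseteq r\circ s$. Applying $(\cdot)^{-}$ and using the four properties above gives $r^{-}\circ s^{-}=(s\circ r)^{-}\subseteq\prox_{S'}^{-}=\proxop_{S'}$ and $\proxop_{S}=\prox_{S}^{-}\subseteq(r\circ s)^{-}=s^{-}\circ r^{-}$. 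By Lemma~\ref{lem:DualClosedRel} the relations $r^{-},s^{-}$ are themselves proximity relations between the duals, so $(r^{-},s^{-})$ is an adjoint pair from $\DeGroot{S}$ to $\DeGroot{S'}$, with $r^{-}$ left adjoint to $s^{-}$.

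Next I would check functoriality. Identities are preserved since $\DeGroot{(\prox_{S},\prox_{S})}=(\proxop_{S},\proxop_{S})$ is the identity of $\DeGroot{S}$. For composition I compute both sides of $\DeGroot{((s,r)\circ(s',r'))}=\DeGroot{(s,r)}\circ\DeGroot{(s',r')}$: the left side, via the composition rule of $\ProxLatPerfect$ and $(\beta\circ\alpha)^{-}=\alpha^{-}\circ\beta^{-}$, unfolds to $((r'\circ r)^{-},(s\circ s')^{-})=(r^{-}\circ r'^{-},\,s'^{-}\circ s^{-})$, and the right side, applying the same composition rule to $(r^{-},s^{-})$ and $(r'^{-},s'^{-})$, gives exactly $(r^{-}\circ r'^{-},\,s'^{-}\circ s^{-})$. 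Thus $\DeGroot{(\cdot)}$ is a covariant functor $\ProxLatPerfect\to\ProxLatPerfect$.

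Finally, I would observe that $\DeGroot{(\cdot)}$ is an involution and hence an isomorphism. On objects, Definition~\ref{def:DualProxLat} applied twice returns $(\DualLat{\DualLat{S}},(\proxop)^{-})=(S,\prox)=S$, since the double order-opposite of a distributive lattice is itself and $(\proxop)^{-}=\prox$. On morphisms, $\DeGroot{\DeGroot{(s,r)}}=\DeGroot{(r^{-},s^{-})}=((s^{-})^{-},(r^{-})^{-})=(s,r)$ because $(\cdot)^{-}$ is involutive on relations. Hence $\DeGroot{(\cdot)}\circ\DeGroot{(\cdot)}=\id$, so $\DeGroot{(\cdot)}$ is its own inverse and an isomorphism. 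I expect the only real friction to be the variance bookkeeping — keeping straight that each of $s,r$ reverses direction under $(\cdot)^{-}$ while the pair is simultaneously transposed — rather than any substantive difficulty, since every step reduces to the formal properties of the relational opposite recorded in Proposition~\ref{prop:FirstDualtyThm} and Lemma~\ref{lem:DualClosedRel}.
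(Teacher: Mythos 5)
Your argument is correct and is essentially the paper's: the entire published proof consists of your first step (order-preservation of $(\cdot)^{-}$ turns the two adjunction inequalities for $(s,r)$ into those for the transposed pair $(r^{-},s^{-})$), with functoriality and involutivity left implicit, so your additional checks only spell out what the paper omits. The one slip is notational: in the paper a morphism $(s,r)\colon S\to S'$ of $\ProxLatPerfect$ is an adjoint pair \emph{from $S'$ to $S$}, i.e.\ $s\colon S'\to S$ and $r\colon S\to S'$ (this is what makes the composite $(s\circ s',r'\circ r)$ typecheck), whereas you assign $s$ and $r$ the transposed directions; the formal computation is unaffected by this.
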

\begin{proof}
  Since the functor $(\cdot)^{-} \colon \ProxLat \to
  \Opposite{\ProxLat}$ preserves the order on morphisms, for any
  morphism $(s,r) \colon S \to S'$ in $\ProxLatPerfect$ (i.e.\
  an adjoint pair of proximity relations from $S'$ to $S$), the pair
  $(r^{-},s^{-})$ is an adjoint pair from $\DeGroot{S'}$ to
  $\DeGroot{S}$, i.e., a morphism $(r^{-},s^{-}) \colon
  \DeGroot{S} \to \DeGroot{S'}$ in $\ProxLatPerfect$.
\end{proof}
A locale map $f \colon X \to Y$ is \emph{perfect} if
the corresponding frame homomorphism $\Frame{f} \colon \Frame{Y} \to
\Frame{X}$ has a Scott continuous right adjoint $g \colon
\Frame{X} \to \Frame{Y}$. In this case, $g$ is necessarily a 
preframe homomorphism.
An adjoint pair $(s,r) \colon S \to S'$ of proximity relations in
$\ProxLatPerfect$ corresponds to a perfect map from $\Spec{S}$ to
$\Spec{S'}$.  Hence, Theorem \ref{thm:SecondDualtyThm} is a
manifestation of the de Groot duality of stably compact locales in the
setting of proximity lattice.

\subsection{Duality of continuous entailment relations}\label{sec:DualityContEnt}
We describe an analogous duality on the category $\ContEnt$, 
and relate it to the duality on $\ProxLat$ via the equivalence of the two
categories.
\begin{definition}
The \emph{dual} $\DualLat{\entails}$ of an entailment relation
$\entails$ on $S$ is the relational opposite: $A
\mathrel{\DualLat{\entails}} B \defeqiv B \entails A$. The \emph{dual}
$\DeGroot{S}$ of a continuous entailment relation $S = (S, \entails,
\ll)$ is the continuous entailment relation $(S, \DualLat{\entails},
\gg)$.
\end{definition}

If $r \colon S \to S'$ is a proximity map
between continuous entailment relations, then $r^{-}$ is a proximity map 
$r^{-} \colon \DeGroot{S'} \to \DeGroot{S}$. Then, the following is obvious.
\begin{proposition}
  \label{prop:DualityContEntClosed}
  The assignment $r \colon S \to S' \mapsto r^{-} \colon \DeGroot{S'}
  \to \DeGroot{S}$ determines a dual isomorphism $(\cdot)^{-} \colon
  \ContEnt \xrightarrow{\cong}  \Opposite{\ContEnt}$.
  \qed
\end{proposition}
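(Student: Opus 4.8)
The plan is to reduce everything to the single fact that relational opposition is an involution. Since $\DeGroot{\DeGroot{S}} = S$ (the double opposite of $\entails$ and of $\ll$ return $\entails$ and $\ll$) and $(r^{-})^{-} = r$, once I have shown that $(\cdot)^{-}$ is a well-defined contravariant endofunctor on $\ContEnt$ it is automatically its own inverse, and hence a dual isomorphism $\ContEnt \xrightarrow{\cong} \Opposite{\ContEnt}$. So the content is (i) that $\DeGroot{S}$ is again a continuous entailment relation, (ii) that $r \mapsto r^{-}$ carries a proximity map $S \to S'$ to one $\DeGroot{S'} \to \DeGroot{S}$ (already observed before the statement), and (iii) that $(\cdot)^{-}$ preserves identities and reverses cut composition.

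For (i), the rules (R), (M), (T) defining an entailment relation are symmetric under exchanging the two sides of $\entails$, so $\DualLat{\entails}$ is again an entailment relation; that $\gg = \ll^{-}$ is an idempotent Karoubi endomorphism on $(S, \DualLat{\entails})$ will drop out of the composition law in (iii) applied with $r = s = \ll$. For identities, the identity morphism on $\DeGroot{S}$ is by definition $\gg = \ll^{-}$, which is exactly the image under $(\cdot)^{-}$ of the identity $\ll$ on $S$.

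The substantive step is the composition law $(s \cdot r)^{-} = r^{-} \cdot s^{-}$. This fails to be immediate at the level of raw relations because the cut composition is asymmetric: unfolding the definitions, $(s \cdot r)^{-}$ quantifies over a family $\mathcal{V}$ with $r$ tested on $\mathcal{V}^{*}$ and $s$ on $\mathcal{V}$, whereas $r^{-} \cdot s^{-}$ tests $s$ on $\mathcal{V}^{*}$ and $r$ on $\mathcal{V}$. I would therefore pass through the faithful translation $r \mapsto \ApproxExt{r}$ of Proposition \ref{prop:EntRelEquivDLatProx}, using $A \mathrel{r} B \iff \{A\} \ApproxExt{r} \{B\}^{*}$. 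By Lemma \ref{lem:ApproxExtComposition} cut composition becomes ordinary relational composition of proximity relations, $\ApproxExt{s \cdot r} = \ApproxExt{s} \circ \ApproxExt{r}$, and relational composition plainly obeys $(g \circ f)^{-} = f^{-} \circ g^{-}$. It then remains to match $\ApproxExt{r^{-}}$ with the opposite of $\ApproxExt{r}$: here the operation $\mathcal{U} \mapsto \mathcal{U}^{*}$, which interchanges disjunctions of conjunctions with conjunctions of disjunctions, supplies the lattice isomorphism $\DualLat{(\EnttoLat{S,\entails})} \cong \EnttoLat{S, \DualLat{\entails}}$, and under this isomorphism the displaced starred quantifiers line up so that $\ApproxExt{r^{-}}$ coincides with the opposite of $\ApproxExt{r}$. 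Faithfulness of the translation then returns the desired equality of relations.

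I expect the main obstacle to be exactly this bookkeeping with $(\cdot)^{*}$ — checking that $\mathcal{U} \mapsto \mathcal{U}^{*}$ is the lattice isomorphism between $\DualLat{(\EnttoLat{S,\entails})}$ and $\EnttoLat{S, \DualLat{\entails}}$ and that it conjugates the opposite of $\ApproxExt{r}$ to $\ApproxExt{r^{-}}$, since this is where the asymmetry of the cut composition is absorbed. An alternative that sidesteps the explicit relation-chasing is to invoke the equivalence $F \colon \ContEnt \to \ProxLat$ together with the dual isomorphism on $\ProxLat$ already established in Proposition \ref{prop:FirstDualtyThm}: once $F$ is shown to intertwine the two dualities (again by means of $(\cdot)^{*}$), functoriality of $\DeGroot{(\cdot)}$ on $\ContEnt$ is inherited from that on $\ProxLat$.
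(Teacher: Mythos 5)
Your proposal is correct, and it supplies detail where the paper supplies none: the paper simply notes beforehand that $r^{-}$ is a proximity map $\DeGroot{S'} \to \DeGroot{S}$ and then declares the proposition obvious, leaving the reader to see that relational opposition is an involution interacting well with cut composition. You rightly isolate the one point that is not purely formal, namely the law $(s \cdot r)^{-} = r^{-} \cdot s^{-}$, where the asymmetry of the cut composition (one factor tested on $\mathcal{V}$, the other on $\mathcal{V}^{*}$) has to be absorbed. Your detour through $\ApproxExt{(\cdot)}$ and the lattice isomorphism $\mathcal{U} \mapsto \mathcal{U}^{*}$ between $\DualLat{(\EnttoLat{S,\entails})}$ and $\EnttoLat{S,\DualLat{\entails}}$ does work, but it is worth noting that a direct verification is shorter and avoids the translation entirely: given a witness $\mathcal{V}$ for $A \mathrel{(s \cdot r)} C$, the family $\mathcal{V}^{*}$ witnesses $C \mathrel{(r^{-} \cdot s^{-})} A$, because every member of $\mathcal{V}^{**}$ contains a member of $\mathcal{V}$ (and, for the converse, every member of $\mathcal{V}$ contains a member of $\mathcal{V}^{**}$), so upperness of the relations transports the two quantifiers across the star. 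This combinatorial fact about $(\cdot)^{**}$ is exactly the same bookkeeping you flag in your own route, so the two arguments have the same mathematical core. Your fallback of transporting the duality along $F \colon \ContEnt \to \ProxLat$ should be used with care in the paper's ordering, since the statement that $F$ intertwines the dualities (Theorem \ref{thm:ContEntCLosedtoSPxLatClosed}) already presupposes that $\DeGroot{S}$ is a well-defined object of $\ContEnt$, which is part of what is being established here.
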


\begin{theorem}
  \label{thm:ContEntCLosedtoSPxLatClosed}
  The equivalence 
  $F \colon \ContEnt \to \ProxLat$ 
  commutes with the dual isomorphisms $(\cdot)^{-}\!$ on
  $\ContEnt$ and $\ProxLat$ 
  up to natural isomorphism.
\end{theorem}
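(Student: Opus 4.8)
The plan is to build the required natural isomorphism out of the normal-form operation $\mathcal{U} \mapsto \mathcal{U}^{*}$ recorded in Section~\ref{sec:ContEnt}, which converts a disjunction of conjunctions into a conjunction of disjunctions. First I would unfold both composites on objects. For a continuous entailment relation $(S,\entails,\ll)$, applying the dual isomorphism on $\ContEnt$ (Proposition~\ref{prop:DualityContEntClosed}) and then $F$ yields the proximity lattice $(\EnttoLat{S,\DualLat{\entails}}, \ApproxExt{\gg})$, where $\gg$ is the relational opposite of $\ll$; applying $F$ first and then the dual isomorphism on $\ProxLat$ (Proposition~\ref{prop:FirstDualtyThm}) yields $(\DualLat{\EnttoLat{S,\entails}}, (\ApproxExt{\ll})^{-})$. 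Both objects have carrier $\Fin{\Fin{S}}$, and the two distributive-lattice structures differ exactly by interchanging $\vee \leftrightarrow \wedge$ and $0 \leftrightarrow 1$. Since $(\cdot)^{*}$ interchanges $\cup$ with $\left\{ A \cup B \right\}$, and $\emptyset$ with $\left\{ \emptyset \right\}$, it is the natural candidate, so I would define $\eta_{(S,\entails,\ll)}$ to be the proximity relation induced by $\mathcal{U} \mapsto \mathcal{U}^{*}$.

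The technical heart is a saturation lemma: for every $\mathcal{U} \in \Fin{\Fin{S}}$, the families $\mathcal{U}$ and $(\mathcal{U}^{*})^{*}$ generate the same upward closure under $\subseteq$ on $\Fin{S}$ (each clause of one contains a clause of the other). Writing $\mathcal{U} \mathrel{\overline{r}} \mathcal{W}$ for $\forall A \in \mathcal{U}\, \forall B \in \mathcal{W}\,(A \mathrel{r} B)$, the defining footnote of Section~\ref{sec:ContEnt} gives $\mathcal{U} \ApproxExt{r} \mathcal{V} \iff \mathcal{U} \mathrel{\overline{r}} \mathcal{V}^{*}$, and $\overline{(\cdot)}$ is self-dual in the sense $\mathcal{U} \mathrel{\overline{r}} \mathcal{W} \iff \mathcal{W} \mathrel{\overline{r^{-}}} \mathcal{U}$. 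Granting the lemma together with the involution $(\mathcal{U}^{*})^{*} =_{\entails} \mathcal{U}$, any upper relation $\diamond$ on $\Fin{S}$ satisfies $\mathcal{V}^{*} \mathrel{\overline{\diamond}} \mathcal{U} \iff \mathcal{V}^{*} \mathrel{\overline{\diamond}} (\mathcal{U}^{*})^{*}$, because $\diamond$ is monotone in its right argument. This invariance is the step I expect to be the main obstacle: it is precisely the bookkeeping required to move a representative across the two lattice equalities $=_{\entails}$ and $=_{\DualLat{\entails}}$, and the remaining checks all rest on it.

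With the lemma in hand the two verifications become mechanical. That $\eta$ is an isomorphism of the underlying distributive lattices follows from the normal-form identity ($(\cdot)^{*}$ turns joins into meets and conversely), from the swapping of units, and from order reversal $\mathcal{U} \ApproxExt{\DualLat{\entails}} \mathcal{V} \iff \mathcal{V}^{*} \ApproxExt{\entails} \mathcal{U}^{*}$, the last being the invariance above applied to $\diamond = {\entails}$. That $\eta$ carries $\ApproxExt{\gg}$ to $(\ApproxExt{\ll})^{-}$ amounts to $\mathcal{U} \ApproxExt{\gg} \mathcal{V} \iff \mathcal{V}^{*} \ApproxExt{\ll} \mathcal{U}^{*}$, which I obtain by the chain $\mathcal{U} \ApproxExt{\gg} \mathcal{V} \iff \mathcal{U} \mathrel{\overline{\gg}} \mathcal{V}^{*} \iff \mathcal{V}^{*} \mathrel{\overline{\ll}} \mathcal{U}$ (self-duality and $\gg = \ll^{-}$) and $\mathcal{V}^{*} \ApproxExt{\ll} \mathcal{U}^{*} \iff \mathcal{V}^{*} \mathrel{\overline{\ll}} (\mathcal{U}^{*})^{*}$, the two right-hand sides agreeing by the invariance. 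Involutivity provides the inverse, so each $\eta_{(S,\entails,\ll)}$ is an isomorphism in $\ProxLat$.

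Finally I would check naturality. Since both dual isomorphisms act on arrows by relational opposite and $F$ sends a proximity map $r$ to $\ApproxExt{r}$, the naturality square collapses to the assertion that conjugation by $(\cdot)^{*}$ sends $\ApproxExt{r^{-}}$ to $(\ApproxExt{r})^{-}$. This again follows from the footnote identity and the self-duality of $\overline{(\cdot)}$, with the invariance lemma absorbing the passage between representatives and Lemma~\ref{lem:ApproxExtComposition} supplying compatibility with cut composition. Hence $\eta$ is a natural isomorphism, witnessing that $F$ commutes with the dual isomorphisms on $\ContEnt$ and $\ProxLat$ up to natural isomorphism.
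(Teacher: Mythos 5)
Your proposal is correct and follows essentially the same route as the paper: the paper also builds the natural isomorphism from the operation $\mathcal{U} \mapsto \mathcal{U}^{*}$, defining the component at $(S,\entails,\ll)$ by $\mathcal{U} \mathrel{r_{S}} \mathcal{V} \defeqiv \mathcal{U}^{*} \ApproxExt{\gg} \mathcal{V}$ with inverse $\mathcal{V} \mathrel{t_{S}} \mathcal{U} \defeqiv \mathcal{U} \ApproxExt{\ll} \mathcal{V}^{*}$, and rests everything on the identity $\mathcal{U} \ApproxExt{\ll} \mathcal{V} \iff \mathcal{V}^{*} \ApproxExt{\gg} \mathcal{U}^{*}$, which is exactly your key step. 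The only difference is that you spell out, via the saturation lemma on $(\mathcal{U}^{*})^{*}$, why that identity holds, whereas the paper asserts it without proof.
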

\begin{proof}
  For each continuous entailment relation 
  $(S,\entails, \ll)$, 
  define a relation
  $r_{S} \subseteq  \DualLat{D(S,\entails)} \times D(S,\DualLat{\entails})$ by
  \[
    \mathcal{U} \mathrel{r_{S}} \mathcal{V}
    \defeqiv \mathcal{U}^{*} \ApproxExt{\gg} \mathcal{V}.
  \]
  Since $\mathcal{U} \ApproxExt{\ll} \mathcal{V}
  \leftrightarrow \mathcal{V}^{*} \ApproxExt{\gg} \mathcal{U}^{*}$,
  one can easily show that $r_{S}$ is a proximity relation
  from $\DeGroot{F(S,\entails,\ll)}$
  to $F(\DeGroot{(S, \entails, \ll)})$
  with an inverse $t_{S}$ defined by
  \[
    \mathcal{V} \mathrel{t_{S}} \mathcal{U}
    \defeqiv \mathcal{U} \ApproxExt{\ll} \mathcal{V}^{*}.
  \]
  To see that $r_{S}$ is natural in $S$, 
  for any proximity map $r \colon (S,\entails, \ll) \to (S',
  \entails', \ll')$ and 
  for any $\mathcal{U} \in \Fin{\Fin{S'}}$
  and $\mathcal{V} \in \Fin{\Fin{S}}$, we have
  \begin{align*}
    \mathcal{U} \mathrel{(r_{S} \circ (\widetilde{r})^{-})} \mathcal{V} 
    &\iff
     \exists \mathcal{W} \in \Fin{\Fin{S}}
    \left(\mathcal{U} \mathrel{(\widetilde{r})^{-}}  \mathcal{W}
    \amp \mathcal{W} \mathrel{r_{S}}\mathcal{V}\right) \\
    &\iff
     \exists \mathcal{W} \in \Fin{\Fin{S}}
    \left(\mathcal{W} \mathrel{\widetilde{r}}  \mathcal{U}
    \amp \mathcal{V}^{*} \ApproxExt{\ll} \mathcal{W}\right) \\
    &\iff
    \mathcal{V}^{*} \mathrel{\widetilde{r}}  \mathcal{U} \\
    &\iff
     \exists \mathcal{W} \in \Fin{\Fin{S'}}
    \left(\mathcal{W}^{*} \ApproxExt{\ll'} \mathcal{U} 
    \amp \mathcal{V}^{*} \mathrel{\widetilde{r}} \mathcal{W}^{*}\right) \\
    &\iff
     \exists \mathcal{W} \in \Fin{\Fin{S'}}
    \left(\mathcal{U}  \mathrel{r_{S'}} \mathcal{W}
    \amp \mathcal{W} \mathrel{\widetilde{(r^{-})}}  \mathcal{V}\right) \\
    &\iff
    \mathcal{U} \mathrel{(\widetilde{(r^{-})} \circ r_{S'})}
    \mathcal{V}.
    \qedhere
  \end{align*}
\end{proof}

Let $\ContEntPerfect$ be the category of continuous entailment
relations and adjoint pairs of proximity maps which is defined
similarly as $\ProxLatPerfect$.
The following is analogous to Theorem \ref{thm:SecondDualtyThm}.
\begin{proposition}
  The assignment 
  $(s,r) \colon S \to S' \mapsto (r^{-},s^{-}) \colon \DeGroot{S} \to \DeGroot{S'}$
  determines an isomorphism $\DeGroot{(\cdot)} \colon
  \ContEntPerfect \xrightarrow{\cong} \ContEntPerfect$.
  \qed
\end{proposition}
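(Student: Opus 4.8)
The plan is to transport the proof of Theorem~\ref{thm:SecondDualtyThm} from $\ProxLatPerfect$ to $\ContEntPerfect$ essentially verbatim. The single structural fact that drives the proximity-lattice argument is that the dual isomorphism on morphisms preserves the order on homsets; the continuous-entailment analogue is already available as Proposition~\ref{prop:DualityContEntClosed}, which furnishes a dual isomorphism $(\cdot)^{-} \colon \ContEnt \xrightarrow{\cong} \Opposite{\ContEnt}$, while monotonicity is immediate since the relational opposite $r \mapsto r^{-}$ preserves set-theoretic inclusion and the homsets of $\ContEnt$ are ordered by inclusion.

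First I would check well-definedness on morphisms. An adjoint pair $(s,r) \colon S \to S'$ in $\ContEntPerfect$ consists of proximity maps $s \colon S' \to S$ and $r \colon S \to S'$ with $s \cdot r \leq_{S} \id_{S}$ and $\id_{S'} \leq_{S'} r \cdot s$, where $\cdot$ is cut composition and the identities are $\ll$ and $\ll'$. Applying the order-preserving, contravariant functor $(\cdot)^{-}$ to these inequalities, and using that it reverses cut composition, i.e.\ $(s \cdot r)^{-} = r^{-} \cdot s^{-}$ and $(r \cdot s)^{-} = s^{-} \cdot r^{-}$, together with $\id_{S}^{-} = {\gg} = \id_{\DeGroot{S}}$, yields $r^{-} \cdot s^{-} \leq_{\DeGroot{S}} \id_{\DeGroot{S}}$ and $\id_{\DeGroot{S'}} \leq_{\DeGroot{S'}} s^{-} \cdot r^{-}$. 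Hence $(r^{-}, s^{-})$ is an adjoint pair $\DeGroot{S} \to \DeGroot{S'}$, so the assignment indeed lands in $\ContEntPerfect$.

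Next I would confirm functoriality and invertibility. The identity $(\ll, \ll)$ is sent to $(\gg, \gg)$, which is the identity on $\DeGroot{S}$; and for the composite $(s \cdot s', r' \cdot r)$ of $(s,r)$ and $(s',r')$, its image $((r' \cdot r)^{-}, (s \cdot s')^{-})$ unfolds, by the same reversal of cut composition, to $(r^{-} \cdot (r')^{-}, (s')^{-} \cdot s^{-})$, which is precisely the composite of $(r^{-}, s^{-})$ and $((r')^{-}, (s')^{-})$ in $\ContEntPerfect$. Finally, since $(\cdot)^{-}$ is an involution and dualizing both $\entails$ and $\ll$ twice returns the original structure ($\DualLat{\DualLat{\entails}} = {\entails}$ and $(\gg)^{-} = {\ll}$), the functor $\DeGroot{(\cdot)}$ is its own inverse, hence an isomorphism.

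I do not anticipate a genuine obstacle here: everything rests on Proposition~\ref{prop:DualityContEntClosed} and the inclusion order-enrichment, both already established. The only point requiring care is the bookkeeping of directions --- tracking which member of each pair becomes the left and which the right adjoint after dualizing, and matching the reversal of cut composition against the composition convention of $\ContEntPerfect$ --- exactly as in the proximity-lattice case of Theorem~\ref{thm:SecondDualtyThm}.
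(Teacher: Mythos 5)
Your proposal is correct and is exactly the argument the paper intends: the paper omits the proof (marking the proposition with \qed as ``analogous to Theorem~\ref{thm:SecondDualtyThm}''), and your transport of that theorem's proof via the order-preserving dual isomorphism of Proposition~\ref{prop:DualityContEntClosed} is precisely the intended analogue. The direction bookkeeping (adjoint pair from $S'$ to $S$ as a morphism $S \to S'$, reversal of cut composition, $\id^{-} = {\gg}$) all checks out against the paper's conventions.
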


\begin{theorem}
  \label{thm:DualityContEntProxLat}
  The category $\ContEntPerfect$ is equivalent to $\ProxLatPerfect$.  The
  equivalence commutes with the isomorphisms $\DeGroot{(\cdot)}\!$ on
  $\ContEntPerfect$ and $\ProxLatPerfect$ up to natural isomorphism.
\end{theorem}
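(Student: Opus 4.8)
The plan is to lift the equivalence $F \colon \ContEnt \to \ProxLat$ to the perfect subcategories, exploiting that $F$ is an equivalence of order-enriched categories that acts as an order isomorphism on each homset, and then to read the compatibility with $\DeGroot{(\cdot)}$ off Theorem~\ref{thm:ContEntCLosedtoSPxLatClosed}. First I would record that $F$ is locally an order isomorphism: for proximity maps $r, r' \colon S \to S'$ we have $r \subseteq r'$ if and only if $\widetilde{r} \subseteq \widetilde{r'}$, the forward direction being immediate from the definition of $\widetilde{(\cdot)}$ and the converse following from the equivalence $A \mathrel{r} B \leftrightarrow \{A\} \mathrel{\widetilde{r}} \{B\}^{*}$ already used in Proposition~\ref{prop:EntRelEquivDLatProx}. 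Since the homsets of $\ContEnt$ and $\ProxLat$ are ordered by inclusion, $F$ then carries the two defining inequalities $f \circ g \leq \id$ and $\id \leq g \circ f$ of an adjoint pair both forward and backward, so $F$ preserves and reflects adjoint pairs.

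Next I would define the restriction $F_{\mathrm{Perf}} \colon \ContEntPerfect \to \ProxLatPerfect$ by $S \mapsto F(S)$ on objects and $(s,r) \mapsto (\widetilde{s}, \widetilde{r})$ on morphisms. By the previous step $(\widetilde{s}, \widetilde{r})$ is again an adjoint pair, and since $F$ preserves identities and cut composition (Theorem~\ref{thm:EquivAllProxLat}) it sends the identity $(\ll, \ll)$ to an identity and the composite $(s \circ s', r' \circ r)$ to $(\widetilde{s} \circ \widetilde{s'}, \widetilde{r'} \circ \widetilde{r})$; hence $F_{\mathrm{Perf}}$ is a functor. It is faithful because $F$ is, and full because every adjoint pair between $F(S')$ and $F(S)$ consists of two proximity relations, each of which lifts uniquely under the local order isomorphism to a proximity map, the resulting pair being an adjoint pair in $\ContEntPerfect$ by reflection. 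Essential surjectivity is inherited from $F$: given any object $T$ of $\ProxLatPerfect$, an isomorphism $F(S) \cong T$ in $\ProxLat$ is in particular an adjoint pair, hence an isomorphism $F_{\mathrm{Perf}}(S) \cong T$ in $\ProxLatPerfect$. Therefore $F_{\mathrm{Perf}}$ is an equivalence.

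For compatibility with $\DeGroot{(\cdot)}$ I would reuse the natural isomorphism of Theorem~\ref{thm:ContEntCLosedtoSPxLatClosed}: for each continuous entailment relation $S$ the mutually inverse proximity relations $r_{S} \colon \DeGroot{F(S)} \to F(\DeGroot{S})$ and $t_{S}$ assemble, with the orientation forced by the direction reversal in the definition of $\ProxLatPerfect$, into an isomorphism $\DeGroot{F(S)} \to F(\DeGroot{S})$ in $\ProxLatPerfect$, since a mutually inverse pair trivially satisfies both adjunction inequalities. The naturality square in $\ProxLatPerfect$ for a morphism $(s,r) \colon S \to S'$ then unwinds, through $\DeGroot{(s,r)} = (r^{-}, s^{-})$ and $F_{\mathrm{Perf}}(\DeGroot{(s,r)}) = (\widetilde{r^{-}}, \widetilde{s^{-}})$, into the two naturality squares for $r_{(\cdot)}$ against the individual proximity maps $s$ and $r$, both of which commute by Theorem~\ref{thm:ContEntCLosedtoSPxLatClosed}.

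The step I expect to require the most care is this last one: the functor $\DeGroot{(\cdot)}$ both swaps and takes relational opposites of the two legs of an adjoint pair, so one must check that the components $r_{(\cdot)}$, shown natural with respect to ordinary proximity maps, remain natural once repackaged as isomorphisms of the perfect categories and the morphisms are reversed and opposed. Because $(\cdot)^{-}$ is an order-preserving dual isomorphism on both $\ContEnt$ and $\ProxLat$ (Propositions~\ref{prop:FirstDualtyThm} and~\ref{prop:DualityContEntClosed}), these direction reversals are compatible, and the required squares commute.
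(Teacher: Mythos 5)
Your proposal is correct and follows the same route as the paper, which simply restricts the order-preserving equivalence $F \colon \ContEnt \to \ProxLat$ to the perfect subcategories and invokes Theorem~\ref{thm:ContEntCLosedtoSPxLatClosed} for compatibility with $\DeGroot{(\cdot)}$. You merely spell out the details the paper leaves implicit --- in particular that $F$ also \emph{reflects} the order on homsets (via $A \mathrel{r} B \leftrightarrow \left\{ A \right\} \mathrel{\ApproxExt{r}} \left\{ B \right\}^{*}$), which is indeed what makes fullness of the restricted functor go through.
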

\begin{proof}
Since the functor $F \colon \ContEnt \to \ProxLat$ preserves
the order on morphisms, it can be restricted to an equivalence
between $\ContEntPerfect$ and  $\ProxLatPerfect$.
The second statement follows from Theorem
\ref{thm:ContEntCLosedtoSPxLatClosed}.
\end{proof}

We introduce the notion of dual for strong continuous entailment
relations.
\begin{definition}
The \emph{dual} of a strong continuous entailment relation $(S,
\entails, \prox)$ is a strong continuous entailment relation
$(S,\DualLat{\entails}, \succ)$.
\end{definition}
Note that the inclusion $\SContEntP \hookrightarrow \ContEnt$ 
commutes with the dualities on both categories.
Since strong proximity lattices are closed under the duality in the sense
of Definition \ref{def:DualProxLat}, Theorem
\ref{thm:DualityContEntProxLat} restricts to the full subcategories 
$\SContEntPerfect$
and $\SProxLatPerfect$
of $\ContEntPerfect$ and $\ProxLatPerfect$, respectively, which
consist of strong
continuous entailment relations and strong proximity lattices.

\section{Applications of entailment relations}\label{sec:ExampledeGrootDuality}
We present a number of constructions on stably compact locales in
the setting of strong proximity lattices and
strong continuous entailment relations and analyse their de Groot duals.
For the sake of simplicity, we prefer to work with \emph{strong} 
proximity lattices rather proximity lattices because the geometric
theories of the locales represented by the former are simpler and
easier to work with.

Our main tool is the following observation, together with Lemma
\ref{prop:IndGenContEnt}.
\begin{lemma}\label{lem:GenEntOp}
  If $(S, \entails)$ is an entailment relation generated by a set
  $\entails_{0}$ of axioms, then the dual $\DualLat{\entails}$ is generated by
  $\DualLat{\entails_{0}} \defeql \left\{ (B,A) \mid A \entails_0 B \right\}$.
\end{lemma}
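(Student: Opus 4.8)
The plan is to exploit the fact that the relational opposite $(\cdot)^{\circ}$ is an inclusion-preserving involution on the collection of entailment relations on $S$. First I would observe that the three defining conditions of Definition \ref{def:EntRel} are self-dual: reflexivity (R) is manifestly symmetric; the monotonicity rule (M) maps to itself once the two sides are interchanged, since weakening is permitted on the left and the right simultaneously; and the cut rule (T) likewise maps to an instance of (T) after swapping the left-hand and right-hand sides. Consequently, $\DualLat{\entails}$ is an entailment relation whenever $\entails$ is, and the defining equivalence $A \mathrel{\DualLat{\entails}} B \defeqiv B \entails A$ makes $(\cdot)^{\circ}$ an involution that preserves inclusion.

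Write $\Vdash$ for the entailment relation generated by $\DualLat{\entails_{0}}$; the goal is to show $\Vdash {=} \DualLat{\entails}$. For the inclusion $\Vdash \subseteq \DualLat{\entails}$, note that since $\entails \supseteq \entails_{0}$ and $(\cdot)^{\circ}$ preserves inclusion, the relation $\DualLat{\entails}$ is an entailment relation containing $\DualLat{\entails_{0}}$. The minimality of the generated relation $\Vdash$ then forces $\Vdash \subseteq \DualLat{\entails}$.

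For the reverse inclusion I would apply the involution once more. Since $\DualLat{\Vdash}$ is again an entailment relation and contains $\DualLat{(\DualLat{\entails_{0}})} = \entails_{0}$, the minimality of $\entails$ as the smallest entailment relation containing $\entails_{0}$ yields $\entails \subseteq \DualLat{\Vdash}$. Dualizing this inclusion and using that $(\cdot)^{\circ} \circ (\cdot)^{\circ} = \id$ gives $\DualLat{\entails} \subseteq \Vdash$, which together with the previous paragraph establishes the claimed equality.

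The only step demanding genuine care is the self-duality of the inference rules asserted in the first paragraph, and in particular checking that an instance of the cut rule (T) for $\DualLat{\entails}$ unfolds to a valid instance of (T) for $\entails$. This is a routine symbol-pushing verification; once it is in place, everything else is a formal consequence of the involutivity of $(\cdot)^{\circ}$ together with the universal (minimality) property defining the entailment relation generated by a set of axioms.
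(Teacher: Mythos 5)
Your proof is correct and is essentially the paper's own argument: the paper dispatches the lemma as ``immediate from the structural symmetry of entailment relations,'' and your write-up simply makes that symmetry explicit (the rules (R), (M), (T) are self-dual, so $(\cdot)^{\circ}$ is an inclusion-preserving involution on entailment relations, and the two minimality arguments follow formally).
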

\begin{proof}
  Immediate from the structural symmetry of entailment relations.
\end{proof}

\subsection{Powerlocales}\label{subsubsec:PowerLoc}
We deal with the lower, upper, and Vietoris powerlocales and 
consider their interactions with the construction
$\Scott{S}$, the locale whose frame is the Scott topology on
$\RIdeals{S}$.
For the localic account of powerlocales, the reader is referred to
Vickers~\cite{Vickers95constructivepoints,DoublePowLocExp}. 

\subsubsection{Lower and upper powerlocales}
\begin{lemma}\label{lem:PresentationUpperPowerLoc}
Let $(S,\prox)$ be a strong proximity lattice.
\begin{enumerate}
  \item\label{lem:PresentationUpperPowerLoc1} The locale $\Scott{S}$ is presented by a
    strong continuous entailment relation $\ScottFunc(S) = (S, \entails_{\Sigma}, \proxop)$
    where $\entails_{\Sigma}$ is generated by the following axioms:
  \begin{gather*}
    \label{eq:Scott}
    \mathrel{\entails_{\Sigma} 0}
    \qquad\qquad
    a, b \mathrel{\entails_{\Sigma}} a \vee b
    \qquad\qquad
    a  \mathrel{\entails_{\Sigma}} b \quad (\text{if $b \leq a$})
  \end{gather*}

  \item\label{lem:PresentationUpperPowerLoc2} The upper powerlocale of $\Spec{S}$ is presented by a
    strong continuous entailment relation $\Upper{S} = (S, \entails_U, \prox)$ where
    $\entails_U$ is generated by the following axioms:
  \begin{gather*}
    \label{eq:Upper}
    \mathrel{\entails_U 1}
    \qquad\qquad
    a, b \mathrel{\entails_U} a \wedge b
    \qquad\qquad
    a  \mathrel{\entails_U} b \quad (\text{if $a \leq b$})
  \end{gather*}
  \end{enumerate}
  In particular, (the locale whose frame is) the Scott topology and the upper powerlocale of a
  stably compact locale are stably compact.
\end{lemma}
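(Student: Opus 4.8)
The plan is to prove item~\ref{lem:PresentationUpperPowerLoc1} directly and then obtain item~\ref{lem:PresentationUpperPowerLoc2} from it by de Groot duality, the concluding clause following immediately from Theorem~\ref{thm:ContEntPresentsSKL}. For item~\ref{lem:PresentationUpperPowerLoc1} there are two things to check: that $\ScottFunc(S) = (S, \entails_{\Sigma}, \proxop)$ is indeed a strong continuous entailment relation, and that the geometric theory $T(\ScottFunc(S))$ it determines through \eqref{eq:SContEntGeo} presents the same locale as the theory $T_{\ScottFunc}$ defining $\Scott{S}$ in Definition~\ref{def:ScottTop}.

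For the first point I would apply Lemma~\ref{prop:IndGenContEnt}: the relation $\proxop$ is idempotent because $\prox$ is, so it remains to verify conditions~\ref{prop:IndGenContEnt1} and~\ref{prop:IndGenContEnt2}, with the proximity there read as $\proxop$, against each of the three generators of $\entails_{\Sigma}$. These are short computations with the extensions $\proxop_{U}$ and $\proxop_{L}$ that use only that $r^{-}b$ is an ideal and $ra$ a filter (\ref{def:ProximityRelationVee}, \ref{def:ProximityRelationWedge}) together with the interpolation built into idempotency of $\prox$; for instance, condition~\ref{prop:IndGenContEnt2} for the generator $a \entails_{\Sigma} b$ with $b \le a$ is discharged by taking the witness $\{c\}$ for a suitable $c$, noting that $c \prox b \le a$ forces $c \prox a$ because $\{z \mid c \prox z\}$ is a filter. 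The one place where the $\vee$-strength of $(S,\prox)$ is genuinely needed is condition~\ref{prop:IndGenContEnt2} for $a, b \entails_{\Sigma} a \vee b$: from $c \prox a \vee b$ one splits, by \ref{def:proximityJ}, into $c \le a' \vee b'$ with $a' \prox a$ and $b' \prox b$, and uses $A' = \{a', b'\}$.

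For the second point, I would unfold $T(\ScottFunc(S))$ from \eqref{eq:SContEntGeo} with the proximity taken to be $\proxop$: its last two families of axioms, $a \entails b$ for $b \prox a$ and $a \entails \bigvee_{b \succ a} b$, are verbatim the last two axioms of $T_{\ScottFunc}$. It therefore suffices to show that the family $\{\medwedge A \entails \medvee B \mid A \entails_{\Sigma} B\}$ has the same interpretations as the three lattice axioms $\top \entails 0$, $a \wedge b \entails (a \vee b)$, and $a \entails b$ (for $b \le a$) of $T_{\ScottFunc}$; the two theories then present the same locale via the identity on $S$. Given any interpretation $f$ of the three lattice axioms, the relation $A \entails_{f} B \defeqiv \medwedge_{a \in A} f(a) \le \medvee_{b \in B} f(b)$ is an entailment relation in the sense of \eqref{eq:EntofDLat} and contains the three generators of $\entails_{\Sigma}$, hence contains $\entails_{\Sigma}$; so $f$ validates every $\medwedge A \entails \medvee B$ with $A \entails_{\Sigma} B$. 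Conversely, the three lattice axioms are exactly the images of the three generators, so they are validated by any interpretation of that family. This identifies the locale presented by $\ScottFunc(S)$ with $\Scott{S}$.

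For item~\ref{lem:PresentationUpperPowerLoc2} I would use that strong proximity lattices are closed under the dual of Definition~\ref{def:DualProxLat}, so that $\DeGroot{S} = (\DualLat{S}, \proxop)$ is again a strong proximity lattice, and apply item~\ref{lem:PresentationUpperPowerLoc1} to it. Unfolding the dual operations ($0 \leftrightarrow 1$, $\vee \leftrightarrow \wedge$, $\le \leftrightarrow \ge$), the three generators of the Scott entailment relation of $\DeGroot{S}$ become precisely those of $\entails_{U}$, while the proximity --- the opposite of the proximity $\proxop$ of $\DeGroot{S}$ --- becomes $\prox$; thus $\ScottFunc(\DeGroot{S})$ is exactly $\Upper{S} = (S, \entails_{U}, \prox)$. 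By item~\ref{lem:PresentationUpperPowerLoc1} it presents $\Scott{\DeGroot{S}}$, which is the upper powerlocale of $\Spec{S}$ by Proposition~\ref{thm:ScottOfDualIsUpper}; this is item~\ref{lem:PresentationUpperPowerLoc2}. Finally, both $\ScottFunc(S)$ and $\Upper{S}$ being strong continuous entailment relations, the locales they present are stably compact by Theorem~\ref{thm:ContEntPresentsSKL}, which gives the concluding clause. I expect the main obstacle to be the verification via Lemma~\ref{prop:IndGenContEnt} in the second paragraph --- chiefly keeping straight the directions of $\prox$ versus $\proxop$ and of the upper and lower extensions, and pinning down exactly where \ref{def:proximityJ} is invoked; the theory-matching and the dualisation are then routine bookkeeping.
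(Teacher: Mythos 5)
Your proposal is correct and follows essentially the same route as the paper: verify the hypotheses of Lemma~\ref{prop:IndGenContEnt} for the generated relations, identify $T(\ScottFunc(S))$ with the theory of Definition~\ref{def:ScottTop} for item~\ref{lem:PresentationUpperPowerLoc1}, and deduce item~\ref{lem:PresentationUpperPowerLoc2} from Proposition~\ref{thm:ScottOfDualIsUpper} via the identification $\Upper{S} = \ScottFunc(\DeGroot{S})$. The paper leaves all of these verifications as ``straightforward''; your write-up merely supplies the details (correctly, including the single use of \ref{def:proximityJ}).
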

\begin{proof}
  It is straightforward to check that $\ScottFunc(S)$
  and $\Upper{S}$ satisfy the condition in Lemma
  \ref{prop:IndGenContEnt}. Then, item \ref{lem:PresentationUpperPowerLoc1} is immediate from Definition
  \ref{def:ScottTop}, while item \ref{lem:PresentationUpperPowerLoc2} follows from
  Proposition~\ref{thm:ScottOfDualIsUpper}.
\end{proof}
Note that
  $A \entails_{\ScottFunc} B 
  \leftrightarrow \exists b \in B \left( b \leq
  \medvee A \right)$ and $A \entails_{U} B 
  \leftrightarrow \exists b \in B \left( \medwedge A
  \leq b \right)$.
The constructions $\ScottFunc(S)$ and $\Upper{S}$ extend to functors
$\ScottFunc \colon \Opposite{\SProxLat} \to \SContEnt$ and $\UpperFunc
\colon \SProxLat \to \SContEnt$, which send each join-preserving proximity relation 
$r \colon (S, \prox) \to (S',\prox')$
to join-preserving proximity maps $\ScottFunc(r) \colon \ScottFunc(S') \to \ScottFunc(S)$ and 
$\Upper{r} \colon \Upper{S} \to \Upper{S'}$ defined by
\[
  \begin{aligned} 
    A \mathrel{\ScottFunc(r)} B \defeqiv  \exists b \in B \left( b
    \mathrel{r} \medvee A\right), \\
    A \mathrel{\Upper{r}} B \defeqiv  \exists b \in B
    \left( \medwedge A \mathrel{r} b \right).
  \end{aligned}
\]

The notion of lower powerlocale is the dual of that
of upper powerlocale.
\begin{definition}\label{def:LowerPower}
  The \emph{lower powerlocale} of a locale $X$ is a locale
  $\Lower{X}$ together with a suplattice  homomorphism 
  $i_{L} \colon \Frame{X} \to \Frame{\Lower{X}}$ such that
  for any suplattice homomorphism $f \colon \Frame{X} \to \Frame{Y}$
  to a locale $Y$, there exists a unique frame homomorphism
  $\overline{f} \colon \Frame{\Lower{X}} \to \Frame{Y}$ such that
  $\overline{f} \circ i_L = f$.
\end{definition}
\begin{lemma}\label{lem:PresentationLowerPowerLoc}
  For any strong proximity lattice $(S,\prox)$, the lower powerlocale
  of $\Spec{S}$ is presented by a strong continuous entailment relation
  $\Lower{S} = {(S, \entails_L, \prox)}$ where $\entails_L$ is generated by 
  the following axioms:
  \begin{gather*}
    \label{eq:Lower}
    0 \mathrel{\entails_L}
    \qquad\qquad\qquad
    a \vee b \mathrel{\entails_L} a, b
    \qquad\qquad\qquad
    a  \mathrel{\entails_L} b \quad (\text{if $a \leq b$})
  \end{gather*}
  In particular, the lower powerlocale of a stably compact locale is stably
  compact.
\end{lemma}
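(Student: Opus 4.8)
The plan is to mirror the treatment of the Scott and upper cases in Lemma~\ref{lem:PresentationUpperPowerLoc}, establishing two things: that $(S,\entails_L,\prox)$ is a strong continuous entailment relation, and that the (necessarily stably compact) locale it presents carries the universal property of the lower powerlocale in Definition~\ref{def:LowerPower}. As a preliminary I would record the explicit shape of the generated relation, $A \entails_L B \iff \exists a \in A\,(a \leq \medvee B)$, which is dual to the formula for $\entails_U$ noted after Lemma~\ref{lem:PresentationUpperPowerLoc}; one checks directly that this relation satisfies (R), (M), (T), contains the three displayed axioms, and is the least such.

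For the first claim I would apply Lemma~\ref{prop:IndGenContEnt} with $\entails_0$ the displayed axioms, verifying conditions \ref{prop:IndGenContEnt1} and \ref{prop:IndGenContEnt2} scheme by scheme. The order axiom $a \entails_L b$ (for $a \leq b$) reduces to monotonicity of $\prox$. The axiom $0 \entails_L \emptyset$ needs $0 \prox 0$, which holds because $\downset_{\prox}0$ is an ideal by \ref{def:ProximityRelationVee} and hence contains the empty join $0$; clause \ref{def:proximity0} also enters here. The substantive case is the join axiom $a \vee b \entails_L a,b$: for \ref{prop:IndGenContEnt1} I would feed $c \prox a \vee b$ into \ref{def:proximityJ} to split off $a',b'$ with $c \leq a' \vee b'$, $a' \prox a$, $b' \prox b$, so that $\{a',b'\}$ provides the required $\prox_L$-interpolant; for \ref{prop:IndGenContEnt2} I would, starting from $a \prox y_a$ and $b \prox y_b$, use that $\upset_{\prox}a$ and $\upset_{\prox}b$ are filters (\ref{def:ProximityRelationWedge}) and $\downset_{\prox}(y_a \vee y_b)$ an ideal (\ref{def:ProximityRelationVee}) to conclude $a \vee b \prox y_a \vee y_b$, so that $\{y_a \vee y_b\}$ serves as the required $A'$.

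For the second claim I would identify the interpretations $g \colon S \to \Frame{Y}$ of the theory $T(S,\entails_L,\prox)$ with the suplattice interpretations of $(S,\prox)$. Unfolding the first scheme of \eqref{eq:SContEntGeo} against $\entails_L$ yields $g(0) = \bot$ (from $0 \entails_L \emptyset$), $g(a \vee b) \leq g(a) \vee g(b)$ (from $a \vee b \entails_L a,b$), and order-preservation (from the $a \leq b$ instances); combined with the reverse inequality for $a \vee b$, which is monotonicity, this says exactly that $g$ preserves finite joins, while the two remaining schemes of \eqref{eq:SContEntGeo} force the roundedness $g(a) = \bigvee_{b \prox a} g(b)$ precisely as for a dcpo interpretation. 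The suplattice case of Proposition~\ref{prop:BijInterpretationScottCont} then lets the universal interpretation extend along $i_S \colon S \to \RIdeals{S}$ to a suplattice homomorphism $i_L \colon \RIdeals{S} \to \Frame{\Spec{\Lower{S}}}$; transporting the universal property of the presentation through this bijection --- a suplattice homomorphism $f \colon \RIdeals{S} \to \Frame{Y}$ restricts to the interpretation $f \circ i_S$, which factors uniquely through the presented frame --- shows that $i_L$ satisfies Definition~\ref{def:LowerPower}. This is the exact dual of the argument behind Proposition~\ref{thm:ScottOfDualIsUpper}. The closing sentence is then immediate from Theorem~\ref{thm:ContEntPresentsSKL}.

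I expect the main obstacle to be condition \ref{prop:IndGenContEnt2} for the join axiom: from targets $y_a, y_b$ approximated from below by $a, b$ one must manufacture a single element sitting $\prox$-above $a \vee b$ while remaining $\entails$-below the given $C$, and it is here that the $\vee$-strong interaction of $\prox$ with finite joins, encoded in the ideal and filter clauses \ref{def:ProximityRelationVee} and \ref{def:ProximityRelationWedge}, is indispensable.
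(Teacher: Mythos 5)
Your proposal is correct and takes essentially the same route as the paper: the paper's proof is the one-line ``immediate from the suplattice version of Proposition~\ref{prop:BijInterpretationScottCont}'', with the verification of Lemma~\ref{prop:IndGenContEnt} for $\Lower{S}$ left implicit (as in the proof of Lemma~\ref{lem:PresentationUpperPowerLoc}), and you simply fill in those details — the explicit form of $\entails_L$, the case analysis using \ref{def:proximity0}, \ref{def:proximityJ} and the ideal/filter clauses, and the identification of interpretations of $T(S,\entails_L,\prox)$ with suplattice interpretations. All the individual verifications you sketch are sound.
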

\begin{proof}
  Immediate from the suplattice version of Proposition \ref{prop:BijInterpretationScottCont}.
\end{proof}
Note that
  $
  A \entails_{L} B \leftrightarrow \exists a \in A \left( a \leq
  \medvee B \right).
  $
The construction $\Lower{S}$ extends to 
a functor
$\LowerFunc \colon \SProxLat \to \SContEnt$, which sends each
join-preserving proximity relation $r \colon (S, \prox) \to (S',\prox')$
to a join-preserving proximity map $\Lower{r} \colon \Lower{S} \to
\Lower{S'}$ defined by
\[
  A \mathrel{\Lower{r}} B 
  \defeqiv 
  \exists a \in A \left( a  \mathrel{r} \medvee B\right).
\]
\begin{theorem}
  \label{thm:DualLowerUppwer}
  For any strong proximity lattice $S$, we have
  \begin{enumerate}
    \item\label{thm:DualLowerUppwer1}
      $\DeGroot{\Upper{S}} \cong \Lower{\DeGroot{S}}$
      and 
      $\DeGroot{\Lower{S}} \cong \Upper{\DeGroot{S}}$,

    \item\label{thm:DualLowerUppwer2}
      $\ScottFunc(\DeGroot{S}) \cong \Upper{S}$
      and
      $\DeGroot{\ScottFunc(S)} \cong \Lower{S}$.
  \end{enumerate}
\end{theorem}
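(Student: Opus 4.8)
The plan is to read all four identifications off the explicit axiomatic presentations supplied by Lemmas~\ref{lem:PresentationUpperPowerLoc} and~\ref{lem:PresentationLowerPowerLoc}. Each of $\ScottFunc(S)$, $\Upper{S}$, and $\Lower{S}$ is a strong continuous entailment relation carried by the underlying set of $S$, whose entailment is generated by three simple axioms, and whose proximity is $\proxop$ for $\ScottFunc(S)$ and $\prox$ for $\Upper{S}$ and $\Lower{S}$. Since in every case both sides of the claimed isomorphism are carried by the same set $S$, it suffices to check that their generating axioms and their proximity relations coincide; the two strong continuous entailment relations are then literally the same object, hence a fortiori isomorphic in $\SContEnt$.

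The engine of the computation is Lemma~\ref{lem:GenEntOp}: dualising an entailment relation generated by $\entails_0$ yields the relation generated by the reversed axioms $\DualLat{\entails_0}$. Thus forming the dual $\DeGroot{(\cdot)}$ of a strong continuous entailment relation reverses each generating axiom and replaces $\prox$ by $\proxop$, whereas passing to the lattice dual $\DualLat{S}$ interchanges $0 \leftrightarrow 1$ and $\vee \leftrightarrow \wedge$ and reverses $\leq$. The entire argument consists in playing these two operations off against each other.

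I would carry out $\DeGroot{\Upper{S}} \cong \Lower{\DeGroot{S}}$ in full as the representative case. By Lemma~\ref{lem:GenEntOp} the relation of $\DeGroot{\Upper{S}} = (S, \DualLat{\entails_U}, \proxop)$ is generated by the reversals of the axioms of $\entails_U$, namely $1 \DualLat{\entails_U}$, then $a \wedge b \DualLat{\entails_U} a, b$, and $a \DualLat{\entails_U} b$ whenever $b \leq a$. On the other side $\DeGroot{S} = (\DualLat{S}, \proxop)$, so reading the three generating axioms of $\LowerFunc$ in the dual lattice (where $0$ is $1$, $\vee$ is $\wedge$, and $\leq$ is reversed) produces exactly $1 \entails_L$, then $a \wedge b \entails_L a, b$, and $a \entails_L b$ whenever $b \leq a$; moreover $\LowerFunc$ transports the proximity unchanged, so it is again $\proxop$. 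The two generating lists and the two proximities match term by term, so the objects coincide. The remaining three identifications follow by the same comparison; for item~\ref{thm:DualLowerUppwer2} one instead compares $\ScottFunc(\DeGroot{S})$ with $\Upper{S}$ and dualises $\ScottFunc(S)$ to recover $\Lower{S}$.

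The calculation is routine, and the only delicate point is the reconciliation of the proximities, since reversal enters through several channels: $\DeGroot{(\cdot)}$ flips $\prox$ to $\proxop$, the functors $\UpperFunc$ and $\LowerFunc$ leave the proximity fixed, and $\ScottFunc$ already emits the opposite proximity. I expect the main care to lie in verifying that in each of the four cases these reversals compose so that both sides carry the same proximity; once that is confirmed, coincidence of the generating axioms is immediate from Lemma~\ref{lem:GenEntOp}.
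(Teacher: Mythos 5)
Your proposal is correct and takes essentially the same route as the paper, whose proof reads all four isomorphisms off Lemma~\ref{lem:GenEntOp} together with the presentations in Lemmas~\ref{lem:PresentationUpperPowerLoc} and~\ref{lem:PresentationLowerPowerLoc}; your worked case $\DeGroot{\Upper{S}} \cong \Lower{\DeGroot{S}}$ and the bookkeeping of the proximities match the intended argument exactly.
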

\begin{proof}
  Immediate from
  Lemma \ref{lem:GenEntOp}, 
  Lemma \ref{lem:PresentationUpperPowerLoc}, and
  Lemma \ref{lem:PresentationLowerPowerLoc}.
\end{proof}
  Item \ref{thm:DualLowerUppwer1} of Theorem \ref{thm:DualLowerUppwer}
  is known: Vickers gave a localic proof using entailment systems
  \cite[Theorem 54]{VickersEntailmentSystem}, and Goubault-Larrecq
  proved the corresponding result for stably compact
  spaces~\cite[Theorem 3.1]{GoubaultLarrecq-ModelofChoice}.  It is
  notable, however, that our proof is a simple analysis of axioms of
  entailment relations.

In the following, compositions such as
$\Upper{\ScottFunc(S)}$ should be read as $\Upper{F(\ScottFunc(S))}$, where
$F \colon \SContEnt \to \SProxLat$ is the functor establishing the equivalence of
the two categories (see Remark~\ref{rem:EquivSContEntSProxLat}).
\begin{proposition}
  \label{prop:DualLowerUppwer}
  For any strong proximity lattice $S$, we have
  \begin{enumerate}
    \item\label{prop:DualLowerUppwer1} $\Upper{\ScottFunc(S)} \cong \ScottFunc(\Lower{S})$,
    \item\label{prop:DualLowerUppwer2} $\Upper{\Lower{S}} \cong \ScottFunc(\ScottFunc(S))$.
  \end{enumerate}
\end{proposition}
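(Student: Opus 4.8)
The plan is to deduce both isomorphisms formally from Theorem~\ref{thm:DualLowerUppwer}, without ever unfolding the iterated generating axioms; a direct computation would be awkward, since the underlying lattice of a composite such as $F(\ScottFunc(S))$ lives on $\Fin{\Fin{S}}$. The only ingredients are the four object isomorphisms of Theorem~\ref{thm:DualLowerUppwer}---in particular $\DeGroot{\Lower{S}} \cong \Upper{\DeGroot{S}}$ and $\DeGroot{\ScottFunc(S)} \cong \Lower{S}$---together with the involutivity $\DeGroot{(\DeGroot{S})} = S$, which holds strictly because dualising the lattice structure and the proximity twice returns the original data. From $\ScottFunc(\DeGroot{S}) \cong \Upper{S}$ (item~\ref{thm:DualLowerUppwer2}) I would first record the auxiliary identity $\ScottFunc(S) \cong \Upper{\DeGroot{S}}$, obtained by substituting $\DeGroot{S}$ for $S$ and using involutivity. (In fact this is an equality of presentations: feeding $\DeGroot{S} = (\DualLat{S}, \succ)$ into $\ScottFunc$ turns its generating axioms into precisely those generating $\entails_{U}$, with the same proximity $\prec$.)

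For item~\ref{prop:DualLowerUppwer1} I would show that both sides are isomorphic to $\Upper{\Upper{\DeGroot{S}}}$. Applying the functor $\UpperFunc$ to the auxiliary identity gives $\Upper{\ScottFunc(S)} \cong \Upper{\Upper{\DeGroot{S}}}$. On the other side, the auxiliary identity applied to $\Lower{S}$ gives $\ScottFunc(\Lower{S}) \cong \Upper{\DeGroot{\Lower{S}}}$, and applying $\UpperFunc$ to the isomorphism $\DeGroot{\Lower{S}} \cong \Upper{\DeGroot{S}}$ of item~\ref{thm:DualLowerUppwer1} yields $\Upper{\DeGroot{\Lower{S}}} \cong \Upper{\Upper{\DeGroot{S}}}$. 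Composing the two chains gives $\Upper{\ScottFunc(S)} \cong \ScottFunc(\Lower{S})$.

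For item~\ref{prop:DualLowerUppwer2} the argument is shorter: the auxiliary identity applied to $\ScottFunc(S)$ gives $\ScottFunc(\ScottFunc(S)) \cong \Upper{\DeGroot{\ScottFunc(S)}}$, and applying $\UpperFunc$ to $\DeGroot{\ScottFunc(S)} \cong \Lower{S}$ (item~\ref{thm:DualLowerUppwer2}) turns the right-hand side into $\Upper{\Lower{S}}$, which is exactly the left-hand side.

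The one point demanding care---and the main obstacle---is the bookkeeping forced by the convention (Remark~\ref{rem:EquivSContEntSProxLat}) that a composite like $\Upper{\ScottFunc(S)}$ abbreviates $\Upper{F(\ScottFunc(S))}$. One must confirm that the isomorphisms of Theorem~\ref{thm:DualLowerUppwer} are genuine isomorphisms in $\SContEnt$ (equivalently in $\SProxLat$), so that the functors $\UpperFunc$, $\LowerFunc$, $\ScottFunc$ and the dual $\DeGroot{(-)}$ transport them to isomorphisms of the composites. Since these are all functors, and functors of either variance preserve isomorphisms, the contravariance of $\ScottFunc$ and of $\DeGroot{(-)}$ causes no difficulty; only the pointwise isomorphisms are invoked, so no naturality statement is needed.
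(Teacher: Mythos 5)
Your proposal is correct and follows essentially the same route as the paper: both deduce the isomorphisms purely formally from Theorem~\ref{thm:DualLowerUppwer} by substitution, involutivity of $\DeGroot{(\cdot)}$, and functoriality (the paper's chain for item~\ref{prop:DualLowerUppwer1} is $\Upper{\ScottFunc(S)} \cong \Upper{\DeGroot{\Lower{S}}} \cong \ScottFunc(\Lower{S})$, which is your chain with the intermediate node $\Upper{\Upper{\DeGroot{S}}}$ contracted, and your argument for item~\ref{prop:DualLowerUppwer2} is exactly what the paper's ``similar'' abbreviates). The one place you understate the bookkeeping is in claiming that only functoriality is needed: to transport the auxiliary identity across composites such as $\DeGroot{\Lower{S}}$ one must also commute the equivalence $F$ past the two de Groot duals, i.e.\ identify $\DeGroot{F(X)}$ with $F(\DeGroot{X})$, which is not automatic but is the object part of Theorem~\ref{thm:ContEntCLosedtoSPxLatClosed} --- precisely the result the paper cites alongside Theorem~\ref{thm:DualLowerUppwer} in its proof.
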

\begin{proof}
  By Theorem \ref{thm:ContEntCLosedtoSPxLatClosed} and
  item \ref{thm:DualLowerUppwer2} of Theorem \ref{thm:DualLowerUppwer},
  we have
  \[
    \Upper{\ScottFunc(S)} 
    \cong
    \Upper{\DeGroot{\Lower{S}}}
    \cong
    \ScottFunc(\Lower{S}).
  \]
  The proof of item \ref{prop:DualLowerUppwer2} is similar.
\end{proof}

\subsubsection{Double powerlocale}
\begin{definition}\label{def:DoublePower}
  The \emph{double powerlocale} of a locale $X$ is a locale
  $\Double{X}$ together with a Scott continuous function
  $i_{D} \colon \Frame{X} \to \Frame{\Double{X}}$ such that
  for any Scott continuous function $f \colon \Frame{X} \to \Frame{Y}$
  to a locale $Y$, there exists a unique frame homomorphism
  $\overline{f} \colon \Frame{\Double{X}} \to \Frame{Y}$ such that
  $\overline{f} \circ i_{D} = f$.
\end{definition}
\begin{lemma}
  \label{lem:DoublePowerSContEnt}
  For any strong proximity lattice $(S, \prox)$,
  the double power locale of $\Spec{S}$
  can be presented by a strong continuous entailment relation 
  $\Double{S} = (S, \entails_{D}, \prox)$, where $\entails_{D}$ is
  generated by the following  axioms:
  \begin{align*}
    a \entails_{D} b \qquad (a \leq b).
  \end{align*}
\end{lemma}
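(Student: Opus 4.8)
The plan is to follow the same two-step pattern used for the upper and lower powerlocales (Lemmas \ref{lem:PresentationUpperPowerLoc} and \ref{lem:PresentationLowerPowerLoc}): first verify that $\Double{S} = (S, \entails_{D}, \prox)$ is genuinely a strong continuous entailment relation by appealing to Lemma \ref{prop:IndGenContEnt}, and then identify the locale it presents with the double powerlocale of $\Spec{S}$ through the Scott-continuous (dcpo) instance of Proposition \ref{prop:BijInterpretationScottCont}. Before anything else I would record the explicit shape of the generated relation, namely $A \entails_{D} B \iff \exists a \in A \, \exists b \in B \, (a \leq b)$; one checks directly that the displayed relation is reflexive, monotone and transitive and absorbs every generating axiom $a \entails_{D} b$ with $a \leq b$, and that it is contained in any entailment relation containing those axioms, so it is the smallest such.

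For the first step I would verify conditions \ref{prop:IndGenContEnt1} and \ref{prop:IndGenContEnt2} of Lemma \ref{prop:IndGenContEnt} directly on the generating axioms $a \entails_{D} b$ (with $a \leq b$), which are single-generator to single-generator. The only input needed is that $\prox$ is monotone in both arguments, i.e.\ $a' \leq a \prox b \leq b'$ implies $a' \prox b'$, which is exactly the ideal/filter conditions \ref{def:ProximityRelationVee} and \ref{def:ProximityRelationWedge}. Concretely, for \ref{prop:IndGenContEnt1}, from $C \prox_{U} \left\{ a \right\}$ pick $c \in C$ with $c \prox a$; then $c \prox b$ by upward closure, and $B' = \left\{ c \right\}$ witnesses $C \entails_{D} B' \prox_{L} \left\{ b \right\}$. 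Condition \ref{prop:IndGenContEnt2} is dual, using downward closure to pass from $b \prox c$ and $a \leq b$ to $a \prox c$. Together with idempotency of $\prox$ this gives that $(S, \entails_{D}, \prox)$ is a strong continuous entailment relation.

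For the second step I would show that the interpretations of the theory $T(S, \entails_{D}, \prox)$ in a locale $X$ are precisely the dcpo interpretations of $(S, \prox)$ in $X$. The axioms coming from $\entails_{D}$ reduce, on the generating instances $a \entails_{D} b$ with $a \leq b$, to order-preservation of $f$, and every other $\entails_{D}$-instance then follows since $\medwedge A \leq f(a) \leq f(b) \leq \medvee B$; the remaining two axioms $a \entails b$ (if $a \prox b$) and $a \entails \bigvee_{b \prox a} b$ together say exactly that $f(a) = \bigvee_{b \prox a} f(b)$ for order-preserving $f$ — one inequality is the roundedness axiom and the converse uses idempotency of $\prox$ to derive $f(a) \leq f(b)$ whenever $a \prox b$. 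Thus interpretations of $T(S,\entails_{D},\prox)$ coincide with dcpo interpretations. Transporting the universal interpretation $i_{T}$ through the Scott-continuous case of Proposition \ref{prop:BijInterpretationScottCont} yields a Scott continuous $i_{D} \colon \RIdeals{S} \to \Frame{\Space{T(S,\entails_{D},\prox)}}$ extending $i_{T}$ along $i_{S}$ of \eqref{eq:InjectionGenerator}. To finish I would check Definition \ref{def:DoublePower}: every Scott continuous $f \colon \RIdeals{S} \to \Frame{Y}$ restricts along $i_{S}$ to a dcpo interpretation, which factors through $i_{T}$ by a unique frame homomorphism $h$; then $h \circ i_{D} = f$ by uniqueness of Scott continuous extensions, and $h$ is the unique such frame homomorphism because $h' \circ i_{D} = f$ forces $h' \circ i_{T} = f \circ i_{S}$.

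I expect the main obstacle to be bookkeeping rather than conceptual: the delicate point is the matching of the three families of axioms of $T(S,\entails_{D},\prox)$ with the single roundedness equation of a dcpo interpretation, where idempotency of $\prox$ must be invoked to recover $a \prox b \Rightarrow f(a) \leq f(b)$ from roundedness. Once this equivalence of interpretations is in hand, the remainder is a direct transfer of the Scott-continuous universal property of $\RIdeals{S}$ along Proposition \ref{prop:BijInterpretationScottCont}.
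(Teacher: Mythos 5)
Your proposal is correct and follows exactly the route the paper intends: its entire proof reads ``By the dcpo version of Proposition \ref{prop:BijInterpretationScottCont}'', with the verification via Lemma \ref{prop:IndGenContEnt} left implicit as in Lemmas \ref{lem:PresentationUpperPowerLoc} and \ref{lem:PresentationLowerPowerLoc}; you have simply supplied the details. (One tiny remark: idempotency of $\prox$ is not actually needed to get $f(a) \leq f(b)$ from $a \prox b$ --- it follows directly from $f(b) = \bigvee_{c \prox b} f(c)$ since $a$ is one of the $c$ --- but this costs nothing.)
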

\begin{proof}
  By the dcpo version of Proposition \ref{prop:BijInterpretationScottCont}.
\end{proof}
Note that
  $
  A \entails_{D} B \leftrightarrow \exists a \in A \exists b \in B \left( a \leq b
  \right).
  $
The construction $(S, \prox) \mapsto \left( S, \entails_{D}, \prox \right)$
extends to a functor $\DoubleFunc \colon \SProxLat \to
\SContEnt$, which sends each join-preserving proximity relation
$r \colon (S, \prox) \to (S', \prox')$ to
a join-preserving proximity map  
$\Double{r} \colon (S, \entails_{D}, \prox) \to (S, \entails_{D}', \prox')$
defined by
\[
  A \mathrel{\Double{r}} B \defeqiv \exists a \in A \exists b \in B
  \left( a \mathrel{r} b \right).
\]
\begin{proposition}
  \label{prop:DoubleSelfDual}
  For any strong proximity lattice $S$, we have 
  \[
    \DeGroot{\Double{S}} \cong \Double{\DeGroot{S}}.
  \]
\end{proposition}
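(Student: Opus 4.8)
The plan is to reduce the claim $\DeGroot{\Double{S}} \cong \Double{\DeGroot{S}}$ to a purely syntactic observation about the generating axioms of the two entailment relations, exactly as was done for the lower and upper powerlocales in Theorem~\ref{thm:DualLowerUppwer}. The key structural tool is Lemma~\ref{lem:GenEntOp}: if $\entails$ is generated by a set $\entails_0$ of axioms, then the dual $\DualLat{\entails}$ is generated by $\DualLat{\entails_0}$, obtained by swapping the two sides of each axiom.

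First I would unfold both sides using Lemma~\ref{lem:DoublePowerSContEnt}. By definition, $\Double{S} = (S, \entails_D, \prox)$ where $\entails_D$ is generated by the axioms $a \entails_D b$ for $a \leq b$, and the dual $\DeGroot{\Double{S}}$ is then $(S, \DualLat{\entails_D}, \proxop)$. On the other side, $\DeGroot{S} = (\DualLat{S}, \proxop)$, so applying the double powerlocale construction yields $\Double{\DeGroot{S}} = (S, \entails_D', \proxop)$, where $\entails_D'$ is generated by the axioms $a \entails_D' b$ whenever $a \leq_{\DualLat{S}} b$, i.e.\ whenever $b \leq a$ in the original order on $S$. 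Both structures thus carry the relation $\proxop$ as their idempotent part, so it remains only to identify the two entailment relations $\DualLat{\entails_D}$ and $\entails_D'$.

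Next I would compare their generators. By Lemma~\ref{lem:GenEntOp}, $\DualLat{\entails_D}$ is generated by $\DualLat{(\entails_D)_0} = \{(b,a) \mid a \leq b\}$, i.e.\ by the axioms $b \mathrel{\DualLat{\entails_D}} a$ for $a \leq b$. Re-indexing, this is precisely the set of axioms $x \entails y$ with $y \leq x$. But this coincides with the generating set of $\entails_D'$, since $\entails_D'$ is generated by $a \entails_D' b$ for $b \leq a$ in $S$. Hence the two relations are generated by the same set of axioms on the same underlying set, so they are equal, and the identity on $S$ gives the desired isomorphism of strong continuous entailment relations. (One may alternatively verify equality directly from the explicit closed form $A \entails_D B \leftrightarrow \exists a \in A\, \exists b \in B\,(a \leq b)$ noted after Lemma~\ref{lem:DoublePowerSContEnt}: its dual swaps the roles of $A$ and $B$ and reverses $\leq$, which is visibly symmetric and yields the defining condition for $\entails_D'$.)

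I do not expect a genuine obstacle here; the result is essentially self-dual by construction. The only point requiring a little care is bookkeeping: one must keep straight that passing to $\DeGroot{S}$ reverses the lattice order, and check that the idempotent relations ($\proxop$ on both sides) genuinely match rather than merely the entailment parts. Since the double powerlocale axioms involve \emph{only} the order $\leq$ and no lattice operations, reversing the order and transposing each axiom land on the same family of pairs, making the symmetry transparent. The argument is therefore a short syntactic verification of the same flavour as the proof of Theorem~\ref{thm:DualLowerUppwer}, invoking Lemma~\ref{lem:GenEntOp} and Lemma~\ref{lem:DoublePowerSContEnt}.
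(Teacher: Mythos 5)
Your proposal is correct and matches the paper's proof, which simply cites Lemma~\ref{lem:GenEntOp} and Lemma~\ref{lem:DoublePowerSContEnt}; you have merely spelled out the syntactic bookkeeping that the paper leaves implicit. The observation that the generating axioms involve only the order $\leq$, so that transposing the axioms and reversing the order yield the same relation with $\proxop$ on both sides, is exactly the intended argument.
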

\begin{proof}
  Immediate from Lemma \ref{lem:GenEntOp} and Lemma
  \ref{lem:DoublePowerSContEnt}. 
\end{proof}

We prove some well-known characterisations of the double
powerlocale (see Proposition~\ref{prop:PreframeCoverage} and
Proposition~\ref{prop:DoubleScott}).
To this end, we begin with the construction of the lower powerlocale of a 
strong continuous entailment relation.

Given a strong continuous entailment relation $(S, \entails, \prox)$
define an entailment relation $\entails^{L}$ on
$\Fin{S}$ by the following axioms:
\begin{align*}
  A &\entails^{L} A_{0}, \dots, A_{n-1} &&
  (\text{if $\forall B \in \left\{ A_{i} \mid i < n \right\}^{*}
A \entails B$})
\end{align*}
Define an idempotent relation $\prox^{L}$ on $\Fin{S}$ by
\[
  A \prox^{L} B \defeqiv A \prox_{U} B.
\]
Then $(\Fin{S}, \entails^L, \prox^{L})$ is a strong continuous
entailment relation by Lemma \ref{prop:IndGenContEnt}.
\begin{lemma}
  \label{lem:LowerPowEntSys}
  For any $\mathcal{U}, \mathcal{V} \in \Fin{\Fin{S}}$, we have
  \[
    \mathcal{U} \ll_{\entails^L} \mathcal{V}
    \iff
    \exists A \in \mathcal{U} \forall B \in \mathcal{V}^{*}
    \left( A \ll_{\entails} B \right).
  \]
\end{lemma}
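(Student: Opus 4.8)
The plan is to compute the generated entailment relation $\entails^{L}$ explicitly first, and then unfold the definition of $\ll_{\entails^{L}}$ on top of it. For the first step I would prove that for all $\mathcal{U}, \mathcal{V} \in \Fin{\Fin{S}}$,
\[
  \mathcal{U} \entails^{L} \mathcal{V}
  \iff
  \exists A \in \mathcal{U} \, \forall B \in \mathcal{V}^{*} \left( A \entails B \right).
\]
Writing $R$ for the relation on the right, I would check directly that $R$ is an entailment relation containing the generating axioms and contained in every such relation. Reflexivity and monotonicity are immediate; monotonicity in the right argument uses that every element of $(\mathcal{V} \cup \mathcal{V}'')^{*}$ contains some $B \in \mathcal{V}^{*}$, so $A \entails B$ lifts by (M). Containment of the axioms and minimality follow from the definition of $R$ together with (M). The only delicate point is transitivity: given $R(\mathcal{U}, \mathcal{V} \cup \left\{ W \right\})$ and $R(\mathcal{U} \cup \left\{ W \right\}, \mathcal{V})$ with $W \in \Fin{S}$, if the witness of the second lies in $\mathcal{U}$ we are done; otherwise the witness is $W$, so $W \entails B$ for all $B \in \mathcal{V}^{*}$, and from the first premise (taking the summand $W$ in the $*$-construction) one gets $A \entails B, W$ for each such $B$, whence $A \entails B$ by cutting the finite set $W$ through iterated (T). This is the fiddliest part of the argument.

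Next I would unfold $\ll_{\entails^{L}} = \entails^{L} \circ (\prox^{L})_{U}$, using $\prox^{L} = \prox_{U}$. By definition $\mathcal{U} \ll_{\entails^{L}} \mathcal{V}$ holds iff there is $\mathcal{W} \in \Fin{\Fin{S}}$ with $\mathcal{U} \mathbin{(\prox^{L})_{U}} \mathcal{W}$ and $\mathcal{W} \entails^{L} \mathcal{V}$. Feeding in the characterization of $\entails^{L}$, the latter supplies some $W_{0} \in \mathcal{W}$ with $W_{0} \entails B$ for all $B \in \mathcal{V}^{*}$, and the former supplies $A \in \mathcal{U}$ with $A \prox_{U} W_{0}$; conversely, given such $A$ and $W_{0}$ one takes $\mathcal{W} = \left\{ W_{0} \right\}$. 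Hence the relation collapses to
\[
  \mathcal{U} \ll_{\entails^{L}} \mathcal{V}
  \iff
  \exists A \in \mathcal{U} \, \exists W \in \Fin{S}
  \left( A \prox_{U} W \amp \forall B \in \mathcal{V}^{*} \left( W \entails B \right) \right).
\]

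It then remains to match this with the claimed right-hand side, for a fixed $A \in \mathcal{U}$. Using $A \ll_{\entails} B \iff \exists C \left( A \prox_{U} C \amp C \entails B \right)$, the forward direction is immediate by taking $C = W$ uniformly. The reverse direction is the crux: from a family $(C_{B})_{B \in \mathcal{V}^{*}}$ with $A \prox_{U} C_{B}$ and $C_{B} \entails B$ I would produce the single witness $W \defeql \bigcup_{B \in \mathcal{V}^{*}} C_{B} \in \Fin{S}$; then $A \prox_{U} W$ because every element of $W$ lies in some $C_{B}$, and $W \entails B$ for each $B$ since $C_{B} \subseteq W$ and (M). This merging of the pointwise interpolants into one uniform interpolant is where the content lies, though I expect it to need only the explicit form $\ll_{\entails} = \entails \circ \prox_{U}$ and monotonicity, rather than a fresh appeal to \eqref{eq:ContEnt} (which enters only upstream, to guarantee that $(\Fin{S}, \entails^{L}, \prox^{L})$ is a strong continuous entailment relation). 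Combining the three equivalences, and noting that the same $A$ witnesses both sides, yields the stated characterization of $\ll_{\entails^{L}}$.
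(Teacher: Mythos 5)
Your proposal is correct and follows essentially the same route as the paper's proof: both first establish the explicit characterisation $\mathcal{U} \entails^{L} \mathcal{V} \leftrightarrow \exists A \in \mathcal{U}\,\forall B \in \mathcal{V}^{*}\left( A \entails B \right)$ and then, for the nontrivial direction, merge the pointwise interpolants into the single witness $C = \bigcup_{B \in \mathcal{V}^{*}} C_{B}$ with $A \prox_{U} C \entails B$. The only cosmetic difference is that you verify the characterisation by exhibiting the right-hand relation as the least entailment relation containing the axioms, whereas the paper phrases it as an induction on $\entails^{L}$; in your cut step, note that it is cleanest to use the elements $B \cup \left\{ c \right\}$ of $(\mathcal{V} \cup \left\{ W \right\})^{*}$ for each $c \in W$, rather than $B \cup W$ alone, when iterating $(\mathrm{T})$.
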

\begin{proof}
  By induction on $\entails^L$, one can show that 
  \begin{equation}
    \label{eq:vdashL}
    \mathcal{U} \entails^{L} \mathcal{V}
    \defeqiv \exists  A \in \mathcal{U} \forall B \in
    \mathcal{V}^{*}\left(  A \entails  B\right).
  \end{equation}
  Then, the direction $\Rightarrow$ is obvious from \eqref{eq:vdashL}.
  Conversely, suppose that 
  there exists 
  $A \in
  \mathcal{U}$ such that $\forall B \in \mathcal{V}^{*} \left( A
  \ll_{\entails} B \right)$.
  Then, for each $B \in \mathcal{V}^{*}$, there exists $C_{B}$ such
  that $A \prox_{U} C_{B} \entails B$. Put $C =
  \bigcup_{B \in \mathcal{U}^{*}} C_{B}$. Then $A \prox_{U} C$
  and $C \entails B$ for all $B \in \mathcal{V}^{*}$. Hence 
  $\mathcal{U} \mathrel{(\prox^{L})_{U}} \left\{ C \right\} \entails^{L} \mathcal{V}$
  and so $\mathcal{U} \ll_{\entails^{L}} \mathcal{V}$.
\end{proof}
\begin{lemma}
  \label{lem:LowerPowerSContEnt}
  The strong continuous entailment relation $(\Fin{S}, \entails^L, \prox^{L})$
  presents the lower powerlocale of $\Spec{F(S, \entails, \prox)}$.
\end{lemma}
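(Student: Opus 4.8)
The plan is to verify directly that $(\Fin{S}, \entails^{L}, \prox^{L})$ presents a locale satisfying the universal property of the lower powerlocale (Definition~\ref{def:LowerPower}). Write $T = F(S,\entails,\prox)$ and $X = \Spec{T}$, so that $\Frame{X} = \RIdeals{T}$ and the underlying lattice of $T$ is $\EnttoLat{S,\entails}$, in which the element $\left\{ A \right\}$ represents $\medwedge A$ and $\mathcal{U} = \bigvee_{A \in \mathcal{U}} \left\{ A \right\}$. By Proposition~\ref{prop:AgreeContEntProxLat2} the relation $(\Fin{S}, \entails^{L}, \prox^{L})$ presents $P = \Spec{F(\Fin{S}, \entails^{L}, \prox^{L})}$, so it suffices to exhibit, for an arbitrary locale $Y$, a natural isomorphism between the frame homomorphisms $\Frame{P} \to \Frame{Y}$ and the suplattice homomorphisms $\Frame{X} \to \Frame{Y}$.

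First I would reduce both sides to interpretations. Frame homomorphisms $\Frame{P} \to \Frame{Y}$ correspond to interpretations $g \colon \Fin{S} \to \Frame{Y}$ of $(\Fin{S}, \entails^{L}, \prox^{L})$, while by the suplattice case of Proposition~\ref{prop:BijInterpretationScottCont} the suplattice homomorphisms $\Frame{X} \to \Frame{Y}$ correspond to suplattice interpretations $f \colon T \to \Frame{Y}$. The heart of the proof is then a natural bijection between these two kinds of interpretation, given by $g(A) \defeql f(\left\{ A \right\})$ in one direction and $f(\mathcal{U}) \defeql \bigvee_{A \in \mathcal{U}} g(A)$ in the other; these are mutually inverse because $f$ preserves finite joins and $\mathcal{U} = \bigvee_{A \in \mathcal{U}} \left\{ A \right\}$.

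The verification that $g$ and $f$ are genuine interpretations is mostly routine. The entailment axioms $\medwedge \mathcal{A} \entails \medvee \mathcal{B}$ (for $\mathcal{A} \entails^{L} \mathcal{B}$) translate into one another using the characterisation $\mathcal{U} \entails^{L} \mathcal{V} \iff \exists A \in \mathcal{U}\, \forall B \in \mathcal{V}^{*} (A \entails B)$ from the proof of Lemma~\ref{lem:LowerPowEntSys}; the axiom $A \entails A'$ for $A \prox_{U} A'$ corresponds to the order relation $\medwedge A \leq \medwedge A'$ in $\Frame{X}$, which holds because $A \prox_{U} A'$ forces $\medwedge A \leq a'$ for every $a' \in A'$; and preservation of finite joins is immediate from the defining formula for $f$.

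The one delicate point, which I expect to be the main obstacle, is the roundedness axiom $g(A) = \bigvee_{A' \prox_{U} A} g(A')$. The key is the identity
\begin{equation*}
  \medwedge A = \bigvee_{A' \prox_{U} A} \medwedge A' \qquad \text{in } \Frame{X},
\end{equation*}
which follows from the roundedness axiom $a \entails \bigvee_{b \prox a} b$ of $(S,\entails,\prox)$ (available by Proposition~\ref{prop:AgreeContEntProxLat2}) by distributing the finite meet $\medwedge_{a \in A}$ over the joins $a = \bigvee_{a' \prox a} a'$: each choice $a' \prox a$ assembles into some $A' \prox_{U} A$. Granting this identity, roundedness of $g$ falls out by applying the join-preserving extension $\overline{f} \colon \Frame{X} \to \Frame{Y}$ of $f$, since $g(A) = \overline{f}(\medwedge A)$; conversely, roundedness of $f$ is recovered from that of $g$ together with the fact that $A' \prox_{U} A$ implies $\left\{ A' \right\} \ApproxExt{\ll_{\entails}} \left\{ A \right\}$. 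Finally, chaining the three natural bijections and applying the universal property produces the universal suplattice homomorphism $i_{L} \colon \Frame{X} \to \Frame{P}$, identifying $P$ with the lower powerlocale of $X$; this also matches the presentation $\Lower{T}$ of Lemma~\ref{lem:PresentationLowerPowerLoc}.
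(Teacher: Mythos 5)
Your argument is correct, but it takes a genuinely different route from the paper. The paper's proof is a one-line reduction to an external result: using the characterisation of $\ll_{\entails^{L}}$ in Lemma~\ref{lem:LowerPowEntSys}, it observes that the associated entailment system $(\Fin{S}, \ll_{\entails^{L}})$ coincides with Vickers's lower powerlocale construction for entailment systems (Theorem~53 of \cite{VickersEntailmentSystem}), and stops there. You instead verify the universal property of Definition~\ref{def:LowerPower} directly, chaining Proposition~\ref{prop:AgreeContEntProxLat} and the suplattice case of Proposition~\ref{prop:BijInterpretationScottCont} through the bijection $g(A) = f(\{A\})$, $f(\mathcal{U}) = \bigvee_{A \in \mathcal{U}} g(A)$. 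This buys a self-contained argument that never leaves the paper, and it correctly isolates the crux, namely the identity $\medwedge_{a \in A} j_{S}(a) = \bigvee_{A' \prox_{U} A} \medwedge_{a' \in A'} j_{S}(a')$ in $\Frame{X}$ (which indeed already appears, for the same reason, in the proof of Proposition~\ref{prop:AgreeContEntProxLat}); the cost is the bookkeeping you rightly label routine. One point deserves more care than you give it: for roundedness of $f$ you only argue the inequality $f(\mathcal{U}) \leq \bigvee_{\mathcal{V} \ApproxExt{\ll_{\entails}} \mathcal{U}} f(\mathcal{V})$ via $A' \prox_{U} A \Rightarrow \{A'\} \ApproxExt{\ll_{\entails}} \{A\}$. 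The reverse inequality is not automatic from monotonicity, because $\ll_{\entails}$ need not be contained in $\entails$ in an abstract strong continuous entailment relation; it does follow, but only by unwinding $\mathcal{V} \ApproxExt{\ll_{\entails}} \mathcal{U}$ into $\{A\} \mathrel{(\prox^{L})_{U}} \mathcal{W} \entails^{L} \mathcal{U}$ for each $A \in \mathcal{V}$ and using the axioms $g$ satisfies for $\prox^{L}$ and $\entails^{L}$. With that supplied, the proof is complete.
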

\begin{proof}
From the characterisation of $\ll_{\entails^L}$ in Lemma
\ref{lem:LowerPowEntSys}, the entailment system $(\Fin{S},
\ll_{\entails^{L}})$ coincides with the construction of the lower
powerlocale of the entailment system  ${(S, \ll_{\entails})}$ in
Vickers \cite[Theorem 53]{VickersEntailmentSystem}. 
\end{proof}
\begin{corollary}
  \label{lem:UpperPowerSContEnt}
The upper powerlocale of $\Spec{F(S,
\entails, \prox)}$ can be presented by a strong continuous entailment relation
$(\Fin{S}, \entails^{U}, \prox^{U})$ defined by
\[
  \begin{aligned}
    \mathcal{U} \entails^{U} \mathcal{V}
    &\defeqiv \exists B \in \mathcal{V} \forall A \in
    \mathcal{U}^{*} \left( A \entails B \right), \\
    A \prox^{U} B 
    &\defeqiv
    A \prox_{L} B.
  \end{aligned}
\]
\end{corollary}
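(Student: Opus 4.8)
The plan is to present the upper powerlocale as the de Groot dual of the lower powerlocale of the de Groot dual, so that the desired presentation is obtained by dualising Lemma~\ref{lem:LowerPowerSContEnt}. Write $X = \Spec{F(S, \entails, \prox)}$. Transported to locales, item~\ref{thm:DualLowerUppwer1} of Theorem~\ref{thm:DualLowerUppwer} gives $\DeGroot{\Lower{Y}} \cong \Upper{\DeGroot{Y}}$ for every stably compact locale $Y$; taking $Y = \DeGroot{X}$ and using that the de Groot dual is involutive yields $\Upper{X} \cong \DeGroot{\Lower{\DeGroot{X}}}$.

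First I would pass to the dual strong continuous entailment relation $\DeGroot{(S, \entails, \prox)} = (S, \DualLat{\entails}, \proxop)$, which presents $\DeGroot{X}$ by the compatibility of $F$ with the dualities (Theorem~\ref{thm:ContEntCLosedtoSPxLatClosed}) together with Theorem~\ref{thm:DeGrootDual}. Applying Lemma~\ref{lem:LowerPowerSContEnt} to this dual then presents $\Lower{\DeGroot{X}}$ by a strong continuous entailment relation on $\Fin{S}$ whose entailment $(\DualLat{\entails})^{L}$ and proximity $(\proxop)^{L}$ are built from $\DualLat{\entails}$ and $\proxop$ exactly as $\entails^{L}$ and $\prox^{L}$ are built from $\entails$ and $\prox$. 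Taking the de Groot dual of this presentation, which by definition replaces each relation by its relational opposite, presents $\DeGroot{\Lower{\DeGroot{X}}} \cong \Upper{X}$; it therefore remains to check that the two opposites coincide with the stated $\entails^{U}$ and $\prox^{U}$.

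For the entailment part, the characterisation~\eqref{eq:vdashL} applied to $\DualLat{\entails}$ gives $\mathcal{U} \mathrel{(\DualLat{\entails})^{L}} \mathcal{V} \iff \exists A \in \mathcal{U}\,\forall B \in \mathcal{V}^{*}\,(B \entails A)$, whose relational opposite relates $\mathcal{U}$ to $\mathcal{V}$ exactly when $\exists B \in \mathcal{V}\,\forall A \in \mathcal{U}^{*}\,(A \entails B)$, precisely $\entails^{U}$. For the proximity part, unfolding $A \mathrel{(\proxop)^{L}} B \iff \forall b \in B\,\exists a \in A\,(b \prox a)$ and taking the opposite gives $\forall a \in A\,\exists b \in B\,(a \prox b)$, i.e.\ $A \prox_{L} B$, which is $\prox^{U}$. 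I expect the main obstacle to be exactly this last bookkeeping: the two dualisations (of $\entails$ via \eqref{eq:vdashL} and of $\prox$ via the interplay of $\proxop$, $\prox_{U}$, and $\prox_{L}$) must be tracked carefully and independently, and one must confirm that the relational opposites line up with the asymmetric definitions of $\entails^{U}$ and $\prox^{U}$. Everything else reduces to a direct appeal to Theorem~\ref{thm:DualLowerUppwer}, Lemma~\ref{lem:LowerPowerSContEnt}, and the definition of the dual of a strong continuous entailment relation.
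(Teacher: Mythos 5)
Your proposal is correct and follows essentially the same route as the paper: both derive the statement from $\Upper{S} \cong \DeGroot{\Lower{\DeGroot{S}}}$ (Theorem~\ref{thm:DualLowerUppwer}) and then unfold the lower-powerlocale presentation of Lemma~\ref{lem:LowerPowerSContEnt} applied to the dual, using the characterisation in Lemma~\ref{lem:LowerPowEntSys} (equivalently \eqref{eq:vdashL}) to identify the resulting relational opposites with $\entails^{U}$ and $\prox^{U}$. Your explicit bookkeeping of the two dualisations matches what the paper leaves implicit in the phrase ``unfolding the definition of $\DeGroot{\Lower{\DeGroot{S}}}$''.
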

\begin{proof}
We have $\DeGroot{\Lower{\DeGroot{S}}} \cong \Upper{S}$ by item
\ref{thm:DualLowerUppwer2} of Theorem \ref{thm:DualLowerUppwer}. The
corollary follows by unfolding the definition of
$\DeGroot{\Lower{\DeGroot{S}}}$ using Lemma \ref{lem:LowerPowEntSys}.
\end{proof}

For any strong proximity lattice $(S, \prox)$, define a preorder
$\leq^{\vee}$ on $\Fin{S}$ by 
\[
  A \leq^{\vee} B \defeqiv \medvee A \leq \medvee B.
\]
Let $S^{\vee}$ be the set $\Fin{S}$ equipped with the equality
determined by $\leq^{\vee}$.
Define
a lattice structure $S^{\vee} \defeql (S^{\vee}, 0^{\vee},
\vee^{\vee}, 1^{\vee}, \wedge^{\vee})$
as in \eqref{eq:LatticeStructVee} and an idempotent relation $\prox^{\vee}$ on $S^{\vee}$ by 
\[
  A \prox^{\vee} B \defeqiv \medvee A \prox \medvee B.
\]
Then, $(S^{\vee}, \prox^{\vee})$
is a strong proximity lattice, which is isomorphic to $(S, \prox)$ via
proximity relations $r \colon (S, \prox) \to (S^{\vee}, \prox^{\vee})$
and $s \colon (S^{\vee}, \prox^{\vee}) \to (S, \prox)$ defined
by
\[
    a \mathrel{r} A \defeqiv a \prox \medvee A, \qquad\quad
    A \mathrel{s} a \defeqiv \medvee A \prox a.
\]

The following is a special case of Vickers
\cite{DoublePowLocExp}, which holds for more general context of locally compact
locale.\footnote{In \cite{DoublePowLocExp},
$\ScottFunc(S)$ is expressed as the exponential over the Sierpinski
locale.} 
\begin{proposition}
  \label{prop:PreframeCoverage}
  For any strong proximity lattice $(S, \prox)$, we have
  \[
    \Double{S} \cong \ScottFunc(\ScottFunc(S)).
  \]
\end{proposition}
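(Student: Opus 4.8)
The plan is to reduce the statement to the decomposition of the double powerlocale as an upper powerlocale of a lower powerlocale. By item~\ref{prop:DualLowerUppwer2} of Proposition~\ref{prop:DualLowerUppwer} we already know $\ScottFunc(\ScottFunc(S)) \cong \Upper{\Lower{S}}$, so it suffices to prove $\Double{S} \cong \Upper{\Lower{S}}$; equivalently, that $\Upper{\Lower{\Spec{S}}}$ carries the universal property of the double powerlocale of $\Spec{S}$ (Definition~\ref{def:DoublePower}). This turns the whole problem into the well-known identity $\DoubleFunc \cong \UpperFunc \circ \LowerFunc$, specialised to the stably compact setting.

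Since the preceding material sets up exactly the right toolkit, I would carry this out combinatorially. First present $\Lower{S}$ on the generating set $\Fin{S}$ by $(\Fin{S}, \entails^{L}, \prox^{L})$ (Lemma~\ref{lem:LowerPowerSContEnt}), then present $\Upper{\Lower{S}}$ on $\Fin{\Fin{S}}$ by feeding $(\Fin{S}, \entails^{L}, \prox^{L})$ into Corollary~\ref{lem:UpperPowerSContEnt}. To reconcile the nested generators $\Fin{\Fin{S}}$ with the single generators $S$ of $\Double{S}$, I would use the join-identification $S^{\vee}$ (which identifies $\Fin{S}$ with $S$ via $A \mapsto \medvee A$ and yields $(S^{\vee}, \prox^{\vee}) \cong (S, \prox)$). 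After this collapse one computes the generated entailment relation explicitly and matches it with $\entails_{D}$ of Lemma~\ref{lem:DoublePowerSContEnt}, for which $A \entails_{D} B \iff \exists a \in A\, \exists b \in B\,(a \leq b)$, thereby exhibiting an isomorphism of the underlying strong proximity lattices (via the equivalence $F$) and hence of the presented locales.

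As a conceptual check on the combinatorics, I would also run the universal-property argument. Writing $i_{L} \colon \Frame{\Spec{S}} \to \Frame{\Lower{S}}$ for the suplattice unit (Definition~\ref{def:LowerPower}) and $i_{U} \colon \Frame{\Lower{S}} \to \Frame{\Upper{\Lower{S}}}$ for the preframe unit (Definition~\ref{def:UpperPower}), the composite $i_{D} \defeql i_{U} \circ i_{L}$ preserves directed joins and is thus Scott continuous, so it is a legitimate candidate unit. Given any Scott continuous $f \colon \Frame{\Spec{S}} \to \Frame{Y}$, I would factor it in two stages: produce a \emph{preframe} homomorphism $g \colon \Frame{\Lower{S}} \to \Frame{Y}$ with $g \circ i_{L} = f$, then extend $g$ uniquely to a frame homomorphism $\overline{f}$ with $\overline{f} \circ i_{U} = g$ by the universal property of the upper powerlocale, giving $\overline{f} \circ i_{D} = f$. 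Uniqueness is clean: any competitor $\overline{f}'$ with $\overline{f}' \circ i_{D} = f$ makes $\overline{f}' \circ i_{U}$ a preframe factorisation of $f$ through $i_{L}$, forcing $\overline{f}' \circ i_{U} = g$ and hence $\overline{f}' = \overline{f}$.

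The hard part, common to both routes, is the factorisation of a Scott continuous map as a preframe homomorphism through the suplattice unit $i_{L}$ — equivalently, showing that the only constraints surviving in $\Upper{\Lower{S}}$ are the order (dcpo) constraints of $\Double{S}$. Concretely, via Proposition~\ref{prop:BijInterpretationScottCont} a Scott continuous $f$ is the same data as a dcpo interpretation $f \circ i_{S} \colon S \to \Frame{Y}$ (order preserving and rounded, with $i_{S}$ as in \eqref{eq:InjectionGenerator}); using the presentation $\Lower{S} = (S, \entails_{L}, \prox)$ of Lemma~\ref{lem:PresentationLowerPowerLoc} one defines $g$ on generators and must check it respects $\entails_{L}$ and roundedness \emph{and preserves finite meets}. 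This last point — that passing to the lower and then the upper powerlocale frees up precisely the finite joins and finite meets, leaving exactly the Scott-continuous data — is where the real work lies, and it is the same fact that makes the $S^{\vee}$-collapse match $\entails_{D}$ in the combinatorial computation.
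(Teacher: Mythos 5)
Your first (combinatorial) route is essentially the paper's proof: it reduces to $\Double{S} \cong \Upper{\Lower{S}}$ via item~\ref{prop:DualLowerUppwer2} of Proposition~\ref{prop:DualLowerUppwer} and then matches the entailment relations through the $S^{\vee}$ collapse, exactly as the paper does, except that the paper presents $\Lower{S}$ directly on the generators $S$ via Lemma~\ref{lem:PresentationLowerPowerLoc} (avoiding the extra level of finite-subset nesting you incur by invoking Lemma~\ref{lem:LowerPowerSContEnt}) and then actually carries out the short explicit computation of $\ll_{\entails^{\mathit{UL}}}$ that you defer. The universal-property argument you sketch as a check is not needed in the paper's proof, which never has to factor a Scott continuous map through $i_{L}$: once both sides are written as entailment systems, the isomorphism is read off from the displayed formulas, so the ``hard part'' you flag does not arise on the combinatorial route.
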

\begin{proof}
  By item \ref{prop:DualLowerUppwer2} of Proposition
  \ref{prop:DualLowerUppwer}, it suffices to show that
  $\Double{S} \cong \Upper{\Lower{S}}$.
  By Lemma \ref{lem:PresentationLowerPowerLoc} and Corollary
  \ref{lem:UpperPowerSContEnt}, the locale $\Upper{\Lower{S}}$ is
  presented by a strong continuous entailment relation $(\Fin{S},
  \entails^{\mathit{UL}}, \prox^{\mathit{UL}})$ on $\Fin{S}$ defined by
    \begin{align*}
      \mathcal{U} \entails^{\mathit{UL}} \mathcal{V}
      &\defeqiv
      \exists B \in \mathcal{V} \forall A \in
      \mathcal{U}^{*} \exists a \in A \left( a \leq \medvee B
      \right)\\
      &\iff
      \exists B \in \mathcal{V} \exists A \in
      \mathcal{U}  \left( \medvee A \leq \medvee B \right), \\
      A \prox^{\mathit{UL}} B &\defeqiv A \prox_{L} B. \\
  \shortintertext{Thus}
      \mathcal{U} \ll_{\entails^{\mathit{UL}}} \mathcal{V}
      &\iff \exists A \in \mathcal{U} \exists B \exists C \in
      \mathcal{V} 
      \left( A \prox_{L} B \amp \medvee B \leq \medvee C \right) \\
      &\iff \exists A \in \mathcal{U} \exists C \in \mathcal{V}
      \left(\medvee A \prox \medvee C \right).
  \end{align*}
  As for the strong proximity lattice $(S^{\vee},
  \prox^{\vee})$ defined above,
  its double powerlocale
  $\Double{S^{\vee}} = (S^{\vee}, \entails^{\vee}, \prox^{\vee})$
  characterised in Lemma \ref{lem:DoublePowerSContEnt}
  satisfies
  \[
    \mathcal{U} \ll_{\entails^{\vee}} \mathcal{V}
    \iff 
    \exists A \in \mathcal{U} \exists B \in \mathcal{V}
    \left( \medvee A \prox \medvee B \right).
  \]
  Clearly, the entailment systems $(S^{\vee}, \ll_{\entails^{\vee}})$
  and $(\Fin{S}, \ll_{\entails^{\mathit{UL}}})$ are isomorphic.
  Since $\DoubleFunc$ is functorial and the embedding 
  $\SContEntP \hookrightarrow \Ent$ is faithful, we have
   $\Double{S} \cong \Double{S^{\vee}} \cong
   \Upper{\Lower{S}}$.
\end{proof}

\begin{corollary}
  \label{cor:DoubleScott}
  For any strong proximity lattice $S$,
  we have 
  \[
    \DeGroot{\ScottFunc(\ScottFunc(S))} \cong \ScottFunc(\ScottFunc(\DeGroot{S})).
  \]
\end{corollary}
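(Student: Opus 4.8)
The plan is to derive this corollary purely formally from the two structural facts already in hand, with no fresh computation on entailment relations. The two ingredients are Proposition~\ref{prop:PreframeCoverage}, which identifies the double powerlocale with the doubly-iterated Scott construction, $\Double{S} \cong \ScottFunc(\ScottFunc(S))$, and Proposition~\ref{prop:DoubleSelfDual}, which states that the double powerlocale is self-dual, $\DeGroot{\Double{S}} \cong \Double{\DeGroot{S}}$. Chaining these around the self-duality of $\DoubleFunc$ collapses the iterated Scott construction on one side into a double powerlocale, dualises, and expands it back out on the other side.

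Concretely, I would produce the chain
\begin{align*}
  \DeGroot{\ScottFunc(\ScottFunc(S))}
  &\cong \DeGroot{\Double{S}} \\
  &\cong \Double{\DeGroot{S}} \\
  &\cong \ScottFunc(\ScottFunc(\DeGroot{S})).
\end{align*}
The first isomorphism is Proposition~\ref{prop:PreframeCoverage} transported across the dual operation: since $\ScottFunc(\ScottFunc(S)) \cong \Double{S}$ and $\DeGroot{(\cdot)}$ is (the object part of) an isomorphism of categories, it preserves isomorphisms. The second isomorphism is exactly Proposition~\ref{prop:DoubleSelfDual}. The third is Proposition~\ref{prop:PreframeCoverage} again, now applied to $\DeGroot{S}$; this is legitimate because strong proximity lattices are closed under the duality of Definition~\ref{def:DualProxLat}, so $\DeGroot{S}$ is itself a strong proximity lattice and the hypothesis of the proposition is met.

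I do not anticipate a genuine obstacle: all the real work has been discharged in Proposition~\ref{prop:PreframeCoverage} and Proposition~\ref{prop:DoubleSelfDual}. The only points demanding a moment's care are the bookkeeping ones, namely that $\DeGroot{(\cdot)}$ (and, where needed, the functor $\ScottFunc$) respects isomorphisms so that the first link of the chain is valid, and that $\DeGroot{S}$ qualifies as a strong proximity lattice so that the proposition may be invoked a second time. Once these are noted, composing the three isomorphisms yields the claim.
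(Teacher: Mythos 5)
Your proof is correct and is exactly the paper's argument: the paper's own proof is the one-line citation ``By Proposition~\ref{prop:DoubleSelfDual} and Proposition~\ref{prop:PreframeCoverage}'', which unfolds to precisely the three-step chain you write. The bookkeeping points you flag (that $\DeGroot{(\cdot)}$ preserves isomorphisms and that $\DeGroot{S}$ is again a strong proximity lattice) are left implicit in the paper but are indeed the only things to check.
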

\begin{proof}
  By Proposition \ref{prop:DoubleSelfDual} and  Proposition \ref{prop:PreframeCoverage}.
\end{proof}

\begin{proposition}
  \label{prop:DoubleScott}
  For any strong proximity lattice $S$,
  we have 
  \begin{enumerate}
    \item\label{prop:DoubleScott1} $\Lower{\ScottFunc(S)} \cong \ScottFunc(\Upper{S})$,
    \item\label{prop:DoubleScott2} $\Lower{\Upper{S}} \cong
      \Double{S}$.
  \end{enumerate}
\end{proposition}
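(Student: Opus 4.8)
The plan is to derive both isomorphisms purely formally from the dualities and identities already established, rather than by unfolding the axioms of the composite entailment relations directly; the latter, in the style of the proof of Proposition~\ref{prop:PreframeCoverage}, would also work but is more laborious. The ingredients are Theorem~\ref{thm:DualLowerUppwer} (the de Groot dual swaps $\UpperFunc$ and $\LowerFunc$ and relates both to $\ScottFunc$), Corollary~\ref{cor:DoubleScott} (self-duality of the iterated construction $\ScottFunc(\ScottFunc(S))$), Proposition~\ref{prop:PreframeCoverage} ($\Double{S} \cong \ScottFunc(\ScottFunc(S))$), and the involutivity of the de Groot dual, $\DeGroot{\DeGroot{S}} \cong S$. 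Throughout, composite expressions such as $\Lower{\ScottFunc(S)}$ are read through the equivalence $F \colon \SContEnt \to \SProxLat$ as in Remark~\ref{rem:EquivSContEntSProxLat}, and since the statement only asserts object-level isomorphisms, no naturality needs to be checked.

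For item~\ref{prop:DoubleScott1}, I would first rewrite the outer lower powerlocale using the identity $\DeGroot{\ScottFunc(T)} \cong \Lower{T}$ from item~\ref{thm:DualLowerUppwer2} of Theorem~\ref{thm:DualLowerUppwer}, taken with $T = \ScottFunc(S)$, to obtain
\[
  \Lower{\ScottFunc(S)} \cong \DeGroot{\ScottFunc(\ScottFunc(S))}.
\]
Next I apply Corollary~\ref{cor:DoubleScott} to move the dual inside, $\DeGroot{\ScottFunc(\ScottFunc(S))} \cong \ScottFunc(\ScottFunc(\DeGroot{S}))$, and finally collapse the inner factor via $\ScottFunc(\DeGroot{S}) \cong \Upper{S}$ (again item~\ref{thm:DualLowerUppwer2} of Theorem~\ref{thm:DualLowerUppwer}) together with functoriality of $\ScottFunc$, giving $\ScottFunc(\ScottFunc(\DeGroot{S})) \cong \ScottFunc(\Upper{S})$. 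Chaining these yields $\Lower{\ScottFunc(S)} \cong \ScottFunc(\Upper{S})$.

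For item~\ref{prop:DoubleScott2}, I would reduce to item~\ref{prop:DoubleScott1}. Using $\Upper{S} \cong \ScottFunc(\DeGroot{S})$ (item~\ref{thm:DualLowerUppwer2} of Theorem~\ref{thm:DualLowerUppwer}) and functoriality of $\LowerFunc$,
\[
  \Lower{\Upper{S}} \cong \Lower{\ScottFunc(\DeGroot{S})} \cong \ScottFunc(\Upper{\DeGroot{S}}),
\]
where the second isomorphism is item~\ref{prop:DoubleScott1} applied to $\DeGroot{S}$. Then $\Upper{\DeGroot{S}} \cong \ScottFunc(\DeGroot{\DeGroot{S}}) \cong \ScottFunc(S)$ by the same identity and involutivity of the dual, so $\ScottFunc(\Upper{\DeGroot{S}}) \cong \ScottFunc(\ScottFunc(S))$ by functoriality, and Proposition~\ref{prop:PreframeCoverage} closes the chain with $\ScottFunc(\ScottFunc(S)) \cong \Double{S}$.

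The individual steps are short; the only point requiring care is the bookkeeping forced by the contravariance of $\ScottFunc \colon \Opposite{\SProxLat} \to \SContEnt$ and by the repeated appeals to $\DeGroot{\DeGroot{S}} \cong S$, so that each identity of Theorem~\ref{thm:DualLowerUppwer} is invoked with its argument in the correct variance. I expect this bookkeeping, rather than any genuine mathematical difficulty, to be the main obstacle. As a cross-check one could verify item~\ref{prop:DoubleScott2} by the symmetric route of dualising $\Upper{\Lower{S}} \cong \Double{S}$ (which follows from item~\ref{prop:DualLowerUppwer2} of Proposition~\ref{prop:DualLowerUppwer} and Proposition~\ref{prop:PreframeCoverage}) using item~\ref{thm:DualLowerUppwer1} of Theorem~\ref{thm:DualLowerUppwer} and Proposition~\ref{prop:DoubleSelfDual}, and then substituting $\DeGroot{S}$ for $S$.
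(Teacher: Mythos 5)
Your proposal is correct and follows essentially the same route as the paper: item~\ref{prop:DoubleScott1} is obtained from Corollary~\ref{cor:DoubleScott} together with item~\ref{thm:DualLowerUppwer2} of Theorem~\ref{thm:DualLowerUppwer}, and item~\ref{prop:DoubleScott2} by applying item~\ref{prop:DoubleScott1} to $\DeGroot{S}$ and invoking Theorem~\ref{thm:DualLowerUppwer} and Proposition~\ref{prop:PreframeCoverage}. The paper's proof merely cites these ingredients without writing out the chain of isomorphisms, so your version is the same argument made explicit.
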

\begin{proof}
  \noindent \ref{prop:DoubleScott1}.  By Corollary \ref{cor:DoubleScott}, and item 
  \ref{thm:DualLowerUppwer2} of Theorem \ref{thm:DualLowerUppwer}. 
  
  \noindent \ref{prop:DoubleScott2}.  Apply item
  \ref{prop:DoubleScott1} to $\DeGroot{S}$ and use Theorem
  \ref{thm:DualLowerUppwer} and Proposition
  \ref{prop:PreframeCoverage}.
\end{proof}
Item \ref{prop:DualLowerUppwer2} of Proposition \ref{prop:DualLowerUppwer} and Proposition
\ref{prop:DoubleScott} are known for locally compact locales; see
Vickers~\cite{DoublePowLocExp}.
Item \ref{prop:DualLowerUppwer1} of Proposition \ref{prop:DualLowerUppwer} and Proposition
\ref{prop:DoubleScott} say that the locales of the form
$\ScottFunc(S)$ for a stably compact $S$ is closed under the lower and
upper powerlocales. Moreover, the lower and upper
powerlocales of $\ScottFunc(S)$ are obtained by the upper and the
lower powerlocales of $S$, respectively.
\subsubsection{Vietoris powerlocale}
\begin{definition}\label{def:Vietoris}
  Let $(S, \prox)$ be a strong
  proximity lattice. 
  The \emph{Vietoris powerlocale} of $\Spec{S}$ is presented
  by 
  a strong continuous entailment relation $\Vietoris{S} = (S_{V}, \entails_V,
  \prox_{V})$ on the set 
    $
    S_{V} \defeql \left\{ \Diamond a \mid a \in S \right\} \cup 
    \left\{ \Box a \mid a \in S \right\},
    $
  where $\entails_V$ is
  generated by the following axioms:
  \begin{gather*}
    \begin{aligned}
    \Diamond 0 &\entails_V 
    &\qquad
    \Diamond (a \vee b) &\entails_V \Diamond a, \Diamond b
    &\qquad
    \Diamond a &\entails_V \Diamond b \quad (\text{if $a \leq b$}) \\
     &\entails_V \Box 1
    &\qquad
    \Box a, \Box b &\entails_V \Box (a \wedge b) 
    &\qquad
    \Box a &\entails_V \Box b \quad (\text{if $a \leq b$})\\
    \end{aligned}\\
    \begin{aligned}
    \Box a, \Diamond b &\entails_V \Diamond (a \wedge b) &\qquad
    \Box (a \vee b) &\entails_V \Box a, \Diamond b
    \end{aligned}
  \end{gather*}
  The idempotent relation $\prox_{V}$ is defined by
  \begin{align*}
    \Diamond a \prox_V \Diamond b \defeqiv a \prox b,
    &&
    \Box a \prox_V \Box b \defeqiv  a \prox b.
  \end{align*}
\end{definition}
One can easily verify that $\Vietoris{S}$ satisfies
the condition in Lemma~\ref{prop:IndGenContEnt}.  Moreover, it is
straightforward to show that the locale presented by
$\Vietoris{S}$ is isomorphic to the Vietoris powerlocale of
$\Spec{S}$; see Johnstone \cite{JohnstoneVietorisLocSemiLat} for the
construction of Vietoris powerlocales.
Thus, the Vietoris powerlocale of a stably compact locale is stably compact.

The construction $\Vietoris{S}$ extends to 
a functor $\VietorisFunc \colon \SProxLat \to \SContEnt$, which sends each
join-preserving proximity relation $r \colon (S, \prox) \to (S',\prox')$
to a join-preserving proximity map $\Vietoris{r} \colon \Vietoris{S} \to
\Vietoris{S'}$ defined by
\[
  \Diamond A \Box B \mathrel{\Vietoris{r}} \Diamond C \Box  D
  \defeqiv
   \exists a \in A \left( a \wedge \medwedge B  \mathrel{r}
  \medvee C\right) 
  \;\text{or}\; 
   \exists d \in D \left( \medwedge B  \mathrel{r}
  d \vee \medvee C\right),
\]
where $\Diamond A \Box B
\defeql \left\{ \Diamond a \mid a \in A \right\} \cup \left\{ \Box b
\mid b \in B \right\}$ for each $A,B \in \Fin{S}$.
\begin{theorem}
  \label{thm:Vietoris}
  For any strong proximity lattice $S$,
  we have 
  \[
    \DeGroot{\Vietoris{S}} \cong \Vietoris{\DeGroot{S}}.
  \]
\end{theorem}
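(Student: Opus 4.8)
The plan is to proceed exactly as in Theorem~\ref{thm:DualLowerUppwer} and Proposition~\ref{prop:DoubleSelfDual}: describe $\DeGroot{\Vietoris{S}}$ by reversing the generating axioms of $\Vietoris{S}$ via Lemma~\ref{lem:GenEntOp}, and then recognise the result as the Vietoris presentation of $\DeGroot{S}$ after swapping the two kinds of modal generators. The guiding principle is that the $\Diamond$/$\Box$ symmetry of the axioms in Definition~\ref{def:Vietoris} is precisely the symmetry that de Groot duality realises, so no genuinely new localic input is needed beyond the presentation already established.

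Concretely, first I would invoke Lemma~\ref{lem:GenEntOp} to see that the dual $\DeGroot{\Vietoris{S}} = (S_{V}, \DualLat{(\entails_{V})}, \succ_{V})$ is the strong continuous entailment relation whose entailment is generated by the axioms of Definition~\ref{def:Vietoris} read from right to left, and whose proximity $\succ_{V}$ satisfies $\Diamond a \succ_{V} \Diamond b \iff a \succ b$ and $\Box a \succ_{V} \Box b \iff a \succ b$. Next I would write out $\Vietoris{\DeGroot{S}}$: since $\DeGroot{S} = (\DualLat{S}, \succ)$ carries the operations of $\DualLat{S}$, its Vietoris axioms are the list of Definition~\ref{def:Vietoris} with every occurrence of $\vee$ replaced by $\wedge$, every $0$ by $1$, every side condition $a \leq b$ by $b \leq a$, and $\prox$ by $\succ$. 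The core step is then to introduce the involution $\phi \colon S_{V} \to S_{V}$ with $\phi(\Diamond a) = \Box a$ and $\phi(\Box a) = \Diamond a$, which is well defined because $\DeGroot{S}$ and $S$ have the same underlying set. Checking axiom by axiom (up to renaming the bound variables $a,b$ and the commutativity of $\wedge$ and $\vee$), $\phi$ carries each generating axiom of $\Vietoris{\DeGroot{S}}$ onto a reversed axiom of $\Vietoris{S}$, i.e.\ onto a generating axiom of $\DeGroot{\Vietoris{S}}$; and since both proximities relate only like generators via $a \succ b$, $\phi$ also intertwines $\prox_{V}^{\DeGroot{S}}$ with $\succ_{V}$. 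Hence $\phi$ induces the desired isomorphism $\Vietoris{\DeGroot{S}} \cong \DeGroot{\Vietoris{S}}$ of strong continuous entailment relations.

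The main obstacle I expect is the bookkeeping for the two mixed axioms $\Box a, \Diamond b \entails_{V} \Diamond(a \wedge b)$ and $\Box(a \vee b) \entails_{V} \Box a, \Diamond b$. Unlike the purely diamond or purely box axioms, these are not fixed by the combined effect of reversal, lattice dualisation, and $\phi$; instead the two are interchanged, so one must check that reversing $\Box a, \Diamond b \entails_{V} \Diamond(a \wedge b)$ and dualising $\wedge \leftrightarrow \vee$ reproduces exactly the $\phi$-image of $\Box(a \vee b) \entails_{V} \Box a, \Diamond b$, and vice versa. Once this cross-matching of the mixed axioms is verified, the remaining axioms pair off just as in the lower/upper case of Theorem~\ref{thm:DualLowerUppwer}, and the proximity clauses are immediate, completing the argument.
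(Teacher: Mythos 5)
Your proposal is correct and takes essentially the same route as the paper, which disposes of the theorem in one line by observing that $\DeGroot{\Vietoris{S}}$ and $\Vietoris{\DeGroot{S}}$ are identical up to swapping $\Diamond$ and $\Box$. Your more detailed version — reversing the axioms via Lemma~\ref{lem:GenEntOp}, dualising the lattice operations, and checking that the involution $\phi$ cross-matches the two mixed axioms (after renaming $a \leftrightarrow b$) — is exactly the verification the paper leaves implicit.
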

\begin{proof}
  $\DeGroot{\Vietoris{S}}$ and $\Vietoris{\DeGroot{S}}$ are
  identical except that $\Diamond$ and
  $\Box$ are swapped.
\end{proof}
Goubault-Larrecq proved the result corresponding to Theorem
\ref{thm:Vietoris} for stably compact
spaces using $\mathbf{A}$-valuations \cite[Corollary 5.24]{GoubaultLarrecq-ModelofChoice}.

\subsection{Patch topologies}\label{subsec:Patch}
Coquand and Zhang~\cite{CoquandZhangPrediativePatch} gave a
construction of patch topologies for entailment relations with the
interpolation property.
The same construction carries over to the setting of strong continuous
entailment relation.
\begin{definition}[{Coquand and Zhang \cite[Section
  4]{CoquandZhangPrediativePatch}}]\label{def:Patch}
  Given a strong continuous entailment relation $(S, \entails,\prox)$,
  the \emph{patch topology} of $S$ is a
  strong continuous entailment relation $\Patch{S} = (S_{P}, \entails_P,
  \prox_{P})$ on the set
    $
    S_{P} \defeql S \cup \left\{ \dual{a} \mid a \in S \right\},
    $
  where $\entails_P$ is generated by the following axioms:
  \begin{align}
   \label{Patch:Axentails}
   A &\mathrel{\entails_P} B &&(\text{if $A \entails B$})\\
   \label{Patch:AxD} a,\dual{b} &\mathrel{\entails_{P}}  &&(\text{if $a \prox b$}) \\
   \label{Patch:AxLoc} &\mathrel{\entails_{P}} \dual{a}, b &&(\text{if $a \prox b$})
  \end{align}
  The idempotent relation $\prox_{P}$ is defined by
  \begin{align*}
    a \prox_P b \defeqiv  a \prox b,
    &&
    \dual{a} \prox_P \dual{b} \defeqiv  b \prox a.
  \end{align*}
\end{definition}
Let $\PatchP{S} = (S_{P}, \entails_{P}', \prox_{P})$ be
the strong continuous entailment relation which is obtained from $\Patch{S}$
by adjoining the following axioms:
\begin{equation}
  \label{Patch:Axentailsop}  \dual{B} \mathrel{\entails_P'}
  \dual{A} \qquad (\text{if $A \entails B$})
\end{equation}
where $\dual{A} \defeql \left\{ \dual{a} \mid a \in A \right\}$ for each
$A \in \Fin{S}$.
\begin{proposition}
  \label{prop:PatchCut}
  For any strong continuous entailment relation $(S,\entails, \prox)$,
  we have 
  \[
   \PatchP{S} \cong \Patch{S}.
  \]
\end{proposition}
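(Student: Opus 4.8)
The plan is to reduce the isomorphism to an equality of the two associated entailment systems. Both $\Patch{S}$ and $\PatchP{S}$ are carried by the same set $S_{P}$ and the same idempotent relation $\prox_{P}$, and they differ only in their entailment relations: $\entails_{P}$ is generated by \eqref{Patch:Axentails}--\eqref{Patch:AxLoc}, while $\entails_{P}'$ is generated by these together with the op-axioms \eqref{Patch:Axentailsop}, so in particular $\entails_{P} \subseteq \entails_{P}'$. I would show that the induced entailment systems coincide, i.e. $\ll_{\entails_{P}} = \ll_{\entails_{P}'}$. Once this is established, the two strong continuous entailment relations have the same image $(S_{P}, \ll_{\entails_{P}})$ under the full embedding $\SContEntP \hookrightarrow \ContEnt$, and each is isomorphic in $\ContEnt$ to $(S_{P}, \meets, \ll_{\entails_{P}})$ with $\ll_{\entails_{P}}$ serving as the isomorphism, exactly as in the identification recorded just before Proposition~\ref{prop:ContEntEquivEnt}. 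Composing these isomorphisms and using fullness to transport them back yields $\PatchP{S} \cong \Patch{S}$ in $\SContEntP$, realised by $\ll_{\entails_{P}}$.

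Since $\entails_{P} \subseteq \entails_{P}'$ and the two relations share $\prox_{P}$, the inclusion $\ll_{\entails_{P}} \subseteq \ll_{\entails_{P}'}$ is immediate, so the entire content is the reverse inclusion $\ll_{\entails_{P}'} \subseteq \ll_{\entails_{P}}$. The guiding observation is that the op-axioms do genuinely enlarge $\entails_{P}$ (an entailment such as $\entails_{P}' \dual{a},\dual{b}$ coming from $a,b \entails \emptyset$ cannot be obtained in $\entails_{P}$, since introducing a complemented generator on the right via \eqref{Patch:AxLoc} requires a genuine $\prox$-pair), yet they are \emph{invisible} to $\ll$: the roundings $\prox_{P,U}$ and $\prox_{P,L}$ built into $\ll_{\entails_{P}'}$ neutralise them. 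I would make this precise by induction on the generation of $\entails_{P}'$ (Lemma~\ref{lem:IndGenEnt} and its dual Lemma~\ref{lem:IndGenEntDual}), reducing to a single transport lemma: whenever $A \entails B$ and we are given the rounding data witnessing a membership of $\ll_{\entails_{P}'}$, the correspondingly rounded op-entailment $\dual{B} \,(\text{rounded})\, \entails_{P} \dual{A}$ is already derivable in $\entails_{P}$. The derivation mirrors the one available when $\prox_{P}$ is reflexive---introduce each $\dual{a}$ via \eqref{Patch:AxLoc}, cancel each $a$ against its complement via \eqref{Patch:AxD}, and cut through the original $A \entails_{P} B$---except that every appeal to \eqref{Patch:AxD} and \eqref{Patch:AxLoc} is instantiated at an interpolated $\prox$-pair furnished by the idempotency of $\prox$ (equivalently by the defining equivalence \eqref{eq:ContEnt}). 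That the resulting relation is stable under cut composition, and hence stays inside $\ll_{\entails_{P}}$, follows because $\ll_{\entails_{P}}$ is the identity of the entailment system $(S_{P}, \ll_{\entails_{P}})$ by Proposition~\ref{prop:SContEntIsContEnt}.

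The main obstacle is exactly this transport lemma: because the underlying $S$ of a strong continuous entailment relation is a bare set, the argument cannot be phrased through lattice operations on joins and meets, and one must swap each generator across the turnstile purely by successive cuts against the complement axioms, while simultaneously tracking the interpolants so that the output carries the correct $\prox_{P,U}$/$\prox_{P,L}$ slack. The bookkeeping is the delicate part, and Lemma~\ref{prop:IndGenContEnt} (which already verifies that $\Patch{S}$ and $\PatchP{S}$ are strong continuous entailment relations) together with Lemma~\ref{lem:IndGenEntDual} provides the inductive scaffolding for it; the interpolation arguments are of the same flavour as those of Coquand and Zhang. As a corroborating check one may instead argue geometrically: by Proposition~\ref{prop:AgreeContEntProxLat}\ref{prop:AgreeContEntProxLat2} each side presents a locale through the theory \eqref{eq:SContEntGeo}, and the extra geometric op-axioms of $T(\PatchP{S})$ are derivable in $T(\Patch{S})$ using the cover axiom $a \entails \bigvee_{b \prox a} b$, confirming that the two present the same stably compact locale. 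Finally, Lemma~\ref{lem:GenEntOp} identifies \eqref{Patch:Axentailsop} as the image of the generators of the dual under $a \mapsto \dual{a}$, which is the conceptual reason the passage to $\PatchP{S}$ is forced to be $\ll$-invisible: it merely renders the patch manifestly self-dual.
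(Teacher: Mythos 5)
Your proposal is correct and follows essentially the same route as the paper: reduce the isomorphism to the equality $\ll_{\entails_{P}} = \ll_{\entails_{P}'}$ of the associated entailment systems, observe that $\ll_{\entails_{P}} \subseteq \ll_{\entails_{P}'}$ is immediate, and eliminate the op-axioms \eqref{Patch:Axentailsop} by induction on derivations (Lemma~\ref{lem:IndGenEnt}), using the interpolation property \eqref{eq:ContEnt} to supply the $\prox$-pairs fed into \eqref{Patch:AxD} and \eqref{Patch:AxLoc} before cutting through $A \entails B$. The paper's exact inductive invariant is $X \entails_{P}' Y \Rightarrow \forall Z \left(\prox_{P}\right)_{U} X \left( Z \ll_{\entails_{P}} Y \right)$, which is precisely the rounding bookkeeping you flag as the delicate part.
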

\begin{proof}
  We show that the entailment systems associated with
  $\PatchP{S}$ and $\Patch{S}$ coincide, i.e.,
  $\ll_{\entails_P} {=} \ll_{\entails_{P}'}$. 
  Since $\entails_{P}'$
  is generated by the extra axioms, we have
  $\ll_{\entails_P} \mathop{\subseteq} \ll_{\entails_{P}'}$.
  To prove the converse inclusion, it suffices to
  show that
  \[
    X \entails_{P}' Y \implies \forall Z \left(\prox_{P} \right)_{U} X \left( Z
    \ll_{\entails_{P}} Y \right).
  \]
  This is
  proved by induction on the derivation of $X \entails_{P}' Y$ (see Lemma~\ref{lem:IndGenEnt}).
  The case $(\mathrm{R'})$ is obvious,
  so it suffices to check the case $(\mathrm{AxL})$ for each axiom of
  $\Patch{S'}$.  We only deal with 
  \eqref{Patch:Axentailsop}.
  Suppose that $\DualLat{B}, X \entails_{P}' Y$ is derived
  from $\DualLat{B} \entails_{P}' \DualLat{A}$ and $\forall a \in A
  \left( X, \DualLat{a} \entails_{P}' Y \right)$  where $A \entails B$.
  Let $Z \left( \prox_{P} \right)_{U} \DualLat{B},X$. 
  Then, there exists $C \in \Fin{S}$ such that
  $\DualLat{C} \subseteq Z$ and $B \prox_L C$, and so 
  there exists $C'$ such that $B \prox_{L} C' \prox_{L} C$.
  Thus, there exists
  $D$ such that $A \prox_{U} D \entails C'$ by \eqref{eq:ContEnt}.
  For each $d \in D$, there exist $a \in A$ and $d'$ such that
  $a \prox d' \prox d$. Then, $Z, \DualLat{d'}
  \left(\prox_{P}\right)_{U} X, \DualLat{a}$ so $Z, \DualLat{d'}
  \ll_{\entails_{P}} Y$ by induction hypothesis.
  Thus, for each $d \in D$, there exist $d' \prox d$
  and $W_{d} \in \Fin{S}$ such that $Z, \DualLat{d'}
  \entails_{P} W_{d} \left(\prox_{P}\right)_{L} Y$ so that
  $Z \entails_P W_{d}, d$ by \eqref{Patch:AxLoc}. Since $D \entails C'
  \prox_{L} C$, we get $Z, \DualLat{C}\entails_{P}
  \bigcup_{d \in D} W_{d}$ by \eqref{Patch:Axentails}
  and successive applications of
  $(\mathrm{T})$ and \eqref{Patch:AxD}.
  Hence, $Z = Z \cup \DualLat{C} \ll_{\entails_{P}} Y$.
\end{proof}
In terms of $\SContEnt$, we have proximity maps 
$r \colon \Patch{S} \to \PatchP{S}$ and 
$s \colon \PatchP{S} \to \Patch{S}$ defined by
\begin{align*}
  A \mathrel{r} B &\defeqiv A \ll_{\entails_P} B, &
  A \mathrel{s} B &\defeqiv A \ll_{\entails_P'} B,
\end{align*}
which are inverse to each other.
\begin{theorem}\label{theorem:patch}
  For any strong continuous entailment relation $S$,
  we have 
  \[
    \Patch{S} \cong \Patch{\DeGroot{S}}.
  \]
\end{theorem}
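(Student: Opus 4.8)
The plan is to combine the symmetric presentation $\PatchP{S}$ afforded by Proposition~\ref{prop:PatchCut} with the evident involution of the patch carrier that swaps each generator with its formal complement. Let $\phi$ be the bijection of $S_{P} = S \cup \{\dual{a} \mid a \in S\}$ determined by $\phi(a) = \dual{a}$ and $\phi(\dual{a}) = a$; it is self-inverse, and I would extend it pointwise to finite subsets. The first step is to reduce the theorem to the claim that $\phi$ is an isomorphism of strong continuous entailment relations from $\PatchP{S}$ to $\PatchP{\DeGroot{S}}$. Granting this, Proposition~\ref{prop:PatchCut} supplies $\Patch{S} \cong \PatchP{S}$ and $\Patch{\DeGroot{S}} \cong \PatchP{\DeGroot{S}}$, and chaining these with the isomorphism induced by $\phi$ gives $\Patch{S} \cong \Patch{\DeGroot{S}}$.

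The core of the argument is a bookkeeping check that $\phi$ carries the generating axioms of $\entails_{P}'$ bijectively onto those of the patch entailment of $\PatchP{\DeGroot{S}}$. Since $\DeGroot{S} = (S, \DualLat{\entails}, \proxop)$ reverses both $\entails$ and $\prox$, the generating schemes of $\PatchP{\DeGroot{S}}$ assert, for plain $A,B$ and generators $a,b$, that $A \vdash B$ when $B \entails A$, that $a,\dual{b} \vdash$ when $b \prox a$, that $\vdash \dual{a}, b$ when $b \prox a$, and that $\dual{B} \vdash \dual{A}$ when $B \entails A$. Applying $\phi$ to each axiom of $\PatchP{S}$ and relabelling $a \leftrightarrow b$ (respectively $A \leftrightarrow B$), I expect \eqref{Patch:Axentails} to be sent to \eqref{Patch:Axentailsop}, \eqref{Patch:Axentailsop} to \eqref{Patch:Axentails}, and the two mixed axioms \eqref{Patch:AxD}, \eqref{Patch:AxLoc} to themselves, the interchange of $a$ and $b$ accounting precisely for the reversal of $\prox$. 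Because $\phi$ is an involution this matching is a bijection of axiom sets, so the inductive description of generated relations (Lemma~\ref{lem:IndGenEnt}) yields $X \entails_{P}' Y \iff \phi(X) \mathrel{(\DualLat{\entails})_{P}'} \phi(Y)$. A parallel and equally routine check shows that $\phi$ respects $\prox_{P}$: the clause $a \prox_{P} b \iff a \prox b$ of $\PatchP{S}$ corresponds under $\phi$ to the clause $\dual{a} \prox_{P} \dual{b} \iff b \proxop a$ of $\PatchP{\DeGroot{S}}$, and $b \proxop a \iff a \prox b$, the complemented generators being handled dually. Hence $\phi$ is an isomorphism in $\SContEnt$.

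The step I expect to be the real obstacle --- and the reason Proposition~\ref{prop:PatchCut} is indispensable here --- is that the swap does \emph{not} identify the unprimed patches directly. Under $\phi$ the plain-to-plain axiom \eqref{Patch:Axentails} of $\Patch{S}$ turns into a complement-to-complement entailment, which belongs to the generating set of $\PatchP{\DeGroot{S}}$ only once the extra scheme \eqref{Patch:Axentailsop} has been adjoined. One must therefore pass first to the symmetric presentations, whose generating axioms are closed under $\phi$, before the bijection can be read off. Once this is done the remaining work is purely formal, amounting to tracking how the two order reversals built into $\DeGroot{S}$ --- that of $\entails$ and that of $\prox$ --- interact with the carrier swap $\phi$.
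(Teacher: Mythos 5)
Your proposal is correct and follows exactly the paper's own route: pass to the symmetric presentation $\PatchP{S}$ via Proposition~\ref{prop:PatchCut} and then observe that the involution exchanging $a$ and $\dual{a}$ carries the generating axioms and the relation $\prox_P$ of $\PatchP{S}$ onto those of $\PatchP{\DeGroot{S}}$. The axiom-by-axiom bookkeeping you spell out (including why the symmetric presentation is needed before the swap can be applied) is precisely what the paper leaves implicit.
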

\begin{proof}
  By Proposition \ref{prop:PatchCut}, we may identify
  $\Patch{S}$ with $\PatchP{S}$.
   Then,
   we have $\PatchP{S} \cong \PatchP{\DeGroot{S}}$
   by exchanging the roles of $a$ and $\dual{a}$.
\end{proof}
\begin{remark}
 Combining 
 $\PatchFunc$ and the equivalence between
 $\SProxLatPerfect$ and $\SContEntPerfect$,
 we get a functor $\PatchFunc \colon \SProxLatPerfect \to
 \SContEntPerfect$, which sends an adjoint pair
 $(s,r)$ of proximity relations $r
 \colon (S, \prox) \to (S', \prox')$ and $s \colon {(S', \prox')} \to
 {(S, \prox)}$ to an adjoint pair $(\mathfrak{P}(s),\mathfrak{P}(r))$
 of proximity maps $\mathfrak{P}(r) \colon {\Patch{G(S)}} \to
 \Patch{G(S')}$
 and $\mathfrak{P}(s) \colon \Patch{G(S')} \to \Patch{G(S)}$
 defined by
 \begin{align*}
   A,\dual{B} \mathrel{\mathfrak{P}(r)} C,\dual{D}
   &\defeqiv
    \exists a, b \in S \left( \medwedge A \wedge b \prox \medvee B \vee a
   \amp a \mathrel{r} \medvee C \amp \medwedge D \mathrel{s} b\right),\\
   C,\dual{D} \mathrel{\mathfrak{P}(s)} A,\dual{B}
   &\defeqiv
    \exists a, b \in S \left( \medwedge B \wedge a \prox \medvee A \vee b
   \amp b \mathrel{r} \medvee D \amp \medwedge C \mathrel{s} a\right).
 \end{align*}
\end{remark}

\subsection{Space of valuations}\label{subsec:Valuations}
The space of valuations is a localic analogue of the \emph{probabilistic
power domain} by Jones and Plotkin \cite{JonesProbablistic,JonesPlotkinProbabilisticPD}.
We first recall several notions of real
numbers which are needed for its definition.
  \begin{enumerate}
    \item A \emph{lower real} is a rounded downward closed subset of
      rationals $\Rat$. 

    \item An \emph{upper real} is a rounded upward closed subset of
      $\Rat$. 

    \item A \emph{Dedekind real} is a disjoint pair $(L,U)$ of an
      inhabited lower real $L$ and an inhabited upper real $U$ which
      is located: $p < q$ implies $p \in L$ or
      $q \in U$.  
  \end{enumerate}
      Let $\overrightarrow{[0,\infty]}$ and 
      $\overleftarrow{[0,\infty]}$ denote
      the lower and the upper reals greater than $0$
      respectively (including infinity).
We follow Vickers \cite[Section 4 and Section 6]{VickersIntegral} for
the definition of spaces of valuations and covaluations. 
\begin{definition}
A \emph{valuation}
 on a locale $X$ is a Scott continuous
 function $\mu \colon \Frame{X} \to \overrightarrow{[0,\infty]}$ satisfying
\begin{align}
  \notag \mu(0) &= 0, &
  \mu(x) + \mu(y) &= \mu(x \wedge y) + \mu(x \vee y),
\end{align}
where the second condition is called the modular law.
A \emph{covaluation} is a Scott continuous function $\nu \colon \Frame{X} \to
\overleftarrow{[0,\infty]}$ satisfying $\nu(1) = 0$ and
the modular law.
The \emph{space of valuations} $\Valuation{X}$ on a locale $X$ is the
locale whose models are valuations on $X$.
The \emph{space of covaluations} $\CoValuation{X}$ is defined similarly.
\end{definition}

For a strong proximity lattice $(S,\prox)$, the locale
$\Valuation{\Spec{S}}$ is presented by a geometric theory
$T_{\mathfrak{V}}$ over the set
\[
  S_{\mathfrak{V}} \defeql  \left\{ \langle p, a
  \rangle \mid p \in \Rat \amp a \in S \right\}
\]
with the following axioms:
\begin{align*}
    \top &\entails \langle p, a \rangle \quad &&(\text{if $p < 0$}) 
    \\
    \langle p, 0 \rangle &\entails \bot \quad &&(\text{if $0 < p$}) 
    \\
  \langle p, a \rangle
  &\entails \langle q, b \rangle
  &&(\text{if $q \leq p$ and $a \leq b$})\\
  \langle p, a \rangle \wedge  \langle q, b \rangle
  &\entails \bigvee_{p' + q' = p + q}\!\! \langle p', a \wedge b \rangle
  \wedge 
  \langle q', a \vee b \rangle
  \\
  \langle p, a \wedge b \rangle \wedge  \langle q, a \vee b \rangle
  &\entails
  \bigvee_{p' + q' = p + q}\!\! \langle p', a \rangle \wedge \langle q', b \rangle
  \\
  \langle p, a \rangle
  &\entails \langle q, b \rangle
  &&(\text{if $q < p$ and $a \prec b$})\\
  \langle p, a \rangle
  &\entails \bigvee_{p < q, b \prec a}\langle q, b \rangle
\end{align*}
A model $\alpha$ of
$T_{\mathfrak{V}}$ determines a valuation 
$\mu_{\alpha} \colon \RIdeals{S} \to \overrightarrow{[0,\infty]}$ by
\[
  \mu_{\alpha}(I) = \left\{ q \in \Rat \mid  \exists a \in I
  \left(\langle q, a \rangle \in \alpha \right) \right\}.
\]
Conversely, a valuation $\mu \colon \RIdeals{S} \to
\overrightarrow{[0,\infty]}$
determines a model $\alpha_{\mu}$  of
$T_{\mathfrak{V}}$ by 
\[
  \alpha_{\mu} = \left\{ \langle q, a \rangle \in
    S_{\mathfrak{V}}\mid q \in
    \mu(\downset_{\prox} a)\right\}.
\]

The locale $\CoValuation{\Spec{S}}$ 
is presented by a geometric theory $T_{\mathfrak{C}}$ 
obtained from  $T_{\mathfrak{V}}$ by replacing the first three
and the last two axioms with the following:
\begin{equation*}
  \begin{aligned}
   \langle p, a \rangle &\entails \bot \quad &&(\text{if $p < 0$})\\
   \top &\entails \langle p, 1 \rangle \quad &&(\text{if $0 < p$})\\
  \langle p, a \rangle
  &\entails \langle q, b \rangle
  &&(\text{if $p \leq q$ and $a \leq b$})\\
  \langle p, a \rangle
  &\entails \langle q, b \rangle
  &&(\text{if $p < q$ and $a \prec b$})\\
  \langle p, a \rangle
  &\entails \bigvee_{q < p, b \prec a}\langle q, b \rangle
  \end{aligned}
\end{equation*}
It is straightforward to show that $\Valuation{\Spec{S}}$
and $\CoValuation{\Spec{S}}$ 
are isomorphic to the spaces of valuations and covaluations
in Vickers \cite[Section 4 and Section~6]{VickersIntegral}.

The following lemma allows us to present the space of valuations
on $\Spec{S}$ by a strong continuous entailment relation.
\begin{lemma}\label{lem:Valuation}
  Under the other axioms of $T_{\mathfrak{V}}$ (or $T_{\mathfrak{C}}$), the axioms
  \begin{align}
  \label{Ax1}
  \langle p, a \rangle \wedge  \langle q, b \rangle
  &\entails \bigvee_{p' + q' = p + q} \!\!\langle p', a \wedge b \rangle
  \wedge
  \langle q', a \vee b \rangle\\
  \label{Ax2}
  \langle p, a \wedge b \rangle \wedge \langle q, a \vee b \rangle
  &\entails
  \bigvee_{p' + q' = p + q} \!\!\langle p', a \rangle \wedge  \langle q', b
  \rangle 
  \end{align}
  are equivalent to the following axioms:
  \begin{align}
    \langle p, a \rangle \wedge  \langle q, b \rangle
    &\entails 
      \langle r, a \wedge b \rangle \vee \langle s, a \vee b \rangle
    & (\text{if $p + q =  r + s$})\label{Ax2a}\\
    \langle r, a \wedge b \rangle \wedge \langle s, a \vee b \rangle
    &\entails \langle p, a \rangle \vee \langle q, b \rangle
    & (\text{if $p + q =  r + s$})\label{Ax2b}
  \end{align}
\end{lemma}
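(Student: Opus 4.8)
The plan is to show that, given the shared axioms of $T_{\mathfrak{V}}$ (resp.\ $T_{\mathfrak{C}}$), each of \eqref{Ax1}, \eqref{Ax2} holds in the locale presented by the theory that uses \eqref{Ax2a}, \eqref{Ax2b} instead, and conversely. Since two geometric theories over the same generators agreeing on all but a few axioms present the same locale exactly when each differing axiom is valid in the other theory, it suffices to argue at the level of the order of the presented frame. I will establish \eqref{Ax1}~$\Leftrightarrow$~\eqref{Ax2a} in detail and note that \eqref{Ax2}~$\Leftrightarrow$~\eqref{Ax2b} is obtained by interchanging the roles of the pair $(a,b)$ with $(a\wedge b,\, a\vee b)$. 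Throughout I write $c\equiv a\wedge b$, $d\equiv a\vee b$ and $n\equiv p+q$; the $T_{\mathfrak{C}}$ case is identical once the monotonicity axioms and the degenerate ($p<0$) axioms are read in their dual form.

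For the direction \eqref{Ax1}~$\Rightarrow$~\eqref{Ax2a}, fix $r+s=n$. Whenever $p'+q'=n$ we have $p'\geq r$ or $q'\geq s$ (otherwise the two sums would disagree), so by monotonicity either $\langle p',c\rangle \entails \langle r,c\rangle$ or $\langle q',d\rangle \entails \langle s,d\rangle$, whence in either case $\langle p',c\rangle \wedge \langle q',d\rangle \entails \langle r,c\rangle \vee \langle s,d\rangle$. Thus every disjunct of the right-hand side of \eqref{Ax1} lies below $\langle r,c\rangle \vee \langle s,d\rangle$, so their join does, and \eqref{Ax2a} follows. The implication \eqref{Ax2}~$\Rightarrow$~\eqref{Ax2b} is the same case split.

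For the converse \eqref{Ax2a}~$\Rightarrow$~\eqref{Ax1}, I would first use the roundedness axiom in the weakened form $\langle p,a\rangle \entails \bigvee_{p<p_{0}}\langle p_{0},a\rangle$ (a consequence of $\langle p,a\rangle \entails \bigvee_{p<p_{0},\, a'\prox a}\langle p_{0},a'\rangle$ via monotonicity) on both factors, reducing the claim to showing, for each $p_{0}>p$ and $q_{0}>q$, that $\langle p_{0},a\rangle \wedge \langle q_{0},b\rangle$ lies below the right-hand side of \eqref{Ax1}. Put $n_{0}=p_{0}+q_{0}$ and $\delta=n_{0}-n>0$. Applying \eqref{Ax2a} to $\langle p_{0},a\rangle \wedge \langle q_{0},b\rangle$ at finitely many split points $r_{0}<\dots<r_{k}$, with $s_{j}=n_{0}-r_{j}$, chosen so that $r_{0}<\delta$, $r_{k}>n$, and every consecutive gap $r_{j+1}-r_{j}$ is below $\delta$, bounds it below the finite meet $\bigwedge_{j}(\langle r_{j},c\rangle \vee \langle s_{j},d\rangle)$. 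Since $\langle r_{j},c\rangle$ is decreasing and $\langle s_{j},d\rangle$ increasing in $j$, distributing this meet collapses it to $\langle s_{0},d\rangle \vee \langle r_{k},c\rangle \vee \bigvee_{j}(\langle r_{j},c\rangle \wedge \langle s_{j+1},d\rangle)$. Each diagonal conjunct has index sum $r_{j}+s_{j+1}=n_{0}-(r_{j+1}-r_{j})>n$, so lowering its indices by monotonicity produces a genuine conjunct $\langle p',c\rangle \wedge \langle q',d\rangle$ with $p'+q'=n$; the two boundary terms are absorbed into such conjuncts using the degenerate axioms that force $\langle q',d\rangle$ (resp.\ $\langle p',c\rangle$) to be the top when $q'<0$ (resp.\ $p'<0$). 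Taking the join over all $p_{0},q_{0}$ then recovers \eqref{Ax1}.

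The main obstacle is this converse direction. The meet bounding the left-hand side ranges over all rationals $r$ with $r+s=n$, and no finite sub-meet suffices on its own: distributing a finite sub-meet produces off-diagonal terms $\langle r_{i},c\rangle \wedge \langle s_{j},d\rangle$ of index sum strictly below $n$, which are dominated by no $p'+q'=n$ conjunct. The role of roundedness is precisely to manufacture the slack $\delta=n_{0}-n$; choosing the partition finer than $\delta$ pushes every surviving diagonal term above index sum $n$, which is exactly what makes the collapse to \eqref{Ax1} go through. Verifying the finite distributive identity for the monotone family and the book-keeping with the degenerate axioms is then routine.
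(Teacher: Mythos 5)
Your proof is correct and follows essentially the same route as the paper's: the forward direction is the identical case split on $p' \geq r$ versus $q' \geq s$, and the converse uses the same combination of the roundedness axiom to create slack, an application of \eqref{Ax2a} at a finite family of split points with gaps smaller than that slack, and distribution of the resulting finite meet. Your ``monotone collapse'' of the distributed meet to boundary terms plus consecutive diagonal terms is just a repackaging of the paper's four-way case analysis over choice functions $f \colon \{0,\dots,N+1\} \to \{0,1\}$, and the absorption of the boundary terms via the degenerate axioms coincides as well.
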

Here, the equivalence of two axioms means that one axiom holds in the locale
presented by the other axiom and the rest of the axioms of
$T_{\mathfrak{V}}$ (or $T_{\mathfrak{C}}$).

\begin{proof}
The proof is inspired by Coquand and
Spitters~\cite[Lemma 2]{CoquandSpittersIntegralsValuations}, which we
elaborate below.
We identify
generators $S_{\mathfrak{V}}$ with the corresponding elements of 
$\Space{T_{\mathfrak{V}}}$ (or the locale presented by \eqref{Ax2a} and \eqref{Ax2b}
in place of \eqref{Ax1} and \eqref{Ax2}). We write
$\leq_{\mathfrak{V}}$ for the orders in these  locales.

First, assume \eqref{Ax1}. Let $p,q,r,s \in \Rat$ such that
$p + q = r + s$.
  Take any $p',q' \in \Rat$ such that $p' + q' = p + q$.
  If $p' \geq r$, then $\langle p', a \wedge b \rangle 
  \leq_{\mathfrak{V}}
   \langle r, a \wedge b \rangle $. 
  If $p' < r$, then
  $q' = s + (r - p')$, and thus
  $\langle q', a \vee b \rangle \leq_{\mathfrak{V}} \langle s, a \vee b \rangle$.
  Hence
  \[
    \langle p', a \wedge b \rangle \wedge \langle q', a \vee b \rangle
    \leq_{\mathfrak{V}}
    \langle r, a \wedge b \rangle \vee \langle s, a \vee b \rangle
  \]
  for all $p',q' \in \Rat$ such that $p' + q' = p + q$.
  Applying \eqref{Ax1}, we obtain \eqref{Ax2a}.
  Similarly, we obtain  \eqref{Ax2b} from \eqref{Ax2}.

  Conversely, assume \eqref{Ax2a}.
  By the last two axioms of $T_{\mathfrak{V}}$, we have
  \begin{equation}\label{eq:prox}
    \langle q, a \rangle \wedge \langle r, b \rangle
    \leq_{\mathfrak{V}}
    \bigvee_{q' + r' > q +  r} \langle q', a \rangle \wedge
    \langle r', b \rangle.
  \end{equation}
  Let $q', r' \in \Rat$ such that $q' + r' > q + r$. Let 
  $\theta \in \Rat$ such that $q' + r' = q + r +
  \theta$, and choose $N \in
  \Nat$ so large that $q + r + \theta - N \theta < 0$.
  By \eqref{Ax2a}, we have
  \[
    \langle q', a \rangle 
    \wedge \langle r', b \rangle
    \leq_{\mathfrak{V}}
     \langle q + r + \theta - (- \theta + n \theta), a \wedge b \rangle
     \vee
    \langle - \theta + n \theta, a \vee b \rangle
  \]
   for all $n \in \Nat$.
  For each $n \in \Nat$, define
  \[
    \varphi^{n}_{0}
    \defeql \langle q + r + 2  \theta -  n \theta, a \wedge b \rangle,
    \quad
    \qquad
    \varphi^{n}_{1}
    \defeql \langle - \theta + n \theta, a \vee b \rangle.
  \]
  Then, we have
  \begin{equation}\label{eq:phi}
    \langle q', a \rangle \wedge \langle r', b \rangle
    \leq_{\mathfrak{V}}
    \bigwedge_{n \leq N+1}
    \varphi^{n}_{0} \vee \varphi^{n}_{1}
    \leq_{\mathfrak{V}}
    \bigvee_{f \in \mathsf{Ch}(N + 1)} \bigwedge_{n \leq N + 1}
    \varphi^{n}_{f_n},
  \end{equation}
  where $\mathsf{Ch}(N+1)$ is the set 
  of choice functions $f  \colon \left\{ 0,\dots,N+1 \right\} \to \{0,1\}$.
  For each $f  \in \mathsf{Ch}(N+1)$, one of the following cases occurs:
  \smallskip

  \noindent\emph{Case 1}: $\forall n \leq N+1 \, f_{n} = 0$.
  Since $- \theta < 0$,  we have
  \begin{align*}
    \varphi^{0}_{f_{0}}
    \leq_{\mathfrak{V}}
    \langle q + r + \theta, a \wedge b \rangle
    \leq_{\mathfrak{V}}
    \langle q + r + \theta, a \wedge b \rangle \wedge 
    \langle - \theta, a \vee b \rangle.
  \end{align*}

  \noindent\emph{Case 2}: $\forall n \leq N + 1\, f_{n} = 1$.
  Since $q + r + \theta - N \theta < 0$, we have
  \begin{align*}
    \varphi^{N+1}_{f_{N+1}}
    \leq_{\mathfrak{V}}
    \langle q + r - N \theta, a \wedge b \rangle \wedge 
    \langle N \theta, a \vee b \rangle.
  \end{align*}

  \noindent\emph{Case 3}: $\exists n \leq N\, f_{n} = 0 \amp f_{n + 1} = 1$.
  \begin{align*}
    \varphi^{n}_{f_{n}} \wedge \varphi^{n + 1}_{f_{n+1}} 
    &\leq_{\mathfrak{V}}
    \langle q + r - n \theta, a \wedge b \rangle \wedge 
    \langle  n \theta, a \vee b \rangle.
  \end{align*}

  \noindent\emph{Case 4}: $\exists n \leq N\, f_{n} = 1 \amp f_{n + 1} = 0$.
  \begin{align*}
    \varphi^{n}_{f_{n}} \wedge \varphi^{n + 1}_{f_{n+1}} 
    &\leq_{\mathfrak{V}}
    \langle q + r + \theta - n \theta, a \wedge b \rangle \wedge 
    \langle - \theta  + n \theta, a \vee b \rangle.
  \end{align*}
  Thus, in any case
  \[
    \bigwedge_{n \leq N + 1} \varphi^{n}_{f_n}
    \leq_{\mathfrak{V}}
    \bigvee_{q' + r' = q + r} \langle q', a \wedge b \rangle \wedge
    \langle r', a \vee b \rangle.
  \]
  Hence, by \eqref{eq:phi} and \eqref{eq:prox}, we have
  \eqref{Ax1}. Similarly, \eqref{Ax2b} implies \eqref{Ax2}.
\end{proof}

\begin{proposition}\label{prop:ContEntValuation}
  For any strong proximity lattice
  $(S, \prox)$, the locale $\Valuation{\Spec{S}}$
  can be presented by a strong continuous 
  entailment relation 
  $\Valuation{S}
  = {(S_{\mathfrak{V}}, \entails_{\mathfrak{V}}, \prox_{\mathfrak{V}})}$
  where $\entails_{\mathfrak{V}}$ is generated by the following axioms:
  \begin{align*}
    &\entails_{\mathfrak{V}} \langle p, a \rangle \quad &&(\text{if $p <
    0$}) \\
    \langle p, 0 \rangle &\entails_{\mathfrak{V}}  && (\text{if $0 <
    p$})  \\
    \langle p, a \rangle &\entails_{\mathfrak{V}} \langle q, b \rangle && (\text{if $q \leq p$ and $a \leq b$})\\
    \langle p, a \rangle, \langle q, b \rangle
    &\entails_{\mathfrak{V}} \langle r, a \wedge b \rangle, \langle s, a \vee
    b \rangle
    && (\text{if $p + q =  r + s$})\\
    \langle r, a \wedge b \rangle, \langle s, a \vee b \rangle
    &\entails_{\mathfrak{V}}
    \langle p, a \rangle, \langle q, b \rangle
    && (\text{if $p + q =  r + s$})
  \end{align*}
  The idempotent relation $\prox_{\mathfrak{V}}$ is defined by
  \[
    \langle p,a \rangle
    \mathrel{\prox_{\mathfrak{V}}} \langle q, b \rangle \defeqiv q < p \amp a \prox b.
  \]
  The locale $\CoValuation{\Spec{S}}$ can be presented by 
  a strong continuous entailment relation 
  $\CoValuation{S}
  = (S_{\mathfrak{V}}, \entails_{\mathfrak{C}},
  \prox_{\mathfrak{C}})$
  where $\entails_{\mathfrak{C}}$ is generated by the following axioms:
  \begin{align*}
    \langle p, a \rangle &\entails_{\CoValuationFunc} &&(\text{if $p <
    0$}) \\
    &\entails_{\CoValuationFunc} \langle p, 1 \rangle  && (\text{if $0 <
    p$})  \\
    \langle p, a \rangle &\entails_{\CoValuationFunc} \langle q, b \rangle && (\text{if $p \leq q$ and $a \leq b$})\\
    \langle p, a \rangle, \langle q, b \rangle
    &\entails_{\CoValuationFunc} \langle r, a \wedge b \rangle, \langle s, a \vee
    b \rangle
    && (\text{if $p + q =  r + s$})\\
    \langle r, a \wedge b \rangle, \langle s, a \vee b \rangle
    &\entails_{\CoValuationFunc}
    \langle p, a \rangle, \langle q, b \rangle
    && (\text{if $p + q =  r + s$})
  \end{align*}
  The idempotent relation $\prox_{\CoValuationFunc}$ is defined by
  \[
    \langle p,a \rangle
    \mathrel{\prox_{\CoValuationFunc}} \langle q, b \rangle \defeqiv p < q \amp a \prox b.
  \]

In particular, the spaces of valuations and covaluations on a stably
compact locale are stably compact.
\end{proposition}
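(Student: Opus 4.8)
The plan is to split the argument into two parts: first, to verify that $\Valuation{S} = (S_{\mathfrak{V}}, \entails_{\mathfrak{V}}, \prox_{\mathfrak{V}})$ is genuinely a strong continuous entailment relation; second, to identify the locale it presents (via the geometric theory \eqref{eq:SContEntGeo}) with $\Valuation{\Spec{S}}$. The case of $\CoValuation{S}$ is entirely analogous, and the final ``in particular'' clause is then immediate from Theorem~\ref{thm:ContEntPresentsSKL}.

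For the first part I would appeal to Lemma~\ref{prop:IndGenContEnt}, since $\entails_{\mathfrak{V}}$ is specified by generators. Idempotency of $\prox_{\mathfrak{V}}$ is easy: from $\langle p,a\rangle \prox_{\mathfrak{V}} \langle q,b\rangle \defeqiv q < p \amp a \prox b$ one factors through an intermediate $\langle t,c\rangle$ by choosing $q < t < p$ (density of $\Rat$) and $a \prox c \prox b$ (idempotency of $\prox$). It then remains to check conditions~\ref{prop:IndGenContEnt1} and~\ref{prop:IndGenContEnt2} for each generating axiom. For the three order axioms this is routine. The crux is the pair of modular-law axioms $\langle p,a\rangle,\langle q,b\rangle \entails_{\mathfrak{V}} \langle r,a\wedge b\rangle,\langle s,a\vee b\rangle$ and its converse (for $p+q=r+s$): here, given a premise of the form $C \prox_{U} \{\langle p,a\rangle,\langle q,b\rangle\}$ (or, dually, a refined conclusion $\{\langle r,a\wedge b\rangle,\langle s,a\vee b\rangle\} \prox_{L} C$), I would interpolate the rational coefficients, slightly increasing or decreasing each of $p,q,r,s$ while preserving the constraint $p+q=r+s$, and simultaneously interpolate on the $S$-side using the strong proximity lattice axioms \ref{def:proximityMeet} and~\ref{def:proximityJ} to handle the meet $a \wedge b$ and join $a \vee b$. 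Reassembling these via the modular axiom yields the required witness $B'$ (resp.\ $A'$).

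For the second part I would unfold the geometric theory \eqref{eq:SContEntGeo} attached to $\Valuation{S}$. The clauses $\medwedge A \entails \medvee B$ arising from the generators of $\entails_{\mathfrak{V}}$ reproduce the first three axioms of $T_{\mathfrak{V}}$ together with \eqref{Ax2a} and \eqref{Ax2b} in place of \eqref{Ax1} and \eqref{Ax2}; the clause for $\prox_{\mathfrak{V}}$ gives exactly $\langle p,a\rangle \entails \langle q,b\rangle$ for $q<p,\ a\prox b$; and the rounding clause $\langle p,a\rangle \entails \bigvee_{\langle q,b\rangle \prox_{\mathfrak{V}} \langle p,a\rangle}\langle q,b\rangle$ unfolds to $\langle p,a\rangle \entails \bigvee_{p<q,\ b\prox a}\langle q,b\rangle$. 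These are precisely the axioms of $T_{\mathfrak{V}}$ with the modular axioms presented in the form~\eqref{Ax2a},~\eqref{Ax2b}, so by Lemma~\ref{lem:Valuation} the presented locale is $\Valuation{\Spec{S}}$. The covaluation case follows the same pattern, using the corresponding modifications of the axioms and of $\prox_{\CoValuationFunc}$, and stable compactness of both locales then follows from Theorem~\ref{thm:ContEntPresentsSKL}. I expect the rational-and-lattice interpolation in the modular axioms to be the only genuine obstacle; everything else is bookkeeping.
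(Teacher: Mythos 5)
Your proposal is correct and follows essentially the same route as the paper: the paper's proof consists precisely of the two steps you describe, namely checking the hypotheses of Lemma~\ref{prop:IndGenContEnt} for $\Valuation{S}$ and $\CoValuation{S}$ (the paper leaves this as ``one can check'', while you sketch the rational-plus-lattice interpolation that makes it work) and then invoking Lemma~\ref{lem:Valuation} to identify the presented locale with $\Valuation{\Spec{S}}$ via the replacement of \eqref{Ax1}--\eqref{Ax2} by \eqref{Ax2a}--\eqref{Ax2b}. No discrepancies to report.
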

\begin{proof}
  One can check that
  $\Valuation{S}$
  and
  $\CoValuation{S}$
  satisfy the condition in Lemma \ref{prop:IndGenContEnt}.
  Then, the claim of the proposition follows from Lemma \ref{lem:Valuation}.
\end{proof}
The constructions $\Valuation{S}$ and $\CoValuation{S}$ extend to functors
$\ValuationFunc \colon \SProxLat \to \SContEnt$
and $\CoValuationFunc \colon \SProxLat \to \SContEnt$, which
send each join-preserving proximity relation
$r \colon (S, \prox) \to (S', \prox')$ to join-preserving proximity
maps $\mathfrak{V}(r) \colon \Valuation{S} \to \Valuation{S'}$
and $\mathfrak{C}(r) \colon \CoValuation{S} \to \CoValuation{S'}$
defined by
\begin{align*}
  A \mathrel{\mathfrak{V}(r)} B
  &\defeqiv
   \exists C \in \Fin{S_{\mathfrak{V}}}
  \left( A \entails_{\mathfrak{V}} C \amp
  \forall \langle p, c \rangle \in C  \exists
  \langle q, b \rangle \in B  \left( p > q \amp c \mathrel{r} b\right) \right),\\
  A \mathrel{\mathfrak{C}(r)} B 
  &\defeqiv 
   \exists C \in \Fin{S_{\mathfrak{V}}}
  \left(A \entails_{\mathfrak{C}} C \amp
  \forall \langle p, c \rangle \in C \exists
  \langle q, b \rangle \in B \left(  p < q \amp c \mathrel{r} b\right) \right).
\end{align*}
\begin{theorem}\label{thm:DeGrootDualValuation}
  For any strong proximity lattice $S$,
  we have 
  \[
    \DeGroot{\Valuation{S}} \cong \CoValuation{\DeGroot{S}}
    \quad \text{and} \quad
    \DeGroot{\CoValuation{S}} \cong \Valuation{\DeGroot{S}}.
  \]
\end{theorem}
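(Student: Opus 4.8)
The plan is to argue exactly as for the Vietoris powerlocale (Theorem~\ref{thm:Vietoris}) and the patch topology (Theorem~\ref{theorem:patch}): since $\Valuation{S}$ and $\CoValuation{S}$ are both presented by \emph{generated} strong continuous entailment relations (Proposition~\ref{prop:ContEntValuation}), I would compute $\DeGroot{\Valuation{S}}$ purely axiomatically by means of Lemma~\ref{lem:GenEntOp} and then check that the resulting presentation is literally that of $\CoValuation{\DeGroot{S}}$. Note first that all the structures involved share the same generator set $S_{\mathfrak{V}} = \left\{ \langle p, a \rangle \mid p \in \Rat \amp a \in S \right\}$, since the dual $\DeGroot{S} = (\DualLat{S}, \proxop)$ has the same underlying set as $S$ (Definition~\ref{def:DualProxLat}).

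First I would unfold $\DeGroot{\Valuation{S}}$. Its entailment relation is $\DualLat{\entails_{\mathfrak{V}}}$, which by Lemma~\ref{lem:GenEntOp} is generated by the \emph{reversed} axioms of $\entails_{\mathfrak{V}}$, and its proximity is the opposite of $\prox_{\mathfrak{V}}$, namely $\langle p, a \rangle \mathrel{\proxop_{\mathfrak{V}}} \langle q, b \rangle \iff \langle q, b \rangle \mathrel{\prox_{\mathfrak{V}}} \langle p, a \rangle \iff p < q \amp b \prox a$. Next I would unfold $\CoValuation{\DeGroot{S}}$ by instantiating the presentation of Proposition~\ref{prop:ContEntValuation} at the dual lattice $\DeGroot{S}$; here the reinterpretation forced by $\DualLat{S}$ interchanges $\wedge$ with $\vee$, sends the top $1$ to the bottom $0$, reverses the order, and replaces $\prox$ by $\proxop$. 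In particular its proximity becomes $\langle p, a \rangle \mathrel{\prox_{\mathfrak{C}}} \langle q, b \rangle \iff p < q \amp b \prox a$, agreeing with the one just computed.

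The heart of the argument is then the term-by-term comparison, which I expect to be the only delicate point. The two boundary axioms ($p < 0$ and $0 < p$) match on the nose, and the order axiom of $\DeGroot{\Valuation{S}}$ ranges over exactly the pairs $(\langle p_1, a_1 \rangle, \langle p_2, a_2 \rangle)$ with $p_1 \leq p_2$ and $a_2 \leq a_1$, which is precisely the order axiom of $\CoValuation{\DeGroot{S}}$ once $\leq$ is read in $\DualLat{S}$. For the two modular-law axioms the bookkeeping is subtler: reversing the $\entails_{\mathfrak{V}}$-axioms swaps which side carries $\langle \cdot, a \wedge b \rangle$ and $\langle \cdot, a \vee b \rangle$, and the interchange $\wedge \leftrightarrow \vee$ in $\DualLat{S}$ performs the very same swap, so after relabelling the symmetric dummy rationals $r,s$ (which range over all $r + s = p + q$) the two schemes coincide. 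This yields $\DeGroot{\Valuation{S}} \cong \CoValuation{\DeGroot{S}}$. For the second isomorphism I would avoid repeating the calculation: applying the first to $\DeGroot{S}$ and using involutivity $\DeGroot{\DeGroot{S}} \cong S$ of the duality gives $\DeGroot{\Valuation{\DeGroot{S}}} \cong \CoValuation{S}$, whence dualising both sides gives $\Valuation{\DeGroot{S}} \cong \DeGroot{\CoValuation{S}}$.
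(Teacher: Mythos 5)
Your proposal is correct and is exactly the paper's argument: the paper's proof reads ``Immediate from Proposition~\ref{prop:ContEntValuation} and Lemma~\ref{lem:GenEntOp},'' and your axiom-by-axiom comparison (including the swap of the modular-law axioms under $\wedge\leftrightarrow\vee$ and the relabelling of the symmetric rationals $r,s$, and the matching of $\proxop_{\mathfrak{V}}$ with $\prox_{\mathfrak{C}}$ at $\DeGroot{S}$) is just that proof written out in full. Deriving the second isomorphism from the first via involutivity of $\DeGroot{(\cdot)}$ is a harmless shortcut that works equally well.
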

\begin{proof}
  Immediate from Proposition \ref{prop:ContEntValuation} and
Lemma \ref{lem:GenEntOp}.
\end{proof}

We now focus on \emph{probabilistic}
valuations and covaluations, i.e., those valuations $\mu$ and
covaluations $\nu$ satisfying $\mu(1) = 1$ and $\nu(0) = 1$.

For a strong proximity lattice $(S,\prox)$, the space
$\PValuation{\Spec{S}}$ of probabilistic valuations is presented by
a geometric theory $T_{\mathfrak{V_{P}}}$ which extends the theory
$T_{\mathfrak{V}}$ with the following extra axioms:
\begin{align*}
   \langle p, a \rangle &\entails \bot \quad (\text{if $1 < p$}), &
   \top &\entails \langle p, 1 \rangle \quad (\text{if $p < 1$}).
\end{align*}
The space 
$\PCoValuation{\Spec{S}}$ of probabilistic covaluations is presented by
a geometric theory $T_{\mathfrak{C_{P}}}$ which extends the theory
$T_{\mathfrak{C}}$ with the  following extra axioms:
\begin{align*}
    \top &\entails \langle p, a \rangle \quad (\text{if $1 < p$}),  &
    \langle p, 0 \rangle &\entails \bot \quad (\text{if $p < 1$}).
\end{align*}

Proposition \ref{prop:ContEntValuation}
restricts to probabilistic valuations and covaluations.
\begin{proposition}\label{prop:ContEntCoValuation}
  For any strong proximity lattice
  $(S, \prox)$, the locale $\PValuation{\Spec{S}}$
  can be presented by a strong continuous entailment relation 
  $\PValuation{S}
  = {(S_{\mathfrak{V}}, \entails_{\mathfrak{V_{P}}}, \prox_{\mathfrak{V}})}$
  where $\entails_{\mathfrak{V_{P}}}$ is generated by the axioms of
  $\entails_{\mathfrak{V}}$ and the following extra axioms:
  \begin{align*}
    \langle p, a \rangle &\entails_{\mathfrak{V_{P}}} \quad (\text{if $1
    < p$}), &
    &\entails_{\mathfrak{V_{P}}} \langle p, 1 \rangle \quad (\text{if $p
    < 1$}).
  \end{align*}
  The locale $\PCoValuation{\Spec{S}}$ can be presented by 
  a strong continuous entailment relation 
  $\PCoValuation{S} = (S_{\mathfrak{V}}, \entails_{\mathfrak{C_{P}}},
  \prox_{\mathfrak{C}})$
  where $\entails_{\mathfrak{C_{P}}}$ is generated by the axioms of
  $\entails_{\mathfrak{C}}$ and the following extra axioms:
  \begin{align*}
    &\entails_{\PCoValuationFunc} \langle p, a \rangle \quad (\text{if $1
    < p$}), &
   \langle p, 0 \rangle &\entails_{\PCoValuationFunc}  \quad (\text{if
    $p < 1$}).
  \end{align*}

In particular, the spaces of probabilistic valuations 
and probabilistic covaluations on a stably compact locale are stably
compact.
\qed
\end{proposition}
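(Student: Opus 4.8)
The plan is to mirror the proof of Proposition~\ref{prop:ContEntValuation}. I would regard $\PValuation{S}$ and $\PCoValuation{S}$ as the entailment relations of $\Valuation{S}$ and $\CoValuation{S}$ enlarged by their two extra generating axioms, the idempotent relations $\prox_{\mathfrak{V}}$ and $\prox_{\mathfrak{C}}$ being left unchanged. Thus it suffices to verify the hypotheses \ref{prop:IndGenContEnt1} and \ref{prop:IndGenContEnt2} of Lemma~\ref{prop:IndGenContEnt} for the two new axioms only: the conclusions of \ref{prop:IndGenContEnt1}--\ref{prop:IndGenContEnt2} only become easier as the generated relation $\entails$ grows, so the verifications carried out for $\entails_{\mathfrak{V}}$ (resp.\ $\entails_{\mathfrak{C}}$) in Proposition~\ref{prop:ContEntValuation} remain valid. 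Throughout I would use two standing facts about the base proximity relation $\prox$: by \ref{def:ProximityRelationVee} the set $\{a \mid a \prox b\}$ is an ideal, hence contains $0$, so $0 \prox a$ for all $a$ (in particular $0 \prox 0$); and by \ref{def:ProximityRelationWedge} the set $\{b \mid a \prox b\}$ is a filter, hence contains $1$, so $a \prox 1$ for all $a$ (in particular $1 \prox 1$).

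For the probabilistic valuation, consider first $\langle p,a\rangle \entails_{\mathfrak{V_{P}}} \emptyset$ with $1 < p$. For \ref{prop:IndGenContEnt1}, if $C \mathrel{(\prox_{\mathfrak{V}})_{U}} \{\langle p,a\rangle\}$ then some $\langle p',a'\rangle \in C$ has $p < p'$, so $1 < p'$ and the axiom gives $\langle p',a'\rangle \entails \emptyset$, whence $C \entails \emptyset$ and $B' = \emptyset$ works. For \ref{prop:IndGenContEnt2} the hypothesis $\emptyset \mathrel{(\prox_{\mathfrak{V}})_{L}} C$ is vacuous, and $A' = \{\langle p', 0\rangle\}$ with $p < p'$ succeeds: $0 \prox a$ yields $\{\langle p,a\rangle\} \mathrel{(\prox_{\mathfrak{V}})_{U}} A'$, and $1 < p'$ yields $A' \entails \emptyset$, hence $A' \entails C$. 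Dually, for $\emptyset \entails_{\mathfrak{V_{P}}} \langle p,1\rangle$ with $p < 1$, condition \ref{prop:IndGenContEnt1} is witnessed by $B' = \{\langle p',1\rangle\}$ with $p < p' < 1$ (using $1 \prox 1$), while for \ref{prop:IndGenContEnt2} the only admissible $A'$ is $\emptyset$, and the hypothesis $\{\langle p,1\rangle\} \mathrel{(\prox_{\mathfrak{V}})_{L}} C$ forces, via \ref{def:proximity1}, a member $\langle q,1\rangle \in C$ with $q < 1$, so that the axiom gives $\emptyset \entails \langle q,1\rangle$ and hence $\emptyset \entails C$.

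The covaluation axioms are handled symmetrically, the only differences being that the inequalities on rationals are reversed in accordance with $\langle p,a\rangle \mathrel{\prox_{\mathfrak{C}}} \langle q,b\rangle \defeqiv p < q \amp a \prox b$, and that the case analysis at the generator $\langle p,0\rangle$ invokes \ref{def:proximity0} (any $a' \prox 0$ equals $0$) in place of \ref{def:proximity1}. This shows $\PValuation{S}$ and $\PCoValuation{S}$ are strong continuous entailment relations. For the presentation, I would appeal to Theorem~\ref{thm:ContEntPresentsSKL}: by the first clause of \eqref{eq:SContEntGeo} the extra axiom $\langle p,a\rangle \entails_{\mathfrak{V_{P}}} \emptyset$ yields $\langle p,a\rangle \entails \bot$ and $\emptyset \entails_{\mathfrak{V_{P}}} \langle p,1\rangle$ yields $\top \entails \langle p,1\rangle$, which are precisely the two axioms distinguishing $T_{\mathfrak{V_{P}}}$ from $T_{\mathfrak{V}}$; as Proposition~\ref{prop:ContEntValuation} already matches the remaining axioms, $\PValuation{S}$ presents $\PValuation{\Spec{S}}$, and symmetrically $\PCoValuation{S}$ presents $\PCoValuation{\Spec{S}}$. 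The final clause then follows at once from Theorem~\ref{thm:ContEntPresentsSKL}. The main obstacle is purely the boundary bookkeeping at $0$ and $1$: one must repeatedly use $0 \prox a$, $a \prox 1$, \ref{def:proximity0}, and \ref{def:proximity1} to ensure that the interpolants produced for the new axioms land among generators whose rational component lies on the correct side of the threshold $1$.
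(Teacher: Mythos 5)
Your overall strategy is the intended one: the paper states this proposition with the proof omitted precisely because it is meant to be the routine restriction of Proposition~\ref{prop:ContEntValuation}, namely checking the hypotheses of Lemma~\ref{prop:IndGenContEnt} for the two new generating axioms (the old verifications persisting since the conclusions of \ref{prop:IndGenContEnt1} and \ref{prop:IndGenContEnt2} only get weaker as $\entails$ grows) and then matching the resulting geometric theory against $T_{\mathfrak{V_{P}}}$ and $T_{\mathfrak{C_{P}}}$. Your treatment of the axiom $\emptyset \entails_{\mathfrak{V_{P}}} \langle p,1\rangle$ and of condition \ref{prop:IndGenContEnt1} for $\langle p,a\rangle \entails_{\mathfrak{V_{P}}} \emptyset$ is correct, as is the final appeal to Theorem~\ref{thm:ContEntPresentsSKL}.

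There is, however, a concrete error in your verification of condition \ref{prop:IndGenContEnt2} for the axiom $\langle p,a\rangle \entails_{\mathfrak{V_{P}}} \emptyset$ with $1<p$. You must produce $A'$ with $\{\langle p,a\rangle\} \mathrel{(\prox_{\mathfrak{V}})_{U}} A'$, and by the definitions of $\prox_{U}$ and $\prox_{\mathfrak{V}}$ this requires $\langle p,a\rangle \mathrel{\prox_{\mathfrak{V}}} \langle p',a'\rangle$ for every $\langle p',a'\rangle \in A'$, i.e.\ $p' < p$ and $a \prox a'$: the rational component must \emph{decrease} and the lattice component must lie $\prox$-\emph{above} $a$. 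Your witness $A' = \{\langle p',0\rangle\}$ with $p < p'$ gets both directions backwards; the fact $0 \prox a$ is irrelevant here, and $a \prox 0$ would force $a = 0$ by \ref{def:proximity0}, so no witness with second component $0$ can work for general $a$. The correct witness is $A' = \{\langle p'',1\rangle\}$ with $1 < p'' < p$ (such $p''$ exists by density of $\Rat$): then $a \prox 1$ because the filter $\{b \mid a \prox b\}$ contains $1$, so $\{\langle p,a\rangle\} \mathrel{(\prox_{\mathfrak{V}})_{U}} A'$, and $1 < p''$ lets the new axiom give $A' \entails_{\mathfrak{V_{P}}} \emptyset$, whence $A' \entails_{\mathfrak{V_{P}}} C$ by $(\mathrm{M})$. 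The symmetric correction is needed in the covaluation case at $\langle p,0\rangle \entails_{\mathfrak{C_{P}}}$ ($p<1$), where the witness should be $\{\langle p'',0\rangle\}$ with $p < p'' < 1$, using $0 \prox 0$. With these witnesses replaced, the rest of your argument goes through.
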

As a corollary we obtain the probabilistic version of Theorem
\ref{thm:DeGrootDualValuation}.
\begin{theorem}\label{thm:DeGrootDualPValuation}
  For any strong proximity lattice $S$,
  we have 
  \[
    \DeGroot{\PValuation{S}} \cong \PCoValuation{\DeGroot{S}}
    \quad \text{and} \quad
    \DeGroot{\PCoValuation{S}} \cong \PValuation{\DeGroot{S}}.
  \]
\par \vspace{-1.8\baselineskip}
  \qed
\end{theorem}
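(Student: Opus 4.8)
The plan is to follow the proof of Theorem~\ref{thm:DeGrootDualValuation} essentially verbatim, the only new content being the two extra axioms that distinguish the probabilistic presentations in Proposition~\ref{prop:ContEntCoValuation} from the general ones. First I would unfold the left-hand side: by the definition of the dual of a strong continuous entailment relation, $\DeGroot{\PValuation{S}}$ has entailment relation $\DualLat{\entails_{\mathfrak{V_{P}}}}$ and proximity $\proxop_{\mathfrak{V}}$ on the same generator set $S_{\mathfrak{V}}$. Since $\entails_{\mathfrak{V_{P}}}$ is presented by a list of axioms (Proposition~\ref{prop:ContEntCoValuation}), Lemma~\ref{lem:GenEntOp} tells me that $\DualLat{\entails_{\mathfrak{V_{P}}}}$ is generated by exactly those axioms with their two sides interchanged.

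Next I would write out the right-hand side $\PCoValuation{\DeGroot{S}}$, recalling from Definition~\ref{def:DualProxLat} that passing from $S$ to $\DeGroot{S}$ interchanges $\wedge$ with $\vee$, interchanges the bottom $0$ with the top $1$, and replaces $\prox$ by $\proxop$, while the order on the rational component is untouched. Substituting these into the generating axioms of $\entails_{\mathfrak{C_{P}}}$ and into $\prox_{\mathfrak{C}}$, I would match them one family at a time against the side-swapped axioms of $\entails_{\mathfrak{V_{P}}}$. The core axioms (the two sign axioms, the monotonicity axiom, and the two modular axioms) together with the proximity coincide exactly as in Theorem~\ref{thm:DeGrootDualValuation}; the only point worth noting is that an axiom $\langle p,a\rangle,\langle q,b\rangle \entails \langle r, a\vee b\rangle, \langle s, a\wedge b\rangle$ with $p+q=r+s$ becomes, after relabelling $r \leftrightarrow s$, the dualised modular axiom $\langle p,a\rangle,\langle q,b\rangle \entails \langle r, a\wedge b\rangle,\langle s, a\vee b\rangle$, the two families of axioms being identical because $r,s$ range over all decompositions of $p+q$.

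The genuinely new verification concerns the two probabilistic axioms. The axiom $\langle p,a\rangle \entails_{\mathfrak{V_{P}}}$ (for $1<p$) carries no lattice operation, so its dual $\entails \langle p,a\rangle$ (for $1<p$) is literally the first extra axiom of $\entails_{\mathfrak{C_{P}}}$, unaffected by the passage to $\DeGroot{S}$. The axiom $\entails_{\mathfrak{V_{P}}} \langle p,1\rangle$ (for $p<1$) dualises to $\langle p,1\rangle \entails$ (for $p<1$); on the other side the second extra covaluation axiom $\langle p,0\rangle \entails_{\mathfrak{C_{P}}}$ (for $p<1$), read in $\DeGroot{S}$ where the bottom $0$ becomes the top $1$, reads $\langle p,1\rangle \entails$, so the two match. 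I expect this last match to be the only delicate point, precisely because it forces me to track the side-swap coming from Lemma~\ref{lem:GenEntOp} and the $0 \mapsto 1$ substitution coming from $\DeGroot{S}$ simultaneously; everything else is routine bookkeeping.

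Having identified the two presentations, the first isomorphism $\DeGroot{\PValuation{S}} \cong \PCoValuation{\DeGroot{S}}$ follows. The second isomorphism $\DeGroot{\PCoValuation{S}} \cong \PValuation{\DeGroot{S}}$ is then obtained by applying the first one to $\DeGroot{S}$ in place of $S$ and using that the dual operation $\DeGroot{(\cdot)}$ is an involution, $\DeGroot{\DeGroot{S}} \cong S$, so that the argument is entirely symmetric in valuations and covaluations.
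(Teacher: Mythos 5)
Your proposal is correct and matches the paper's (implicit) argument: the paper states Theorem~\ref{thm:DeGrootDualPValuation} as an immediate corollary of the presentations in Proposition~\ref{prop:ContEntCoValuation} together with Lemma~\ref{lem:GenEntOp}, exactly the axiom-by-axiom matching you carry out, including the observation that the two modular families are interchanged by the side-swap and that the extra probabilistic axioms dualise into one another once $0$ and $1$ are exchanged in $\DeGroot{S}$. Deriving the second isomorphism by applying the first to $\DeGroot{S}$ and using involutivity of the dual is a harmless cosmetic variation on the paper's symmetric direct verification.
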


For probabilistic valuations and covaluations,
we have the following duality.
\begin{lemma}
  \label{lem:ValCovalIso}
  For any strong proximity lattice $S$,  we have
  \[
    A \entails_{\mathfrak{V_{P}}} B 
    \iff
    \DualVal{A} \entails_{\mathfrak{C_{P}}} \DualVal{B} 
  \]
  for all $A, B \in \Fin{S_{\ValuationFunc}}$, where
    $
    \DualVal{A} \defeql \left\{ \langle 1 - p, a \rangle \mid \langle
    p, a \rangle \in A\right\}.
    $
\end{lemma}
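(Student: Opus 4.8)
The plan is to exploit the fact that both $\entails_{\mathfrak{V_{P}}}$ and $\entails_{\mathfrak{C_{P}}}$ are entailment relations presented by explicitly listed generating axioms (Proposition~\ref{prop:ContEntCoValuation}), so that it suffices to verify the claimed equivalence at the level of generators. Write $\sigma \colon S_{\mathfrak{V}} \to S_{\mathfrak{V}}$ for the map $\sigma \langle p, a \rangle \defeql \langle 1 - p, a \rangle$, so that $\DualVal{A} = \left\{ \sigma c \mid c \in A \right\}$ for every $A \in \Fin{S_{\mathfrak{V}}}$. Since $1 - (1 - p) = p$, the map $\sigma$ is an involutive bijection of $S_{\mathfrak{V}}$ which fixes the lattice component $a$; the induced map on $\Fin{S_{\mathfrak{V}}}$ commutes with union and preserves singletons and inhabitation.

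First I would form the pullback relation $\entails^{\sigma}$ on $\Fin{S_{\mathfrak{V}}}$ defined by $A \entails^{\sigma} B \defeqiv \DualVal{A} \entails_{\mathfrak{C_{P}}} \DualVal{B}$. Because $\sigma$ is a bijection respecting the set operations in the rules $(\mathrm{R})$, $(\mathrm{M})$, $(\mathrm{T})$ of Definition~\ref{def:EntRel}, the relation $\entails^{\sigma}$ is again an entailment relation; moreover, by the inductive description of generated entailment relations (Lemma~\ref{lem:IndGenEnt}), and in the same spirit as Lemma~\ref{lem:GenEntOp}, it is generated by the axiom set obtained by applying $\sigma$ to each generating axiom of $\entails_{\mathfrak{C_{P}}}$ (using $\sigma^{-1} = \sigma$). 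Thus the lemma reduces to the claim that $\sigma$ carries the generating axioms of $\entails_{\mathfrak{V_{P}}}$ bijectively onto those of $\entails_{\mathfrak{C_{P}}}$, a statement symmetric in the two relations by involutivity of $\sigma$.

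The verification is then a bookkeeping check, substituting $p \mapsto 1 - p$ in each side condition; writing $p' = 1 - p$, $q' = 1 - q$, and so on, I expect the pairing to be as follows. The axiom $\mathrel{\entails_{\mathfrak{V_{P}}}} \langle p, a \rangle$ ($p < 0$) corresponds to $\mathrel{\entails_{\mathfrak{C_{P}}}} \langle p', a \rangle$ ($1 < p'$), and $\langle p, 0 \rangle \entails_{\mathfrak{V_{P}}}$ ($0 < p$) to $\langle p', 0 \rangle \entails_{\mathfrak{C_{P}}}$ ($p' < 1$); the monotonicity axiom ($q \leq p$, $a \leq b$) corresponds to the $\mathfrak{C_{P}}$ monotonicity axiom ($p' \leq q'$, $a \leq b$), since $q \leq p \iff 1 - p \leq 1 - q$; both modular axioms correspond to the modular axioms of $\entails_{\mathfrak{C_{P}}}$, since $p + q = r + s \iff (1 - p) + (1 - q) = (1 - r) + (1 - s)$; and the probabilistic axioms $\langle p, a \rangle \entails_{\mathfrak{V_{P}}}$ ($1 < p$) and $\mathrel{\entails_{\mathfrak{V_{P}}}} \langle p, 1 \rangle$ ($p < 1$) correspond, respectively, to $\langle p', a \rangle \entails_{\mathfrak{C_{P}}}$ ($p' < 0$) and $\mathrel{\entails_{\mathfrak{C_{P}}}} \langle p', 1 \rangle$ ($0 < p'$). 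In each case the side condition converts under $p \mapsto 1 - p$ into exactly the side condition of the paired axiom.

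Having matched the two axiom sets, $\entails^{\sigma}$ and $\entails_{\mathfrak{V_{P}}}$ are generated by the same axioms and therefore coincide, which is precisely the stated equivalence. The only genuine work lies in the third paragraph, and I anticipate no real obstacle: each weak or strict bound on a $p$-component is simply reflected about the midpoint $\tfrac{1}{2}$, while the components $a, b$ are untouched by $\sigma$. The single point demanding care is the direction of the inequalities, since a relabelling by a decreasing map swaps the roles of the two reals in the monotonicity axiom and sends each of the strict bounds $p < 0$, $0 < p$, $p < 1$, $1 < p$ to its mirror image; once $\sigma$ is recognised as an order-reversing reflection this is entirely routine.
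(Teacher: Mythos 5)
Your proposal is correct and follows essentially the same route as the paper: the paper's proof is an induction on derivations in each direction, resting on exactly the observation you make, namely that the map $\langle p,a\rangle \mapsto \langle 1-p,a\rangle$ carries the generating axioms of $\entails_{\mathfrak{V_{P}}}$ bijectively onto those of $\entails_{\mathfrak{C_{P}}}$ (with the two zero/unit axioms and the two probabilistic axioms swapping roles, as you note). Your packaging via the pullback relation $\entails^{\sigma}$ is just a cleaner statement of that same induction, and your axiom-by-axiom matching is accurate.
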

\begin{proof}
  The direction $\Rightarrow$ is proved by induction on the derivation of $A \entails_{\mathfrak{V_{P}}} B$.
  Note that each axiom $A\entails_{\mathfrak{V_{P}}} B$ of $\PValuation{S}$
  corresponds to an axiom $\DualVal{A} \entails_{\mathfrak{C_{P}}} \DualVal{B}$ of
  $\PCoValuation{S}$.
  The direction $\Leftarrow$ is similarly proved by induction on $\DualVal{A}
  \entails_{\mathfrak{C_{P}}} \DualVal{B}$.
\end{proof}

Since ``$1$'' in the lower and upper reals form a Dedekind real, 
the following proposition is analogous to Vickers
\cite[Proposition 6.3]{VickersIntegral}, which holds for
an arbitrary locale. We give a proof for the special case of
stably compact locales.

\begin{proposition}\label{prop:ValCovalIso}
  For any strong proximity lattice $S$, we have 
  \[
    \PValuation{S} \cong \PCoValuation{S}.
  \]
\end{proposition}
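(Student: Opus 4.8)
The plan is to realise the isomorphism through the relabelling involution on generators already underlying Lemma~\ref{lem:ValCovalIso}, in exactly the spirit of the dualities for the Vietoris powerlocale (Theorem~\ref{thm:Vietoris}) and the patch topology (Theorem~\ref{theorem:patch}). Concretely, I would work with the map $\varphi \colon S_{\mathfrak{V}} \to S_{\mathfrak{V}}$ given by $\varphi(\langle p, a\rangle) \defeql \langle 1 - p, a\rangle$, whose action on finite subsets is precisely the operation $A \mapsto \DualVal{A}$ of Lemma~\ref{lem:ValCovalIso}. Since $p \mapsto 1 - p$ is an involution of $\Rat$ fixing the second coordinate, $\varphi$ is a bijection (indeed an involution) of $S_{\mathfrak{V}}$, the common generator set of $\PValuation{S}$ and $\PCoValuation{S}$.

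First I would verify that $\varphi$ carries the entire structure of $\PValuation{S}$ onto that of $\PCoValuation{S}$. For the entailment relations this is exactly Lemma~\ref{lem:ValCovalIso}, namely $A \entails_{\mathfrak{V_{P}}} B \iff \DualVal{A} \entails_{\mathfrak{C_{P}}} \DualVal{B}$ for all finite $A,B \subseteq S_{\mathfrak{V}}$. For the idempotent relations I would compute directly from the definitions in Proposition~\ref{prop:ContEntValuation}: one has $\langle p, a\rangle \mathrel{\prox_{\mathfrak{V}}} \langle q, b\rangle \iff q < p \amp a \prox b$, and applying $\varphi$ to both entries yields $\langle 1-p, a\rangle \mathrel{\prox_{\mathfrak{C}}} \langle 1-q, b\rangle \iff (1-p) < (1-q) \amp a \prox b$, which is the same condition because $(1-p) < (1-q) \iff q < p$. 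Hence $\varphi$ transports $\prox_{\mathfrak{V}}$ to $\prox_{\mathfrak{C}}$ as well.

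From this I would conclude that $\varphi$ induces an isomorphism of strong continuous entailment relations, so $\PValuation{S} \cong \PCoValuation{S}$. Made explicit at the level of $\SContEnt$, the mutually inverse join-preserving proximity maps are $r \colon \PValuation{S} \to \PCoValuation{S}$ and $s \colon \PCoValuation{S} \to \PValuation{S}$ given by $A \mathrel{r} B \defeqiv \DualVal{A} \ll_{\entails_{\mathfrak{C_{P}}}} B$ and $B \mathrel{s} A \defeqiv \DualVal{B} \ll_{\entails_{\mathfrak{V_{P}}}} A$; the transport of $\entails$ and $\prox$ just established shows that these are proximity maps inverse to each other, so the argument closes exactly as in Theorem~\ref{thm:Vietoris} and Theorem~\ref{theorem:patch}.

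The decisive content has already been secured in Lemma~\ref{lem:ValCovalIso}; the only genuinely new point, and the one to get right, is the behaviour of the proximity relations under $\varphi$. The subtlety is that $\prox_{\mathfrak{V}}$ and $\prox_{\mathfrak{C}}$ impose \emph{opposite} strict inequalities on the rational components ($q < p$ versus $p < q$), and it is precisely the order-reversing flip $p \mapsto 1 - p$ that reconciles them; an order-preserving relabelling of the rationals would fail. I expect no further obstacle, as no infinitary reasoning is involved and the remaining checks (that $r$ and $s$ are upper, absorb the relevant cut compositions, and are mutually inverse) are routine once the relabelling is known to be a bijection respecting all the generating data.
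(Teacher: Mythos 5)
Your proposal is correct and follows essentially the same route as the paper: the paper also realises the isomorphism via the involution $A \mapsto \DualVal{A}$, defining mutually inverse proximity maps $A \mathrel{r} B \defeqiv A \ll_{\entails_{\mathfrak{V_{P}}}} \DualVal{B}$ and $B \mathrel{s} A \defeqiv B \ll_{\entails_{\mathfrak{C_{P}}}} \DualVal{A}$ and invoking Lemma~\ref{lem:ValCovalIso} (your placement of $\DualVal{(\cdot)}$ on the other side is equivalent since the relabelling is an involution transporting both $\entails$ and $\prox$). Your explicit verification that $p \mapsto 1-p$ reconciles the opposite strict inequalities in $\prox_{\mathfrak{V}}$ and $\prox_{\mathfrak{C}}$ is exactly the point the paper leaves as ``straightforward.''
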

\begin{proof}
  Define proximity maps 
  $r \colon \PValuation{S} \to \PCoValuation{S}$ and 
  $s \colon \PCoValuation{S} \to \PValuation{S}$ by
    \begin{align*}
      A \mathrel{\mathrel{r}} B &\defeqiv A
      \ll_{\entails_{\PValuationFunc}} \DualVal{B}, &
      B \mathrel{\mathrel{s}} A &\defeqiv B
      \ll_{\entails_{\PCoValuationFunc}} \DualVal{A}.
    \end{align*}
Using Lemma \ref{lem:ValCovalIso}, 
it is straightforward to show that $r$ and $s$ are indeed proximity maps
which are inverse to each other.
\end{proof}

\begin{theorem}\label{thm:ValuationDuality}
  For any strong proximity lattice $S$,
  we have 
  \[
    \DeGroot{\PValuation{S}} \cong \PValuation{\DeGroot{S}}
    \quad \text{and} \quad
    \DeGroot{\PCoValuation{S}} \cong \PCoValuation{\DeGroot{S}}.
  \]
\end{theorem}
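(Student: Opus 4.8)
The plan is to deduce both isomorphisms as immediate compositions of the two facts already secured for probabilistic (co)valuations: the cross-duality of Theorem~\ref{thm:DeGrootDualPValuation} and the self-identification of Proposition~\ref{prop:ValCovalIso}. The key observation is that Proposition~\ref{prop:ValCovalIso} holds for \emph{every} strong proximity lattice, so in particular it applies to $\DeGroot{S}$; and $\DeGroot{S}$ is again a strong proximity lattice because strong proximity lattices are closed under the de Groot duality of Definition~\ref{def:DualProxLat} (as noted in Section~\ref{sec:deGrootDuality}). Hence there is nothing to construct from scratch: the desired isomorphisms fall out by chaining the two prior results.

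Concretely, I would argue as follows. For the valuation case, Theorem~\ref{thm:DeGrootDualPValuation} gives $\DeGroot{\PValuation{S}} \cong \PCoValuation{\DeGroot{S}}$, and then Proposition~\ref{prop:ValCovalIso}, applied to the strong proximity lattice $\DeGroot{S}$, gives $\PCoValuation{\DeGroot{S}} \cong \PValuation{\DeGroot{S}}$. Composing the two yields
\[
  \DeGroot{\PValuation{S}} \cong \PCoValuation{\DeGroot{S}} \cong \PValuation{\DeGroot{S}}.
\]
Symmetrically, for the covaluation case Theorem~\ref{thm:DeGrootDualPValuation} gives $\DeGroot{\PCoValuation{S}} \cong \PValuation{\DeGroot{S}}$, and Proposition~\ref{prop:ValCovalIso} (again on $\DeGroot{S}$) gives $\PValuation{\DeGroot{S}} \cong \PCoValuation{\DeGroot{S}}$, so that
\[
  \DeGroot{\PCoValuation{S}} \cong \PValuation{\DeGroot{S}} \cong \PCoValuation{\DeGroot{S}}.
\]

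Since both isomorphisms live in $\SContEnt$ (those of Theorem~\ref{thm:DeGrootDualPValuation} arising by swapping axioms via Lemma~\ref{lem:GenEntOp}, and those of Proposition~\ref{prop:ValCovalIso} arising from the proximity maps built on Lemma~\ref{lem:ValCovalIso}), their composites are again isomorphisms of strong continuous entailment relations, which is exactly the claimed $\cong$. There is no real obstacle here; the only point that needs a moment's care is confirming that $\DeGroot{S}$ meets the hypotheses of Proposition~\ref{prop:ValCovalIso}, i.e.\ that it is a strong proximity lattice, which is precisely the closure of strong proximity lattices under duality recorded earlier. The genuine mathematical content of the result therefore resides in Theorem~\ref{thm:DeGrootDualPValuation} and Proposition~\ref{prop:ValCovalIso}, and this theorem simply packages them together.
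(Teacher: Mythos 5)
Your proposal is correct and follows exactly the paper's own argument: the paper proves this theorem simply ``By Theorem~\ref{thm:DeGrootDualPValuation} and Proposition~\ref{prop:ValCovalIso}'', which is precisely the composition you carry out (applying Proposition~\ref{prop:ValCovalIso} to $\DeGroot{S}$). Your additional remarks on why $\DeGroot{S}$ satisfies the hypotheses and why the composites remain isomorphisms in $\SContEnt$ are sound elaborations of details the paper leaves implicit.
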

\begin{proof}
By Theorem \ref{thm:DeGrootDualPValuation} and Proposition
\ref{prop:ValCovalIso}. 
\end{proof}

Goubault-Larrecq~\cite[Theorem 6.11]{GoubaultLarrecq-ModelofChoice}
proved the corresponding result for stably compact spaces.
Although his proof is classical and the space of covaluations is
implicit in his proof, the essential idea seems to be similar.

\subsection*{Acknowledgements}
I thank the referees for numerous suggestions which help me 
improve the paper in an essential way. In particular,
their suggestion to use entailment systems allows me to simplify
and constructivise many parts of the paper. I also thank Steve
Vickers and Daniel Wessel for helpful discussions. This work was
carried out while I was in the Hausdorff Research Institute for
Mathematics (HIM), University of Bonn, for their trimester program
``Types, Sets and Constructions'' (May--August 2018). I thank the
institute for their support and the organisers of the program for
creating a stimulating environment for research.

\end{document}